
\documentclass[11pt]{article}
\usepackage[reset,a4paper,vmargin=3.5truecm,hmargin=3truecm]{geometry}

\usepackage{amsmath}
\usepackage{amsfonts,amssymb,amscd,epic,bm,amsthm}
\usepackage{graphicx,}
\usepackage{paralist}
\usepackage{tikz,subcaption}
\usepackage{ctable,tabularx}
\usepackage{hyperref}
\usepackage[shortlabels]{enumitem} 

\usetikzlibrary{shapes.geometric}

\numberwithin{equation}{section}

\DeclareMathAlphabet\matheu{U}{eur}{m}{n}
\DeclareMathAlphabet\matheuscr{U}{eus}{m}{n}
\DeclareMathAlphabet\mathscr{U}{rsfs}{m}{n}
\SetMathAlphabet\matheu{bold}{U}{eur}{b}{n}
\SetMathAlphabet\matheuscr{bold}{U}{eus}{b}{n}

\newcommand{\MSC}[2]{\smallskip\noindent 
\textbf{2010 Mathematics Subject Classification:}
\textit{Primary} #1, \textit{Secondary} #2.}

\newcommand{\keywords}[1]{\smallskip\noindent\textbf{Keywords and phrases:} #1.} 

\newcommand{\Z}{\mathbb{Z}} 
\newcommand{\R}{\mathbb{R}} 
\newcommand{\C}{\mathbb{C}} 


\newcommand{\e}{\varepsilon}

\newcommand{\HRabi}[1]{H_{\text{\upshape Rabi}}^{#1}} 
\newcommand{\tHRabi}[1]{\tilde H_{\text{\upshape Rabi}}^{#1}} 
\newcommand{\cp}[2]{P^{(#1)}_{#2}} 
\newcommand{\ncp}[2]{\matheuscr{P}^{(#1)}_{#2}} 
\newcommand{\Gncp}[1]{\mathscr{P}^{(#1)}} 


\DeclareMathOperator*{\regprod}{\mathchoice%
{\ooalign{\hbox{$\displaystyle\prod$}\crcr\hbox{$\displaystyle\coprod$}}}
{\ooalign{\hbox{$\textstyle\prod$}\crcr\hbox{$\textstyle\coprod$}}}
{\ooalign{\hbox{$\scriptstyle\prod$}\crcr\hbox{$\scriptstyle\coprod$}}}
{\ooalign{\hbox{$\scriptscriptstyle\prod$}\crcr\hbox{$\scriptscriptstyle\coprod$}}}}


\DeclareMathOperator{\diag}{diag}
\DeclareMathOperator{\sgn}{sgn}

\DeclareMathOperator*{\Res}{Res} 

\DeclareMathOperator{\Diag}{Diag}
\DeclareMathOperator{\Spec}{Spec}
\DeclareMathOperator{\Ker}{Ker}

\DeclareMathOperator{\RE}{Re}
\DeclareMathOperator{\IM}{Im}

\DeclareMathOperator{\tridiag}{tridiag}
\newcommand{\Tridiag}[4]{\tridiag\mat{#1 & #2 \\ #3}_{#4}}

\newcommand{\mat}[1]{\begin{bmatrix}#1\end{bmatrix}}

\newcommand{\RpnSymbol}[1]{\bm{\mathrm{#1}}}
\newcommand{\rF}{\RpnSymbol{F}}
\newcommand{\rV}{\RpnSymbol{V}}
\newcommand{\rD}{\RpnSymbol{D}}

\newcommand{\MatSymbol}[1]{\bm{\mathrm{#1}}}
\newcommand{\mA}{\MatSymbol{A}}
\newcommand{\mB}{\MatSymbol{B}}
\newcommand{\mC}{\MatSymbol{C}}
\newcommand{\mD}{\MatSymbol{D}}
\newcommand{\mE}{\MatSymbol{E}}
\newcommand{\mM}{\MatSymbol{M}}
\newcommand{\mI}{\MatSymbol{I}}

\newcommand{\mS}{\MatSymbol{S}}
\newcommand{\mU}{\MatSymbol{U}}
\newcommand{\mV}{\MatSymbol{V}}

\newcommand{\mLV}{\MatSymbol{LV}}
\newcommand{\mRV}{\MatSymbol{RV}}

\newcommand{\mO}{\MatSymbol{O}}

\newcommand{\Bargmann}{\mathcal{B}}
\newcommand{\Ptwo}{\bm{\mathrm{P}}_2}

\newcommand{\floor}[1]{[#1]}
\newcommand{\fract}[1]{\{#1\}}

\newcommand{\QEDhere}{\pushQED{\qed}\qedhere\popQED}


\theoremstyle{plain}
\newtheorem{thm}{Theorem}[section]
\newtheorem{lem}[thm]{Lemma}
\newtheorem{prop}[thm]{Proposition}
\newtheorem{cor}[thm]{Corollary}
\newtheorem{conject}[thm]{Conjecture}

\theoremstyle{definition}
\newtheorem{dfn}{Definition}[section]
\newtheorem{ex}{Example}[section]
\newtheorem{prob}{Problem}[section]

\theoremstyle{remark}
\newtheorem{rem}{Remark}[section]

\begin{document}

\title{Determinant expressions of constraint polynomials and the spectrum of the asymmetric quantum Rabi model}

\pagestyle{myheadings}
\markboth
{K.~Kimoto, ~C.~Reyes-Bustos~ and~M.~Wakayama}
{Determinant expression of constraint polynomials and spectrum of AQRM}


\author{Kazufumi Kimoto \and Cid Reyes-Bustos \and Masato Wakayama}

\maketitle


\begin{abstract}
  The purpose of this paper is to study the exceptional eigenvalues of the asymmetric quantum Rabi models (AQRM),
  specifically, to determine the degeneracy of their eigenstates. Here, the Hamiltonian $\HRabi{\e}$ of the AQRM is defined by adding
  the fluctuation term $\e\sigma_x$, with $\sigma_x$ being the Pauli matrix, to the Hamiltonian of the quantum Rabi model, breaking its $\Z_{2}$-symmetry.
  The spectrum of $\HRabi{\e}$ contains a set of exceptional eigenvalues, considered to be remains of the eigenvalues of the uncoupled
  bosonic mode, which are further classified in two types: Juddian, associated with polynomial eigensolutions, and non-Juddian exceptional.
  We explicitly describe the constraint relations for allowing the model to have exceptional eigenvalues. By studying these relations
  we obtain the proof of the conjecture on constraint polynomials previously proposed by the third author.
  In fact, we prove that the spectrum of the AQRM possesses degeneracies if and only if the parameter $\e$ is a half-integer. Moreover,
  we show that non-Juddian exceptional eigenvalues do not contribute any degeneracy and we characterize
  exceptional eigenvalues by representations of $\mathfrak{sl}_2$. Upon these results, we draw the whole picture of the spectrum of the AQRM.
  Furthermore, generating functions of constraint polynomials from the viewpoint of confluent Heun equations are also discussed.

  \MSC{34L40}{81Q10, 34M05, 81S05}
  
  \keywords{quantum Rabi models, Bargmann space, degenerate spectrum, constraint polynomials,
    Lie algebra representations, confluent Heun differential equations, zeta regularized products}
\end{abstract}

\setcounter{tocdepth}{2}
\tableofcontents

\section{Introduction and overview}

In quantum optics, the quantum Rabi model (QRM) \cite{JC1963} describes the simplest interaction between matter and light,
i.e. the one between a two-level atom and photon, a single bosonic mode (see e.g. \cite{bcbs2016,Le2016}).
Actually, it appears ubiquitously in various quantum systems including cavity and circuit quantum electrodynamics,
quantum dots and artificial atoms \cite{YS2018}, with potential applications in quantum information technologies
including quantum cryptography, quantum computing, etc. (see e.g \cite{BMSSRWU2017,HR2008}).
In addition, the fact \cite{W2015IMRN} that the confluent Heun ODE picture of QRM is derived by coalescing two singularities in the Heun picture of the
non-commutative harmonic oscillator (NCHO: \cite{P2010S,PW2001}) strongly suggests the existence of a rich number theoretical structure
behind the QRM, including modular forms, elliptic curves \cite{KW2007}, Ap\'ery-like numbers \cite{KW2006KJM,LOS2016PAMS}
and Eichler cohomology groups \cite{KW2012RIMS,KW2019} through the study of the spectral zeta function \cite{IW2005a,IW2005KJM,O2008RJ}
(see also \cite{Sugi2016} for the spectral zeta function for the QRM). For the reasons above and according to recent development of experimental technology (cf. e.g. \cite{Ni2010}),
lately there has been considerable progress in the investigation of the QRM
not only in theoretical physics and mathematical analysis (cf.  e.g. \cite{HH2012,H2009IUMJ,P2014Milan,Sugi2016})
but also in experimental physics. For instance, there is a proposal to reproduce/realize the quantum Rabi models experimentally \cite{Pe2015SR}
(see also \cite{Lv2017}). In practice, in the weak parameter coupling regime  the Jaynes-Cummings model, the rotating-wave approximation (RWA)
of the QRM \cite{JC1963}, experimentally meets the QRM.
However, this is not the case in the ultra-strong and deep strong coupling regimes, where the RWA, or similar approximations, is no longer suitable
and the full Hamiltonian of the QRM has to be considered (for a review of recent developments, see \cite{Le2016}).
In contrast with the Jaynes-Cummings model, which has a continuous $U(1)$-symmetry, the QRM only has a
$\Z_2$-symmetry (parity). In 2011, paying attention to this $\Z_2$-symmetry, Braak found the analytical solutions of eigenstates (for the non-exceptional type)
and derived the conditions for determining the energy spectrum of the QRM \cite{B2011PRL} (see also \cite{B2013MfI}).
These conditions are described by the so-called $G$-functions in \cite{B2011PRL,B2011PRL-OnlineSupplement}.
Since then, various aspect of the QRM and its generalizations have been discussed widely and intensively,
and developed from the theoretical viewpoint (see \cite{Le2016} and references therein). For instance, for large eigenvalues of the quantum Rabi model a three-term asymptotic formula with an oscillatory term was recently obtained in \cite{BdMZ2019}.

In the present paper, we study the spectrum of the asymmetric quantum Rabi model (AQRM) \cite{Le2016}.
This asymmetric model actually provides a more realistic description of the circuit QED experiments
employing flux qubits than the QRM itself \cite{Ni2010,YS2018}. The Hamiltonian $\HRabi{\e}$ of the AQRM ($\hbar=1$) is given by
\begin{equation}\label{eq:aH}
  \HRabi{\e} = \omega a^\dag a+\Delta \sigma_z +g\sigma_x(a^\dag+a) + \e \sigma_x,
\end{equation}
where $a^\dag$ and $a$ are the creation and annihilation operators of the bosonic mode,
i.e. $[a,\,a^\dag]=1$ and
\(\sigma_x = \begin{bmatrix}
 0 & 1  \\
 1 & 0
\end{bmatrix} \),
\( \sigma_z= \begin{bmatrix}
 1 & 0  \\
 0 & -1
\end{bmatrix}\)
are the Pauli matrices, $2\Delta$ is the energy difference between the two levels,
$g$ denotes the coupling strength between the two-level system and the bosonic mode with frequency $\omega$
(subsequently, we set $\omega=1$ without loss of generality) and \(\e\) is a real parameter.
The Hamiltonian of the ``symmetric'' quantum Rabi model (QRM) is then given by $\HRabi{0}$.
In this respect, the AQRM has been also referred to as the generalized-, biased- or driven QRM
(see, e.g. \cite{B2011PRL,LB2015JPA,Le2016}).

The initial purpose of the present paper is to study the ``exceptional" eigenvalues of the AQRM
and to determine the degeneracy of its eigenstates. Let us first recall the situation for the QRM. In this case, an eigenvalue \(\lambda\) is
called exceptional if \( \lambda = N - g^2\) for a non-negative integer \(N \in \Z_{\geq 0}\). It was shown by Ku\'s \cite{K1985JMP} that the
degeneracy of an eigenstate (i.e. the energy level crossing at the spectral graph) happens in the QRM if and only if the eigenvalue is exceptional
and the corresponding state is essentially described by a polynomial, i.e. a Juddian eigensolution \cite{J1079}. Non-degenerate
exceptional eigenvalues are also present in the spectrum of QRM (cf. \cite{MPS2013,B2013MfI}), and we call these eigenvalues (and the associated
eigensolutions) non-Juddian exceptional. Exceptional eigenvalues, especially Juddian eigenvalues, are considered to be remains
of the eigenvalues of the uncoupled bosonic mode (i.e. the quantum harmonic oscillator).

Similarly, in the AQRM case, an eigenvalue \(\lambda\) is called exceptional if there is an integer \(N \in \Z_{\geq 0}\) such that
\( \lambda\) is of the form  \( \lambda= N \pm \e - g^2\) \cite{LB2015JPA}. An eigenvalue which is not exceptional
is called regular and is always non-degenerate \cite{B2011PRL}. The presence of degeneracy, in other words, a level crossing in the spectral graph,
for the asymmetric model is highly non-trivial. This is because the additional term $\e \sigma_x$ breaks the $\Z_2$-symmetry
which couples the bosonic mode and the two-level system by allowing spontaneous tunneling between the two atomic states.
The AQRM has been studied, for instance, numerically in the context of the process of physical bodies reaching thermal
equilibrium through mutual interaction \cite{La2013}. In recent works \cite{EJ2017,SK2017} on the AQRM and the quantum
Rabi-Stark model (another generalization of the QRM) the respective authors have studied the degeneracies of the spectrum from different
points of view than the present work.

Without the $\Z_2$-symmetry, there seems to be no invariant subspaces whose respective spectral graphs intersect to create
``accidental'' degeneracies in the spectrum for specific values of the coupling. However, the presence of degeneracies (crossings
at the spectral graph) was claimed for the case $\e=\frac12$ and supporting numerical evidence was presented for the half-integral
parameter $\e$ in \cite{LB2015JPA}, investigating an earlier empirical observation in \cite{B2011PRL}.
Moreover, this numerical verification was proved for $\e=\frac12$ and formulated mathematically as a conjecture
(see \S \ref{sec:constPoly}, Conjecture \ref{W2016JPA}) for the general half-integer $\e$ case in \cite{W2016JPA},
hinting at a hidden symmetry present in this case. In this paper we prove the conjecture affirmatively in general for \(\e = \ell/2 \, (\ell \in \Z) \)
(cf. Theorem \ref{thm:Main}).

Let us now briefly draw the whole picture of the spectrum of the AQRM by restricting ourselves to mention the technical issues
that we prove in this paper. The eigenvalues of the AQRM can be visualized in the spectral graph,
that is, the graph of the curves \(\{\lambda_i(g)\}_{i=1}^{\infty}\) in the \((g,E)\)-plane (\(E=\text{Energy}\)) for fixed \(\e \in \R\) and \(\Delta>0\).
In this picture, the exceptional eigenvalues are those that lie in the energy curves \(E  = N \pm \e - g^2 \), as shown conceptually
in Figure \ref{fig:ExceptEigen}(a).

\begin{figure}[htb]
  \begin{subfigure}[b]{0.45\textwidth}
    \centering
    \includegraphics[width=5.5cm]{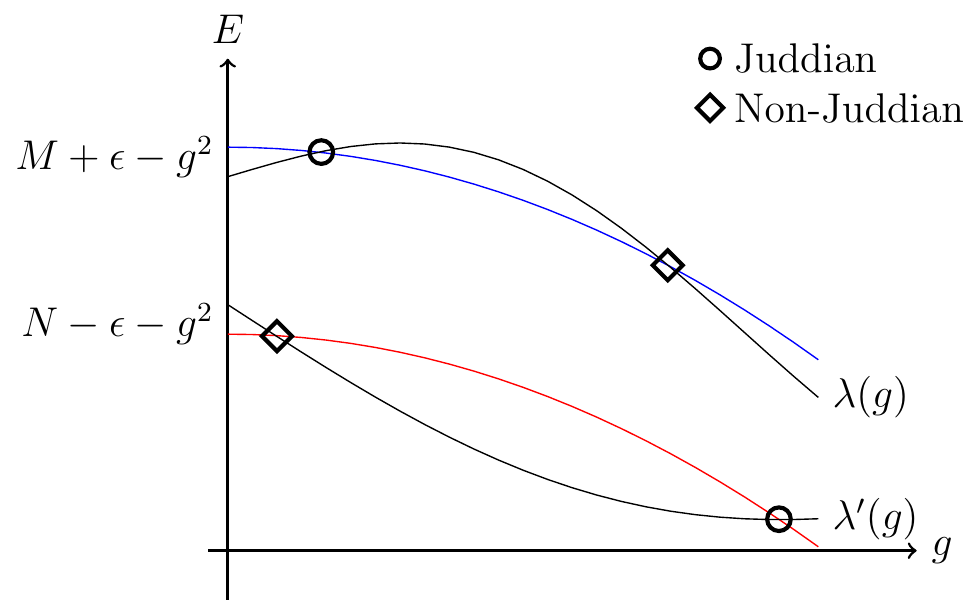}
   \caption{Eigenvalue curves $\lambda(g),\lambda'(g)$ and exceptional eigenvalues
      of $\HRabi{\e}$ for two integers $N,M \in \Z_{\geq 0}$. (e.g. see Figure~\ref{fig:specgraph_nhi}(a)) }
  \end{subfigure}
  ~
  \begin{subfigure}[b]{0.45\textwidth}
    \centering
    \includegraphics[height=4cm]{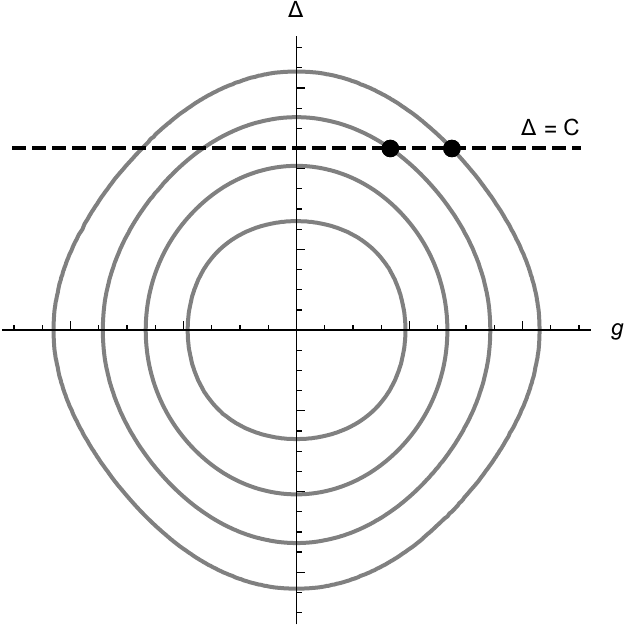}
    \caption{Curve \(\cp{M,\e}{M}((2g)^2,\Delta^2)=0\) in the \((g,\Delta)\)-plane. Juddian solutions correspond
      to points in the quadrant \(g,\Delta>0 \)}
  \end{subfigure}
  \caption{Exceptional eigenvalues of AQRM.}
  \label{fig:ExceptEigen}
\end{figure}

An eigenfunction \(\psi\) corresponding to an exceptional eigenvalue \(\lambda\) is called a
{\em Juddian solution} if its representation in the Bargmann space \(\Bargmann\) (cf. \S\ref{sec:confpict}) consists of polynomial components.
The associated eigenvalue \(\lambda\) is also called Juddian. Juddian solutions are also called {\em quasi-exact} and
have been investigated by Turbiner \cite{T1988CMP} with a viewpoint of $\mathfrak{sl}_2$-action and Heun operators.
Notice that Juddian solutions are not present for arbitrary
parameters \(g\) and \(\Delta\). In fact, it is known (\cite{LB2016JPA,WY2014JPA}) that an exceptional eigenvalue \(\lambda = N + \e - g^2 \) is present in the spectrum of $\HRabi{\e}$ and corresponds to a Juddian solution if and only if the parameters \(g\) and \(\Delta\) satisfy the polynomial equation
\begin{equation}
  \label{eq:constEq}
  \cp{N,\e}{N}((2g)^2,\Delta^2) = 0.
\end{equation}
The polynomial \(\cp{N,\e}{N}(x,y)\) (cf. \S \ref{sec:detexp}) is called {\em constraint polynomial} and
\eqref{eq:constEq} is called {\em constraint relation}. 
  The constraint polynomial \(\cp{N,\e}{N}(x,y)\) is actually the $N$-th member on a family of polynomials
  \(\{\cp{N,\e}{k}(x,y)\}_{k \geq 0}\) defined by a three-term recurrence relation.
\begin{dfn} \label{def:cp}
  Let $ N \in \Z_{\geq 0}$. The polynomials \(\cp{N,\e}{k}(x,y)\) of degree $k$ are defined
  recursively by
\begin{align*}
  \cp{N,\e}{0}(x,y) &= 1, \\
  \cp{N,\e}{1}(x,y) &= x + y - 1 - 2 \e, \\
  \cp{N,\e}{k}(x,y) &= (k x + y - k(k + 2 \e) ) P_{k-1}^{(N,\e)}(x,y) \\
                    &\qquad  - k(k-1)(N-k+1) x P_{k-2}^{(N,\e)}(x,y),
\end{align*}
for \(k \geq 2 \).
\end{dfn}
We note here that the family of polynomials \(\{\cp{N,\e}{k}(x,y)\}_{k \geq 0}\) falls outside the class of orthogonal
polynomials and therefore require special considerations. In \S \ref{sec:detexp} we describe some of the properties of the polynomials \(\cp{N,\e}{k}(x,y)\) and their roots.

In practice, however, not all exceptional eigenvalues correspond to Juddian (i.e. quasi-exact) solutions and, as in the case of the QRM, we call these eigenvalues and the corresponding eigensolutions {\em non-Juddian exceptional}.
This situation is illustrated conceptually in Figure \ref{fig:ExceptEigen}(a) (see Figure~\ref{fig:specgraph_nhi}(a) for a numerical example).
Further, the constraint relation for non-Juddian exceptional eigenvalues (cf. \S \ref{sec:NonJuddian}), which are shown to be non-degenerate
when $\e=0$ in \cite{B2013MfI}, cannot be obtained in terms of polynomials.

The constraint relation \eqref{eq:constEq} determines a curve in the \((g,\Delta)\)-plane consisting of
a number of concentric closed curves, shown conceptually in Figure \ref{fig:ExceptEigen}(b). In this picture, for fixed
\(\Delta = C>0\) the Juddian eigenvalues \(\lambda = N + \e - g^2 \) of \(\HRabi{\e} \) correspond to points in the intersection of the
curve \(\cp{N,\e}{N}((2g)^2, \Delta^2) = 0\) with the horizontal line \(\Delta = C\) in the \(g,\Delta >0 \) quadrant.

Next, in Figure \ref{fig:excepteigen2} we illustrate conceptually the way degeneracies appear in the exceptional spectrum
for the case $N,\ell\in \Z_{\geq0}$. When \(\e \in \R\) satisfies \(0 <|\e - \ell/2|<\delta \) for small \(\delta >0 \),
there are \(g,g'>0\) such that \( \lambda = N + \ell - \e - g^2 \) and \( \lambda' = N + \e - g'^2\) are
non-degenerate eigenvalues corresponding to Juddian solutions (shown with circle marks in Figure \ref{fig:excepteigen2}(a)).
In addition, exceptional eigenvalues \( \lambda = N + \ell - \e - g''^2 \) with non-Juddian solutions
may be present for \(g'' \not= g,g'\) (shown with diamond marks in Figure \ref{fig:excepteigen2}(a)).
On the other hand, the case \(\e = \ell/2 \,(\ell \in \Z)\) is illustrated in Figure \ref{fig:excepteigen2}(b). In this case, the energy
curves \(E = N + \ell - \e - g^2\) and \(E=N + \e - g^2\) coincide into the curve \( E = N + \ell/2 -g^2 \). As
\(\e \to \ell/2 \), the non-degenerate Juddian eigenvalues lying in the disjoint energy curves of Figure \ref{fig:excepteigen2}(a)
join into a single degenerate Juddian eigenvalue with multiplicity \(2\) lying on the resulting energy curve \( E = N + \ell/2 -g^2 \).
However, we remark that for \(g'>0\), with \(g\not=g'\) there may be additional non-Juddian solutions with
exceptional eigenvalue \(\lambda' = N + \ell/2 -g^2 \), as demonstrated in \cite{MPS2013} for the QRM (case \(\e = 0\)).
In \S \ref{sec:NonJuddian} we present numerical examples of these graphs, and we direct the reader to
\cite{LB2015JPA} for further examples.

\begin{figure}[htb]
  \centering
  \begin{subfigure}[b]{0.45\textwidth}
    \centering
    \includegraphics[width=6cm]{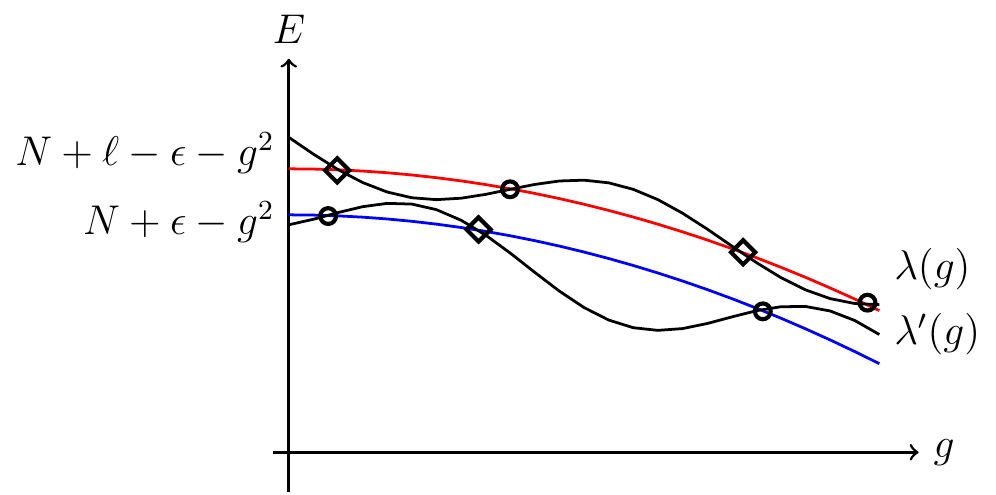}
    \caption{Case \( 0<|\e - \ell/2|< \delta\).}
    \label{fig:2(a)}
  \end{subfigure}
  ~
  \begin{subfigure}[b]{0.45\textwidth}
    \centering
    \includegraphics[width=6cm]{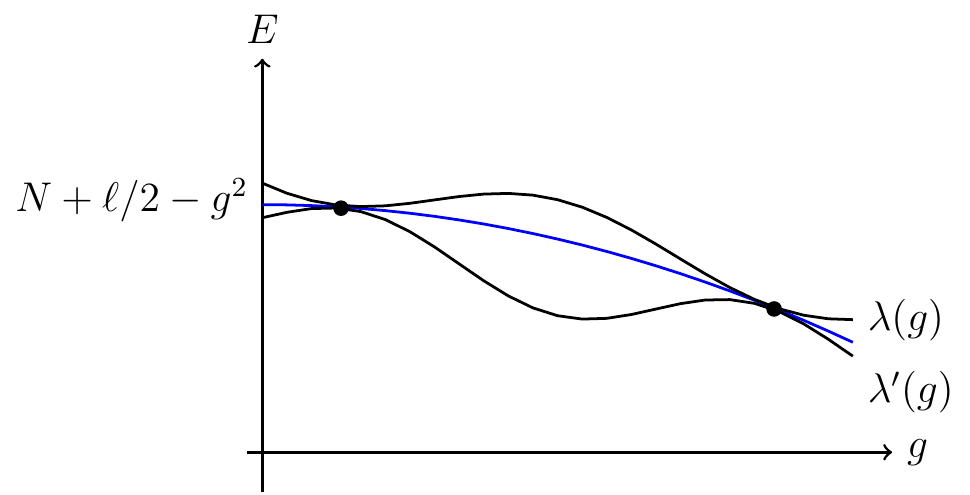}
    \caption{Case \(\e = \ell/2\).}
    \label{fig:2b}
  \end{subfigure}
  \caption{Exceptional eigenvalues for \(N,\ell \in \Z_{\geq 0}\). Circle marks denote Juddian solutions and diamond marks
    denote non-Juddian exceptional solutions. The degeneracies in the case \(\e=\ell/2\) consist only of Juddian solutions.}
  \label{fig:excepteigen2}
\end{figure}

In \S \ref{sec:constPoly} we prove that degenerate exceptional eigenvalues \(\lambda\) with Juddian solutions
exist in general for any half-integer $\e$ (Theorem \ref{thm:Main}) by studying certain determinant expressions for
the constraint polynomials $\cp{N,\e}{N}((2g)^2,\Delta^2)$. In particular, if \(\lambda = N + \ell/2 -g^2 \, (\ell \in \Z) \)
is a Juddian eigenvalue (corresponding to a root of the constraint polynomial) then its multiplicity is \(2\) and
the two linearly independent solutions are Juddian. In \S\ref{sec:degen-eigenv-aqrm} we show that all the Juddian solutions corresponding to
exceptional eigenvalues are degenerate. Moreover, in \S \ref{sec:estimation}, Theorem \ref{LB} we count the exact number of Juddian
solutions relative to the pair \((g, \Delta)\), giving a (complete) generalization of the results given in \cite{LB2015JPA} for the AQRM and in
\cite{K1985JMP} for the QRM.

The situation for the degeneracy of Juddian solutions in the case \(N = 5\) and \(\e = \ell/2 = 3/2\) is illustrated in
Figure \ref{fig:constraintgraph} with the graphs of the curves \(\cp{N,\e}{N}((2g)^2,\Delta^2)=0\) and \(\cp{N+\ell,-\e}{N+\ell}((2g)^2,\Delta^2)=0\)
in the \((g,\Delta)\)-plane for different choices of \(\e > 0 \). As we can see in Figures \ref{fig:constraintgraph}(a)
and Figure \ref{fig:constraintgraph}(b), as \(\e\) tends to \(\ell/2\) the two curves become coincident until finally, at \(\e = \ell/2\)
(Figure \ref{fig:constraintgraph}(c)) the two curves coincide completely. In the case \(\e = \ell/2 \, (\ell \in \Z)\), any point \((g,\Delta)\) with \(g,\Delta>0\)
in the resulting curve corresponds to a degenerate Juddian solution for the eigenvalue \(\lambda = N + \ell/2 - g^2\).
The aforementioned Theorem \ref{thm:Main} (cf. Conjecture \ref{W2016JPA}) gives a complete explanation of the coincidence of
the two curves. In particular, by Theorem \ref{thm:Main} we have the divisibility of the constraint polynomial
$\cp{N+\ell,-\ell/2}{N+\ell}((2g)^2,\Delta^2)$ by $\cp{N,\ell/2}{N}((2g)^2,\Delta^2)$ and positivity of the resulting divisor (a polynomial of degree $\ell$).

We  notice, however, that the crossings between the curves of the constraint relations appearing in Figures \ref{fig:constraintgraph}(a)
and \ref{fig:constraintgraph}(b) do not constitute degeneracies as the associated Juddian solutions have different eigenvalues for \(\e \neq \ell/2 \)
(\(\lambda_1 = N + \e - g^2 \) and \(\lambda_2 = N + \ell - \e -g^2\) respectively).

\begin{figure}[htb]
  ~
  \begin{subfigure}[b]{0.30\textwidth}
    \centering
    \includegraphics[height=4cm]{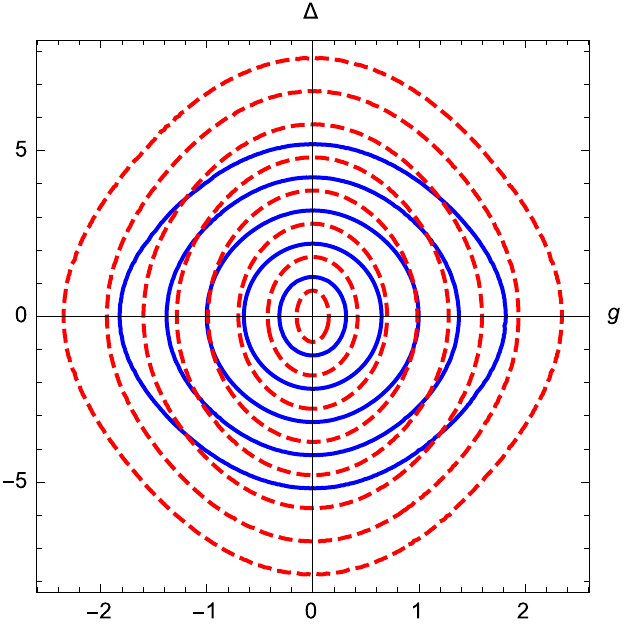}
    \caption{\(\e = 0.2.\)}
  \end{subfigure}
  ~
  \begin{subfigure}[b]{0.30\textwidth}
    \centering
    \includegraphics[height=4cm]{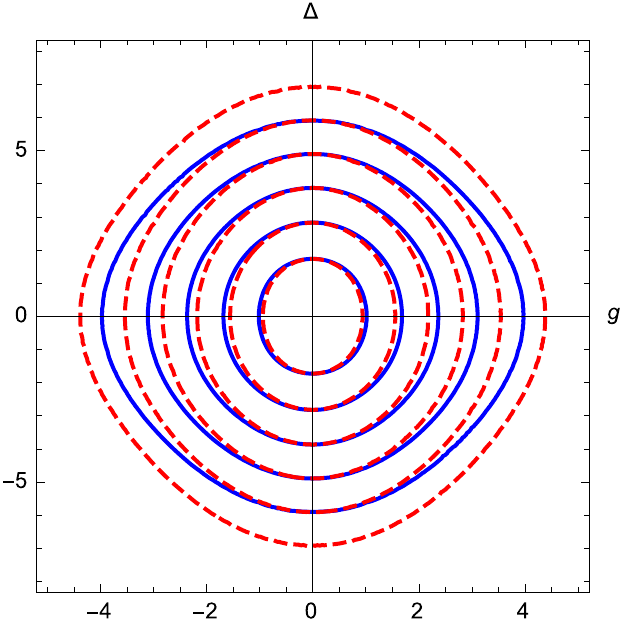}
    \caption{\(\e = 1. \)}
  \end{subfigure}
  ~
  \begin{subfigure}[b]{0.30\textwidth}
    \centering
    \includegraphics[height=4cm]{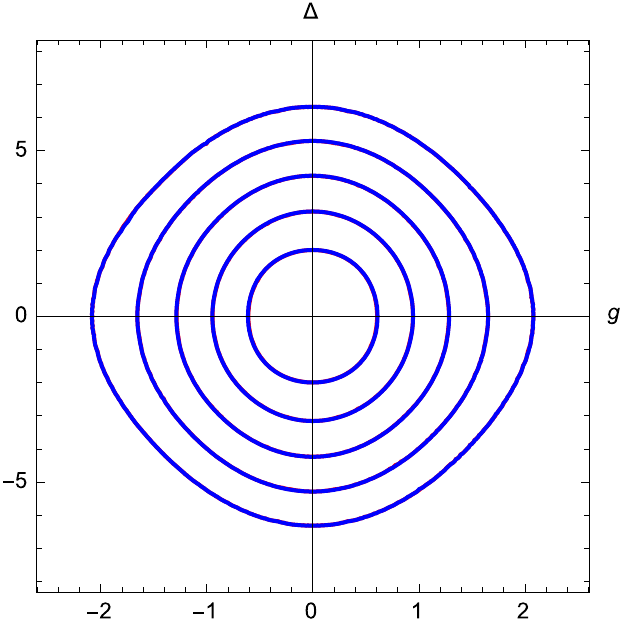}
    \caption{\(\e = 3/2.\)}
  \end{subfigure}
  \caption{Curves \(\cp{5,\e}{5}((2g)^2,\Delta^2)=0\) (continuous line)  and \(\cp{8,-\e}{8}((2g)^2,\Delta^2)=0\) (dashed line). The two curves overlap in
    the case (c) $\e=3/2$ (Theorem \ref{thm:Main}).}
  \label{fig:constraintgraph}
\end{figure}

The second purpose of this paper is to complete the whole picture of the spectrum based on the study of the exceptional eigenvalues,
particularly the aforementioned Juddian eigenvalues.

As in the case of the QRM, eigenvalues other than the exceptional ones are called regular. Equivalently, a regular eigenvalue $\lambda$
is one of the form $\lambda= x \pm\e -g^2 (x \not\in \Z_{\geq 0})$. It is known that regular eigenvalues are non-degenerate. Notice also that
regular eigenvalues are always obtained from zeros of the $G$-function \(G_\e(x;g,\Delta)\) (cf. \S \ref{sec:Gfunct}). In \S \ref{sec:NonJuddian}
we define the constraint $T$-function \(T_\e^{(N)}(g,\Delta)\) whose zeros correspond to exceptional eigenvalues \( \lambda = N \pm  \e - g^2\) with
non-Juddian solution. Thus, the transcendental equation \(T_\e^{(N)}(g,\Delta) = 0\) gives the constraint relation for non-Juddian exceptional
eigenvalues. In \S \ref{sec:NonJuddian}  we present numerical examples of the curves determined by the constraint relation for non-Juddian exceptional
eigenvalues when \(\e = \frac12 \). Further numerical examples can be found in  \cite{MPS2013} for the case \(\e = 0 \) and in
\cite{LB2015JPA} for the non half-integral general case.

We summarize the spectrum of the AQRM using the constraint relations given by the $G$-functions,
the constraint polynomials \(\cp{N,\e}{N}(x,y) \) and constraint $T$-functions. Setting \(N,\ell \in\Z_{\geq0}\), we have

\begin{compactitem}
\item If \(G_\e(x;g,\Delta)= 0\), then \(\lambda = x - g^2\) $(x \pm \e \not \in \Z_{\geq0})$ is a regular eigenvalue.
\item If \( \cp{N,\pm \e}{N}((2g)^2,\Delta^2) = 0 \) ($N\in \Z_{\geq0}$), then \(\lambda = N \pm \e - g^2\) is an exceptional eigenvalue with Juddian solution.
  Furthermore, if \(\e = \ell/2 \) \( (\ell \in \Z_{\geq 0})\) and \(\lambda = N + \ell/2 - g^2\) is a Juddian eigenvalue for some $N$, then  \(\lambda\)
  is degenerate with multiplicity two and the two eigensolutions are Juddian. Moreover, we show that \(\lambda = N - \ell/2 - g^2\) with
  \(N  \leq  \ell \) does not occur as a Juddian eigenvalue (cf. \S \ref{sec:pos}).
\item  If \( T_{\pm\e}^{(N)}(g,\Delta)  = 0 \)  ($N\in \Z_{\geq 0}$), then \(\lambda = N \pm \e - g^2\) is an exceptional eigenvalue with non-Juddian exceptional
  solution (cf. \S \ref{Non-Juddian}).
\item The spectrum of the AQRM possesses a degenerate eigenvalue if and only if the parameter $\e$ is a half integer. Furthermore, all degenerate
  eigenstates consist of Juddian solutions (cf. Theorem \ref{thm:AQRMSpec} in \S \ref{sec:LargestExponent})
\end{compactitem}

We also make an extensive study of the $G$-function $G_\e(x;g,\Delta)$ and its relation with $T_\e^{(N)}(g,\Delta)$ and $\cp{N, \e}{N}((2g)^2,\Delta^2) $ in
\S\ref{sec:Gfunct}, \S\ref{sec:NonJuddian} and \S\ref{sec:SpectralDet}. Especially, we observe that the meromorphic function $G_\e(x;g,\Delta)$
actually possesses almost complete information about Juddian and non-Juddian exceptional eigenvalues. In particular, in \S\ref{sec:SpectralDet} the pole
structure of $G_\e(x;g,\Delta)$ reveals a finer structure for exceptional eigenvalues for $\e=\ell/2\,(\ell\in \Z_{\geq 0})$. For instance, we see that
$G_{\ell/2}(x;g,\Delta)$ can have, in general, simple poles at \( x = N - \ell/2 \, (N  < \ell )\) and double poles at \( x = N + \ell/2 \, ( N \in \Z_{\geq 0})\).
When \(\lambda = N - \ell/2-g^2\) with \(N  < \ell \) is a non-Juddian exceptional eigenvalue the simple pole at \( x = N - \ell/2 \) disappears (cf. Proposition
\ref{prop:polehi1simple}). In contrast, when \(\lambda = N + \ell/2 - g^2 \) is a Juddian eigenvalue the double pole of $G_{\ell/2}(x;g,\Delta)$ at $x=N\pm\ell/2$
disappears and if there is a non-Juddian exceptional eigenvalue \(\lambda = N + \ell/2 - g^2 \), then the double pole of $G_{\ell/2}(x;g,\Delta)$ at $x=N\pm\ell/2$
either vanishes or is simple (cf. Proposition \ref{prop:polehi}).
Moreover, we prove that the meromorphic function $G_\e(x;g,\Delta)$ is essentially, i.e. up to a multiple of two gamma functions, identified with the
spectral determinant of the Hamiltonian $\HRabi{\e}$. In other words, $G_\e(x;g,\Delta)$ is expressed by the zeta regularized product
(cf. \cite{QHS1993TAMS}) defined by the Hurwitz-type spectral zeta function of the AQRM, and equivalently this fact confirms the (physically
intuitive) experimental numerical observation done in \cite{LB2016JPA}.

In \S\ref{sec:repr-theor-pict}  we make a representation theoretic description of the non-Juddian exceptional eigenvalues. Recall that the eigenstates of the quantum harmonic oscillator are described by certain weight subspaces of the oscillator representation of \(\mathfrak{sl}_2\) (cf. \cite{HT1992}). Similarly, the Juddian solutions are known to be captured (i.e. determined) by a pair of irreducible finite dimensional representations of \(\mathfrak{sl}_2\) \cite{W2016JPA}. In the same manner, in Theorem \ref{thm:eigenproblem} we show that the non-Juddian exceptional eigenvalues are captured by a pair of lowest weight irreducible representations of \(\mathfrak{sl}_2\). 

Finally, in \S \ref{sec:further} we study generating functions of constraint polynomials with their defining sequence
$P_k^{(N,\e)}((2g)^2,\Delta^2)$. In fact, we observe that the generating function of $P_k^{(N,\e)}((2g)^2,\Delta^2)$ satisfies a
confluent Heun equation (see \S\ref{sec:div}), which can be seen as natural by virtue of certain properties of
the aforementioned $G$-function. As a byproduct of the discussion we have also an alternative proof of the divisibility part of the conjecture.

It is important to notice that, although having analytic solutions, the asymmetric quantum Rabi models,
even in the symmetric (QRM) case, are in general known not to be integrable models in the Yang-Baxter sense \cite{BZ2015}. However, it is interesting to note that recently the existence of monodromies associated with the singular
points of the eigenvalue problem for the quantum Rabi model has been discussed in \cite{CCR2016}. We also remark that there has been recent interest in the integrability of the  Jaynes-Cummings model and generalizations (see e.g. \cite{LPN2016}). Moreover, we note that although there are various coupling regimes of the AQRM given in terms of $\Delta, \omega\, (=1)$ and $g$ physically, the discussion in this paper is independent of the choice of regimes (cf. \cite{BMSSRWU2017}).

\section{Confluent Heun picture of AQRM} \label{sec:confpict}

In this section, using the confluent Heun picture, we give a description of the exceptional (Juddian and non-Juddian exceptional) and regular eigenvalues of the AQRM. For that purpose, we employ the Bargmann space \(\Bargmann\) representation of boson operators \cite{Bar1961}, in the standard way (cf. \cite{K1985JMP,B2011PRL-OnlineSupplement}, etc), to reformulate the eigenvalue problem of \(\HRabi{\e}\) as a system of linear differential equations.

Recall that the Hilbert space $\Bargmann$ is the space of entire functions equipped with the inner product
\[
  (f|g)= \frac1\pi \int_{\C} \overline{f(z)}g(z)e^{-|z|^2}d(\RE(z))d(\IM(z)).
\]
In this representation, the operators $a^\dag$ and $a$ are realized as the
multiplication and differentiation operators over the complex variable: $a^\dag = (x-\partial_x)/\sqrt2 \to z$ and $a= (x+\partial_x)/\sqrt2 \to \partial_z:=\frac{d}{dz}$,
so that the Hamiltonian \(\HRabi{\e}\) is mapped to the operator
\[
\tHRabi{\e} :=  \begin{bmatrix}
    z \partial_z + \Delta & g(z + \partial_z) + \e  \\
    g(z+\partial_z) + \e  & z \partial_z - \Delta
  \end{bmatrix}.
\]
By the standard procedure (cf. \cite{B2013AP,LB2015JPA}), we observe that the Schr\"odinger equation
$\HRabi{\e}\varphi=\lambda \varphi \, (\lambda \in \R)$ is equivalent to the system of first order differential equations
\begin{equation*}
\tHRabi{\e}\psi=\lambda \psi, \quad
\psi=  \begin{bmatrix}
 \psi_{1}(z) \\
 \psi_{2}(z)
\end{bmatrix}.
\end{equation*}
Hence, in order to have an eigenstate of $\HRabi{\e}$, it is sufficient to obtain an eigenstate
$\psi \in \Bargmann$, that is, $\mathbf{BI}$: $(\psi_i|\psi_i) <\infty $, and $\mathbf{BII}$: $\psi_i$ are holomorphic everywhere in the whole complex
plane $\C$ for $i=1,2$. Actually, it can be shown that any such function satisfying condition $\mathbf{BII}$ also satisfies
the condition $\mathbf{BI}$ (cf. \cite{B2013AP}).

Therefore, the eigenvalue problem of the AQRM amounts to finding entire functions \(\psi_1,\psi_2 \in \Bargmann\) and real number
\( \lambda \) satisfying
\begin{equation*}
  \left\{
  \begin{aligned}
    (z \partial_z + \Delta) \psi_1 + (g (z + \partial_z) + \e) \psi_2 &= \lambda \psi_1, \\
    (g (z + \partial_z) + \e) \psi_1 + (z \partial_z - \Delta) \psi_2 &= \lambda \psi_2.
  \end{aligned}
  \right.
\end{equation*}

Now, by setting \(f_\pm = \psi_1 \pm \psi_2 \), we get
\begin{equation}  \label{eq:system-1}
  \left\{
  \begin{aligned}
  (z + g) \frac{d}{d z} f_+ + (g z + \e - \lambda) f_+ + \Delta f_- &= 0, \\
  (z - g) \frac{d}{d z} f_- - (g z + \e + \lambda) f_- + \Delta f_+ &= 0,
  \end{aligned}
  \right.
\end{equation}
where, by using the substitution \( \phi_{1,\pm}(z) := e^{g z} f_\pm (z) \)
and the change of variable \( y = \frac{g+z}{2 g}\), we obtain
\begin{equation} \label{eq:system0}
  \left\{
  \begin{aligned}
  y \frac{d}{d y} \phi_{1,+}(y) &= ( \lambda + g^2  - \e ) \phi_{1,+}(y) - \Delta \phi_{1,-}(y), \\
  (y-1) \frac{d}{d y} \phi_{1,-}(y) &= ( \lambda + g^2 - \e  -4 g^2 + 4 g^2 y + 2\e) \phi_{1,-}(y) - \Delta \phi_{1,+}(y).
  \end{aligned} \right.
\end{equation}
Defining \( a := -(\lambda + g^2 - \e) \), we get
\begin{equation} \label{eq:system1}
  \left\{
  \begin{aligned}
   y \frac{d}{d y} \phi_{1,+}(y) &= - a \phi_{1,+}(y) - \Delta \phi_{1,-}(y),  \\
   (y-1) \frac{d}{d y} \phi_{1,-}(y) &= -( 4 g^2 - 4 g^2 y  + a - 2\e) \phi_{1,-} (y) - \Delta \phi_{1,+} (y).
  \end{aligned} \right.
\end{equation}

Similarly, by applying the substitutions \( \phi_{2,\pm}(z) := e^{-g z} f_\pm (z) \) and \(\bar{y} = \frac{g-z}{2g}\) to the system
\eqref{eq:system-1}, we get
\begin{equation} \label{eq:system2p}
  \left\{
  \begin{aligned}
    (\bar{y}-1) \frac{d}{d \bar{y}} \phi_{2,+}(\bar{y}) &= -(4g^2 - 4 g^2 \bar{y} + a)\phi_{2,+}(\bar{y}) - \Delta \phi_{2,-}(\bar{y}),  \\
    \bar{y} \frac{d}{d \bar{y}} \phi_{2,-}(\bar{y}) &= -( a  -  2\e) \phi_{2,-} (\bar{y}) - \Delta \phi_{2,+} (\bar{y}).
  \end{aligned} \right.
\end{equation}
This system gives another (possible) solution \((\phi_{2,+}(\bar{y}),\phi_{2,-}(\bar{y}))\) to the eigenvalue problem.
Note that $a  -  2\e= -(\lambda+g^2+\e)$ and that \(\bar{y} = 1-y\), where \(y\) is the variable used in \eqref{eq:system1}.

The singularities of system \eqref{eq:system1} and  \eqref{eq:system2p} at \(y =0\) and \( y=1 \) are regular.
The exponents of the equation system can be obtained by standard computation, and are shown in Table \ref{tab:exp} for reference.

\ctable[
caption = Exponents of systems \eqref{eq:system1} and \eqref{eq:system2p}.,
label   = tab:exp,
pos     = htb
]{rcllll}{
}{                                                          \FL
&            & \(\phi_{1,-}(y) \)    & \(\phi_{1,+}(y) \) & \(\phi_{2,-}(1-y)\) &  \(\phi_{2,+}(1-y)\)\ML
&\(y=0 \) & \(0, - a+1\)  & \(0, -a \) &  \(0,-a + 1 \)   &     \(0, - a \)    \NN
&\(y=1 \) & \(0, -a + 2\e \) &  \(0, -a + 2\e+ 1 \)& \(0, -a + 2\e \)  &   \(0,-a+2\e + 1 \)  \LL
}

We remark that Table \ref{tab:exp} in particular shows that each regular eigenvalue is not degenerate because one of
the exponents is necessarily not an integer.

Each differential equation system determines a second order differential operator of confluent Heun type \cite{Heun2008,SL2000}. For instance, by eliminating \(\phi_{1,-}(y) \) from the system \eqref{eq:system1} we obtain the operator
  \begin{equation}
    \label{eq:H1eps}
        \mathcal{H}_1^{\e}(\lambda) = \frac{d^2}{d y^2} + \left( -4 g^2  + \frac{a+1}{y} + \frac{a-2 \epsilon}{y-1} \right) \frac{d}{d y} + \frac{- 4g^2 a y +  \mu + 4\epsilon g^2 - \epsilon^2 }{y(y-1)}.
  \end{equation}
  Similarly, by eliminating \(\phi_{2,-}(\bar{y})\) from the system \eqref{eq:system2p} we obtain
  \begin{equation}
    \label{eq:H2eps}
    \mathcal{H}_2^{\e}(\lambda) = \frac{d^2}{d \bar{y}^2} + \left( -4 g^2  + \frac{a - 2\epsilon}{\bar{y}} + \frac{a+1}{\bar{y}-1} \right) \frac{d}{d \bar{y}} + \frac{- 4g^2( a - 2\epsilon + 1) y +  \mu - 4\epsilon g^2 - \epsilon^2 }{\bar{y}(\bar{y}-1)}.
  \end{equation}
  Here, the accesory parameter \(\mu \) is given by
  \[
    \mu = (\lambda + g^2)^2 - 4g^2 (\lambda+g^2) - \Delta^2.
  \]
  
  In \S \ref{sec:repr-theor-pict} we describe how these operators can be captured by a particular element of \(\mathcal{U}(\mathfrak{sl}_2)\), the universal enveloping algebra of \(\mathfrak{sl}_2\), and how different type of eigenvalues (regular, Juddian, non-Juddian exceptional) correspond to distinct type of irreducible representations of \(\mathfrak{sl}_2\).

\subsection{Exceptional solutions corresponding to the smallest exponent} \label{sec:smallexp}

We proceed to study exceptional eigenvalues from the point of view of the confluent picture of
the AQRM. A part of the discussion here follows \cite{B2013AP} and \cite{B2013MfI} for $\e=0$.
Recall that an eigenvalue \(\lambda\) of \(\HRabi{\e}\) is exceptional if there is an integer
\(N \in \Z_{\geq 0}\) such that $\lambda=N\pm \e -g^2$.

Let us take $a$ as \( -a = (\lambda + g^2 - \e) = N \in \Z_{\geq 0}\).
The corresponding system \eqref{eq:system1} of differential equations is then given by
\begin{equation} \label{eq:systemN}
  \left\{
    \begin{aligned}
    y \frac{d}{d y} \phi_{1,+}(y) &= N \phi_{1,+}(y) - \Delta \phi_{1,-}(y)  \\
    (y-1) \frac{d}{d y} \phi_{1,-}(y) &= ( N  -4 g^2 + 4 g^2 y + 2\e) \phi_{1,-}(y) - \Delta \phi_{1,+}(y).
    \end{aligned} \right.
\end{equation}
The exponents of \(\phi_{1,-}\) at \(y=0\) are \( \rho^-_1 =0, \rho^-_2 = N+1\). Likewise, the exponents of \(\phi_{1,+}\) at
\(y=0\) are \( \rho^+_1 =0, \rho^+_2 = N\). Since the difference between the exponents is a positive integer,
the local analytic solutions will develop a logarithmic branch-cut at \(y=0\).

In this subsection, we revisit Juddian solutions. The local Frobenius solution corresponding to the smallest
exponent \(\rho_1^- = 0\) has the form
\begin{equation}\label{smallerFrobenius}
  \phi_{1,-}(y) (= \phi_{1,-}(y;\e)) =  \sum_{n=0}^\infty K^{(N,\e)}_n y^n,
\end{equation}
where \(K^{(N,\e)}_0 \not= 0 \) and \(K^{(N,\e)}_n = K^{(N,\e)}_n(g,\Delta)\). Integration of the first equation of \eqref{eq:systemN}
gives

\begin{equation}\label{LogTermFrobenius}
  \phi_{1,+}(y) (=\phi_{1,+}(y;\e))
  = c y^N - \Delta \sum_{n \neq N}^\infty \frac{K^{(N,\e)}_n}{n-N} y^n - \Delta K^{(N,\e)}_N y^N \log y,
\end{equation}
with constant \(c \in \C \). A necessary condition for \(\phi_{1,+}(y)\) to be an element of the Bargmann space \(\Bargmann\) is that
\(\phi_{1,+}(y)\) is an entire function, forcing \(K_N^{(N,\e)} = 0\) to make the logarithmic term vanish. Suppose \(\phi_{1,+}(y)\ \in \Bargmann\),
then by using the second equation of \eqref{eq:systemN} we obtain the recurrence relation for the coefficients
\begin{equation}\label{eq:recurrKn1}
  (n+1) K^{(N,\e)}_{n+1} + \left(N-n - (2g)^2 + \frac{\Delta^2}{n-N} + 2\e \right) K^{(N,\e)}_n + (2g)^2 K^{(N,\e)}_{n-1} = 0,
\end{equation}
valid for \(n \not= N\). This recurrence relation clearly shows the dependence of the coefficients \(K^{(N,\e)}_n = K^{(N,\e)}_n(g,\Delta)\)
on the parameters of the system. Additionally, for \( n = N \), by the second equation of \eqref{eq:systemN}, we have
\begin{equation}\label{eq:constant}
 \Delta c = (2g)^2 K^{(N,\e)}_{N-1} + (N+1) K^{(N,\e)}_{N+1}.
\end{equation}
Setting \( c = (2g)^2 K^{(N,\e)}_{N-1} / \Delta \) makes \( K^{(N,\e)}_{N+1}\) vanish, and then, by repeated use of the recurrence
\eqref{eq:recurrKn1}, we see that for all positive integers \(k\) the coefficients \(K^{(N,\e)}_{N+k}\) also vanish.
Thus, the solutions of \eqref{eq:systemN} given by
\begin{align}\label{eq:polynomial_sol}
    \phi_{1,-}(y) &= \sum_{n=0}^{N-1} K^{(N,\e)}_n y^n, \\
    \phi_{1,+}(y) &= \frac{4 g^2 K^{(N,\e)}_{N-1}}{\Delta} y^N - \Delta \sum_{n=0}^{N-1} \frac{K^{(N,\e)}_n}{n-N} y^n
\end{align}
are polynomial solutions.

From the discussion above, we can regard the equation
\begin{equation}
  \label{eq:constK}
  K_N^{(N,\e)}(g,\Delta) = 0,
\end{equation}
as a {\em constraint relation} for the Juddian eigenvalue \(\lambda = N \pm \e - g^2\). In this context, a constraint
relation is an additional condition imposed to the parameters of the differential equation system in order to obtain
certain type of solutions. In fact, the constraint equation \eqref{eq:constK} is equivalent to the constraint relation \eqref{eq:constEq} given in the introduction.

In order to see this, let us introduce some notation used throughout the paper. For a tridiagonal matrix, we put
\begin{equation*}
  \Tridiag{a_i}{b_i}{c_i}{1\le i\le n}
  :=\begin{bmatrix}
    a_1 & b_1 & 0 &  0 & \cdots & 0    \\
    c_1 & a_2 & b_2 & 0 & \cdots  & 0\\
    0 & c_2 & a_3 & b_3 &  \cdots & 0 \\
    \vdots & \ddots   & \ddots & \ddots & \ddots & \vdots   \\
    0 &  \cdots &  0 & c_{n-2} & a_{n-1} & b_{n-1} \\
    0 & \cdots  & 0 & 0 & c_{n-1} & a_n
  \end{bmatrix}.
\end{equation*}

\begin{prop}\label{Prop:reccurEquiv}
  Let \(N \in \Z_{\geq 0}\) and fix \(\Delta> 0\).   Then, the zeros \(g\) of \(K^{(N,\e)}_N = K_N^{(N,\e)}(g,\Delta)\) defined by
  \eqref{eq:recurrKn1} and \( P_N^{(N,\e)}((2g)^2,\Delta^2)\) coincide.
  In particular, if \(g\) is a zero of \(\cp{N,\e}{N}((2g)^2,\Delta^2)\), then \( \lambda = N  + \e - g^2\) is an exceptional eigenvalue with
  corresponding Juddian solution given by \(\phi_{1,+}(y)\) and \(\phi_{1,-}(y)\) above.
\end{prop}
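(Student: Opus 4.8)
The statement has two independent pieces, and the second one is nearly immediate from the discussion preceding the proposition. Indeed, once $K_N^{(N,\e)}=0$ the logarithmic term in \eqref{LogTermFrobenius} vanishes, so taking $c=4g^2K_{N-1}^{(N,\e)}/\Delta$ forces $K_{N+1}^{(N,\e)}=0$ and hence, by iterating \eqref{eq:recurrKn1}, $K_{N+k}^{(N,\e)}=0$ for all $k\ge1$; the resulting $\phi_{1,\pm}$ are the polynomials \eqref{eq:polynomial_sol}, which lie in $\Bargmann$, so $\lambda=N+\e-g^2$ is a Juddian eigenvalue. Thus the real content is the first assertion, the coincidence of the $g$-zeros of $K_N^{(N,\e)}(g,\Delta)$ and of $\cp{N,\e}{N}((2g)^2,\Delta^2)$.

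The plan is to prove the sharper algebraic identity $\cp{N,\e}{N}((2g)^2,\Delta^2)=(N!)^2\,K_N^{(N,\e)}(g,\Delta)$, from which the coincidence of zero sets is immediate since $(N!)^2\neq0$. Writing $x=(2g)^2$ and $y=\Delta^2$, I first normalize $K_0^{(N,\e)}=1$ and recast \eqref{eq:recurrKn1} (for $0\le n\le N-1$, with $K_{-1}^{(N,\e)}=0$) as the three-term recurrence
\[ (n+1)K_{n+1}^{(N,\e)}=\tilde b_n\,K_n^{(N,\e)}-x\,K_{n-1}^{(N,\e)},\qquad \tilde b_n:=x-(N-n)-2\e+\frac{y}{N-n}. \]
Since the denominators $N-n$ are nonzero integers for $n<N$, each $K_n^{(N,\e)}$ is in fact a polynomial in $x,y$. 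The key observation is the elementary identity $(N-n+1)\tilde b_{n-1}=b_{N-n+1}$, where $b_k:=kx+y-k(k+2\e)$ is exactly the diagonal coefficient appearing in the recurrence of Definition \ref{def:cp}.

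I would then clear the two layers of denominators by the rescaling $\bar K_n:=\frac{N!}{(N-n)!}\,n!\,K_n^{(N,\e)}$, which converts the recurrence above into
\[ \bar K_n=b_{N-n+1}\,\bar K_{n-1}-(n-1)(N-n+1)(N-n+2)\,x\,\bar K_{n-2}, \]
with $\bar K_0=1$, $\bar K_1=b_N$, and $\bar K_N=(N!)^2K_N^{(N,\e)}$. Both $\bar K_N$ and $\cp{N,\e}{N}$ are now continuants, i.e. tridiagonal determinants in the notation just introduced: $\cp{N,\e}{N}$ has diagonal $(b_1,\dots,b_N)$ and off-diagonal products $c_k=k(k-1)(N-k+1)x$, whereas $\bar K_N$ has the reversed diagonal $(b_N,\dots,b_1)$ and off-diagonal products $(n-1)(N-n+1)(N-n+2)x$. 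A direct check shows these equal $c_{N-n+2}$, so the two tridiagonal matrices are anti-transposes of one another; since anti-transposition preserves the determinant, $\bar K_N=\cp{N,\e}{N}$, which is the desired identity.

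I expect the main difficulty to be organizational rather than conceptual: one must keep careful track of the two rescaling factors (the factorials clearing the $1/(n+1)$ and the falling factorials clearing the $y/(N-n)$ terms) and verify that the off-diagonal products of the two continuants genuinely match under the index reversal $n\mapsto N+1-n$. One should also confirm that this reversal is an anti-transposition of the full tridiagonal matrices and not merely a relabeling of the scalar recurrences; for this it suffices to recall that a tridiagonal determinant depends only on its diagonal entries and on the products of matching pairs of off-diagonal entries, so that verifying the product identity $c_{N-n+2}=(n-1)(N-n+1)(N-n+2)x$ closes the argument.
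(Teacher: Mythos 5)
Your proposal is correct and follows essentially the same route as the paper: both normalize $K_0^{(N,\e)}=1$, recast \eqref{eq:recurrKn1} as a tridiagonal continuant, clear the denominators by a rescaling whose total factor is $(N!)^2$ (your $\bar K_n=\tfrac{N!\,n!}{(N-n)!}K_n$ is exactly the paper's factoring of $\tfrac{1}{i(N+1-i)}$ from each row), and then identify the resulting continuant with that of $\cp{N,\e}{N}((2g)^2,\Delta^2)$ via the index reversal $n\mapsto N+1-n$, which the paper builds into its determinant expression from the start and you carry out explicitly as an anti-transposition. The identity $\cp{N,\e}{N}((2g)^2,\Delta^2)=(N!)^2K_N^{(N,\e)}$ and the reduction of the Juddian claim to the preceding discussion are exactly as in the paper.
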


\begin{proof}
  By multiplying \(K_n^{(N,\e)}\) by \((K_0^{(N,\e)})^{-1} \) for all \(n \in \Z_{\geq 0}\), we can assume that \(K_0^{(N,\e)} = 1 \).
  Then, we can rewrite the recurrence relation for the coefficients \(K_n^{(N,\e)}\) as
  \begin{equation*}
    K_{n}^{(N,\e)} = \frac{1}{n} \left( (2g)^2 + \frac{\Delta^2}{N-n+1} + n-1 - N - 2\e \right)K_{n-1}^{(N,\e)}
    - \frac{1}{n} (2g)^2 K_{n-2}^{(N,\e)},
  \end{equation*}
  for \(n \leq N\). As in \S \ref{sec:detexp}, we easily see that \(K_N^{(N,\e)}\) has the determinant expression
  \begin{equation*}
    K^{(N,\e)}_N=\det\Tridiag{\frac1{N-i+1}((2g)^2+\frac{\Delta^2}i-i-2\e)}{\frac{2g}{N-i+1}}{2g}{1\le i\le N}.
  \end{equation*}
  Next, for \(i=1,2,\ldots,N \), factor \(\frac{1}{i(N+1-i)}\) from the \(i\)-th row in the
  determinant to get the expression of $K^{(N,\e)}_N$ as
  \begin{equation*}
    \frac1{(N!)^2}\det\Tridiag{i(2g)^2+\Delta^2-i^2-2i\e}{2ig}{(2N-1-i)g}{1\le i\le N}.
  \end{equation*}
  The recurrence relation corresponding to this continuant is the same as the recurrence relation of \(\cp{N,\e}{k}((2g)^2,\Delta^2)\) (cf. Definition \ref{def:cp}), including the initial conditions.
  Thus
  \[
    K_N^{(N,\e)}(N+\e;g,\Delta,\e) = \frac{1}{(N!)^2}\cp{N,\e}{N}((2g)^2,\Delta^2),
  \]
  completing the proof. 
\end{proof}

\subsection{Exceptional solutions corresponding to the largest exponent}\label{sec:LargestExponent}

From general theory, the differential equation system \eqref{eq:systemN}, corresponding to an exceptional eigenvalue \(\lambda = N \pm \e - g^2 \), may have a Frobenius solution associated to the largest exponent. Any exceptional eigenvalue arising from such a solution is called non-Juddian exceptional.

The largest exponent of \(\phi_{1,-}\) at \(y=0\) is \(\rho_2^-=N+1\), it follows that there is a local Frobenius solution analytic at \(y=0\) of the form
\begin{equation}\label{eq:largexpsol-}
  \phi_{1,-}(y) (= \phi_{1,-}(y;\e)) =\sum_{n=N+1}^\infty \bar{K}^{(N,\e)}_n y^n,
\end{equation}
where \(\bar{K}^{(N,\e)}_{N+1} \ne 0\) and \(\bar{K}^{(N,\e)}_n = \bar{K}^{(N,\e)}_n(g,\Delta)\). Integration of the first equation of
\eqref{eq:systemN} gives
\begin{equation}\label{eq:largexpsol+}
  \phi_{1,+}(y) (= \phi_{1,+}(y;\e)) = c y^N - \Delta \sum_{n=N+1}^\infty \frac{\bar{K}^{(N,\e)}_n}{n-N} y^n,
\end{equation}
with constant \(c \in \C\). The second equation of \eqref{eq:systemN} gives the recurrence relation
\begin{equation}\label{eq:recurrKn}
(n+1) \bar{K}^{(N,\e)}_{n+1} + \Bigl(N-n - (2 g)^2 + \frac{\Delta^2}{n-N} + 2\e \Bigr) \bar{K}^{(N,\e)}_n
+ (2g)^2 \bar{K}^{(N,\e)}_{n-1} = 0,
\end{equation}
for \(n \geq N+1\) with initial conditions \(\bar{K}^{(N,\e)}_{N+1}=1 \) and \( \bar{K}^{(N,\e)}_{N}=0\). Furthermore, we also have
the condition
\[
  (N+1) \bar{K}^{(N,\e)}_{N+1} = (N+1) =  c \Delta,
\]
which determines value of the constant \(c = (N+1)/\Delta\). Notice that the radius of convergence of each series above equals $1$ from the defining recurrence relation \eqref{eq:recurrKn}.

\begin{rem}
  Notice that $\Delta c = (2g)^2 K^{(N,\e)}_{N-1} + (N+1) K^{(N,\e)}_{N+1}$ in \eqref{eq:constant}. This shows that there are two possibilities
  for the choice of solutions of \eqref{eq:systemN}. In other words, the choice of $c = (2g)^2 K_{N-1} / \Delta$ (for the smallest exponent)
  provides a polynomial solution, i.e. the Juddian solution while the choice $c = (N+1)\bar{K}_{N+1}/\Delta$ (for the largest exponent) provides
  a non-degenerate exceptional solution when the \(g\) satisfies \(T_\e^{(N)}(g,\Delta) = 0\) (cf. \S \ref{sec:Gfunct}). However, there is no chance
  to have contributions from both Juddian and non-Juddian exceptional eigenvalues (cf. Remark \ref{no-non-Juddian}).
\end{rem} 

In \S \ref{sec:NonJuddian} we describe the constraint function and constraint relation for the non-Juddian exceptional solution
corresponding to the eigenvalue \(\lambda = N \pm \e - g^2\).

\subsection{Remarks on regular eigenvalues and the \texorpdfstring{$G$}{G}-function} \label{sec:Gfunct}

To complete the description of the spectrum of the AQRM, we now turn our attention to the regular spectrum.
As we remarked in the introduction, any eigenvalues \(\lambda\) of the AQRM that is not an exceptional is called regular.
The regular eigenvalues arise as as zeros of the $G$-function introduced by Braak \cite{B2011PRL} in the study of
the integrability of the (symmetric) quantum Rabi model. Moreover, the converse also holds, thus the regular
spectrum is completely determined by the zeros of the $G$-function.

The \(G\)-function for the Hamiltonian \(\HRabi{\e}\) is defined as
\begin{equation*}
  G_\e(x;g,\Delta) := \Delta^2\bar{R}^+(x;g,\Delta,\e) \bar{R}^-(x;g,\Delta,\e) - R^+(x;g,\Delta,\e)R^-(x;g,\Delta,\e)
\end{equation*}
where
\begin{gather} \label{eq:gfuncR}
  R^\pm(x;g,\Delta,\e) = \sum_{n=0}^\infty K^{\pm}_n(x) g^n,
  \qquad
  \bar{R}^{\pm}(x;g,\Delta,\e) = \sum_{n=0}^\infty \frac{K^{\pm}_n(x)}{x-n\pm \e} g^n,
\end{gather}
whenever $x \mp \e  \not\in\Z_{\geq0} $, respectively. For \(n \in \Z_{\geq 0}\), define the functions \(f^{\pm}_n = f^{\pm}_n(x,g,\Delta,\e)\) by
\begin{equation}\label{eq:fn}
  f^{\pm}_n(x,g,\Delta,\e) = 2g + \frac{1}{2g} \Bigl( n - x \pm \e + \frac{\Delta^2}{x -n \pm \e}  \Bigr),
\end{equation}
then, the coefficients \(K^{\pm}_n(x)=K^{\pm}_n(x,g, \Delta,\e)\) are given by the recurrence relation
\begin{equation}\label{eq:RecursionKn}
 n K^{\pm}_{n}(x) = f^{\pm}_{n-1}(x,g,\Delta,\e) K^{\pm}_{n-1}(x) - K^{\pm}_{n-2}(x) \quad (n\geq1)
\end{equation}
with initial condition \(K_{-1}^{\pm} = 0\) and \(K_0^{\pm} = 1\), whence \(K_1^{\pm}=f_0^{\pm}(x,g,\Delta,\e)\).
It is also clear from the definitions that the equality
\begin{equation}\label{eq:signKpm}
  K^{\pm}_n(x,g, \Delta,-\e) =K^{\mp}_n(x,g, \Delta,\e),
\end{equation}
holds.

It is well-known (e.g. \cite{B2011PRL,B2013AP,Le2016}) that for fixed parameters \(\{g,\Delta,\e\}\) the zeros \(x_n\) of \( G_\e(x;g,\Delta)\) correspond to the regular eigenvalues \(\lambda_n = x_n-g^2\) of \(\HRabi{\e}\). These eigenvalues are called regular and always non-degenerate as in the case of regular eigenvalues of the quantum Rabi model.

\begin{rem}\label{rem:Parity}
  We remark that in the case of the QRM (i.e. $\e=0$), we have $G_0(x;g,\Delta)= G_+(x)\cdot G_-(x)$,  $G_\pm(x)$ being the $G$-functions
  corresponding to the parity defined as
  \[
    G_\pm(x)=\sum_{n=0}^\infty K_n(x)\Big(1\mp \frac{\Delta}{x-n}\Big)g^n,
  \]
  where $ K_n(x)= K_n^\pm(x,g,\Delta, 0)$ \cite{B2011PRL}.
  It is also known that these functions can be written in terms of confluent Heun functions (cf. \cite{B2013MfI}).
  Note also that there are no degeneracies within each parity subspace.
\end{rem}

The following result is obvious from the definition and the property \eqref{eq:signKpm} above.

\begin{lem}\label{G-function}
  The $G$-functions of \(\HRabi{\e}\) coincides with that of \(\HRabi{-\e}\):
  \begin{equation}
    \label{eq:gfunctrel}
    G_\e(x;g,\Delta)  = G_{-\e}(x;g,\Delta).
  \end{equation}
  In other words, the regular spectrum of \(\HRabi{\e}\) depends only on \(|\e|\). \QEDhere
\end{lem}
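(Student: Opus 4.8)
The plan is to track how each building block of $G_\e$ transforms under the substitution $\e\mapsto-\e$, relying entirely on the coefficient symmetry \eqref{eq:signKpm}. The heart of the matter is that this symmetry interchanges the $+$ and $-$ labels, while $G_\e$ is built symmetrically out of a product of a $+$-factor and a $-$-factor; hence the interchange should be invisible. I expect no genuine obstacle, so the work is really just careful bookkeeping.

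First I would compute the effect of $\e\mapsto-\e$ on the series $R^\pm$ from \eqref{eq:gfuncR}. Substituting \eqref{eq:signKpm} directly gives
\[
  R^\pm(x;g,\Delta,-\e)=\sum_{n=0}^\infty K_n^\pm(x,g,\Delta,-\e)\,g^n
  =\sum_{n=0}^\infty K_n^\mp(x,g,\Delta,\e)\,g^n
  =R^\mp(x;g,\Delta,\e),
\]
so the two series simply swap. The slightly more delicate step is the same computation for $\bar R^\pm$, where one must check that the sign flip inside the denominator $x-n\pm\e$ cooperates with the label swap coming from \eqref{eq:signKpm}: replacing $\e$ by $-\e$ turns $x-n\pm\e$ into $x-n\mp\e$, which is precisely the denominator attached to the opposite label, whence
\[
  \bar R^\pm(x;g,\Delta,-\e)
  =\sum_{n=0}^\infty \frac{K_n^\mp(x,g,\Delta,\e)}{x-n\mp\e}\,g^n
  =\bar R^\mp(x;g,\Delta,\e).
\]

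With these two transformation rules in hand, I would substitute them into the definition of $G_{-\e}$ and use commutativity of the products of the (convergent, for admissible $x$) series:
\[
  G_{-\e}(x;g,\Delta)=\Delta^2\,\bar R^-\bar R^+-R^-R^+
  =\Delta^2\,\bar R^+\bar R^--R^+R^-=G_\e(x;g,\Delta),
\]
where every factor on the right is now evaluated at $\e$. The final assertion, that the regular spectrum of $\HRabi{\e}$ depends only on $|\e|$, then follows immediately since the regular eigenvalues are exactly the zeros of $G_\e$. The only point requiring care is the $\bar R^\pm$ step, namely ensuring that the denominator flip and the coefficient relabeling realize one and the same interchange of $+$ and $-$, rather than two competing sign changes.
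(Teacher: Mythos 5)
Your proof is correct and is exactly the argument the paper has in mind: the paper dismisses this as ``obvious from the definition and the property \eqref{eq:signKpm}'', and your careful bookkeeping — showing that the substitution $\e\mapsto-\e$ swaps $R^+\leftrightarrow R^-$ and $\bar R^+\leftrightarrow\bar R^-$, with the denominator flip in $\bar R^\pm$ matching the label swap from \eqref{eq:signKpm} — is precisely the verification being left to the reader. Nothing is missing.
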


Now we make a brief remark on the poles of the function $G_\e(x;g,\Delta)$.
Although  $G_\e(x;g,\Delta)$ is not defined at \(x\in\Z_{\geq0}\pm\e\),
by the formula \eqref{eq:gfuncR} we can consider \(G_\e(x;g,\Delta)\) as a function with
singularities at the points \(x= N \pm \e\), \(N \in \Z_{\geq 0} \).

In order to describe the behavior of the $G$-function at the poles,
we notice the relation between the coefficients of the $G$-function
and the constraint polynomials in the following lemma.
The proof can be done in the same manner as Proposition \ref{Prop:reccurEquiv}.

\begin{lem} \label{lem:GcoeffcPoly}
  Let \(N \in \Z_{\geq 0}\). Then the following relation hold for $g>0$.
  \begin{equation}
    \label{eq:coeffGfunct}
         (N!)^2 (2g)^{N} K^{-}_N(N+\e;g,\Delta,\e) = \cp{N,\e}{N}((2g)^2,\Delta^2),
  \end{equation}
  In addition, if \(\e = \ell/2 \, (\ell \in \Z)\), it also holds that
\begin{equation*}
((N+\ell)!)^2 (2g)^{N+\ell} K^{+}_{N+\ell}(N+\ell/2;g,\Delta,\ell/2) = \cp{N+\ell,-\ell/2}{N+\ell}((2g)^2,\Delta^2). \QEDhere
\end{equation*}
\end{lem}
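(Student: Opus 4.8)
The plan is to follow the proof of Proposition~\ref{Prop:reccurEquiv}: I identify the $G$-function coefficient $K^{-}_N$, specialized to $x = N + \e$, with the normalized Frobenius coefficient $K^{(N,\e)}_N$ appearing there, and then quote the identification of the latter with the constraint polynomial. The second relation will then drop out of the first by means of the sign symmetry \eqref{eq:signKpm}.

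First I would substitute $x = N + \e$ into \eqref{eq:fn}. Since $(n-1) - (N+\e) - \e = n - 1 - N - 2\e$ and $(N+\e) - (n-1) - \e = N - n + 1$, this gives
\[
  f^{-}_{n-1}(N+\e; g,\Delta,\e) = 2g + \frac{1}{2g}\left(n - 1 - N - 2\e + \frac{\Delta^2}{N - n + 1}\right),
\]
which is regular for $1 \le n \le N$; the pole of $f^{-}_m$ occurs at $m = N$ and so is never encountered in computing $K^{-}_0, \dots, K^{-}_N$ from \eqref{eq:RecursionKn}.

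Next I would set $\tilde K_n := (2g)^n K^{-}_n(N+\e; g, \Delta, \e)$ and multiply the recursion \eqref{eq:RecursionKn} through by $(2g)^n$. Because $2g\, f^{-}_{n-1}(N+\e) = (2g)^2 + n - 1 - N - 2\e + \Delta^2/(N-n+1)$, the recursion becomes
\[
  n\, \tilde K_n = \left((2g)^2 + n - 1 - N - 2\e + \frac{\Delta^2}{N-n+1}\right)\tilde K_{n-1} - (2g)^2\, \tilde K_{n-2},
\]
with $\tilde K_0 = 1$ and $\tilde K_{-1} = 0$. This is precisely the normalized recursion for $K^{(N,\e)}_n$ displayed in the proof of Proposition~\ref{Prop:reccurEquiv}, with identical initial data; hence $\tilde K_n = K^{(N,\e)}_n$ for all $n$. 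Setting $n = N$ and invoking $K^{(N,\e)}_N = (N!)^{-2}\,\cp{N,\e}{N}((2g)^2,\Delta^2)$ from that proof gives the first relation \eqref{eq:coeffGfunct} after multiplying through by $(N!)^2$.

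Finally, the case $\e = \ell/2$ follows formally. Applying the first relation with $N$ replaced by $N + \ell$ and $\e$ replaced by $-\ell/2$ (so that the evaluation point is $(N+\ell) - \ell/2 = N + \ell/2$) yields
\[
  ((N+\ell)!)^2 (2g)^{N+\ell} K^{-}_{N+\ell}(N+\ell/2; g,\Delta,-\ell/2) = \cp{N+\ell,-\ell/2}{N+\ell}((2g)^2,\Delta^2).
\]
The symmetry \eqref{eq:signKpm}, in the form $K^{-}_n(x,g,\Delta,-\ell/2) = K^{+}_n(x,g,\Delta,\ell/2)$, rewrites the left-hand side in terms of $K^{+}_{N+\ell}$, which is the claim. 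I expect no serious obstacle: the only delicate points are the bookkeeping of the factor $1/(2g)$ in \eqref{eq:fn}, the sign of the $\Delta^2/(N-n+1)$ term (arising from $n-1-N = -(N-n+1)$), and the remark that the poles of $f^{\pm}_m$ are never reached before the index at which the computation terminates.
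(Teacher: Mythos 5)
Your proof is correct and is exactly what the paper intends: the paper gives no explicit argument for this lemma, stating only that it ``can be done in the same manner as Proposition \ref{Prop:reccurEquiv}'', and your rescaling $\tilde K_n=(2g)^nK^-_n(N+\e)$ matches that recursion (and its initial data) term by term, with the $\e=\ell/2$ case reduced to the first identity via \eqref{eq:signKpm} as required. The bookkeeping of the $1/(2g)$ factor, the sign of $\Delta^2/(N-n+1)$, and the observation that the pole of $f^-_N$ is never reached are all handled correctly.
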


Now, let us consider the case $x=N+\e$. Observe that $f_N^-(x,g,\Delta,\e)$, as a function in $x$, has a simple
pole at $x=N+\e$, and thus, each of the rational functions \(K^{-}_n(x)=K^{-}_n(x,g, \Delta,\e)\), for \(n \geq N+1\),
also has a simple pole at $x=N+\e$. If $\cp{N,\e}{N}((2g)^2,\Delta^2)=0$, then \(K^{-}_N(N+\e)=0\) by the lemma above and
the coefficients \(K^{-}_n(N+\e)\) are finite for \(n \geq N+1\). Therefore, the functions \(R^-(x;g,\Delta,\e)\) and
\(R^+(x;g,\Delta,\e)\) converge to a finite value at \(x= N+\e\).

In the case $\cp{N,\e}{N}((2g)^2,\Delta^2)\ne0$ the $G$-function may or may not have a pole depending on the value of the residue
(as a function of \(g\) and \(\Delta\)) at \(x = N + \e \). We leave the detailed discussion to the subsection \S \ref{sec:SpectralDet}
after we have introduced the constraint $T$-function for non-Juddian exceptional eigenvalues.

To illustrate the discussion we show in Figure \ref{fig:gfunct_graph} the plot of the $G$-function \(G_\e(x;g,\Delta) \)
for fixed \(g,\Delta>0\) corresponding to roots of the constraint polynomials \(\cp{N,\e}{N}((2g)^2,\Delta^2)\).
In Figure \ref{fig:gfunct_graph}(a) we show the case of \(\e = 0.3, g\approx 0.5809, \Delta=1\) and \(N=1\), observe the finite value
of the \(G\)-function \(G_\e(x;g,\Delta)\) at \(x = 1.3 \) and the poles at \( x = N \pm 0.3 \, (N \in \Z_{\geq 0}, N \not=1)\).
In the Figures \ref{fig:gfunct_graph}(b)-(c) we show the half-integer case. Concretely, in Figure \ref{fig:gfunct_graph}(b)
we show the case \(\e = 1/2, g=1/2, \Delta =1 \) and \(N = 1\) and in Figure \ref{fig:gfunct_graph}(c) the case
\(\e = 1, g \approx 1.01229, \Delta=1.5\) and \(N = 2\). As expected from the discussion above, the function \(G_\e(x;g,\Delta)\)
has a finite value at \(x =  1.5 \) (for Figure \ref{fig:gfunct_graph}(b)) and \(x = 3 \) (for Figure \ref{fig:gfunct_graph}(c)),
while other values of \(x = N \pm \e \) are poles.

\begin{figure}[htb]
  ~
  \begin{subfigure}[b]{0.45\textwidth}
    \centering
    \includegraphics[height=3.25cm]{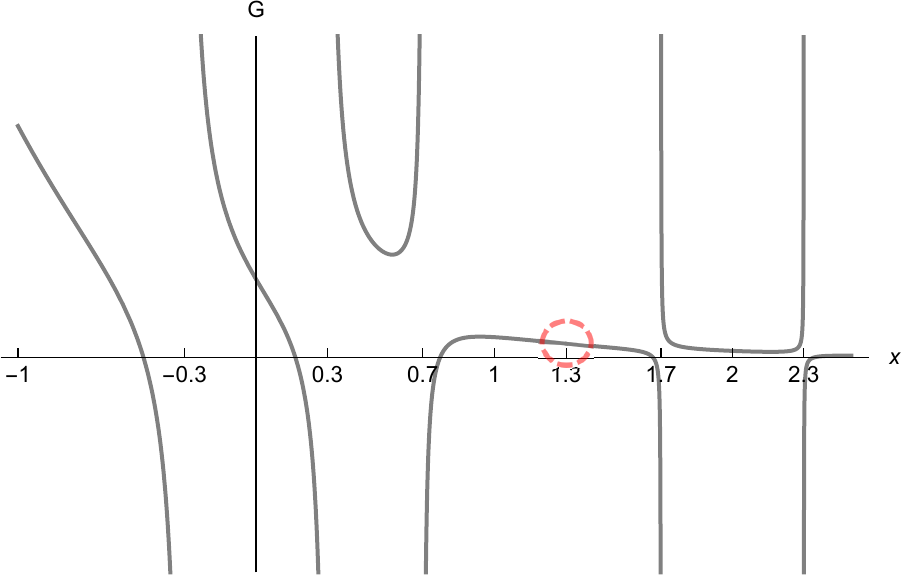}
    \caption{\(\e = 0.3, g \approx 0.5809 , \Delta=1/2\)}
  \end{subfigure}
  ~
  \begin{subfigure}[b]{0.45\textwidth}
    \centering
    \includegraphics[height=3.25cm]{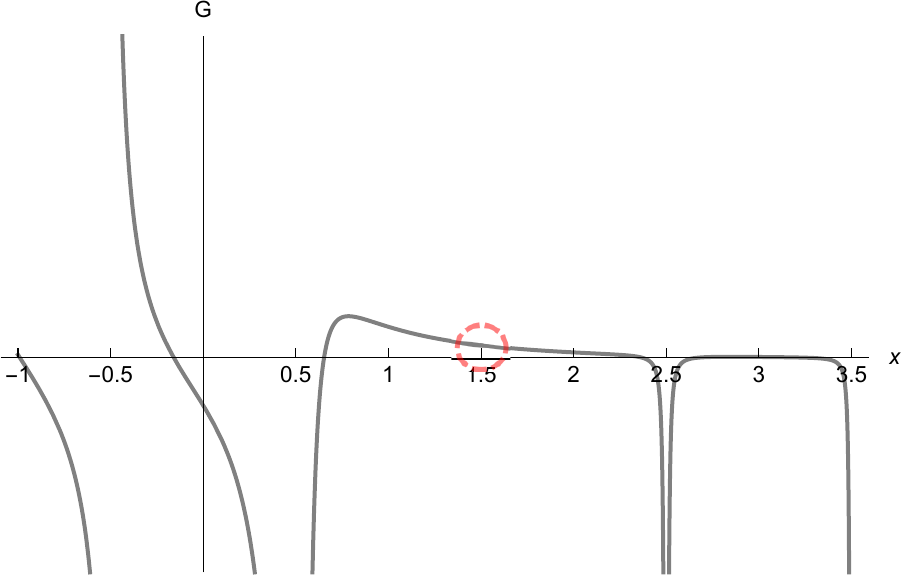}
    \caption{\(\e = 0.5, g = 0.5, \Delta=1 \)}
  \end{subfigure}
  \\
  \centering
  \begin{subfigure}[b]{0.45\textwidth}
    \centering
    \includegraphics[height=3.25cm]{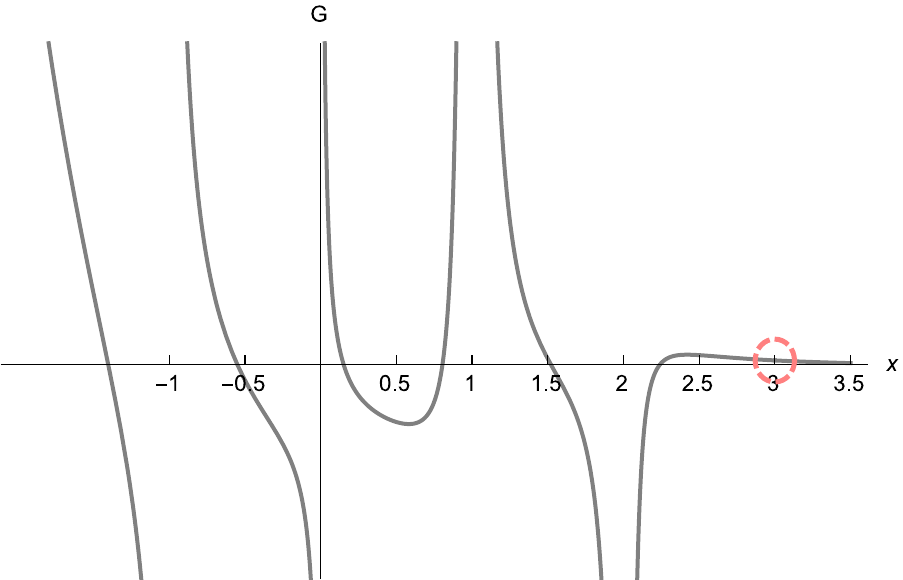}
    \caption{\(\e = 1, g \approx 1.01229, \Delta=3/2\)}
  \end{subfigure}
  \caption{Plot of \(G_\e(x;g,\Delta)\) for fixed \(g \) and \(\Delta\), corresponding to roots of constraint
    polynomials \(\cp{N,\e}{N}((2g)^2,\Delta^2)\). Notice the vanishing of the poles (indicated with dashed circles) at \(x = N + \e \) for
    $N = 1$  in (a) and (b), and $N=2$ in (c).}
  \label{fig:gfunct_graph}
\end{figure}

To conclude this subsection, we remark that there is a non-trivial relation between the parameters \(g,\Delta\)
and the pole structure of \(G_\e(x;g,\Delta) \), that is, the lateral limits at the poles for \(x \in \R\).

For instance, in Figure \ref{fig:gfunct_graph2} we show the plots of \(G_\e(x;g,\Delta)\) for fixed \(\Delta = 3/2, \e = 2 \) and
\(g = 1 \) (Figure \ref{fig:gfunct_graph2}(a)), a root \(g \approx 1.283 \) of \(\cp{2,2}{2}((2g)^2,(3/2)^2) \)
(Figure \ref{fig:gfunct_graph2}(b)) and \(g= 2 \) (Figure \ref{fig:gfunct_graph2}(c)). Note that in all cases the lateral
limits of the $G$-function \(G_\e(x;g,\Delta)\) at the pole \(x = 1\) are the same, while at the poles \(x = 2,3,4 \) the limits
have different signs in Figures \ref{fig:gfunct_graph2}(a) and (c). In addition, in Figure \ref{fig:gfunct_graph2}(b) the pole at
\(x = 4 \) actually vanishes. A deep understanding of this relation is crucial for the study of the distribution of eigenvalues of the AQRM,
for instance, the conjecture of Braak for the QRM \cite{B2011PRL} (see also Remark \ref{rem:gammafactor} below) and its possible generalizations to the AQRM.

\begin{figure}[htb]
  ~
  \begin{subfigure}[b]{0.45\textwidth}
    \centering
    \includegraphics[height=3.25cm]{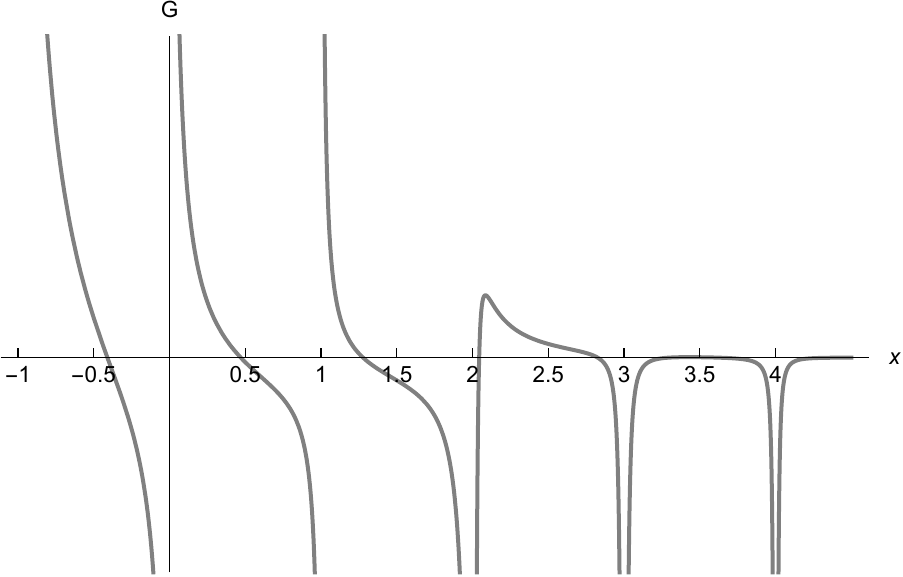}
    \caption{\(g=1\)}
  \end{subfigure}
  ~
  \begin{subfigure}[b]{0.45\textwidth}
    \centering
    \includegraphics[height=3.25cm]{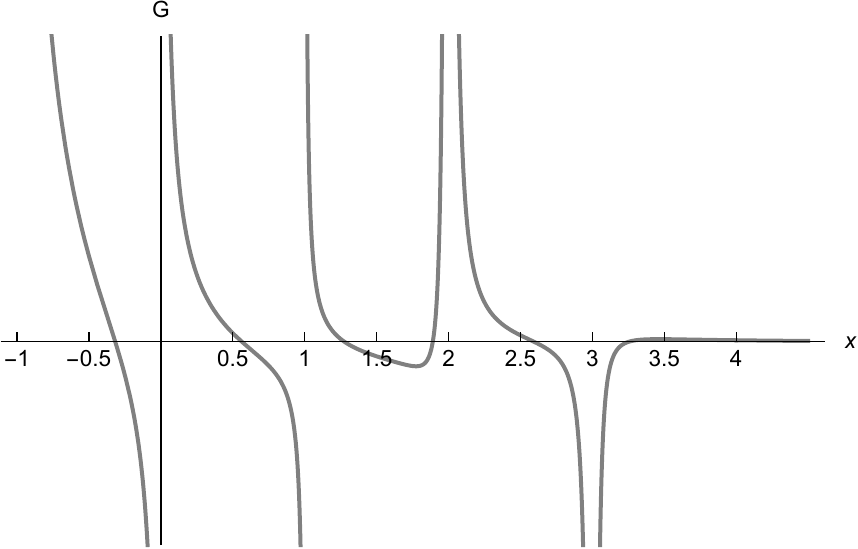}
    \caption{\( g \approx 1.283\)}
  \end{subfigure}
  \\
  \centering
  \begin{subfigure}[b]{0.45\textwidth}
    \centering
    \includegraphics[height=3.25cm]{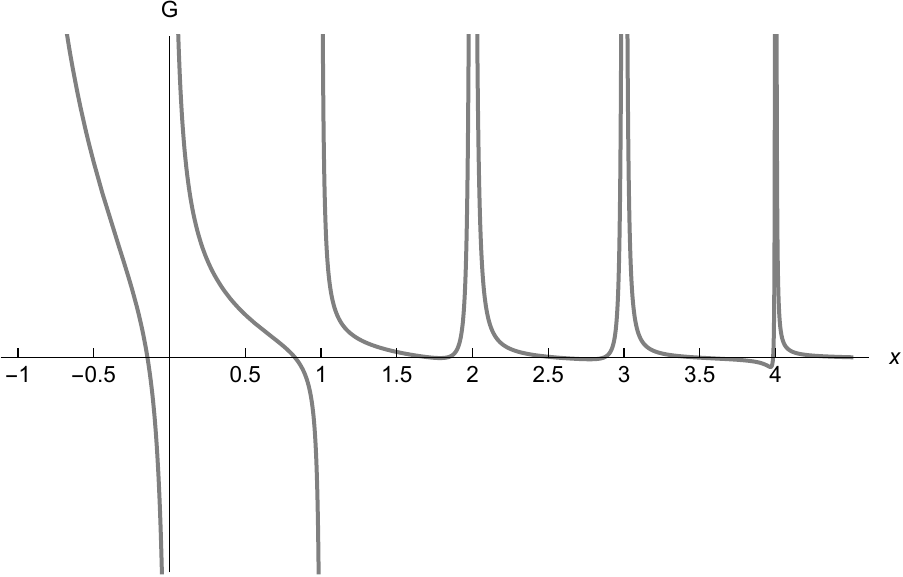}
    \caption{\(g = 2\)}
  \end{subfigure}
  \caption{Plot of \(G_\e(x;g,\Delta)\) for fixed \(\Delta=1.5, \e = 2\) and different values of \(g\).}
  \label{fig:gfunct_graph2}
\end{figure}

\begin{rem}
  The constraint \(T\)-function \(T_\e^{(N)}(g,\Delta)\) to be introduced in
  \S \ref{sec:NonJuddian} below, is defined in a similar manner to the $G$-function \(G_\e(x;g,\Delta)\).
  Thus, in addition to the references already given,
  we direct the reader to \S \ref{sec:NonJuddian} for the derivation and properties of the $G$-function.
\end{rem}

\section{Degeneracies of the spectrum and constraint polynomials}\label{sec:constPoly}

The constraint polynomials for the AQRM were originally defined by Li and Batchelor \cite{LB2015JPA}, following the work of Ku\'s \cite{K1985JMP} on the (symmetric) quantum Rabi model (QRM). In \cite{WY2014JPA,W2016JPA}, these polynomials were derived in the framework of finite-dimensional irreducible representations of $\mathfrak{sl}_2$ in the confluent Heun picture of the AQRM (see also \S \ref{sec:judd-solut-finite}). As we have seen in Proposition \ref{Prop:reccurEquiv}, the zeros of the constraint polynomials give the Juddian, or quasi-exact, solutions of the model. 

For reference, we recall the definition of the constraint polynomials (Definition \ref{def:cp}) and the associated three-term recurrence relation. For $ N \in \Z_{\geq 0}$, the polynomials \(\cp{N,\e}{k}(x,y)\) of degree $k$ are given by
\begin{align*}
  \cp{N,\e}{0}(x,y) &= 1, \qquad \cp{N,\e}{1}(x,y) = x + y - 1 - 2 \e, \\
  \cp{N,\e}{k}(x,y) &= (k x + y - k(k + 2 \e) ) P_{k-1}^{(N,\e)}(x,y) \\
                    &\qquad  - k(k-1)(N-k+1) x P_{k-2}^{(N,\e)}(x,y),
\end{align*}
for \(k \geq 2 \).

\begin{ex}
  For \(k=2,3\), we have
  \begin{align*}
    \cp{N,\e}{2}(x,y) &= 2 x^2 + 3 x y + y^2 - 2( N +  2(1+2\e)) x - (5+6\e) y + 4(1+3\e + 2\e^2), \\
    \cp{N,\e}{3}(x,y) &= 6 x^3 + 11 x^2 y + 6 x y^2 + y^3 - 6(2 N + 3(1+2\e))x^2 - 2(7+6\e) y^2 \\
                      &\quad{} - 2 (4 N +  17 + 22\e) x y  + 6 (2 N + 3(1+2\e))(2+2\e) x \\
                      &\quad{} + (49 + 4\e (24 + 11\e)) y - 6(1+2\e)(2+2\e)(3+2\e).
  \end{align*}
\end{ex}

When \(k = N\), the polynomial \(\cp{N,\e}{N}(x,y)\) is called {\em constraint polynomial}.
Actually, for a fixed value \(y = \Delta^2 \), if \(x = (2g)^2\) is a root of \( \cp{N,\e}{N}(x,y)\), then
\( \lambda =   N + \e - g^2 \) is an exceptional eigenvalue corresponding to a Juddian solution for \(\HRabi{\e}\).
Likewise, if \(x = (2g)^2\) is a root of \(\tilde{P}_N^{(N,\e)}(x,y):=\cp{N,-\e}{N}(x,y)\) (\cite{W2016JPA,RW2017}), then
\( \lambda =  N - \e - g^2  \) is an exceptional eigenvalue corresponding to a Juddian solution of \(\HRabi{\e}\). Mathematically,
the constraint polynomial \( \cp{N,\e}{N}(x,y)\) possesses certain particular properties not shared with \( \cp{N,\e}{k}(x,y)\)
with \(k \not= N\), these are studied in \S \ref{sec:detexp}.

The main objective is to prove the following conjecture.

\begin{conject}[\cite{W2016JPA}]\label{W2016JPA}
  \label{conj1A}
  For  \(\ell, N \in \Z_{\geq 0}\), there exists a polynomial \( A^\ell_N(x,y) \in \Z[x,y]\) such that
  \begin{align} \label{eq:conj}
   \cp{N+\ell,-\ell/2}{N+\ell}(x,y) =  A^\ell_N(x,y) \cp{N,\ell/2}{N}(x,y).
  \end{align}
  Moreover, the polynomial \( A^\ell_N(x,y)\) is positive for any $x, y > 0$. \QEDhere
\end{conject}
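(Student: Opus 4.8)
The plan is to separate the statement into the \emph{divisibility} assertion, that $\cp{N,\ell/2}{N}(x,y)$ divides $\cp{N+\ell,-\ell/2}{N+\ell}(x,y)$ in $\Z[x,y]$, and the \emph{positivity} of the quotient $A^\ell_N(x,y)$. I would work throughout with the continuant (tridiagonal determinant) expression for the constraint polynomials used in the proof of Proposition \ref{Prop:reccurEquiv}, or equivalently with the three-term recurrence of Definition \ref{def:cp}. The single most useful structural observation is that the coefficient $k(k-1)(N-k+1)x$ of $\cp{N,\e}{k-2}$ vanishes at $k=N+1$. Hence $\cp{N,\e}{N+1}=\bigl((N+1)x+y-(N+1)(N+1+2\e)\bigr)\cp{N,\e}{N}$, and a one-line induction on $k$ shows that \emph{every} member $\cp{N,\e}{k}$ with $k\ge N$ is divisible by $\cp{N,\e}{N}$. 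Specializing $\e=\ell/2$, the whole tail $\{\cp{N,\ell/2}{k}\}_{k\ge N}$ of the extended family lies in the ideal generated by $\cp{N,\ell/2}{N}$, so divisibility will follow once $\cp{N+\ell,-\ell/2}{N+\ell}$ is written as a polynomial combination of these tail members.

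The bridge between the two families is provided by Lemma \ref{lem:GcoeffcPoly}: both constraint polynomials are, up to explicit nonzero factors, the $G$-function coefficients $K^{-}_{N}$ and $K^{+}_{N+\ell}$ evaluated at the \emph{common} spectral point $x=N+\ell/2$. At that point the recurrence data \eqref{eq:fn} satisfy the shift identity $f^{+}_{n}=f^{-}_{n-\ell}+\ell/g$, which matches the pole of $f^{+}_{n}$ at $n=N+\ell$ with that of $f^{-}_{n-\ell}$ at $n=N$; I would use this to propagate the $-$sequence into the $+$sequence and read off the required combination. Conceptually, the relation must exist because of the hidden symmetry of the AQRM at $\e=\ell/2$: in the confluent Heun picture the system \eqref{eq:system1} in $y$ and the system \eqref{eq:system2p} in $\bar y=1-y$ are exchanged by $\e\mapsto-\e$, and the eigenvalue $N+\ell/2-g^2$ is reached both as $N+\e-g^2$ at level $N$ and as $(N+\ell)-\e-g^2$ at level $N+\ell$. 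An alternative organization is induction on $\ell$, the base case $\ell=0$ giving $A^0_N=1$.

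The main obstacle is precisely this bridging step: the two families are governed by genuinely different recurrence coefficients, and no naive block decomposition of the $(N+\ell)\times(N+\ell)$ continuant produces the factor directly. Indeed, the diagonal of the bottom-right $N\times N$ sub-block of the family-$(N+\ell,-\ell/2)$ matrix equals the diagonal of family $(N,\ell/2)$ shifted by $\ell x$, and its off-diagonal products are rescaled by $(\ell+j)(\ell+j-1)/\bigl(j(j-1)\bigr)$; the degree-$\ell$ quotient reflects exactly this obstruction. I expect the cleanest resolution to be a sequence of row and column operations over $\Z[x,y]$ on the large continuant, designed to absorb the $\ell x$ shift, reducing $\det$ to $\cp{N,\ell/2}{N}$ times an explicit $\ell\times\ell$ minor, which is then $A^\ell_N$; carrying out the operations unimodularly makes the quotient an integer-coefficient polynomial automatically, so that $A^\ell_N\in\Z[x,y]$ comes for free. (The generating-function and confluent Heun approach of \S\ref{sec:further} should give an independent proof of the divisibility part.)

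For positivity I would extract the description of $A^\ell_N$ from the factorization above and show that all its coefficients are nonnegative, with positive leading coefficient $\tfrac{(N+\ell)!}{N!}\,x^\ell$; this makes $A^\ell_N(x,y)>0$ immediate for $x,y>0$. This is strongly suggested by the low-degree computations $A^1_N(x,y)=(N+1)x+y$, $A^2_0(x,y)=2x^2+3xy+y^2+y$ and $A^2_1(x,y)=6x^2+5xy+y^2+y$, all of which have nonnegative coefficients. I would try to prove nonnegativity by exhibiting a recursion for $A^\ell_N$ whose structure preserves nonnegative coefficients. Should this fail in full generality, the fallback is a root-location argument, proving that the degree-$\ell$ factor has no zeros in the open quadrant $x,y>0$ (its would-be zeros correspond to the non-occurring Juddian eigenvalues $\lambda=N-\ell/2-g^2$ with $N\le\ell$, cf. \S\ref{sec:pos}) and concluding positivity from the sign at infinity together with continuity.
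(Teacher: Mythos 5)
Your two-part skeleton (divisibility, then positivity of the quotient) matches the paper, and your opening observation is correct and genuinely useful: since the coefficient $k(k-1)(N-k+1)x$ vanishes at $k=N+1$, every $\cp{N,\e}{k}$ with $k\ge N$ is divisible by $\cp{N,\e}{N}$. This is exactly the basis of the paper's \emph{alternative} proof of divisibility in \S\ref{sec:revdiv}. But there the entire content lies in the bridge you flag as "the main obstacle": one needs the identity $\ncp{N+\ell,-\ell/2}{N+\ell}=\sum_{j=0}^{\ell}\binom{\ell}{j}\ncp{N,\ell/2}{N+j}$ (Theorem \ref{thm:mainConj}), which the paper proves by showing both sides' generating functions satisfy the same confluent Heun ODE; neither your $f^{+}_{n}=f^{-}_{n-\ell}+\ell/g$ shift heuristic nor the unspecified "row and column operations" supplies it. Moreover, your claim that no block decomposition of the big continuant produces the factor directly is true only for the naive expression \eqref{eq:detForm1}; the paper's primary proof works precisely because a \emph{different} determinant expression exists for the terminal polynomial $\cp{N,\e}{N}$ (Proposition \ref{prop:detexp}, obtained by diagonalizing $\mA^{(N)}_N$ via Lemma \ref{lem:eigenA}, then normalizing to \eqref{eq:detForm2} so that $x$ appears with coefficient $1$ on the whole diagonal). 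In that form the off-diagonal product at position $N$ is $c^{(-\ell/2)}_{\ell}=\ell(\ell-\ell)=0$, the matrix becomes block diagonal, the top-left $N\times N$ block is literally $\mV_N^{(N,\ell/2)}$, and $A^\ell_N$ is the $\ell\times\ell$ complementary continuant. Without Proposition \ref{prop:detexp} (which holds only for $k=N$, not for general $\cp{N,\e}{k}$), your divisibility argument does not close.

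For positivity, your primary route (all coefficients of $A^\ell_N$ nonnegative) is not established and is not what the paper does; the paper explicitly leaves the combinatorics of these coefficients as an open remark. Your fallback has the right shape, but its crucial anchor is missing: to run "no zeros in the open quadrant plus continuity" one must forbid a zero eigenvalue of the associated matrix $\mM^{(N)}_\ell(x)$ for $x>0$, and that is exactly the nontrivial continuant evaluation $A^\ell_N(x,0)=\det\mM^{(N)}_\ell(x)=\frac{(N+\ell)!}{N!}x^\ell$ (Lemma \ref{lem:detM}), combined with realness of the eigenvalues for $x\ge 0$ and their positivity for $x>\ell-1$ (Lemma \ref{lem:eigenprop}). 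Worse, the justification you offer for the absence of zeros --- that they would correspond to non-occurring Juddian eigenvalues $\lambda=k-\ell/2-g^2$ with $k\le\ell$ --- is circular: in the paper that non-occurrence (Proposition \ref{prop:pos2}, Corollary \ref{cor:noNegJudd}) is \emph{deduced from} the positivity of $A^\ell_N$ and of the related continuants, not available beforehand. So both halves of your plan identify the correct difficulty but leave it unresolved.
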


If the conjecture holds and \(g,\Delta >0 \) satisfy \(\cp{N,\ell/2}{N}((2g)^2,\Delta^2) = 0\), the exceptional eigenvalue
\( \lambda =  N  +  \ell/2 - g^2 ( = (N+\ell)  - \ell/2 - g^2) \) of \(\HRabi{\ell/2}\) is degenerate.

Actually, in order to complete the argument, it is necessary to show that the associated Juddian solutions are linearly independent.
The outline of the proof is as follows. The main point is that each root \(x,y>0\) of a constraint polynomial
\(\cp{N,\ell/2}{N}(x,y)\) determines an eigenvector in a finite dimensional representation space of  \(\mathfrak{sl}_2 \) (\(\rF_{2m} \) or \(\rF_{2m+1} \)  depending on the parity of \(N\) , cf. \S\ref{sec:repr-theor-pict}) associated with the exceptional eigenvalue \(\lambda = N  + \e - x \). For instance, suppose \(N = 2m \in \Z_{\geq 0}\) and \(\e = \ell \in \Z_{\geq 0}\) and that \(x,y>0\) make \(\cp{2 m,\ell}{2 m}(x,y)\) vanish. By Section 5.1 of \cite{W2016JPA} (see also \S\ref{sec:judd-solut-finite}), the eigenvalue \(\lambda = 2m + \ell - x \) has a corresponding eigenvector \(\nu \in \rF_{2m + 1} \). Moreover, under the
assumption of the conjecture, \(\cp{2m + 2\ell,-\ell}{2 m+2\ell}(x,y)\) vanishes as well, therefore the eigenvalue \(\lambda = 2m + \ell - x\)
also has the eigenvector \(\tilde{\nu} \in \rF_{2m + 2\ell} (= \rF_{2(m+\ell)}) \). For any \(\ell \in \Z\), it is clear that \( \rF_{2m + 1} \not\simeq \rF_{2(m+\ell)}\)
so the eigenvectors \(\nu \) and \(\tilde{\nu}\) (and hence the associated Juddian solutions) are linearly independent. The linear
independence in the remaining cases is shown in an completely analogous way, with the exception of the case \(\e = 1/2\), where we direct
the reader to Proposition 6.6 of \cite{W2016JPA} for the proof.

The condition \( A^\ell_N(x,y)>0\) of Conjecture \ref{conj1A} ensures that for \(N \in \Z_{\geq 0} \) there are no non-degenerate exceptional
eigenvalues \(\lambda = N  +  \ell/2 - g^2\) corresponding to Juddian solutions (see Corollary \ref{NoNonJudd} below).

We prove Conjecture \ref{conj1A} in two parts. We show the existence of the polynomial \(A_N^{\ell}(x,y)\) by showing that
\(\cp{N,\ell/2}{N}(x,y)\) divides \(\cp{N+\ell,-\ell/2}{N+\ell}(x,y)\) (as polynomials in \(\Z[x,y]\)) in \S \ref{sec:negeps}.
Additionally, this method gives an explicit determinant expression for the polynomial \(A_N^{\ell}(x,y)\). The proof is completed
in \S \ref{sec:pos} by studying the eigenvalues of the matrices involved in the determinant expressions for \(A_N^{\ell}(x,y)\).

\subsection{Determinant expressions of constraint polynomials} \label{sec:detexp}

It is well-known that orthogonal polynomials can be expressed as determinants
of tridiagonal matrices. Those determinant expressions are derived from
the fact that orthogonal polynomials satisfy three-term recurrence relations.
It is not difficult to verify that the polynomials \(\{\cp{N,\e}{k}(x,y)\}_{k\geq 0}\) do not
constitute families of orthogonal polynomials with respect to either of their variables (in a standard sense).
Nevertheless, since they are defined by three-term recurrence relations we can derive determinant expressions
using the same methods. We direct the reader to \cite{C1978} or \cite{K2008} for
the case of orthogonal polynomials.

Let \(N \in \Z_{\geq 0}\) and \(\e \in \R\) be fixed, by setting \(c^{(\e)}_k = k(k+2\e) \) and \(\lambda_k = k(k-1)(N-k+1)\) \((k \in \Z)\), the family of polynomials \(\{\cp{N,\e}{k}(x,y)\}_{k \geq 0}\) is given by the three-term recurrence relation
\[
  \cp{N,\e}{k}(x,y) = ( k x + y -  c^{(\e)}_k) \cp{N,\e}{k-1}(x,y)  - \lambda_k \cp{N,\e}{k-2}(x,y),
\]
for \(k \geq 2\), with initial conditions \(\cp{N,\e}{1}(x,y)= x + y - c^{(\e)}_1\) and
\(\cp{N,\e}{0}(x,y) = 1\). Hence, the polynomial \(\cp{N,\e}{k}(x,y)\) is the determinant of a \(k\times k\)
tridiagonal matrix
\begin{equation}\label{eq:detForm1}
\cp{N,\e}{k}(x,y)=\det(\mI_k y+\mA^{(N)}_kx+\mU^{(\e)}_k)
\end{equation}
where
\(\mI_k\) is the identity matrix of size \(k\) and
\begin{equation*}
\mA^{(N)}_k=\Tridiag{i}{0}{\lambda_{i+1}}{1\le i\le k},\quad
\mU^{(\e)}_k=\Tridiag{-c^{(\e)}_i}{1}{0}{1\le i\le k}.
\end{equation*}

In this section, bold font is reserved for matrices and vectors, subscript denotes the dimensions
of the square matrices and the superscript denotes dependence on parameters.

An important property of the constraint polynomial \(\cp{N,\e}{N}(x,y)\) is that it satisfies a determinant
expression with tridiagonal matrices that is different from \eqref{eq:detForm1}.

\begin{prop}
  \label{prop:detexp}
  Let \( N \in \Z_{\geq 0}\). We have
  \[
    \cp{N,\e}{N}(x,y) = \det\left( \mI_N y + \mD_N x + \mC_N^{(N,\e)} \right),
  \]
  where $\mD_N=\diag(1,2,\dots,N)$
and
\begin{equation*}
\mC^{(N,\e)}_N=\Tridiag{-i(2(N-i)+1+2\e)}{1}{i(i+1)c^{(\e)}_{N-i}}{1\le i\le N}.
\end{equation*}
\end{prop}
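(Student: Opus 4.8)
The plan is to deduce the identity from the determinant expression \eqref{eq:detForm1} by conjugating the relevant matrix by a single constant (i.e.\ $x,y$-independent) matrix $\mS$ that diagonalizes $\mA^{(N)}_N$. Concretely, I would produce a unit lower-triangular $\mS$ satisfying \emph{simultaneously}
\[
\mS^{-1}\mA^{(N)}_N\mS=\mD_N \qquad\text{and}\qquad \mS^{-1}\mU^{(\e)}_N\mS=\mC^{(N,\e)}_N .
\]
Since conjugation fixes $\mI_N$ and is compatible with scalar multiplication, this gives $\mS^{-1}\bigl(\mI_N y+\mA^{(N)}_N x+\mU^{(\e)}_N\bigr)\mS=\mI_N y+\mD_N x+\mC^{(N,\e)}_N$, and taking determinants together with \eqref{eq:detForm1} yields the proposition at once. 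The point is that $\mD_N$ is exactly the diagonal of $\mA^{(N)}_N$, so the conjugation moves the entire $x$-dependence onto the diagonal while trading $\mU^{(\e)}_N$ for $\mC^{(N,\e)}_N$.

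To build $\mS$, note that $\mA^{(N)}_N=\Tridiag{i}{0}{\lambda_{i+1}}{1\le i\le N}$ is lower bidiagonal with the $N$ distinct diagonal entries $1,2,\dots,N$, hence diagonalizable; take the $j$-th column of $\mS$ to be the eigenvector for the eigenvalue $j$, normalized so that $\mS$ is unit lower-triangular. Its entries then satisfy $s_{ij}=0$ for $i<j$, $s_{jj}=1$, and the recursion $(i-j)s_{ij}=-\lambda_i s_{i-1,j}$ for $i>j$, which integrates to the closed form
\[
s_{ij}=(-1)^{i-j}\,\frac{i!\,(i-1)!\,(N-j)!}{(i-j)!\,j!\,(j-1)!\,(N-i)!}\qquad(i\ge j).
\]
By construction $\mA^{(N)}_N\mS=\mS\mD_N$, so the first intertwining relation is automatic and only $\mU^{(\e)}_N\mS=\mS\mC^{(N,\e)}_N$ remains.

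The heart of the argument is this second relation. Writing out the $(i,j)$-entry of each side, using that $\mU^{(\e)}_N$ is upper bidiagonal (diagonal $-c^{(\e)}_i$, superdiagonal $1$) and $\mC^{(N,\e)}_N$ is tridiagonal, the matrix equation is equivalent to the single scalar relation
\[
s_{i+1,j}=s_{i,j-1}+\bigl(c^{(\e)}_i-j(2(N-j)+1+2\e)\bigr)s_{ij}+j(j+1)c^{(\e)}_{N-j}\,s_{i,j+1},
\]
for all $i,j$ (with $s_{i+1,j}=0$ at $i=N$, which is forced by $\lambda_{N+1}=0$, and $s_{i,j\pm1}=0$ out of range). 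The cases $i=j-1,\,j,\,j+1$ reproduce respectively the superdiagonal $1$, the diagonal entry $-j(2(N-j)+1+2\e)$, and the subdiagonal entry $j(j+1)c^{(\e)}_{N-j}$ of $\mC^{(N,\e)}_N$, while the cases $i\ge j+2$ express that the conjugate has no entries below the subdiagonal. Inserting the closed form for $s_{ij}$, equivalently the ratios $s_{i,j-1}/s_{ij}=-\lambda_j/(i+1-j)$ and $s_{i,j+1}/s_{ij}=-(i-j)/\bigl(j(j+1)(N-j)\bigr)$, and cancelling $s_{ij}$, all of these collapse to the single polynomial identity
\[
(i+1-j)\Bigl[i(i+2\e)-j\bigl(2(N-j)+1+2\e\bigr)-(i-j)(N-j+2\e)\Bigr]=-(i+1)i(N-i)+j(j-1)(N-j+1),
\]
in which the $\e$-terms cancel on the left, leaving an elementary identity in $i,j,N$ to be checked by expansion.

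I expect the main obstacle to be exactly this last verification in the off-band range $i\ge j+2$: a priori $\mS^{-1}\mU^{(\e)}_N\mS$ is only lower Hessenberg, and its genuine \emph{tridiagonality} is what makes the two determinant expressions coincide. This is a tridiagonal-pair type phenomenon, reflecting the $\mathfrak{sl}_2$ structure behind the constraint polynomials, and it is where the special arithmetic of $\lambda_m=m(m-1)(N-m+1)$ and $c^{(\e)}_m$ is genuinely used (the boundary case $i=N$ being governed by $\lambda_{N+1}=0$). Once the displayed polynomial identity is confirmed, one obtains $\mU^{(\e)}_N\mS=\mS\mC^{(N,\e)}_N$, hence $\mS^{-1}\bigl(\mI_N y+\mA^{(N)}_N x+\mU^{(\e)}_N\bigr)\mS=\mI_N y+\mD_N x+\mC^{(N,\e)}_N$, and \eqref{eq:detForm1} gives $\cp{N,\e}{N}(x,y)=\det\bigl(\mI_N y+\mD_N x+\mC^{(N,\e)}_N\bigr)$, as claimed.
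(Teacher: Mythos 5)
Your proposal is correct and follows essentially the same route as the paper's proof: your matrix $\mS$ is exactly the eigenvector matrix $\mE^{(N)}_N$ of Lemma \ref{lem:eigenA} (written with factorials in place of the binomial coefficient), and the intertwining relation $\mU^{(\e)}_N\mS=\mS\mC^{(N,\e)}_N$ that you verify entrywise is precisely the identity \eqref{eq:UE=EC} checked in the paper. The single polynomial identity you reduce everything to is just a repackaging of the paper's two ``elementary relations'', and it does hold (the $\e$-terms cancel and the remaining identity in $i,j,N$ checks out), so nothing further is needed.
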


In order to prove Proposition \ref{prop:detexp}, we need the following lemma on the diagonalization
of the matrix \(\mA^{(N)}_k\) for \(k \in \Z_{> 0}\).

\begin{lem}
  \label{lem:eigenA}
  For \(1 \leq k \leq N\), the eigenvalues of \(\mA^{(N)}_k\) are \(\{1,2,\ldots,k\}\) and
  the eigenvectors are given by the columns of the lower triangular matrix \( \mE^{(N)}_k \)
  given by
  \[
    (\mE^{(N)}_k)_{i,j} = (-1)^{i-j}\binom ij\frac{(i-1)!(N-j)!}{(j-1)!(N-i)!},
  \]
  for \( 1 \leq i,j \leq k \).
\end{lem}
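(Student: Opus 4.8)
The plan is to read off the eigenvalues from the triangular shape of $\mA^{(N)}_k$ and then confirm the explicit eigenvector formula by a direct substitution that collapses to an elementary binomial identity. First I would observe that, since the superdiagonal entries of $\mA^{(N)}_k=\Tridiag{i}{0}{\lambda_{i+1}}{1\le i\le k}$ all vanish, the matrix is lower bidiagonal, with diagonal entries $1,2,\dots,k$ and subdiagonal entries $\lambda_{i+1}=(i+1)i(N-i)$. A triangular matrix has its diagonal as spectrum, so the eigenvalues are exactly $\{1,2,\dots,k\}$; crucially these are pairwise distinct, so each eigenspace is one-dimensional.

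Next I would verify that the $j$-th column $\mathbf v_j$ of $\mE^{(N)}_k$, with entries $(v_j)_i=(\mE^{(N)}_k)_{i,j}$, satisfies $\mA^{(N)}_k\mathbf v_j=j\,\mathbf v_j$. Because the superdiagonal vanishes and the subdiagonal entry in row $i$ is $\lambda_i$, the $i$-th row of this equation reads $i\,(v_j)_i+\lambda_i\,(v_j)_{i-1}=j\,(v_j)_i$, i.e. the two-term relation
\[
\lambda_i\,(v_j)_{i-1}=(j-i)\,(v_j)_i .
\]
Substituting the closed form of the entries together with $\lambda_i=i(i-1)(N-i+1)$, and cancelling the common factor $\tfrac{(N-j)!}{(j-1)!\,(N-i)!}$ (after using $i(i-1)(i-2)!=i!$ and $(N-i+1)/(N-i+1)!=1/(N-i)!$), this identity reduces to
\[
-\,i\binom{i-1}{j}=(j-i)\binom{i}{j},
\]
which is immediate from $\binom{i}{j}=\tfrac{i}{\,i-j\,}\binom{i-1}{j}$. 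The boundary rows are consistent as well: for $i=1$ there is no subdiagonal term and both sides vanish unless $j=1$; for $i=j$ one has $(v_j)_{j-1}=0$ while $(v_j)_j=1$; and $(v_j)_i=0$ for $i<j$ since $\binom{i}{j}=0$, so each column is supported on the indices $i\ge j$.

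Finally, since the eigenvalues $1,\dots,k$ are distinct, the columns $\mathbf v_1,\dots,\mathbf v_k$ are automatically linearly independent; equivalently $\mE^{(N)}_k$ is lower triangular with unit diagonal (as $(\mE^{(N)}_k)_{j,j}=1$), hence invertible, and so $\mE^{(N)}_k$ diagonalizes $\mA^{(N)}_k$ to $\diag(1,2,\dots,k)$. The only genuine computation is the two-term recurrence, and I expect the main (though routine) obstacle to be the bookkeeping: correctly matching the subdiagonal index so that row $i$ carries $\lambda_i$, and tracking the factorial cancellations, after which the binomial identity closes the argument cleanly. This diagonalization is precisely what is then fed into Proposition \ref{prop:detexp} to pass from $\mA^{(N)}_N$ to $\mD_N$.
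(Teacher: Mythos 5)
Your proof is correct and follows essentially the same route as the paper: the paper also verifies $(\mA^{(N)}_k\mE^{(N)}_k)_{i,j}=j(\mE^{(N)}_k)_{i,j}$ entrywise, reduces it via the two-term relation $(j-i)(\mE^{(N)}_k)_{i,j}=\lambda_i(\mE^{(N)}_k)_{i-1,j}$ to the same binomial identity $(j-i)\binom ij=-i\binom{i-1}j$, and leaves the triangularity/distinctness observations implicit. Your bookkeeping of the subdiagonal index and the factorial cancellations is accurate, so nothing is missing.
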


\begin{proof}
  We have to check that $(\mA^{(N)}_k\mE^{(N)}_k)_{i,j}=j(\mE^{(N)}_k)_{i,j}$ for every $i,j$.
  By definition, we see that
  \begin{align*}
  (\mA^{(N)}_k\mE^{(N)}_k)_{i,j}=j(\mE^{(N)}_k)_{i,j}
  &\iff (j-i)(\mE^{(N)}_k)_{i,j}=\lambda_i(\mE^{(N)}_k)_{i-1,j} \\
  &\iff (j-i)\binom ij=-i\binom{i-1}j,
  \end{align*}
  and the last equality is easily verified. 
\end{proof}

\begin{proof}[Proof of Proposition \ref{prop:detexp}]
  Since \((\mE^{(N)}_N)^{-1} \mA_{N}^{(N)} \mE^{(N)}_N = \mD_N \) by Lemma \ref{lem:eigenA},
  it suffices to prove that
  \begin{equation}
    \label{eq:UE=EC}
    \mU^{(\e)}_N \mE^{(N)}_N = \mE^{(N)}_N \mC^{(N,\e)}_N.
  \end{equation}
  Write $e_{ij}=(\mE^{(N)}_N)_{i,j}$ for brevity.
  The $(i,j)$-entry of $\mU^{(\e)}_N \mE^{(N)}_N-\mE^{(N)}_N \mC^{(N,\e)}_N$ is
  \begin{equation}\label{eq:UE=EC:2}
    -c^{(\e)}_ie_{ij}+e_{i+1,j}
    +j(2(N-j)+1+2\e)e_{ij}-e_{i,j-1}-j(j+1)c^{(\e)}_{N-j}e_{i,j+1}.
  \end{equation}
  Using the elementary relations
  \begin{equation*}
    j(j+1)c^{(\e)}_{N-j}e_{i,j+1}=-(i-j)(N-j+2\e)e_{ij},
  \end{equation*}
  \begin{equation*}
  e_{i+1,j}-e_{i,j-1}=(i^2+j^2+ij-j-iN-jN)e_{ij},
  \end{equation*}
  we immediately see that \eqref{eq:UE=EC:2} is equal to zero.
\end{proof}

Recall that the determinant \(J_n\) of a tridiagonal matrix
\begin{equation*}
  J_n=\det\Tridiag{a_i}{b_i}{c_i}{1\le i\le n}
\end{equation*}
is called {\em continuant} (see \cite{Muir1960}). It satisfies the
three-term recurrence relation
\begin{equation} \label{eq:continuant}
    J_n = a_{n} J_{n-1} - b_{n-1} c_{n-1} J_{n-2},
\end{equation}
with initial condition \(J_{-1} = 0 , J_0 = 1 \).
As a consequence of this, notice that the continuant equivalence
\begin{equation}
  \label{eq:propCont1}
  \det\Tridiag{a_i}{b_i}{c_i}{1\le i\le n} =\det\Tridiag{a_i}{b'_i}{c'_i}{1\le i\le n}
\end{equation}
holds whenever \(b_i c_i = b_i' c_i'\) for all \(i=1,2,\cdots,n-1 \), since the continuants on both sides of the equation define the same
recurrence relations with the same initial conditions.

\begin{cor}
  \label{cor:detsym}
  Let \( N \in \Z_{\geq 0}\). We have
  \[
    \cp{N,\e}{N}(x,y) = \det\left( \mI_N y + \mD_N x + \mS_N^{(N,\e)} \right),
  \]
  where \(\mD_N\) is the diagonal matrix of Proposition \ref{prop:detexp} and \(\mS_N^{(N,\e)}\)
  is the symmetric matrix given by
  \[
    \mS_N^{(N,\e)} = \Tridiag{-i(2(N-i)+1+2\e)}{\sqrt{i(i+1)c^{(\e)}_{N-i}}}{\sqrt{i(i+1)c^{(\e)}_{N-i}}}{1\le i\le N}.
  \]
\end{cor}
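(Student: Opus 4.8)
The plan is to reduce the claim directly to the continuant equivalence \eqref{eq:propCont1}, so that essentially no new computation is needed beyond Proposition \ref{prop:detexp}. First I would write out the matrix $\mI_N y + \mD_N x + \mC^{(N,\e)}_N$ explicitly as a tridiagonal matrix in the notation of the continuant: its diagonal entries are $a_i = y + ix - i(2(N-i)+1+2\e)$ for $1 \le i \le N$, its superdiagonal entries are $b_i = 1$, and its subdiagonal entries are $c_i = i(i+1)c^{(\e)}_{N-i}$ for $1 \le i \le N-1$. The only effect of the extra diagonal terms $\mI_N y + \mD_N x$ is to modify the diagonal entries $a_i$, leaving the off-diagonal data untouched; this is harmless, since \eqref{eq:propCont1} places no constraint on the diagonal.

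Next I would compare this with $\mI_N y + \mD_N x + \mS^{(N,\e)}_N$. By the definition of $\mS^{(N,\e)}_N$, this competing matrix has exactly the same diagonal entries $a_i$, while its off-diagonal entries are $b_i' = c_i' = \sqrt{i(i+1)c^{(\e)}_{N-i}}$. The single identity driving the argument is then
\[
  b_i' c_i' = \sqrt{i(i+1)c^{(\e)}_{N-i}}\cdot\sqrt{i(i+1)c^{(\e)}_{N-i}} = i(i+1)c^{(\e)}_{N-i} = b_i c_i,
\]
valid for each $1 \le i \le N-1$. Since the two tridiagonal matrices share both the same diagonal and the same off-diagonal products $b_i c_i = b_i' c_i'$, the continuant equivalence \eqref{eq:propCont1} forces their determinants to agree, and combining this with Proposition \ref{prop:detexp} yields the asserted determinant expression.

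There is no genuinely hard step here; the result is a cosmetic but convenient symmetrization of Proposition \ref{prop:detexp}, presumably arranged so that $\mS^{(N,\e)}_N$ is amenable to the eigenvalue analysis of \S\ref{sec:pos}. The only point I would flag is that the square roots $\sqrt{i(i+1)c^{(\e)}_{N-i}}$ need not be real for arbitrary $\e \in \R$, since $c^{(\e)}_{N-i} = (N-i)(N-i+2\e)$ can be negative when $\e < 0$; accordingly $\mS^{(N,\e)}_N$ should be read as a complex symmetric matrix in general. This causes no difficulty, because \eqref{eq:propCont1} is a polynomial identity in the matrix entries and is insensitive to any choice of branch: each product $b_i' c_i'$ recovers $i(i+1)c^{(\e)}_{N-i}$ regardless of the sign selected for the two square roots.
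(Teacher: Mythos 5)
Your proof is correct and is essentially the paper's own argument: both observe that the two tridiagonal matrices share the same diagonal and the same off-diagonal products $b_i c_i = b_i' c_i'$, so the continuant equivalence \eqref{eq:propCont1} gives equality of determinants, and Proposition \ref{prop:detexp} finishes it. Your remark that $\mS_N^{(N,\e)}$ may be complex symmetric for general $\e$ (the paper restricts to $\e > -1/2$ precisely when it needs realness, e.g.\ in Theorem \ref{thm:allrealroots}) is a sensible clarification but does not change the argument.
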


\begin{proof}
  Notice that the matrices \(\mI_N y + \mD_N x + \mC^{(N,\e)}_N\) and
  \(\mI_N y + \mD_N x + \mS_N^{(N,\e)} \) are tridiagonal. Then,
  it is clear by the continuant equivalence \eqref{eq:propCont1} that the determinants of the matrices are equal, establishing
  the result. 
\end{proof}

As a corollary to the discussion on the determinant expression \eqref{eq:detForm1}
we have the following result used in \S \ref{sec:pos} to prove the positivity of the
polynomial \(A_N^{\ell}(x,y)\).

\begin{cor}
  \label{cor:realrootsposx}
  For \(x \geq 0\), \(\e \in \R \) and \(N,k \in \Z_{\geq 0}\), all the roots of \(\cp{N,\e}{k}(x,y) \) with respect to \(y\)
  are real.
\end{cor}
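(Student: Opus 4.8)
The plan is to read the $y$-roots directly off the determinant expression \eqref{eq:detForm1}. Fix $x\ge0$ and $\e\in\R$, and set $\mM(x):=\mA^{(N)}_kx+\mU^{(\e)}_k$, so that \eqref{eq:detForm1} gives $\cp{N,\e}{k}(x,y)=\det(\mI_k y+\mM(x))$. Since $\det(\mI_k y+\mM(x))=0$ is equivalent to $-y$ being an eigenvalue of $\mM(x)$, the roots in $y$ are exactly the negatives of the eigenvalues of $\mM(x)$. Thus the corollary reduces to showing that the real matrix $\mM(x)$ has only real eigenvalues.

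First I would record that $\mM(x)$ is the real tridiagonal matrix
\[
  \mM(x)=\Tridiag{ix-c^{(\e)}_i}{1}{\lambda_{i+1}x}{1\le i\le k},
\]
with $c^{(\e)}_i=i(i+2\e)$ and $\lambda_{i+1}=(i+1)i(N-i)$. The products of its paired off-diagonal entries are $b_ic_i=\lambda_{i+1}x$, and the crucial point is their sign: since $\lambda_{i+1}=(i+1)i(N-i)$, one has $b_ic_i=\lambda_{i+1}x\ge0$ for $x\ge0$ throughout the relevant range of indices, the product vanishing exactly when $x=0$ or $\lambda_{i+1}=0$.

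Next I would invoke the standard fact that a real tridiagonal matrix all of whose paired off-diagonal products are nonnegative has only real eigenvalues. When every product is strictly positive, one conjugates $\mM(x)$ by a real diagonal matrix $\diag(d_1,\dots,d_k)$ with $d_{i+1}/d_i=\sqrt{c_i/b_i}$ to obtain the real symmetric tridiagonal matrix whose off-diagonal entries are $\sqrt{b_ic_i}$; this is exactly the symmetrization underlying the passage from $\mC^{(N,\e)}_N$ to $\mS^{(N,\e)}_N$ in Corollary \ref{cor:detsym}. A real symmetric matrix has real spectrum, and conjugation preserves eigenvalues, so $\mM(x)$ has real spectrum as well. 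When some product vanishes---for instance at $x=0$, where the entire subdiagonal is zero and $\mM(x)$ is triangular with real diagonal $-c^{(\e)}_i$, or at an index where $\lambda_{i+1}=0$---the corresponding zero off-diagonal entry puts $\mM(x)$ in block-triangular form, so its spectrum is the union of the spectra of the diagonal blocks, to each of which the same argument (or an induction on size) applies.

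Putting these steps together, every eigenvalue of $\mM(x)$ is real, hence every root of $\cp{N,\e}{k}(x,y)$ in $y$ is real. I expect the main obstacle to be the careful treatment of the degenerate cases in which an off-diagonal product vanishes: there the diagonal symmetrization is not directly available, and one must instead exploit the induced block decomposition, checking the sign of $\lambda_{i+1}=(i+1)i(N-i)$ at each position so that every block again carries nonnegative off-diagonal products and the induction closes.
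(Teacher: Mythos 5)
Your argument is correct and is essentially the paper's own proof: both reduce the claim to the real-rootedness of $\det(\mI_k y+\mM(x))$ by symmetrizing the tridiagonal matrix $\mM(x)=\mA^{(N)}_kx+\mU^{(\e)}_k$, using that the paired off-diagonal products $\lambda_{i+1}x=(i+1)i(N-i)x$ are nonnegative for $x\ge0$. The only difference is cosmetic: the paper invokes the continuant equivalence \eqref{eq:propCont1} (an identity of determinants, not a similarity), which replaces each pair $(1,\lambda_{i+1}x)$ by $(\sqrt{\lambda_{i+1}x},\sqrt{\lambda_{i+1}x})$ even when the product vanishes, so the degenerate block-triangular case you treat separately needs no special handling there.
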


\begin{proof}
  When \(x \geq 0\), using the continuant equivalence \eqref{eq:propCont1} on
  the determinant expression \eqref{eq:detForm1} of \(\cp{N,\e}{k}(x,y) \) we can find an equivalent expression
  \( \det( \mI_k y - \mV_k(x)) \) for a real symmetric matrix \(\mV_k(x)\). Since the roots of
  \(\cp{N,\e}{k}(x,y) \) with respect to \(y\) are the eigenvalues of the real symmetric matrix \(\mV_k(x) \),
  the result follows immediately. 
\end{proof}

In the case of the constraint polynomials \(\cp{N,\e}{N}(x,y)\), the determinant
expression of Corollary \ref{cor:detsym} gives the following result of similar
type, used for the estimation of positive roots of constraint
polynomials in \S \ref{sec:posRoots}.

\begin{thm}
  \label{thm:allrealroots}
  Let \( N \in \Z_{\geq 0}\) and \(\e > -1/2 \). Then, for fixed \(x \in \R\) (resp. \(y \in \R\)),
  all the roots of \(\cp{N,\e}{N}(x,y) \) with respect to \(y\) (resp. \(x\))
  are real.
\end{thm}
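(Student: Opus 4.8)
The plan is to work entirely from the symmetric determinant expression of Corollary \ref{cor:detsym},
\[
  \cp{N,\e}{N}(x,y) = \det\left( \mI_N y + \mD_N x + \mS_N^{(N,\e)} \right),
\]
where $\mD_N=\diag(1,2,\dots,N)$ is positive definite and $\mS_N^{(N,\e)}$ is the real symmetric tridiagonal matrix whose off-diagonal entries are $\sqrt{i(i+1)c^{(\e)}_{N-i}}$. The crucial structural feature, which distinguishes this expression from \eqref{eq:detForm1}, is that here $y$ multiplies $\mI_N$ and $x$ multiplies the \emph{diagonal} matrix $\mD_N$, so that the dependence on each variable is ``central'' and no symmetrization via continuant equivalence is needed to isolate it. Before using this, I would first check that $\mS_N^{(N,\e)}$ is genuinely a real matrix: its off-diagonal entries involve $c^{(\e)}_{N-i}=(N-i)(N-i+2\e)$ for $1\le i\le N-1$, and since $N-i\ge 1$ and $\e>-1/2$ give $N-i+2\e>N-i-1\ge0$, the radicands $i(i+1)c^{(\e)}_{N-i}$ are strictly positive. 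This is precisely the point where the hypothesis $\e>-1/2$ is used; without it the entry with $N-i=1$ would have a negative radicand and $\mS_N^{(N,\e)}$ would fail to be real symmetric.

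For the statement about roots in $y$ with $x\in\R$ fixed, I would simply observe that $\mD_N x + \mS_N^{(N,\e)}$ is then a real symmetric matrix (sum of the real diagonal $\mD_N x$ and the real symmetric $\mS_N^{(N,\e)}$), so by the spectral theorem all its eigenvalues are real. The roots of $\cp{N,\e}{N}(x,y)$ with respect to $y$ are exactly the negatives of these eigenvalues, hence all real. Note this also upgrades Corollary \ref{cor:realrootsposx} from $x\ge0$ to all $x\in\R$ in the case $k=N$, for the reason noted above.

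For the statement about roots in $x$ with $y\in\R$ fixed, the subtlety is that $x$ multiplies $\mD_N$ rather than $\mI_N$, so this is a generalized rather than ordinary eigenvalue problem; this is the main obstacle and it is resolved using positive-definiteness of $\mD_N$. Writing $\mD_N=\mD_N^{1/2}\mD_N^{1/2}$ with $\mD_N^{1/2}=\diag(\sqrt1,\dots,\sqrt N)$ and factoring, I would obtain
\[
  \cp{N,\e}{N}(x,y) = (\det\mD_N)\det\!\left( \mD_N^{-1/2}\bigl(\mI_N y + \mS_N^{(N,\e)}\bigr)\mD_N^{-1/2} + x\,\mI_N \right).
\]
The congruent matrix $\mD_N^{-1/2}(\mI_N y + \mS_N^{(N,\e)})\mD_N^{-1/2}$ is real symmetric (a symmetric conjugation of a real symmetric matrix), hence has real spectrum, and since $\det\mD_N=N!\ne0$ the scalar factor contributes no roots. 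Thus the roots of $\cp{N,\e}{N}(x,y)$ in $x$ are the negatives of its eigenvalues and are all real, completing the proof.
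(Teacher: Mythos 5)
Your proof is correct and follows essentially the same route as the paper: the fixed-$x$ case reads off the real spectrum of the real symmetric matrix $\mD_N x+\mS_N^{(N,\e)}$ from Corollary \ref{cor:detsym}, and the fixed-$y$ case uses the congruence by $\mD_N^{-1/2}$, exactly as in the paper's identity $\cp{N,\e}{N}(x,\beta)=\det\mD_N\det(\mI_N x +\mD_N^{-1}\beta +\mD_N^{-1/2}\mS_N^{(N,\e)}\mD_N^{-1/2})$. Your explicit check that $\e>-1/2$ makes the radicands $i(i+1)c^{(\e)}_{N-i}$ positive, so that $\mS_N^{(N,\e)}$ is genuinely real, is a welcome detail the paper leaves implicit.
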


\begin{proof}
  Upon setting \(x = \alpha \in \R\), the zeros of \(\cp{N,\e}{N}(\alpha,y)\) are the
  eigenvalues of the matrix \(-(\mD_N \alpha + \mS^{(N,\e)}_N )\).
  For \(\e > -1/2 \), the matrix is real symmetric, so the eigenvalues,
  therefore the zeros, are real.
  The case of \(y = \beta \in \R \) is completely analogous
  since $\cp{N,\e}{N}(x,\beta)=\det\mD_N\det(\mI_N x +\mD_N^{-1}\beta +\mD_N^{-1/2}\mS_N^{(N,\e)}\mD_N^{-1/2})$. 
\end{proof}

The next example shows that we should not expect a determinant expression
of the type of Corollary \ref{cor:detsym} for general \(\cp{N,\e}{k}(x,y)\) with \(k \neq N\).

\begin{ex} \label{ex:nonrealroots}
  For a fixed \(y\), the roots of the polynomial
  \[
    \cp{6,0}{2}(x,y) = 2 x^2 +y^2 -16 x + 3 x y -5 y + 4,
  \]
  are given by
  \[
    \frac{1}{4} \left(16 -3 y  \pm \sqrt{y^2-56 y+224}\right).
  \]
  Clearly, the roots are not real for every value \(y \in \R\).
\end{ex}

\subsection{Divisibility of constraint polynomials} \label{sec:negeps}

In this subsection, we study the case where \(\e\) is a
negative half-integer. In this case, the determinant expressions
of \S \ref{sec:detexp} give the proof of the divisibility in Conjecture \ref{conj1A}.

First, from the determinant expression for \(\cp{N,\e}{N}(x,y)\) given in Corollary \ref{cor:detsym},
by means of the continuant equivalence \eqref{eq:propCont1} and elementary determinant operations it is not difficult to see that
\begin{align}\label{eq:detForm2}
  \cp{N,\e}{N}(x,y) = N! \det\left( \mI_N x + \mD_N^{-1} y + \mV_N^{(N,\e)} \right),
\end{align}
where
\[
  \mV_N^{(N,\e)} = \Tridiag{-2(N-i)-1-2\e }{\sqrt{c^{(\e)}_{N-i}}}{\sqrt{c^{(\e)}_{N-i}}}{1\le i\le N}.
\]
For \( N, \ell \in \Z_{\geq 0} \), the expression above reads
\begin{align}\label{eq:detform3}
  \cp{N+\ell,-\ell/2}{N+\ell}(x,y)
  = (N+\ell)! \det\left( \mI_{N+\ell} x + \mD_{N+\ell}^{-1} y + \mV_{N+\ell}^{(N+\ell,-\ell/2)} \right).
\end{align}
Noting that \(c_{\ell}^{(-\ell/2)} = \ell(\ell - \ell)= 0\), the matrix
\(\mV_{N+\ell}^{(N+\ell,-\ell/2)} \) has the block-diagonal form
\[
  \mV_{N+\ell}^{(N+\ell,-\ell/2)} =
  \begin{bmatrix}
    \mLV_{N}^{(N+\ell,-\ell/2)} & \mO_{N,\ell} \\
    \mO_{\ell,N} & \mRV_{\ell}^{(N+\ell,-\ell/2)}
  \end{bmatrix},
\]
where \(\mO_{n,m} \) is the \(n \times m \) zero matrix. Next, by setting
\[
  \mD^{(N)}_\ell = \diag\Bigl(\frac1{N+1},\frac1{N+2},\dots,\frac1{N+\ell}\Bigr),
\]
immediately it follows that
\begin{align*}
  \cp{N+\ell,-\ell/2}{N+\ell}(x,y)  =(N+\ell&)! \det\left( \mI_{N} x + \mD_{N}^{-1} y + \mLV_{N}^{(N+\ell,-\ell/2)} \right) \\
  &\times \det\left( \mI_{\ell} x + \mD^{(N)}_{\ell} y + \mRV_{\ell}^{(N+\ell,-\ell/2)} \right).
\end{align*}
For \(i =1,2,\ldots,N\), the \(i\)-th diagonal element of
\(\mLV_N^{(N+\ell,-\ell/2)} \) is \[-(2(N+1+\ell-i)-1 - \ell) = -(2(N+1-i) -1 +\ell) \]
and the off-diagonal elements are \(c_{N+\ell-i}^{(-\ell/2)} = c_{N-i}^{(\ell/2)}\). Therefore,
\begin{align*}
  \mLV_N^{(N+\ell,-\ell/2)} &= \mV_N^{(N,\ell/2)},
\end{align*}
and then, from \eqref{eq:detForm2} we have
\begin{align*}
  \cp{N+\ell,-\ell/2}{N+\ell}(x,y) = \cp{N,\ell/2}{N}(x,y) \frac{(N+\ell)!}{N!}
  \det\left( \mI_{\ell} x + \mD^{(N)}_{\ell} y + \mRV_{\ell}^{(N+\ell,-\ell/2)} \right).
\end{align*}
Let \(A_{N}^\ell(x,y) = \frac{(N+\ell)!}{N!} \det\left( \mI_{\ell} x + \mD^{(N)}_{\ell} y + \mRV_{\ell}^{(N+\ell,-\ell/2)} \right)\).
By expanding the determinant as a recurrence relation (cf. \eqref{eq:continuant}) or by appealing to Gauss' lemma, it is
easy to see that \( A_{N}^\ell(x,y)\) is a polynomial with integer coefficients. Therefore, the discussion above proves the
following theorem.

\begin{thm} \label{thm:div1}
  For \(N,\ell \in \Z_{\geq 0}\), there is a polynomial \(A_N^\ell(x,y) \in \Z[x,y]\) such that
  \[
    \cp{N+\ell,-\ell/2}{N+\ell}(x,y) = A_N^\ell(x,y) \cp{N,\ell/2}{N}(x,y).
  \]
  Furthermore, \(A_N^\ell(x,y) \) is given by
  \[
    \frac{(N+\ell)!}{N!}\det\Tridiag{x+\frac y{N+i}-\ell+2i-1}{\sqrt{c^{(\ell/2)}_{-i}}}{\sqrt{c^{(\ell/2)}_{-i}}}{1\le i\le \ell}.
  \]
\end{thm}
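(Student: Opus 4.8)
The plan is to read the divisibility directly off the determinant expressions of \S\ref{sec:detexp}, the whole argument hinging on a single fortuitous vanishing among the coupling constants $c^{(\e)}_k = k(k+2\e)$. First I would pass from the symmetric determinant of Corollary \ref{cor:detsym} to the normalized form \eqref{eq:detForm2}, namely $\cp{N,\e}{N}(x,y) = N!\det(\mI_N x + \mD_N^{-1} y + \mV_N^{(N,\e)})$. This is obtained by factoring $\det\mD_N = N!$ out of $\det(\mI_N y + \mD_N x + \mS_N^{(N,\e)})$ and then resymmetrizing the resulting tridiagonal matrix via the continuant equivalence \eqref{eq:propCont1}, which is legitimate precisely because a continuant depends on the off-diagonal entries only through the products $b_i c_i$. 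Specializing $\e = -\ell/2$ and enlarging the size to $N+\ell$ then gives \eqref{eq:detform3}.

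The heart of the matter is the observation $c^{(-\ell/2)}_{\ell} = \ell(\ell-\ell) = 0$. In $\mV_{N+\ell}^{(N+\ell,-\ell/2)}$ the off-diagonal entries are $\sqrt{c^{(-\ell/2)}_{N+\ell-i}}$, and the entry indexed by $i=N$ sits exactly at the boundary between the first $N$ and the last $\ell$ rows; its vanishing decouples the matrix into the block-diagonal form with blocks $\mLV_N^{(N+\ell,-\ell/2)}$ and $\mRV_\ell^{(N+\ell,-\ell/2)}$, so the determinant factors as a product of the two block determinants. I would then identify the top block with $\mV_N^{(N,\ell/2)}$ by a direct entry comparison: its $i$-th diagonal entry $-2(N+\ell-i)-1+\ell$ equals $-2(N-i)-1-\ell$, the $i$-th diagonal of $\mV_N^{(N,\ell/2)}$, and the off-diagonal products agree since $c^{(-\ell/2)}_{N+\ell-i} = (N-i)(N-i+\ell) = c^{(\ell/2)}_{N-i}$. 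By \eqref{eq:detForm2}, restoring the factor $N!$ turns the top factor into $\cp{N,\ell/2}{N}(x,y)$, and one sets $A_N^\ell(x,y) = \frac{(N+\ell)!}{N!}\det(\mI_\ell x + \mD^{(N)}_\ell y + \mRV_\ell^{(N+\ell,-\ell/2)})$.

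The remaining, and really the only delicate, point is integrality. A priori the bottom block has off-diagonal entries $\sqrt{c^{(\ell/2)}_{-i}}$, which need not be real, since $c^{(\ell/2)}_{-i} = -i(\ell-i) \le 0$ for $1 \le i \le \ell$. The resolution is once more that a continuant sees these entries only through the products $b_i c_i = c^{(\ell/2)}_{-i} \in \Z$, so expanding the $\ell\times\ell$ determinant by the three-term recurrence \eqref{eq:continuant} produces a genuine element of $\Z[x,y]$ (alternatively, Gauss' lemma applied to the factorization over $\Q[x,y]$ forces integer coefficients). This simultaneously yields the explicit tridiagonal formula for $A_N^\ell(x,y)$ asserted in the statement. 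I expect no serious obstacle beyond keeping the index bookkeeping in the block comparison straight; note that the positivity half of Conjecture \ref{conj1A} is deliberately deferred to \S\ref{sec:pos} and plays no role here.
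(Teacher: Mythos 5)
Your proposal is correct and follows essentially the same route as the paper's own proof in \S\ref{sec:negeps}: the normalized determinant expression \eqref{eq:detForm2}, the block decoupling forced by $c^{(-\ell/2)}_{\ell}=0$, the entrywise identification $\mLV_N^{(N+\ell,-\ell/2)}=\mV_N^{(N,\ell/2)}$, and integrality via the continuant recurrence or Gauss' lemma. Nothing is missing.
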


To complete the proof of Conjecture \ref{conj1A}, it remains to prove that
\(A_N^\ell(x,y) >0 \) for \(x,y>0\). This is done in \S \ref{sec:pos} below.

\begin{ex}[\cite{RW2017}] \label{ex:A1}
For small values of \(\ell\), the explicit form of \(A_N^\ell(x,y) \) is given by

\begin{align*}
  &A^1_N(x,y) = (N+1)x +y,   \\
  &A^2_N(x,y) = (N+1)_2 x^2 + \biggl(\sum_{i=1}^2 (N+i) \biggr)xy + y (1+ y), \\
  &A^3_N(x,y) = (N+1)_3 x^3  + \biggl(\sum_{i<j}^3 (N+i)(N+j) \biggr) x^2 y \\
  & \qquad \qquad \quad      + (N+2)x(3 y+4)y + y (2 + y)^2, \\
  &A^4_N(x,y)  = (N+1)_4 x^4 + \biggl( \sum_{i<j<k}^4 (N+i)(N+j)(N+j) \biggr)x^3 y\\
  &\, + \biggl(\sum_{i<j}^4 (N+i)(N+j) \biggr) x^2 y^2 + 2\biggl( \sum_{i<j}^4 (N+i)(N+j) - (N+2)(N+3) \biggr) x^2 y  \\
  &\,  + \biggl(\sum_{i=1}^4 (N+i) \biggr) xy(y+2)(y+3) + y(3 + y)^2(4 + y),
\end{align*}
where the symbol $(a)_n$ denotes the Pochhammer symbol, or raising factorial, that is
\( (a)_n:= a(a+1)\cdots (a+n-1)=\frac{\Gamma(a+n)}{\Gamma(a)} \) for $a\in\C$ and a non-negative integer $n$.

For a fixed degree \(\ell \in \Z_{\geq 0}\), the polynomial equation \(A_N^\ell(x,y)= 0\) defines certain algebraic curve depending
on the parameter \(N\): the case \(\ell = 2 \) is parabolic, \(\ell = 3 \) gives an elliptic curve and \(\ell = 4\) is super elliptic,
and so on.

For instance, let us consider the case \(\ell = 3\). Here, by using the change of variable $X=-x/y$ and $Y=1/y$,
the equation $A^3_N(x,y)=0$ turns out to be
\[
  4 Y^2 + 4 Y - 4(N+2) X Y = (N+1)_3 X^3 - (11 + 3N( N+4)) X^2 + 3(N+2) X - 1,
\]
which is easily seen to be (birationally) equivalent to the
elliptic curve in Legendre form (cf. \cite{Kob1984}).
\[
  Y_1^2 = X_1(X_1-1)\Big(X_1- \frac{(N+2)^2}{(N+1)(N+3)}\Big).
\]
with variables \(X_1 = \frac{X}{N+2}\) and \(Y_1 = \frac{(N+2)}{\sqrt{(N+1)(N+3)}} (2 Y  - (N+2) X + 1) \).
\end{ex}

\subsection{Proof of the positivity of \texorpdfstring{$A_N^{\ell}(x,y) $}{A(x,y)}}\label{sec:pos}

In this subsection we complete the proof of Conjecture \ref{conj1A} by proving the positivity of
the polynomial \(A_N^\ell(x,y)\) for \(x,y>0\). Let \( N \in \Z_{\geq 0}\) and \( \ell \in \Z_{>0}\) be fixed.
From Theorem \ref{thm:div1} and the continuant equivalence \eqref{eq:propCont1}, we see that the polynomial \(A_N^\ell(x,y)\) has
the determinant expression
\[
  \frac{(N+\ell)!}{N!} \det( \mD^{(N)}_{\ell} y + \mB_{\ell}(x) )
\]
where \(\mB_{\ell}(x)  \) is an matrix-valued function given by
\begin{equation}
  \label{eq:matbx}
  \mB_{\ell}(x) = \Tridiag{x-\ell+2i-1}{1}{c^{(\ell/2)}_{-i}}{1\le i\le \ell}.
\end{equation}

Next, multiplying the \(\frac{(N+\ell)!}{N!}\) factor into the determinant in such a way that the
\(i\)-th row  is multiplied by \(N+i\), we obtain the expression
\begin{equation}
  \label{eq:Adetexp}
  A_N^\ell(x,y) = \det(\mI_\ell y + \mM^{(N)}_\ell(x)) {}= \prod_{\lambda\in\Spec(\mM^{(N)}_\ell(x))}(y+\lambda)
\end{equation}
with
\begin{equation*}
  \mM^{(N)}_\ell(x) = \Tridiag{(N+i)(x-\ell+2i-1)}{N+i}{(N+i+1)c^{(\ell/2)}_{-i}}{1\le i\le \ell}.
\end{equation*}

Thus, it suffices to show that all the eigenvalues of \(\mM^{(N)}_\ell(x)\) are positive for \(x > 0 \) to prove that
$A_N^\ell(x,y)>0$ when $x,y>0$.

First, we compute the determinant of the matrix \(\mM^{(N)}_\ell(x)\), or equivalently, the value of \(A_N^{\ell}(x,0)\).

\begin{lem}
  \label{lem:detM}
  We have
  \[ \det(\mM^{(N)}_\ell(x)) = A_N^{\ell}(x,0)  =\frac{(N+\ell)!}{N!} x^{\ell}.\]
\end{lem}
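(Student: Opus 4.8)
The plan is to peel off the $N$-dependent factor and reduce the whole statement to the single identity $\det\mB_\ell(x)=x^\ell$. Setting $y=0$ in the two determinant expressions for $A_N^\ell(x,y)$ already available above — the expression $A_N^\ell(x,y)=\det(\mI_\ell y+\mM^{(N)}_\ell(x))$ from \eqref{eq:Adetexp} and the expression $A_N^\ell(x,y)=\tfrac{(N+\ell)!}{N!}\det(\mD^{(N)}_\ell y+\mB_\ell(x))$ obtained just before it — gives at once
\[
  \det\mM^{(N)}_\ell(x)=A_N^\ell(x,0)=\frac{(N+\ell)!}{N!}\det\mB_\ell(x),
\]
so the lemma follows once I show $\det\mB_\ell(x)=x^\ell$, uniformly in $N$.

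To prove this I would write $\mB_\ell(x)=x\mI_\ell+\MatSymbol{N}_\ell$, where $\MatSymbol{N}_\ell:=\mB_\ell(0)$ is, by \eqref{eq:matbx}, the tridiagonal matrix with diagonal entries $2i-1-\ell$, super-diagonal entries $1$, and sub-diagonal entries $c^{(\ell/2)}_{-i}=i(i-\ell)$ (recall $c^{(\e)}_k=k(k+2\e)$). Since $\det(x\mI_\ell+\MatSymbol{N}_\ell)$ is the characteristic polynomial of $-\MatSymbol{N}_\ell$ evaluated at $x$, the identity $\det\mB_\ell(x)=x^\ell$ is \emph{equivalent} to the single assertion that $\MatSymbol{N}_\ell$ is nilpotent. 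The key observation is that $\MatSymbol{N}_\ell$ is the image of a nilpotent element of $\mathfrak{sl}_2$ in a finite-dimensional representation. Concretely, I would split $\MatSymbol{N}_\ell=\mathfrak h+\mathfrak e+\mathfrak f$ into its diagonal, strictly upper-triangular and strictly lower-triangular parts; a direct computation from the explicit entries yields the bracket relations $[\mathfrak h,\mathfrak e]=-2\mathfrak e$, $[\mathfrak h,\mathfrak f]=2\mathfrak f$ and $[\mathfrak e,\mathfrak f]=\mathfrak h$, so that $\mathfrak h,\mathfrak e,\mathfrak f$ span a copy of $\mathfrak{sl}_2$ acting on $\C^\ell$ (in fact the $\ell$-dimensional irreducible module, the diagonal of $\mathfrak h$ being the weights $1-\ell,3-\ell,\dots,\ell-1$).

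It then remains to see that $\mathfrak n:=\mathfrak h+\mathfrak e+\mathfrak f$ is nilpotent, and this can be settled in the $2$-dimensional defining representation: the matrices $\mathfrak h,\mathfrak e,\mathfrak f$ obey exactly the same commutation relations as the basis $\mat{1 & 0 \\ 0 & -1},\ \mat{0 & 0 \\ 1 & 0},\ \mat{0 & -1 \\ 0 & 0}$ of $\mathfrak{sl}_2$, and the corresponding sum $\mat{1 & -1 \\ 1 & -1}$ squares to the zero matrix. Since a nilpotent element of $\mathfrak{sl}_2$ (being conjugate under $SL_2$ to a multiple of the raising operator) is represented by a nilpotent operator in every finite-dimensional representation, $\MatSymbol{N}_\ell$ is nilpotent; hence $\det\mB_\ell(x)=x^\ell$ and the lemma follows.

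The main obstacle here is conceptual rather than computational: one must spot the $\mathfrak{sl}_2$-structure and keep careful track of the sign and normalization conventions (above $[\mathfrak e,\mathfrak f]=\mathfrak h$, with the roles of raising and lowering interchanged relative to the usual triple). A brute-force alternative — running the continuant recurrence \eqref{eq:continuant}, i.e. $D_k=(x-\ell+2k-1)D_{k-1}-(k-1)(k-1-\ell)D_{k-2}$, on the leading principal minors of $\mB_\ell(x)$ — is unappealing, because the intermediate continuants are genuinely non-monomial (for instance $D_2=x^2-2(\ell-2)x+(\ell-1)(\ell-2)$) and only the final value $D_\ell$ collapses to $x^\ell$. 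The nilpotency argument is precisely what explains this collapse.
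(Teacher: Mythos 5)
Your proof is correct, but it takes a genuinely different route from the paper's at the key step. Both arguments start the same way, setting $y=0$ in the two determinant expressions for $A_N^\ell(x,y)$ to reduce the lemma to the single identity $\det\mB_\ell(x)=x^\ell$. The paper then proceeds by explicit induction: it guesses the closed form $J_i(x)=\sum_{j=0}^i(\ell-i)_j\binom{i}{j}x^{i-j}$ for \emph{every} intermediate continuant of $\mB_\ell(x)$ and verifies it via the three-term recurrence, so that $J_\ell(x)=x^\ell$ drops out because $(0)_j=0$ for $j\ge1$. You instead explain structurally why only the final continuant collapses: writing $\mB_\ell(x)=x\mI_\ell+\mB_\ell(0)$, you reduce the identity to the nilpotency of $\mB_\ell(0)$, check the brackets $[\mathfrak h,\mathfrak e]=-2\mathfrak e$, $[\mathfrak h,\mathfrak f]=2\mathfrak f$, $[\mathfrak e,\mathfrak f]=\mathfrak h$ (which I have verified from the entries $2i-1-\ell$, $1$ and $i(i-\ell)$: in particular $i(i-\ell)-(i-1)(i-1-\ell)=2i-1-\ell$), and conclude from the nilpotency of $\mat{1 & -1 \\ 1 & -1}$ in the defining representation, using the standard fact that a nilpotent element of $\mathfrak{sl}_2$ acts nilpotently in every finite-dimensional representation. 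This is sound and conceptually attractive, very much in the spirit of \S\ref{sec:repr-theor-pict}; it genuinely explains the collapse to a monomial. What the paper's computation buys in exchange is the intermediate data: the closed form for $J_k(x)$ with $k<\ell$ is reused in the proof of Proposition \ref{prop:pos2}, where $\det(\mathbf{M}_k(x))=k!J_k(x)$ is needed to be a polynomial with positive coefficients, and your nilpotency argument says nothing about these truncated continuants (as you yourself observe, they are genuinely non-monomial). So your proof fully establishes the lemma as stated, but if it replaced the paper's, the inductive computation of $J_k(x)$ would still have to be carried out elsewhere.
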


\begin{proof}
  Consider the recurrence relation
  \[
    J_i(x) =  (x+ \ell + 1 - 2 i) J_{i-1}(x) + (i-1)(\ell+1-i) J_{i-2}(x),
  \]
  with initial conditions $J_0(x) = 1$ and \(J_{-1}(x) = 0 \). Notice that this
  recurrence relation corresponds to the continuant \(\det \mB_{\ell}(x)\) (compare with \eqref{eq:matbx} above)
  and therefore, \(\frac{(N+\ell)!}{N!} J_\ell (x) = \frac{(N+\ell)!}{N!} \det \mB_{\ell}(x) = \det(\mM^{(N)}_\ell(x)) \).
  We claim that \(J_i(x) = \sum_{j=0}^i (\ell-i)_j \binom{i}{j} x^{i-j} \). Clearly, the claim holds
  for \(J_0(x) = 1 \) and \(J_1(x) = x + \ell -1 \). Assuming it holds for integers up to a fixed \(i\), then
  \(J_{i+1}(x)\) is given by
  \begin{align*}
    &(x + \ell - 1- 2 i) \sum_{j=0}^{i} (\ell-i)_j \binom{i}{j} x^{i-j} + i(\ell - i) \sum_{j=0}^{i-1} (\ell -i + 1)_j \binom{i-1}{j} x^{i-1-j} \\
    &\qquad=  \sum_{j=0}^{i} (\ell-i)_j \binom{i}{j} x^{i+1-j} + (\ell - 1- 2 i) \sum_{j=0}^{i} (\ell-i)_j \binom{i}{j} x^{i-j} \\
    &\qquad \qquad{}+ i(\ell - i) \sum_{j=0}^{i-1} (\ell -i + 1)_j \binom{i-1}{j} x^{i-1-j},
  \end{align*}
  by grouping the terms in the sums we obtain
  \begin{multline*}
    x^{i+1} + (\ell -i -1) x^i + (\ell-i -1)_{i+1} \\
    {} + \sum_{j=1}^{i-1} (\ell-i)_j \left( (\ell-i+j) \binom{i}{j+1} + (\ell-1-2 i)\binom{i}{k} + j \binom{i}{j} \right) x^{i-j}.
  \end{multline*}
  The sum on the right is
  \begin{multline*}
    \sum_{j=1}^{i-1} (\ell-i)_j \binom{i}{j} \left( \frac{(\ell-i+j)(i-j)}{j+1} + \ell -1 - 2 i + j \right)  x^{i-j} \\
    =  \sum_{j=1}^{i-1} (\ell-i)_j \binom{i}{j} \left(\frac{(i+1)(\ell -i -1)}{j+1}\right) x^{i-j}
    =  \sum_{j=2}^{i} (\ell -i - 1)_j \binom{i+1}{j} x^{i+1-j},
  \end{multline*}
  and the claim follows by joining the remaining terms into the sum. Finally, notice that
  \(J_\ell (x) = \sum_{j=0}^i (0)_j \binom{\ell}{j} x^{\ell-j} = x^\ell \), as desired. 
\end{proof}

\begin{rem}
  The lemma above is a generalization of the case \(N = 0 \) studied in \cite{RW2017} (Prop. 4.1) using
  continued fractions. It would be interesting to study the combinatorial properties of the coefficients of the polynomials $A_N^{\ell}(x,y)$
  using the determinant expressions given above.
\end{rem}

From the lemma above, we immediately obtain the following result.

\begin{cor}
  \label{cor:zeroeigen}
  For \(N \in \Z_{\geq 0}\), the eigenvalue \(\lambda =0\) is in  \(\Spec(\mM^{(N)}_\ell(x))\) if and only if \(x = 0\).
\end{cor}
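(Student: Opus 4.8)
The plan is to reduce the statement to the determinant computation already carried out in Lemma \ref{lem:detM}. Recall the elementary fact that for a square matrix, the scalar $0$ belongs to its spectrum if and only if the matrix is singular, i.e. its determinant vanishes. Applying this to $\mM^{(N)}_\ell(x)$, I would first note that $0 \in \Spec(\mM^{(N)}_\ell(x))$ is equivalent to $\det(\mM^{(N)}_\ell(x)) = 0$.

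Next I would simply substitute the value of this determinant furnished by Lemma \ref{lem:detM}, namely
\[
  \det(\mM^{(N)}_\ell(x)) = A_N^\ell(x,0) = \frac{(N+\ell)!}{N!}\,x^\ell.
\]
Since $N \in \Z_{\geq 0}$ and $\ell \in \Z_{>0}$ are fixed, the prefactor $\frac{(N+\ell)!}{N!}$ is a positive integer, hence nonzero. Therefore the product vanishes precisely when the remaining factor $x^\ell$ vanishes, and because $\ell \geq 1$ this happens if and only if $x = 0$. Chaining the two equivalences then yields that $0 \in \Spec(\mM^{(N)}_\ell(x))$ if and only if $x = 0$, which is exactly the assertion.

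There is essentially no obstacle here: the corollary is an immediate consequence of the closed form for the determinant obtained in Lemma \ref{lem:detM}, so the only content is the (standard) translation between the eigenvalue $0$ lying in the spectrum and the singularity of the matrix, together with the observation that the factorial prefactor never vanishes. The one point worth stating explicitly is the use of $\ell \geq 1$, which guarantees that $x^\ell = 0$ forces $x = 0$ rather than being vacuous.
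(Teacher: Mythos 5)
Your argument is correct and is exactly the paper's (unwritten) reasoning: the authors state the corollary as an immediate consequence of Lemma \ref{lem:detM}, namely that $0\in\Spec(\mM^{(N)}_\ell(x))$ iff $\det(\mM^{(N)}_\ell(x))=\frac{(N+\ell)!}{N!}x^{\ell}$ vanishes, which for $\ell\in\Z_{>0}$ happens iff $x=0$. Your explicit remark about needing $\ell\geq 1$ is a fair point of care but does not change the substance.
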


The next result collects some basic properties of the eigenvalues of the matrix \(\mM^{(N)}_\ell(x)\) that
are used in the proof of the positivity of \(A_N^{\ell}(x,y)\).

\begin{lem}
  \label{lem:eigenprop}
  Denote the spectrum of the matrix \(\mM^{(N)}_\ell(x)\) by \(\Spec(\mM^{(N)}_\ell(x))\).
  \begin{enumerate}[$(1)$]
  \item For \(x \geq 0 \), the eigenvalues \( \lambda \in \Spec(\mM^{(N)}_\ell(x))\) are real.
  \item We have \( \Spec(\mM^{(N)}_\ell(0)) = \{ i(\ell-i) : i=1,2,\cdots,\ell \}\).
    In particular, \( 0 \in \Spec(\mM^{(N)}_\ell(0)) \) is a simple eigenvalue and
    any eigenvalue \(\lambda \in \Spec(\mM^{(N)}_\ell(0)) \) satisfies \( \lambda \geq 0\).
  \item If \( x' > \ell-1\), all eigenvalues  \( \lambda \in \Spec(\mM^{(N)}_\ell(x'))  \)
    satisfy \( \lambda > 0\).
  \end{enumerate}
\end{lem}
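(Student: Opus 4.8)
The plan is to read the spectrum off the factorization in \eqref{eq:Adetexp}, namely $A_N^\ell(x,y)=\prod_{\lambda\in\Spec(\mM^{(N)}_\ell(x))}(y+\lambda)$, so that the eigenvalues of $\mM^{(N)}_\ell(x)$ are exactly the negatives of the roots in $y$ of $A_N^\ell(x,y)$. The recurring difficulty is that $\mM^{(N)}_\ell(x)$ is \emph{not} symmetric and cannot be symmetrized: since $c^{(\ell/2)}_{-i}=-i(\ell-i)<0$ for $1\le i\le\ell-1$, the products of its off-diagonal entries are strictly negative, so the diagonal-conjugation trick that underlies Theorem \ref{thm:allrealroots} and Corollary \ref{cor:realrootsposx} is not available here. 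This is what forces the indirect arguments below.

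For (1), rather than symmetrize $\mM^{(N)}_\ell(x)$ I would invoke the divisibility Theorem \ref{thm:div1}. Fixing $x=\alpha\ge0$, the identity $\cp{N+\ell,-\ell/2}{N+\ell}(\alpha,y)=A_N^\ell(\alpha,y)\,\cp{N,\ell/2}{N}(\alpha,y)$ shows that every root in $y$ of $A_N^\ell(\alpha,y)$ is a root of $\cp{N+\ell,-\ell/2}{N+\ell}(\alpha,y)$. By Corollary \ref{cor:realrootsposx}, applied with parameters $N+\ell$, $-\ell/2$ and $k=N+\ell$, all roots in $y$ of the latter are real for $\alpha\ge0$; since $A_N^\ell(\alpha,y)$ is monic of degree $\ell$ in $y$, all of its $\ell$ roots are real, and hence every eigenvalue $\lambda=-y$ is real.

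For (2), I would compute $\Spec(\mM^{(N)}_\ell(0))$ by specializing the divisibility relation at $x=0$. Setting $x=0$ in \eqref{eq:detForm1} makes $\mU^{(\e)}_k$ upper bidiagonal, so $\cp{N,\e}{k}(0,y)=\prod_{i=1}^{k}(y-c^{(\e)}_i)=\prod_{i=1}^{k}(y-i(i+2\e))$. Consequently
\[
  A_N^\ell(0,y)=\frac{\cp{N+\ell,-\ell/2}{N+\ell}(0,y)}{\cp{N,\ell/2}{N}(0,y)}
  =\frac{\prod_{i=1}^{N+\ell}(y-i(i-\ell))}{\prod_{i=1}^{N}(y-i(i+\ell))}.
\]
The substitution $i\mapsto\ell+j$ shows that the numerator factors with $i=\ell+1,\dots,\ell+N$ cancel the denominator exactly, leaving $A_N^\ell(0,y)=\prod_{i=1}^{\ell}(y+i(\ell-i))$; comparing with \eqref{eq:Adetexp} gives $\Spec(\mM^{(N)}_\ell(0))=\{i(\ell-i):1\le i\le\ell\}$. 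The value $0$ is attained only at $i=\ell$, hence it is simple, and every eigenvalue is $\ge0$.

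Part (3) is the crux, and because of the non-symmetry I would prove it by a Sturm-type sign induction on leading principal minors rather than by Gershgorin bounds, since the first Gershgorin disc of $\mM^{(N)}_\ell(x')$ already dips below $0$ when $\ell-1<x'<\ell$. Writing $D_i(\lambda)$ for the $i\times i$ leading principal minor of $\lambda\mI-\mM^{(N)}_\ell(x')$, the continuant recurrence \eqref{eq:continuant} gives
\[
  D_i(\lambda)=(\lambda-a_i)D_{i-1}(\lambda)+(i-1)(\ell-i+1)(N+i-1)(N+i)\,D_{i-2}(\lambda),
\]
with $a_i=(N+i)(x'-\ell+2i-1)$; here the coefficient of $D_{i-2}$ is nonnegative precisely because $c^{(\ell/2)}_{-i}<0$ combines with the minus sign of the recurrence. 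For $x'>\ell-1$ one checks $a_i>0$ for every $i$, so for real $\lambda\le0$ the factor $\lambda-a_i<0$ together with the nonnegative coefficient forces, inductively from $\sgn D_0=+$ and $\sgn D_1(\lambda)=-$, the alternation $\sgn D_i(\lambda)=(-1)^i$. In particular $D_\ell(\lambda)\ne0$ for all real $\lambda\le0$, so by part (1) every eigenvalue is real and positive. The main obstacle is exactly this absence of symmetry: it is what rules out positive-definiteness and Gershgorin arguments and compels the divisibility route for (1) and the continuant sign-induction for (3). Once these sign patterns are controlled, the positivity of $\Spec(\mM^{(N)}_\ell(x'))$ follows, and together with Corollary \ref{cor:zeroeigen} (no zero eigenvalue for $x>0$) it will yield $A_N^\ell(x,y)>0$ for all $x,y>0$.
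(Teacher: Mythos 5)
Your proposal is correct and follows essentially the same route as the paper: part (1) via the divisibility of Theorem \ref{thm:div1} combined with Corollary \ref{cor:realrootsposx}, part (2) by evaluating the divisibility relation at $x=0$ to get $A_N^{\ell}(0,y)=\prod_{i=1}^{\ell}(y+i(\ell-i))$, and part (3) by exploiting that all diagonal entries of $\mM^{(N)}_\ell(x')$ are positive when $x'>\ell-1$. The only cosmetic difference is in (3), where you run a sign-alternation induction on the leading principal minors of $\lambda\mI_\ell-\mM^{(N)}_\ell(x')$ for $\lambda\le 0$, whereas the paper equivalently observes that the same continuant recurrence shows $A_N^{\ell}(x',y)$ has positive coefficients in $y$, so that together with realness and Corollary \ref{cor:zeroeigen} all roots in $y$ are negative.
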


\begin{proof}
  Note that by Corollary \ref{cor:realrootsposx} and the divisibility of Theorem \ref{thm:div1}, if \(x \geq 0 \) all the roots of \(A_N^\ell(x,y)\) with
  respect to  \(y\) are real. By definition, the same holds for the elements of \(\Spec(\mM^{(N)}_\ell(x))\), proving the first claim.
  From the defining recurrence relation, we see that \(\cp{N,\e}{N}(0,y) = \prod_{i=1}^{N}(y - i(i+2\e))\), and by divisibility we have
  \(A_N^{\ell}(0,y) = \prod_{i=1}^\ell(y - i(i - \ell)) \) proving the second claim. For the third claim, notice that when \(x' > \ell-1 \) all
  the diagonal elements of \(\mM^{(N)}_\ell(x')\) are positive. Therefore, the continuant \eqref{eq:Adetexp} defines a recurrence relation with
  positive coefficients, so that \(A_N^{\ell}(x',y)\) is a polynomial in \(y\) with positive coefficients and real roots. Since \(y=0 \) is not
  a root of \(A_N^{\ell}(x',y)\) by Corollary \ref{cor:zeroeigen}, all of the roots of \(A_N^{\ell}(x',y)\) must be negative and the third claim follows. 
\end{proof}

With these preparations, we come to the proof of the positivity of the polynomial \(A_N^{\ell}(x,y)\)

\begin{thm} \label{thm:pos}
  With the notation of Theorem \ref{thm:div1}, \(A_N^{\ell}(x,y)> 0 \) for \(x,y>0\).
\end{thm}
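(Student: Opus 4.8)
The plan is to work entirely at the level of the spectrum of \(\mM^{(N)}_\ell(x)\), exploiting the factorization \eqref{eq:Adetexp}. Since \(A_N^\ell(x,y) = \prod_{\lambda \in \Spec(\mM^{(N)}_\ell(x))}(y+\lambda)\), it suffices to show that every eigenvalue of \(\mM^{(N)}_\ell(x)\) is strictly positive for each fixed \(x > 0\); then each factor \(y+\lambda\) is positive for \(y > 0\), and hence so is the product. All the analytic input needed has already been assembled in Lemma \ref{lem:eigenprop} and Corollary \ref{cor:zeroeigen}, so the argument is a soft continuity and connectedness deformation in the parameter \(x\).

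First I would fix \(x \geq 0\) and recall from Lemma \ref{lem:eigenprop}(1) that the eigenvalues of \(\mM^{(N)}_\ell(x)\) are then all real. Since the entries of \(\mM^{(N)}_\ell(x)\) depend polynomially on \(x\), its characteristic polynomial has coefficients continuous in \(x\), and its roots depend continuously on \(x\) as a multiset. Because these roots remain real throughout \([0,\infty)\), I can label them in increasing order \(\mu_1(x) \leq \mu_2(x) \leq \cdots \leq \mu_\ell(x)\), with each \(\mu_k\) a continuous function of \(x\) on \([0,\infty)\). The entire positivity question then reduces to controlling the smallest branch \(\mu_1(x)\).

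The deformation argument runs as follows. By Corollary \ref{cor:zeroeigen}, \(0 \in \Spec(\mM^{(N)}_\ell(x))\) if and only if \(x = 0\); in particular \(\mu_1(x) \neq 0\) for every \(x > 0\). Thus \(\mu_1\) is a continuous, nowhere-vanishing function on the connected interval \((0,\infty)\), and so has constant sign there. By Lemma \ref{lem:eigenprop}(3), for \(x > \ell - 1\) all eigenvalues are positive, so \(\mu_1(x) > 0\) for large \(x\); combined with the constant-sign conclusion this forces \(\mu_1(x) > 0\) for all \(x > 0\). Consequently every eigenvalue satisfies \(\mu_k(x) \geq \mu_1(x) > 0\) for \(x > 0\), and \eqref{eq:Adetexp} yields \(A_N^\ell(x,y) > 0\) when \(x, y > 0\), as claimed.

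The step I expect to require the most care is the continuity of the ordered eigenvalues, precisely because \(\mM^{(N)}_\ell(x)\) is not symmetric (its super- and sub-diagonal entries \(N+i\) and \((N+i+1)c^{(\ell/2)}_{-i}\) differ in general), so one cannot simply invoke the spectral theorem. The saving grace is that reality of the spectrum on all of \([0,\infty)\) is already guaranteed by Lemma \ref{lem:eigenprop}(1) (via Corollary \ref{cor:realrootsposx} and the divisibility of Theorem \ref{thm:div1}); once reality is known, continuity of each ordered real root of a polynomial with continuous coefficients is standard, and the role of \(x = 0\) as the unique parameter value producing a zero eigenvalue (Corollary \ref{cor:zeroeigen}) is exactly what prevents any branch from crossing from the positive into the negative half-line. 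No further estimates are needed.
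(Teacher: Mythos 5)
Your proposal is correct and follows essentially the same route as the paper: reduce via \eqref{eq:Adetexp} to positivity of the spectrum of \(\mM^{(N)}_\ell(x)\), then use continuity of the (real) eigenvalues in \(x\), the fact that \(0\in\Spec(\mM^{(N)}_\ell(x))\) only at \(x=0\) (Corollary \ref{cor:zeroeigen}), and positivity for large \(x\) (Lemma \ref{lem:eigenprop}(3)) to rule out any negative eigenvalue. The paper phrases this as a contradiction via the intermediate value theorem, while you track the smallest eigenvalue branch directly, but the argument is the same.
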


\begin{proof}
  By virtue of \eqref{eq:Adetexp}, it is enough to show that all the eigenvalues of $\mM^{(N)}_{\ell}(x)$ are positive if $x>0$.
  Notice that each eigenvalue of $\mM^{(N)}_{\ell}(x)$ is a real-valued continuous function in $x$. Assume that there is a positive $x'$
  such that $\mM^{(N)}_{\ell}(x')$ has a negative eigenvalue. Then, there also exists $x''$ such that $x'<x''<\ell$ and $0\in\Spec(\mM^{(N)}_{\ell}(x''))$
  since all eigenvalues of $\mM^{(N)}_{\ell}(\ell)$ are positive by Lemma \ref{lem:eigenprop} (3). This contradicts to
  Corollary \ref{cor:zeroeigen}. 
\end{proof}

The proof of Conjecture \ref{conj1A}, which we reformulate as a theorem below, is completed by Theorems \ref{thm:div1} and \ref{thm:pos}.

\begin{thm} \label{thm:Main}
  For  \(\ell, N \in \Z_{\geq 0}\), there exists a polynomial \( A^\ell_N(x,y) \in \Z[x,y]\) such that
 \begin{align} \label{eq:conj:proved}
   \cp{N+\ell,-\ell/2}{N+\ell}((2g)^2,\Delta^2) =  A^\ell_N((2g)^2,\Delta^2) \cp{N,\ell/2}{N}((2g)^2,\Delta^2).
 \end{align}
 for \(g,\Delta >0 \). Moreover, the polynomial \( A^\ell_N(x,y)\) is positive for any $x, y > 0$. \QEDhere
\end{thm}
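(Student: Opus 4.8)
The plan is to assemble the statement from the two preceding results and then perform a single specialization of variables. Theorem \ref{thm:div1} already furnishes the polynomial $A_N^\ell(x,y)\in\Z[x,y]$ together with the identity $\cp{N+\ell,-\ell/2}{N+\ell}(x,y)=A_N^\ell(x,y)\,\cp{N,\ell/2}{N}(x,y)$ in the ring $\Z[x,y]$, while Theorem \ref{thm:pos} records that this quotient is positive on the open quadrant $x,y>0$. Because Theorem \ref{thm:div1} produces a genuine identity of polynomials, the entire divisibility content is formal, and the passage to the physical parameters is nothing more than the evaluation homomorphism $x\mapsto(2g)^2$, $y\mapsto\Delta^2$; no analytic or limiting argument is needed at this final stage.

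First I would invoke Theorem \ref{thm:div1} to fix $A_N^\ell(x,y)$ and the polynomial identity above. The conceptual engine here is the symmetric determinant expression \eqref{eq:detForm2}: taking $\e=-\ell/2$ forces the off-diagonal weight $c^{(-\ell/2)}_{\ell}=\ell(\ell-\ell)=0$ to vanish at the coupling between positions $N$ and $N+1$, which splits the tridiagonal matrix $\mV_{N+\ell}^{(N+\ell,-\ell/2)}$ into block-diagonal form; the upper-left block reproduces $\mV_N^{(N,\ell/2)}$, contributing the factor $\cp{N,\ell/2}{N}$, and the lower-right $\ell\times\ell$ block yields the quotient $A_N^\ell$. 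Since a polynomial identity is preserved under any specialization of its indeterminates, evaluating at $x=(2g)^2$ and $y=\Delta^2$ gives \eqref{eq:conj:proved} for every $g,\Delta>0$, establishing the existence part of the statement.

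Next I would appeal to Theorem \ref{thm:pos} for positivity: because $g,\Delta>0$ implies $(2g)^2>0$ and $\Delta^2>0$, the inequality $A_N^\ell((2g)^2,\Delta^2)>0$ is immediate from $A_N^\ell(x,y)>0$ on $x,y>0$. I would stress that essentially all of the genuine difficulty resides inside Theorem \ref{thm:pos}, whose proof recasts positivity, via the factorization \eqref{eq:Adetexp}, as the assertion that the tridiagonal matrix $\mM^{(N)}_\ell(x)$ has only positive eigenvalues for $x>0$. This is the true obstacle of the whole argument, and it is resolved by a continuity-and-connectedness device: the real eigenvalues are tracked as continuous functions of $x$, and a sign change on $(0,\infty)$ is precluded by Corollary \ref{cor:zeroeigen} (which confines the eigenvalue $0$ exclusively to $x=0$) together with the positivity of all eigenvalues at $x=\ell$ supplied by Lemma \ref{lem:eigenprop}~(3). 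With divisibility and positivity both in hand, the theorem, and hence Conjecture \ref{conj1A}, follows from the specialization $x=(2g)^2$, $y=\Delta^2$ just described.
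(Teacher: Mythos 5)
Your proposal is correct and follows the paper's own route exactly: the paper proves Theorem \ref{thm:Main} by combining the divisibility identity of Theorem \ref{thm:div1} (obtained from the block-diagonal splitting of $\mV_{N+\ell}^{(N+\ell,-\ell/2)}$ caused by $c^{(-\ell/2)}_{\ell}=0$) with the positivity of Theorem \ref{thm:pos} (the eigenvalue-continuity argument using Corollary \ref{cor:zeroeigen} and Lemma \ref{lem:eigenprop}), and then specializes $x=(2g)^2$, $y=\Delta^2$. Your summary of the internal mechanics of both ingredients also matches the paper's arguments.
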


A consequence of the positivity of \( A^\ell_N(x,y)\) in Theorem \ref{thm:Main} is that all the positive roots of the constraint
polynomials \(\cp{N,\ell/2}{N}(x,y)\) and \(\cp{N+\ell,-\ell/2}{N+\ell}(x,y) \) (\(N,\ell \in \Z_{\geq 0}\)) must coincide.
This explains the fact that the two curves defined by the constraint polynomials
in Figure \ref{fig:constraintgraph} appear to coincide when \(\e = \ell/2\, (\ell \in \Z)\).

Note that since \(A_0^{\ell}(x,y) = \cp{\ell,-\ell/2}{\ell}(x,y)\) and \(\cp{0,\ell/2}{0}(x,y) = 1 \not= 0\), the positivity of
\(A_0^{\ell}(x,y)\) also implies the nonexistence of Juddian eigenvalues \(\lambda =  \ell/2 - g^2 \) for \(\ell > 0\). In fact, the
positivity can be extended to a larger set of constraint polynomials \( \cp{k,-\ell/2}{k}(x,y) \).

\begin{prop} \label{prop:pos2}
  Let \(\ell \in \Z_{>0} \) and \( 1 \leq  k \leq \ell\). Then the constraint polynomial \( \cp{k,-\ell/2}{k}(x,y) \) is positive
  for \(x,y>0 \).
\end{prop}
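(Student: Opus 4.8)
The plan is to follow the strategy used in the proof of Theorem \ref{thm:pos}. By Proposition \ref{prop:detexp} applied with \(N=k\) and \(\e=-\ell/2\), I would write
\[
  \cp{k,-\ell/2}{k}(x,y)=\det\bigl(\mI_k y+\mM_k(x)\bigr)=\prod_{\lambda\in\Spec(\mM_k(x))}(y+\lambda),\qquad \mM_k(x):=\mD_k x+\mC_k^{(k,-\ell/2)}.
\]
It then suffices to show that every eigenvalue of \(\mM_k(x)\) is positive for \(x>0\), since in that case each factor \(y+\lambda\) is positive for \(y>0\). As in Theorem \ref{thm:pos}, I would obtain this from three facts about \(\Spec(\mM_k(x))\): reality for \(x\ge0\), positivity for large \(x\), and the non-vanishing of \(\det\mM_k(x)\) for \(x>0\); the conclusion then follows by a continuity (intermediate value) argument.

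For reality I would appeal to Corollary \ref{cor:realrootsposx} rather than to Theorem \ref{thm:allrealroots}, which does not apply since \(\e=-\ell/2\le-1/2\). The point is that the polynomial in question is the \(k\)-th member with \(N=k\), so in the expression \eqref{eq:detForm1} the subdiagonal products equal \(\lambda_{i+1}x=(i+1)i(k-i)x\ge0\) for \(x\ge0\) and \(1\le i\le k-1\); hence the symmetrization in the proof of Corollary \ref{cor:realrootsposx} is legitimate and the roots in \(y\), i.e. the eigenvalues of \(\mM_k(x)\), are real. For positivity at large \(x\) I would mimic Lemma \ref{lem:eigenprop}(3): the off-diagonal products of \(\mM_k(x)\) are \(b_ic_i=i(i+1)(k-i)(k-i-\ell)<0\) for \(1\le i\le k-1\) (because \(k\le\ell\)), so the continuant recurrence \eqref{eq:continuant} for \(\det(\mI_k y+\mM_k(x))\) has positive coefficients \(-b_ic_i\); once \(x\) is large enough that all diagonal entries of \(\mM_k(x)\) are positive, \(\cp{k,-\ell/2}{k}(x,y)\) has positive coefficients as a polynomial in \(y\), forcing all its (real) roots \(-\lambda\) to be negative.

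The crux is the analog of Corollary \ref{cor:zeroeigen}: the explicit evaluation of \(\det\mM_k(x)=\cp{k,-\ell/2}{k}(x,0)\) and the proof that it does not vanish for \(x>0\). Here I would switch to the determinant expression \eqref{eq:detForm2}, giving \(\cp{k,-\ell/2}{k}(x,0)=k!\det(\mI_k x+\mV_k^{(k,-\ell/2)})\). This is a continuant with diagonal \(x-2(k-i)-1+\ell\) and off-diagonal products \(c^{(-\ell/2)}_{k-i}=(k-i)(k-i-\ell)\); reversing the index order \(i\mapsto k+1-i\) (which, being a continuant, leaves the determinant unchanged) turns it into the continuant with diagonal \(x+\ell+1-2i\) and products \(-i(\ell-i)\). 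By \eqref{eq:continuant} this obeys precisely the recurrence defining \(J_k(x)\) in Lemma \ref{lem:detM}, so
\[
  \det(\mI_k x+\mV_k^{(k,-\ell/2)})=J_k(x)=\sum_{j=0}^{k}(\ell-k)_j\binom{k}{j}x^{k-j}.
\]
Because \(1\le k\le\ell\), every Pochhammer factor \((\ell-k)_j\) is nonnegative and the leading coefficient is positive, whence \(\cp{k,-\ell/2}{k}(x,0)=k!\,J_k(x)>0\) for \(x>0\).

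Finally I would assemble the pieces exactly as in Theorem \ref{thm:pos}. The (ordered) eigenvalues of \(\mM_k(x)\) are real and continuous on \((0,\infty)\), and \(\det\mM_k(x)>0\) there shows that none of them vanishes; since they are all positive for large \(x\), the number of negative eigenvalues is constant and equal to \(0\) on \((0,\infty)\). Hence every eigenvalue is positive for \(x>0\), and \(\cp{k,-\ell/2}{k}(x,y)=\prod_\lambda(y+\lambda)>0\) for \(x,y>0\). I expect the main obstacle to be the determinant evaluation of the third paragraph, namely recognizing, after the continuant reversal, that \(\cp{k,-\ell/2}{k}(x,0)\) is governed by the same recurrence as \(J_k(x)\) and thus inherits the nonnegative-coefficient formula of Lemma \ref{lem:detM}, together with the preliminary observation that reality of the spectrum must be drawn from Corollary \ref{cor:realrootsposx} and not from Theorem \ref{thm:allrealroots}.
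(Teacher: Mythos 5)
Your proposal is correct and takes essentially the same route as the paper's proof: reduce to positivity of the eigenvalues of \(\mM_k(x)=\mD_k x+\mC_k^{(k,-\ell/2)}\), get reality from Corollary \ref{cor:realrootsposx}, evaluate \(\det\mM_k(x)=\cp{k,-\ell/2}{k}(x,0)=k!J_k(x)=k!\sum_{j=0}^k(\ell-k)_j\binom{k}{j}x^{k-j}>0\) for \(x>0\) via Lemma \ref{lem:detM}, establish positivity of the spectrum for large \(x\) from the signs of the continuant coefficients, and conclude by the continuity argument of Theorem \ref{thm:pos}. The only difference is that you spell out steps the paper merely asserts (the continuant reversal identifying \(\cp{k,-\ell/2}{k}(x,0)\) with \(k!J_k(x)\), and the observation that reality must come from Corollary \ref{cor:realrootsposx} since Theorem \ref{thm:allrealroots} is unavailable for \(\e=-\ell/2\le-1/2\)).
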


\begin{proof}
  For \( 1 \leq k \leq \ell \), define the \(k \times k\) matrix
  \begin{equation*}
    \mathbf{M}_k(x) = \Tridiag{x+\ell-1-2(k-i)}{i}{(i+1)c^{(-\ell/2)}_{k-i}}{1 \le i \le k}
  \end{equation*}
  then \( \cp{k,-\ell/2}{k}(x,y) =  \det(\mathbf{I}_k y + \mathbf{M}_k(x)) \) and the roots of \(\cp{k,-\ell/2}{k}(x,y) \)
  with respect to \(y\) are the eigenvalues of the matrix \(-\mathbf{M}_k(x)\). Thus, as in the case of \(A_N^{\ell}(x,y)\), it
  suffices to prove that all the eigenvalues of \(\mathbf{M}_k(x)\) are positive for \(x >0\).

  First, from \eqref{eq:detForm2}, we see that \( \det(\mathbf{M}_k(x)) = \cp{k,-\ell/2}{k}(x,0) = k! \sum_{j=0}^k (\ell-k)_j \binom{k}{j} x^{k-j}\).
  Indeed, we verify that \( \det(\mathbf{M}_k(x)) = k! J_k(x)\), where \(\{ J_i(x)\}_{i \geq 0} \) is the recurrence relation defined
  in Lemma \ref{lem:detM}. In particular, \(\det(\mathbf{M}_k(x)) \) is a polynomial with positive coefficients and thus it
  never vanishes for \(x >0 \).

  Next, we verify that the matrix \(\mathbf{M}_k(x)\) has the properties of the matrices \(\mathbf{M}^{(N)}_\ell(x)\) given in Lemma
  \ref{lem:eigenprop}. From Corollary \ref{cor:realrootsposx}, it is clear that for \(x \geq 0\) the eigenvalues of
  \(\mathbf{M}_k(x)\) are real. By the definition of the constraint polynomials, it is obvious that
  \( \Spec(\mathbf{M}_k(0)) = \{ i(\ell-i) : i=1,2,\cdots,k \}\), hence any eigenvalue \(\lambda \in \Spec(\mathbf{M}_k(0)) \) is non-negative.
  Finally, as in the proof of Lemma \ref{lem:eigenprop}, we see that for \(x' > \max(0,2 k - \ell - 1)\) all eigenvalues
  \(\lambda \in \Spec(\mathbf{M}_k(x')) \) satisfy \(\lambda > 0 \).

  The proof of positivity then follows exactly as in the proof of Theorem \ref{thm:pos}. 
\end{proof}

  \subsection{Degeneracy in the spectrum of AQRM}
\label{sec:degen-eigenv-aqrm}

The results on divisibility of constraint polynomials and the confluent Heun picture of the AQRM allow us to 
to fully characterize the degeneracies in the spectrum of the AQRM.

We begin by restating Proposition \ref{prop:pos2} in terms of Juddian eigenvalues of AQRM. This result eliminates
the possibility of Juddian eigenvalues of multiplicity 1 for the case \(\e = \ell/2 \) with \(\ell \in \Z \).

\begin{cor} \label{cor:noNegJudd}
  For \(\ell \in \Z_{>0} \) and \( 0 \leq k \leq \ell  \) there are no Juddian eigenvalues \(\lambda = k - \ell/2 - g^2 \) in \(\HRabi{\ell/2}\).
\end{cor}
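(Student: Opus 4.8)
The plan is to translate the spectral statement into a statement about positive roots of constraint polynomials, and then read off the conclusion directly from the positivity result of Proposition \ref{prop:pos2}. First I would recall the correspondence supplied by the confluent Heun picture (the $-\e$ counterpart of Proposition \ref{Prop:reccurEquiv}): for $\e = \ell/2$, an eigenvalue of the form $\lambda = k - \ell/2 - g^2$ is Juddian precisely when the pair $(x,y) = ((2g)^2,\Delta^2)$ with $g,\Delta>0$ is a root of $\cp{k,-\ell/2}{k}(x,y)$. The crucial bookkeeping point is the sign convention: writing $\lambda = k - \e - g^2 = N - \e - g^2$ with $N = k$, the governing constraint polynomial is the one carrying parameter $-\e = -\ell/2$, namely $\tilde{P}_k^{(k,\ell/2)}(x,y) = \cp{k,-\ell/2}{k}(x,y)$, and not $\cp{k,\ell/2}{k}(x,y)$.

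Having fixed this identification, for $1 \leq k \leq \ell$ I would invoke Proposition \ref{prop:pos2}, which asserts that $\cp{k,-\ell/2}{k}(x,y) > 0$ for all $x,y>0$. Since $(2g)^2 > 0$ and $\Delta^2 > 0$ whenever $g,\Delta > 0$, the constraint polynomial cannot vanish at any such point, so there is no admissible root and hence no Juddian eigenvalue $\lambda = k - \ell/2 - g^2$ in this range.

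It remains to treat the boundary case $k = 0$ separately, since Proposition \ref{prop:pos2} is stated for $k \geq 1$. Here the argument is even simpler: by Definition \ref{def:cp} one has $\cp{0,-\ell/2}{0}(x,y) = 1$, which never vanishes, so $\lambda = -\ell/2 - g^2$ cannot be Juddian either. Combining the two cases yields the corollary for all $0 \leq k \leq \ell$.

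I do not anticipate a genuine obstacle, since the analytic content is entirely carried by Proposition \ref{prop:pos2}, and this corollary is essentially a reinterpretation of that positivity statement in spectral terms. The one point requiring care is the sign and parameter matching described above, ensuring that the Juddian condition for $\lambda = k - \ell/2 - g^2$ is governed by $\cp{k,-\ell/2}{k}$ rather than its $+\ell/2$ analog; once this is settled, the positivity immediately rules out positive roots and hence the corresponding Juddian eigenvalues.
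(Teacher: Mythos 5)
Your proof is correct and follows essentially the same route as the paper: both reduce the statement to the positivity of $\cp{k,-\ell/2}{k}(x,y)$ for $x,y>0$ (Proposition \ref{prop:pos2}, which indeed covers $k=\ell$ as well, so your uniform treatment of $1\leq k\leq\ell$ is fine), and both dispose of $k=0$ via $\cp{0,-\ell/2}{0}=1$. The one point the paper adds that you elide is that your ``precisely when'' is not quite the full story: the exceptional eigenvalue $\lambda=k-\ell/2-g^2$ could a priori also acquire a Juddian solution through the other branch, i.e.\ by writing $\lambda=N'+\ell/2-g^2$ with $N'=k-\ell$ and asking whether $\cp{N',\ell/2}{N'}((2g)^2,\Delta^2)=0$; this is ruled out here only because $N'=k-\ell<0$ for $k<\ell$ and $\cp{0,\ell/2}{0}=1\neq 0$ for $k=\ell$, and that half-line of justification should be included to make the reduction to $\cp{k,-\ell/2}{k}$ airtight.
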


\begin{proof}
  The case \(k=\ell \) was already proved in Theorem \ref{thm:pos} and the case \(k=0\) is trivial since \(\cp{0,-\ell/2}{0}((2g)^2,\Delta^2) = 1 \not= 0 \).
  For \( 1 \leq  k < \ell \), if \( \lambda = k - \ell/2 - g^2 \) is a Juddian eigenvalue then \(\cp{k,-\ell/2}{k}((2g)^2,\Delta^2)=0 \) for some parameters
  \(g,\Delta >0\). This is a contradiction to Proposition \ref{prop:pos2}. Note that in this case there is no possibility of a contribution of Juddian
  eigenvalues by roots of constraint polynomials \(\cp{N,\ell/2}{N}((2g)^2,\Delta^2) \) as this would necessarily require \(N = k - \ell < 0\). 
\end{proof}

\begin{rem}
  In Proposition 5.8  of \cite{W2016JPA}, it is shown that the roots of the constraint polynomials \(\cp{N,\e}{N}(x,y)\) are simple.
  In particular, this implies that for \(\e \not\in \frac12 \Z \), there are no degenerate exceptional eigenvalues
  consisting of two Juddian solutions.
\end{rem}

Since the multiplicity of the eigenvalues is at most two, as a corollary of the divisibility
in Theorem \ref{thm:Main} and Corollary \ref{cor:noNegJudd}, we have the following result.

\begin{cor}\label{NoNonJudd}
  If $x=(2g)^2$ is a root of the equation $P_N^{(N, \ell/2)}(x, \Delta^2)=0$, then the (Juddian) eigenvalue $\lambda= N+\ell/2-g^2$
  must be a degenerate exceptional eigenvalue. In fact, the  multiplicity of the exceptional
  eigenvalue $\lambda$ is exactly $2$ and the two linearly independent solutions are Juddian (see Figure \ref{fig:excepteigen2}(b)). \QEDhere
\end{cor}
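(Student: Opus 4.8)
The plan is to produce two linearly independent Juddian eigenfunctions attached to $\lambda = N + \ell/2 - g^2$ and then to use the structural bound that every eigenvalue of $\HRabi{\ell/2}$ has multiplicity at most two in order to exclude any further (in particular non-Juddian) contribution. First I would record that the exceptional value $\lambda = N + \ell/2 - g^2$ admits precisely two descriptions of the form $M \pm \ell/2 - g^2$ with $M \in \Z_{\geq 0}$: the choice $M = N$ with sign $+$, and the choice $M = N + \ell$ with sign $-$ (since $(N+\ell) - \ell/2 = N + \ell/2$). These two descriptions are governed by the two confluent Heun systems \eqref{eq:system1} and \eqref{eq:system2p}, respectively.

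The hypothesis $\cp{N,\ell/2}{N}((2g)^2,\Delta^2) = 0$ already yields, via Proposition \ref{Prop:reccurEquiv}, a Juddian (polynomial) solution for the slot $M = N$. To obtain a second solution I would feed this vanishing into the divisibility identity of Theorem \ref{thm:Main}: since $A_N^\ell((2g)^2,\Delta^2) > 0$, the vanishing of $\cp{N,\ell/2}{N}((2g)^2,\Delta^2)$ forces $\cp{N+\ell,-\ell/2}{N+\ell}((2g)^2,\Delta^2) = 0$ as well. The $-\ell/2$ analogue of Proposition \ref{Prop:reccurEquiv} then supplies a Juddian solution for the slot $M = N + \ell$, carrying the same eigenvalue $\lambda$. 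Thus $\lambda$ is covered by two polynomial eigensolutions.

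Next I would argue linear independence. Both solutions are captured by the $\mathfrak{sl}_2$ picture of \S\ref{sec:repr-theor-pict}: each root of a constraint polynomial determines a nonzero weight vector inside a finite-dimensional irreducible $\mathfrak{sl}_2$-module, and the two relevant modules have different dimensions (for $N = 2m$ these are $\rF_{2m+1}$ and $\rF_{2(m+\ell)}$, which are non-isomorphic for every $\ell$). Since the two eigenvectors live in distinct irreducible components they cannot be proportional, so the associated Juddian eigenfunctions are linearly independent; the remaining parity cases are handled identically, the case $\ell=1$ being covered by Proposition 6.6 of \cite{W2016JPA}. Hence $\lambda$ is degenerate with multiplicity at least two.

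Finally I would close the count. The multiplicity is at most two because the eigenvalue problem reduces to the second-order confluent Heun operator $\mathcal{H}_1^{\ell/2}(\lambda)$, whose space of entire solutions is at most two-dimensional; combined with the previous paragraph this pins the multiplicity at exactly two. Corollary \ref{cor:noNegJudd} guarantees that no spurious Juddian solution of the form $\lambda = k - \ell/2 - g^2$ with $0 \le k \le \ell$ interferes, so the two slots identified above are the only exceptional ones available, and each carries a Juddian solution. As the two-dimensional eigenspace is already spanned by the two Juddian solutions, there is no room left for a non-Juddian exceptional solution (which, by the earlier remark, could in any case not share a slot with a Juddian one). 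I expect the main obstacle to be the linear-independence step: the two solutions superficially solve the same eigenvalue equation, and the cleanest way to separate them is the representation-theoretic argument rather than a direct comparison of the Frobenius expansions.
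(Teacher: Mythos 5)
Your argument is correct and follows essentially the same route as the paper: Theorem \ref{thm:Main} (divisibility together with positivity of $A_N^\ell$) forces $\cp{N+\ell,-\ell/2}{N+\ell}((2g)^2,\Delta^2)$ to vanish alongside $\cp{N,\ell/2}{N}((2g)^2,\Delta^2)$, producing two Juddian solutions whose linear independence is established exactly as in the paper's outline via the non-isomorphic finite-dimensional $\mathfrak{sl}_2$-modules, and the multiplicity bound of two coming from the second-order confluent Heun reduction closes the count. The paper presents this corollary as an immediate consequence of precisely these ingredients (Theorem \ref{thm:Main}, Corollary \ref{cor:noNegJudd}, and the multiplicity-at-most-two fact), so your reconstruction matches its intended proof.
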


\begin{rem}\label{no-non-Juddian}
  What the corollary means is, although a non-Juddian exceptional eigenvalue may exist on the energy curve
  $E=N+\ell -\e-g^2$ (resp. $E=N+\e-g^2$) (see Figure \ref{fig:excepteigen2}(a)) for $0<|\e-\ell/2|<\delta$ for sufficiently
  small $\delta$, as the numerical result in \cite{LB2016JPA} suggests, the non-Juddian exceptional eigenvalues
  disappear when $\e=\ell/2 \in \Z_{\geq0}$ and the exceptional eigenvalue $\lambda:=E$ is Juddian.
\end{rem}

We are now in a position to describe the general structure of the degeneracy of the spectrum of the AQRM.

\begin{cor}\label{DegenerateStructure}
  The degeneracy of the spectrum of $\HRabi{\e}$ occurs
  only when \(\e = \ell/2 \) for $\ell \in \Z_{\geq0}$ and $P_N^{(N,\ell/2)}((2g)^2,\Delta^2)=0$. In particular,
  any non-Juddian exceptional solution is non-degenerate.
\end{cor}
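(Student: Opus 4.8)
The plan is to assemble the preceding results into a clean trichotomy of the spectrum (regular, Juddian exceptional, non-Juddian exceptional) and to show that a degeneracy can only be produced by the coincidence of two \emph{Juddian} solutions governed by the divisibility in Theorem \ref{thm:Main}. The only genuinely subtle point, which I will flag as the main obstacle, is ruling out a degeneracy built from non-Juddian solutions.

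First I would reduce to the exceptional spectrum. A regular eigenvalue is always non-degenerate, since one of its confluent-Heun exponents necessarily fails to be an integer (cf. Table \ref{tab:exp} and the remark following it); hence any degeneracy must occur at an exceptional eigenvalue $\lambda = N \pm \e - g^2$. Because the multiplicity of an exceptional eigenvalue is at most two, a degeneracy means exactly two linearly independent eigensolutions. Within a single branch (a fixed nonnegative integer together with a fixed sign of $\e$) the problem \eqref{eq:systemN} admits at most one Bargmann solution up to scalar: the smallest-exponent Frobenius solution is Juddian only under $K_N^{(N,\e)}=0$, the largest-exponent solution is non-Juddian only under $T_\e^{(N)}(g,\Delta)=0$, and the two cannot occur simultaneously (cf. the remark following \eqref{eq:recurrKn}). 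Consequently two independent eigensolutions must originate from two distinct branches, i.e. $\lambda = N+\e-g^2 = M-\e-g^2$ for integers $N,M\in\Z_{\geq0}$, which forces $2\e = M-N\in\Z$. Using Lemma \ref{G-function} to assume $\e\geq0$, this already yields the assertion that degeneracy occurs \emph{only when} $\e=\ell/2$ with $\ell\in\Z_{\geq0}$.

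Next I would identify the two contributing branches. Writing $\e=\ell/2$ and $M=N+\ell$, they are the $+\e$ branch of index $N$ and the $-\e$ branch of index $N+\ell$, whose Juddian solutions are controlled respectively by $\cp{N,\ell/2}{N}((2g)^2,\Delta^2)$ and $\cp{N+\ell,-\ell/2}{N+\ell}((2g)^2,\Delta^2)$. By the divisibility and positivity of Theorem \ref{thm:Main}, these two constraint polynomials have exactly the same positive roots, so the $+\e$ branch carries a Juddian solution if and only if the $-\e$ branch does. In the case $\cp{N,\ell/2}{N}((2g)^2,\Delta^2)=0$, Corollary \ref{NoNonJudd} gives multiplicity exactly two with both solutions Juddian, and by Remark \ref{no-non-Juddian} no non-Juddian solution survives on either branch; this is precisely the degenerate case in the statement.

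It remains to exclude a degeneracy assembled from non-Juddian solutions, and this is the step I expect to be the main obstacle. If $\cp{N,\ell/2}{N}((2g)^2,\Delta^2)\neq0$, then by the coincidence of positive roots above neither branch is Juddian, so any eigensolution on either branch is the unique largest-exponent Frobenius solution of its branch and is therefore non-Juddian. What must be shown is that such a non-Juddian solution is non-degenerate, i.e. that the two branches cannot simultaneously furnish \emph{independent} non-Juddian eigensolutions for the same $\lambda$. Unlike the Juddian case, this does not follow from the constraint polynomials alone; I would instead establish it through the constraint $T$-function and the pole structure of $G_\e(x;g,\Delta)$ developed in \S\ref{sec:NonJuddian} and \S\ref{sec:SpectralDet}. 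Concretely, a non-Juddian exceptional eigenvalue corresponds to the vanishing of a single residue, so that the relevant pole of $G_{\ell/2}(x;g,\Delta)$ at $x=N\pm\ell/2$ is at worst simple rather than double (cf. Proposition \ref{prop:polehi}), which records the presence of exactly one eigensolution. Granting that the non-Juddian exceptional solution is non-degenerate, the ``both non-Juddian'' alternative cannot produce a multiplicity-two eigenvalue; combined with the case $\e\notin\frac12\Z$, where a single branch (and the simplicity of the roots of $\cp{N,\e}{N}(x,y)$) already precludes any degeneracy, this completes the characterization and shows that every degenerate eigenstate consists of Juddian solutions.
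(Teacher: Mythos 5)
Your reduction to the exceptional spectrum, the derivation of $2\e\in\Z$ from the need for two independent entire solutions, and the treatment of the degenerate Juddian case via Theorem \ref{thm:Main} and Corollary \ref{NoNonJudd} are all sound. The genuine gap is exactly where you flag it, and your proposal does not close it. Your structural claim that a single ``branch'' admits at most one Bargmann solution rests on the assertion that the Juddian condition $\cp{N,\e}{N}((2g)^2,\Delta^2)=0$ and the non-Juddian condition $T^{(N)}_{\e}(g,\Delta)=0$ cannot hold simultaneously; but the remark after \eqref{eq:recurrKn} only forward-references Remark \ref{no-non-Juddian} (the half-integer Juddian case), and the paper actually \emph{deduces} the absence of common zeros of $\cp{N,\e}{N}$ and $T^{(N)}_{\e}$ from the present corollary (see the remark following Lemma \ref{lem:const-T-rel}), so citing it here is circular. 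In particular, for $\e\notin\frac12\Z$ with $\cp{N,\e}{N}((2g)^2,\Delta^2)=0$, simplicity of the roots of the constraint polynomial only excludes two Juddian solutions; it does not exclude a Juddian solution coexisting with the largest-exponent (non-Juddian) solution. Likewise your route to the non-degeneracy of non-Juddian solutions through the pole structure of $G_{\ell/2}(x;g,\Delta)$ is only asserted (``granting that\dots''): Propositions \ref{prop:polehi1simple} and \ref{prop:polehi} come later in the paper, a simple pole of the $G$-function does not by itself bound the eigenspace dimension, and a priori the two branches could furnish either two independent non-Juddian solutions or the same one (they in fact coincide, but that is Lemma \ref{lem:const-T-rel}, proved afterwards).

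The paper closes all of these cases at once by a purely local Frobenius analysis that your proposal misses. For $N\ne0$: if $\cp{N,\e}{N}((2g)^2,\Delta^2)\ne0$, the local solution of \eqref{eq:systemN} at $y=0$ with the smaller exponent carries a $\log$-term (see \eqref{LogTermFrobenius}), so the space of solutions holomorphic at $y=0$ --- hence the eigenspace --- is at most one-dimensional; this single observation proves that any non-Juddian exceptional solution is non-degenerate, with no need to compare branches or invoke the $G$-function. If $\cp{N,\e}{N}((2g)^2,\Delta^2)=0$ and $\e\notin\frac12\Z$, one looks instead at $y=1$, where the non-zero exponent $N+2\e$ is not a non-negative integer (Table \ref{tab:exp}), so again only a one-dimensional space of solutions is holomorphic there; this is what rules out a Juddian solution accompanied by a non-Juddian one on the same eigenvalue. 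The remaining case $\cp{N,\ell/2}{N}((2g)^2,\Delta^2)=0$ is Corollary \ref{NoNonJudd}, as you say. Finally, the case $N=0$ requires separate treatment (the exponents at $y=0$ are $0$ and $1$, and the vanishing of $K^{(N,\e)}_0$ merges the two Frobenius solutions, forcing the eigenvalue to be non-Juddian and non-degenerate), which your proposal omits entirely.
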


\begin{proof}
  We first consider the case \(N \ne 0 \).
  When $P_N^{(N,\e)}((2g)^2,\Delta^2)\ne0$ if we look at the local Frobenius solutions at $y=0$,
  then there is always a local solution containing a $\log$-term as seen in \S \ref{sec:smallexp}
  (see Proposition \ref{Prop:reccurEquiv}),
  so the solutions corresponding to the smaller exponent cannot be components of the eigenfunction.
  Then, the solution corresponds to the largest exponent (i.e. non-Juddian exceptional)
  and this implies that the dimension of the corresponding eigenspace is at most one
  (cf. \cite{B2011PRL,ZXGBGL2014} and also \S \ref{sec:NonJuddian}).
  We note that in the case \(\e = \ell/2 \, (\ell \in \Z)\) there is no chance of a contribution of Juddian solution
  (i.e. $P_{N+\ell}^{(N+\ell,-\ell/2)}((2g)^2,\Delta^2) =0$) by Theorem \ref{thm:Main}.
  Suppose next that $P_N^{(N,\e)}((2g)^2,\Delta^2)=0$ for $\e\notin \frac12\Z_{\geq0}$.
  Looking at the local Frobenius solutions at $y=1$,
  since the exponent different from $0$ is not a non-negative integer (see Table \ref{tab:exp}),
  we observe that only the solution corresponding to the exponent $0$ can give a eigensolution of $\HRabi{\e}$
  so that the dimension of the eigenspace is also at most one.
  By Corollary \ref{NoNonJudd}, there is no non-Juddian exceptional eigensolution
  when $P_N^{(N,\ell/2)}((2g)^2,\Delta^2)=0$ for $\ell \in \Z_{\geq0}$.

  On the other hand, if $N=0$, the exponents of the system \eqref{eq:systemN} are $\rho_1^- = 0$ and $ \rho_2^- = 1$,
  therefore there is one holomorphic Frobenius solution and a local solution with a \(\log\)-term.
  This implies that the corresponding eigenstate cannot be degenerate.
  In addition, note that if $K^{(N,\e)}_0(g, \Delta)=0$,
  the $\log$-term in the Frobenius solution with smaller exponent \eqref{smallerFrobenius} vanishes
  making it identical to the solution \eqref{eq:largexpsol-} (corresponding to the larger exponent).
  Hence, the exceptional eigenvalue $\lambda = \pm\e-g^2$ must be non-Juddian exceptional, and thus, non-degenerate.
  Since $\cp{0,\pm \e}{0}((2g)^2,\Delta^2) = 1 \not=0$   and \(\cp{\ell,-\ell/2}{\ell}((2g)^2,\Delta^2) \not= 0 \) for \(g,\Delta>0 \)
    and \(\ell > 0 \) (cf. Proposition \ref{prop:pos2}), the desired claim follows. 
\end{proof}

\begin{rem}
The non-degeneracy of the ground state for the QRM is shown in \cite{HH2012}.
\end{rem}

Thus, summarizing the results so far obtained in Theorem \ref{thm:div1}
with Theorem \ref{thm:pos} and Corollary \ref{DegenerateStructure},
we have the following result.

\begin{thm}\label{thm:AQRMSpec}
  The spectrum of the AQRM possesses a degenerate eigenvalue if and only if the parameter $\e$ is a half integer.
  Furthermore, all degenerate eigenvalues of the AQRM are Juddian. \QEDhere
\end{thm}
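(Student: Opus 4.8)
The plan is to assemble the statement from the structural results already in place, organizing the argument around the trichotomy of eigenvalues into regular, Juddian exceptional, and non-Juddian exceptional. I would first dispose of the ``furthermore'' clause together with the forward (``only if'') implication simultaneously, by invoking Corollary~\ref{DegenerateStructure}. A regular eigenvalue is non-degenerate because, as noted after Table~\ref{tab:exp}, one of its exponents is never an integer; a non-Juddian exceptional eigenvalue is non-degenerate by the Frobenius/$\log$-term analysis carried out in the proof of that corollary, where the eigenspace is shown to have dimension at most one. Hence any degenerate eigenvalue must be Juddian, which is exactly the ``furthermore'' assertion. To pin down the value of $\e$, note that a Juddian eigenvalue on the $+$ branch, $\lambda = N + \e - g^2$, can coincide with one on the $-$ branch, $\lambda = M - \e - g^2$, only when $2\e = M - N \in \Z$; and within a single branch the roots of the constraint polynomial $\cp{N,\e}{N}$ are simple (Proposition~5.8 of \cite{W2016JPA}, cf.\ the remark following Corollary~\ref{cor:noNegJudd}), so two distinct Juddian solutions cannot share an eigenvalue there. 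Thus a degeneracy forces $\e \in \tfrac12\Z$.

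For the converse implication I would fix $\e = \ell/2$ with $\ell \in \Z_{\geq 0}$ and exhibit a genuine degeneracy. Taking $N \geq 1$ and a point $(g,\Delta)$ with $g,\Delta>0$ on the constraint curve $\cp{N,\ell/2}{N}((2g)^2,\Delta^2)=0$, Theorem~\ref{thm:Main} yields $\cp{N+\ell,-\ell/2}{N+\ell}((2g)^2,\Delta^2) = A_N^\ell((2g)^2,\Delta^2)\,\cp{N,\ell/2}{N}((2g)^2,\Delta^2)$ with $A_N^\ell>0$, so the same $(g,\Delta)$ is simultaneously a root of $\cp{N+\ell,-\ell/2}{N+\ell}$. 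Corollary~\ref{NoNonJudd} then gives that $\lambda = N+\ell/2-g^2 = (N+\ell)-\ell/2-g^2$ has multiplicity exactly two, with two linearly independent Juddian solutions, one attached to each branch. The case of negative half-integers is identical after exchanging the two branches (cf.\ Lemma~\ref{G-function}), so the spectrum possesses a degenerate eigenvalue for every half-integer $\e$.

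The step I expect to be the main obstacle is ensuring that the converse direction is not vacuous, i.e.\ that for half-integer $\e$ the constraint curve $\cp{N,\ell/2}{N}((2g)^2,\Delta^2)=0$ genuinely meets the open quadrant $g,\Delta>0$ for some $N$. This is precisely the existence part of the counting of Juddian solutions, and I would supply it from \S\ref{sec:estimation} (Theorem~\ref{LB}); the reality of all roots of $\cp{N,\e}{N}$ in either variable for $\e>-1/2$ from Theorem~\ref{thm:allrealroots}, combined with the sign behaviour of the symmetric determinant expression of \S\ref{sec:detexp}, gives admissible pairs $(g,\Delta)$. A final point to state carefully is the boundary case: by Corollary~\ref{cor:noNegJudd} there are no Juddian eigenvalues $\lambda = k-\ell/2-g^2$ for $0 \leq k \leq \ell$, so the degenerate (necessarily Juddian) eigenvalues are exactly those of the form $\lambda = N+\ell/2-g^2$, and no spurious low-lying degeneracies intrude.
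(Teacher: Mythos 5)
Your proposal is correct and follows essentially the same route as the paper: Theorem~\ref{thm:AQRMSpec} is presented there as a summary of Theorem~\ref{thm:div1}, Theorem~\ref{thm:pos} (through Corollary~\ref{NoNonJudd}) and Corollary~\ref{DegenerateStructure}, and you assemble exactly these ingredients, adding the existence of positive roots from Theorem~\ref{LB} to make the ``if'' direction non-vacuous, which the paper also relies on implicitly.
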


We conclude by illustrating numerically the degeneracy structure of the spectrum of the AQRM
described in Theorem \ref{thm:AQRMSpec} (for the numerical computation of spectral curves see Theorem \ref{thm:LBcomp}).
For half-integer \(\e\), Figure \ref{fig:specgraph} shows the spectral graphs for fixed \(\Delta=1\) and \(\e =0,1/2,3/2\).
In the graphs, the dashed lines represent the exceptional energy curves \(y = i + \ell/2 - g^2\) for \(i \in \Z_{\geq 0}\),
any crossings of these curves with the spectral curves correspond to exceptional eigenvalues. The crossings of the eigenvalue
curves in the exceptional points correspond to Juddian degenerate solutions, given by Theorem \ref{thm:pos}.
Notice also the non-degenerate exceptional points in the curves, these points correspond to the non-Juddian exceptional eigenvalues.

\begin{figure}[htb]
  ~
  \begin{subfigure}[b]{0.30\textwidth}
    \centering
    \includegraphics[height=4.5cm]{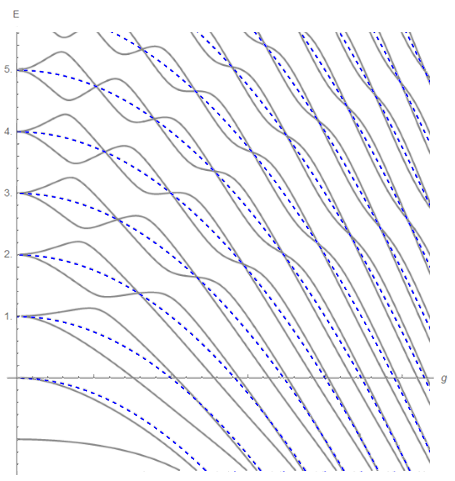}
    \caption{\(\e = 0\)}
  \end{subfigure}
  ~
  \begin{subfigure}[b]{0.30\textwidth}
    \centering
    \includegraphics[height=4.5cm]{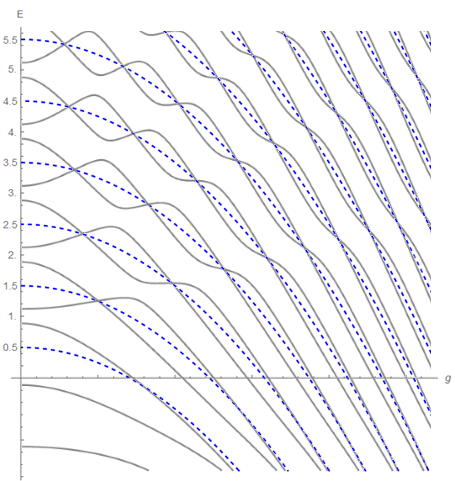}
    \caption{\(\e = 0.5 \)}
  \end{subfigure}
  ~
  \centering
  \begin{subfigure}[b]{0.30\textwidth}
    \centering
    \includegraphics[height=4.5cm]{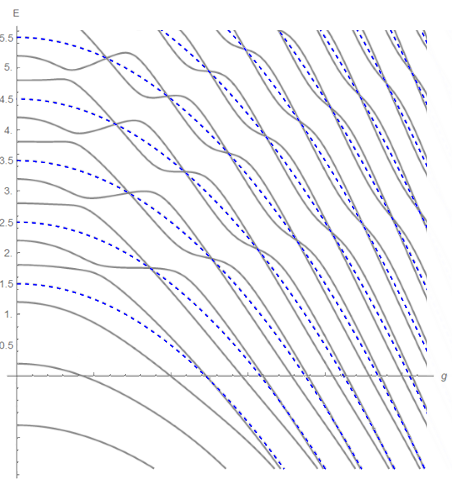}
    \caption{\(\e = 1.5\)}
  \end{subfigure}
  \caption{Spectral curves for the case of \(\Delta=1\) for the cases \(\e \in \{0,0.5,1.5\}\) for \(0 \leq g \leq 2.7\) and energy (\(E\)) \(-1.5 \leq E \leq 5.5 \) .}
  \label{fig:specgraph}
\end{figure}

The case of \(\e \not\in \frac12 \Z\) is shown in Figure \ref{fig:specgraph_nhi}.
In these graphs, for \(i \in \Z_{\geq 0}\) the dashed lines represent the exceptional energy curves \(y = i \pm \e - g^2\).
Notice that we have the situation of the conceptual graphs of Figure \ref{fig:excepteigen2} in the introduction.
In particular, note that due to the bounds on positive solutions of constraint polynomials of \S \ref{sec:interlacing},
not all exceptional eigenvalues \(\lambda = N \pm \e - g^2\) with the same \(N \in \Z_{\geq 0}\) can be Juddian
(see also the discussion on Figure \ref{fig:constraintgraph3} below).

\begin{figure}[htb]
  ~
  \begin{subfigure}[b]{0.45\textwidth}
    \centering
    \includegraphics[height=5cm]{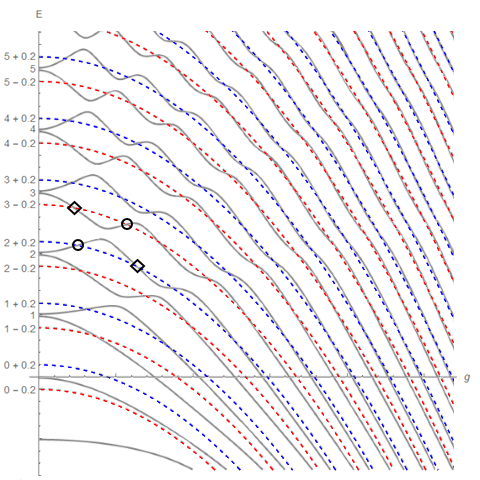}
    \caption{\(\e = 0.2\)}
  \end{subfigure}
  ~
  \begin{subfigure}[b]{0.45\textwidth}
    \centering
    \includegraphics[height=5cm]{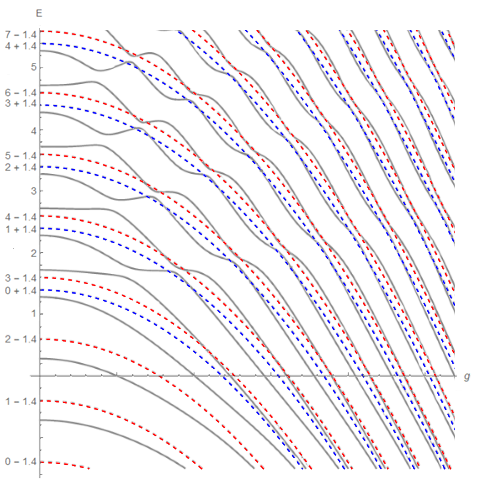}
    \caption{\(\e = 1.4 \)}
  \end{subfigure}
  \caption{Spectral curves for the case of \(\Delta=1\) for the cases \(\e \in \{0.2,1.4\}\) for \(0 \leq g \leq 2.7\) and energy (\(E\)) \(-1.5 \leq E \leq 5.5 \).
     In (a), circle marks denote points corresponding to Juddian solutions and diamond marks denote non-Juddian
      exceptional solutions (cf. Figure~\ref{fig:ExceptEigen}(a)). }
  \label{fig:specgraph_nhi}
\end{figure}

\begin{rem}
  The mathematical model known as the non-commutative harmonic oscillator (NcHO) \cite{PW2001} (see \cite{P2010S} for a detailed study
  and information of the NcHO with references therein, and \cite{P2014Milan} for a recent development) is given by
  \[
    Q = Q_{\alpha,\beta} =
    \begin{pmatrix}
      \alpha & 0\\
      0 &   \beta
    \end{pmatrix}
    \left( - \frac{1}{2} \frac{d^2}{d x^2} + \frac{1}{2} x^2 \right) +
    \begin{pmatrix}
      0 & -1\\
      1 &   0
    \end{pmatrix}
    \left( x \frac{d}{d x} + \frac{1}{2}  \right).
  \]
  The NcHO is a self-adjoint ordinary differential operator with a \(\Z_2\)-symmetry that generalizes the quantum harmonic oscillator by
  introducing an interaction term. When the parameters \(\alpha,\beta >0 \) satisfy \(\alpha \beta > 1\), the Hamiltonian $Q$ is positive definite, whence
  it has only positive (discrete) eigenvalues. It is known \cite{W2013PJA} that the multiplicity of the eigenvalues  is at most $2$. Moreover,
  the possibilities that an eigenstate of $Q$ is degenerate ($2$ dimensional) are the following two cases
  \cite{W2015IMRN}:
  \begin{itemize}
  \item a quasi-exact (Juddian) solution and a non-Juddian solution
    with the same parity (in this case the eigenvalue $\lambda$ is of the form $\lambda=2\frac{\sqrt{\alpha\beta(\alpha\beta-1)}}{\alpha+\beta}(m+\frac12)$ for some $m\in \Z_{\geq0}$),
  \item two non-quasi-exact solutions with different parity.
  \end{itemize}
  There is a close connection between the NcHO and the quantum Rabi model \cite{W2015IMRN}, arising from their representation theoretical pictures
  via a confluent process for the Heun ODE (see \cite{SL2000}). It is desirable to clarify the reason concerning the
  difference of the structure of the degeneracies between the NcHO and the QRM (also AQRM for $\e\in \frac12 \Z_{\geq0}$: see Theorem \ref{thm:AQRMSpec} in
  \S \ref{sec:LargestExponent}). Actually, the degeneracies occur only for quasi-exact solutions in both models and  those are considered to be
  remains of the eigenvalues of the quantum harmonic oscillator. Therefore, it is quite interesting to develop a similar discussion for constraint polynomials in the
  former ``exceptional" case for the NcHO in \cite{W2015IMRN}.
\end{rem}

\subsection{The degenerate atomic limit} \label{sec:orthopoly}

In this subsection we make a brief remark on the case \(\Delta = 0 \) from the orthogonal polynomials viewpoint. Recall that the polynomials
\(\cp{N,\e}{k}(x,y)\) are defined by a three-term recurrence relation. However, it is not possible to set the parameters to define a family of orthogonal polynomials in \(x\) or \(y\).

Consider the determinant expression \eqref{eq:detForm2} and set \(y=0\). The expansion of the continuant from the lower-right corner gives
the three-term recurrence relation
\begin{align*}
  \frac{1}{k!}\cp{k,\e}{k}(x,0)
  &= (x +1 - 2 k - 2\e ) \frac{1}{(k-1)!}\cp{k-1,\e}{k-1}(x,0) \\
  &\quad{}- (k-1) (k-1 + 2\e) \frac{1}{(k-2)!} \cp{k-2,\e}{k-2}(x,0).
\end{align*}
By Favard's theorem (see, e.g.  \cite{C1978}), when \(\e > -\frac12 \) the family of normalized constraint polynomials
\(\{\frac{1}{k!}\cp{k,\e}{k}(x,0) \}_{k \geq 0} \) defines an orthogonal polynomial system. Recall that the generalized Laguerre polynomials
\cite{AAR1999} are given by
\[
  L^{(\alpha)}_k (x) = \frac{x^{-\alpha}e^x}{k!} \frac{d^k}{d x^k} (e^{-x}x^{k+\alpha}).
\]
for \(k \geq 1\) and \(\alpha > -1 \), and the monic generalized Laguerre polynomials are given by \((-1)^k k! L^{(\alpha)}_k (x)\). Comparing the recurrence
relations and the initial conditions we immediately obtain the following result.

\begin{thm}\label{thm:Laguerre}
  For \(k \geq 0 \), we have
  \[
    \frac{1}{k!}\cp{k,\e}{k}(x,0) = (-1)^k k! L^{(2\e)}_k (x). \QEDhere
  \]
\end{thm}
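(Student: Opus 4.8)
The plan is to identify the normalized diagonal polynomials $\frac{1}{k!}\cp{k,\e}{k}(x,0)$ with the monic generalized Laguerre polynomials directly by comparing three-term recurrences. The key enabling fact is already in hand: the continuant expansion of the determinant expression \eqref{eq:detForm2} at $y=0$ shows that the sequence $\{\frac1{k!}\cp{k,\e}{k}(x,0)\}_{k\ge 0}$ satisfies a self-contained three-term recurrence in $k$, namely the one displayed just before the theorem statement.

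First I would recall the classical three-term recurrence for the generalized Laguerre polynomials,
\[
  k\, L_k^{(\alpha)}(x) = (2k - 1 + \alpha - x)\, L_{k-1}^{(\alpha)}(x) - (k - 1 + \alpha)\, L_{k-2}^{(\alpha)}(x),
\]
together with the initial data $L_0^{(\alpha)}(x) = 1$ and $L_1^{(\alpha)}(x) = \alpha + 1 - x$. Then I would pass to the monic normalization $\hat{L}_k^{(\alpha)}(x) := (-1)^k k!\, L_k^{(\alpha)}(x)$. Substituting $L_k^{(\alpha)} = \frac{(-1)^k}{k!}\hat{L}_k^{(\alpha)}$ into the recurrence above and clearing the factorials and signs (multiplying through by $(-1)^k(k-1)!$) yields
\[
  \hat{L}_k^{(\alpha)}(x) = (x - 2k + 1 - \alpha)\, \hat{L}_{k-1}^{(\alpha)}(x) - (k-1)(k-1+\alpha)\, \hat{L}_{k-2}^{(\alpha)}(x).
\]
Setting $\alpha = 2\e$, this is term-for-term identical to the recurrence satisfied by $\frac1{k!}\cp{k,\e}{k}(x,0)$: the coefficient $x+1-2k-2\e$ equals $x-2k+1-2\e$, and $(k-1)(k-1+2\e)$ matches exactly.

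Finally I would check the two base cases, $\frac1{0!}\cp{0,\e}{0}(x,0) = 1 = \hat{L}_0^{(2\e)}(x)$ and $\frac1{1!}\cp{1,\e}{1}(x,0) = x - 1 - 2\e = \hat{L}_1^{(2\e)}(x)$. Since both sequences obey the same recurrence and agree at $k = 0, 1$, a one-line induction on $k$ gives $\frac1{k!}\cp{k,\e}{k}(x,0) = \hat{L}_k^{(2\e)}(x) = (-1)^k k!\, L_k^{(2\e)}(x)$ for all $k \ge 0$, which is the claim.

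I do not expect a serious obstacle: the conceptually nontrivial point — that the diagonal polynomials $\cp{k,\e}{k}(x,0)$ obey a \emph{closed} three-term recurrence even though the index $N=k$ varies together with the degree — was already settled by \eqref{eq:detForm2} and its continuant expansion. The only care needed is the bookkeeping of the sign $(-1)^k$ and the factorial in passing to the monic normalization, and noting that the restriction $\e > -\tfrac12$ (i.e. $\alpha = 2\e > -1$) is precisely the range in which the Laguerre family $L_k^{(\alpha)}$ is defined, consistent with the Favard-theorem hypothesis invoked just above the statement.
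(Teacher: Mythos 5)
Your proposal is correct and follows essentially the same route as the paper: the paper likewise derives the closed three-term recurrence for $\frac{1}{k!}\cp{k,\e}{k}(x,0)$ from the continuant expansion of \eqref{eq:detForm2} at $y=0$ and then simply states that comparing recurrences and initial conditions with the monic generalized Laguerre polynomials gives the result. Your explicit bookkeeping of the sign and factorial in the monic normalization, and the check of the $k=0,1$ base cases, just fills in the ``immediately'' that the paper leaves to the reader.
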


The case \(y =0\) corresponds to the model \(\HRabi{\e}\) with \(\Delta= 0\), namely
\begin{equation} \label{eq:degatomlimit}
  H^\e := \omega a^\dag a +g\sigma_x(a^\dag+a) + \e \sigma_x,
\end{equation}
called the degenerate atomic limit in \cite{LB2015JPA}. The Hamiltonian \(H^\e\) is a generalization of the displaced harmonic oscillator
(corresponding to \(H^0\)) studied in \cite{Sc1967AP}. For the Hamiltonian \(H^\e\), the constraint equation for the exceptional eigenvalue
parameterized by integer \(N\) is given by
\[
  L^{(2\e)}_N (x) = 0.
\]
The presence of the Laguerre polynomials in the constraint equation is explained in the study of the solutions of the model \eqref{eq:degatomlimit}.
For instance, in \cite{Sc1967AP}, the solutions of the displaced harmonic oscillator is given in terms of power series, its coefficients are
multiples of associated Laguerre polynomials. The explicit form of the exceptional solutions of \eqref{eq:degatomlimit} is obtained in
\cite{LB2015JPA} by a related method.

\begin{rem}
  For the case \(y \not= 0 \), let \(\mC_N^{(N,\e)}\) be the matrix in the determinant expression of \(\cp{N,\e}{N}(x,y)\) of Proposition
  \ref{prop:detexp}, then we have
  \[
    (\mC_N^{(N,\e)})_{k,k} = -k(2(N+1-k) -1 + 2\e), \quad \, (\mC_N^{(N,\e)})_{i,i-1} = i (N+1-i)(N+1-i+2\e)
  \]
  for \(k =1,2,\cdots,N\) and \(i =2,3,\cdots,N\). By expanding the continuant as a recurrence relation we obtain a family
  \(\{Q^{(N,\e)}_k(x,y)\}_{k\geq 0}\) of polynomials in two variables  given by
  \begin{align*}
    Q^{(N,\e)}_0(x,y) =& 1, \quad Q^{(N,\e)}_1(x,y) = x+y -  (2 N -1  + 2\e)  \\
    Q^{(N,\e)}_k(x,y) =& (k x + y  - k (2(N+1-k) -1 + 2\e)) Q^{(N,\e)}_{k-1}(x,y) \\
                      & - k (k-1)(N+1-k)(N+1-k+2\e) Q^{(N,\e)}_{k-2}(x,y).
  \end{align*}
  By definition  $Q^{(N,\e)}_N(x,y) = \cp{N,\e}{N}(x,y) $. In general, for \(k \neq N\), it does not hold that
  $Q^{(N,\e)}_k(x,y) = \cp{N,\e}{k}(x,y)$. Moreover, in contrast with the case \(y=0 \),
  it is not clear how to relate the $k$-th polynomial $Q^{(N,\e)}_k(x,y)$ with the constraint polynomial $\cp{k,\e}{k}(x,y)$.
\end{rem}
\,

\section{Estimation of positive roots of constraint polynomials} \label{sec:estimation}

In this section, we study existence of degenerate exceptional eigenvalues corresponding to Juddian solutions
by giving an estimate on the number of positive roots \(x=(2g)^2\) for the constraint polynomial \(\cp{N,\e}{N}(x,y)\) according
on the value of  \(y = \Delta^2 \). For our current interest concerning the degeneracy of Juddian solutions, it is sufficient to obtain
the estimate when $\e\geq0$. However, we give also a conjecture which counts precisely a number of positive roots of \(\cp{N,\e}{N}(x,y)\)
for negative $\e$ when $N$ is sufficiently large, i.e. $N\geq -[2\e]$, $[x]$ being the integer part of $x\in \R$.

\subsection{Interlacing of roots for constraint polynomials} \label{sec:interlacing}

When considered as polynomials in \(\R[y][x]\), there is non-trivial interlacing among the roots of the coefficients of the
constraint polynomials \(\cp{N,\e}{N}(x,y)\). This interlacing is essential for the proof of the upper bound on the number of
positive roots of the constraint polynomials in the next sections.

For \(N \in \Z_{\geq 0}\), let
\[
  \cp{N,\e}{N}(x,y) = \sum_{i=0}^{N} a^{(N)}_i(y) x^i.
\]
Noticing that \(\deg(a^{(N)}_i(y)) = N-i \), the interlacing property is given in the following lemma.

\begin{lem}
  \label{lem:rootcoeff}
  Let \( N \in \Z_{\geq 0}\) and \(\e > -1/2 \). Then the roots of \(a^{(N)}_j(y) \, ( 0 \leq j \leq N-1) \) are real. Denote the roots
  of \(a^{(N)}_j(y)\) by \(\xi_1^{(j)} \leq \xi_2^{(j)} \leq \cdots \leq \xi_{N-j}^{(j)}  \). Then, for \(j = 0,1,\ldots,N-2\) we have
  \[
    \xi_{i}^{(j)} < \xi_{i}^{(j+1)} < \xi_{i+1}^{(j)}
  \]
  for \( i = 1,2,\ldots, N-j-1 \).
\end{lem}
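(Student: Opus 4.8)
The plan is to read off the interlacing from the symmetric determinant expression of Corollary~\ref{cor:detsym}. Writing
\[
\cp{N,\e}{N}(x,y)=\det\bigl(\mI_N y+\mD_N x+\mS_N^{(N,\e)}\bigr)=\sum_{j=0}^{N}a^{(N)}_j(y)\,x^{j},
\]
I would first record the structure of the three matrices: $\mI_N$ and $\mD_N=\diag(1,\dots,N)$ are positive definite, and for $\e>-1/2$ the constant term $\mS_N^{(N,\e)}$ is a genuine real symmetric tridiagonal matrix whose off-diagonal entries $\sqrt{i(i+1)c^{(\e)}_{N-i}}$ are \emph{strictly} positive for $1\le i\le N-1$ (because $c^{(\e)}_{N-i}=(N-i)(N-i+2\e)>0$ there); in particular $\mS_N^{(N,\e)}$ is irreducible. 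Expanding $\det(\mI_N y+\mD_N x)=\prod_{i=1}^{N}(y+ix)$ shows $\deg_y a^{(N)}_j=N-j$ with positive leading coefficient, which fixes the orientation of the interlacing.

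The conceptual engine is that $\cp{N,\e}{N}(x,y)$ is \emph{real stable}. This I would prove directly: if $\IM x>0$ and $\IM y>0$, the imaginary part of $\mI_N y+\mD_N x+\mS_N^{(N,\e)}$ is $(\IM y)\mI_N+(\IM x)\mD_N$, which is positive definite, and a matrix $A+iB$ with $A$ real symmetric and $B$ positive definite is invertible (from $(A+iB)v=0$ one gets $v^{*}(A+iB)v=0$, forcing the real quantity $v^{*}Bv=0$ and hence $v=0$), so the determinant cannot vanish. From the standard theory of real stable polynomials it then follows that each coefficient $a^{(N)}_j(y)=\tfrac1{j!}(\partial_x^{\,j}\cp{N,\e}{N})(0,y)$ is real-rooted and that consecutive coefficients interlace weakly, i.e.\ their roots obey $\xi^{(j)}_1\le\xi^{(j+1)}_1\le\xi^{(j)}_2\le\cdots\le\xi^{(j+1)}_{N-j-1}\le\xi^{(j)}_{N-j}$. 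This already establishes the real-rootedness assertion and the non-strict form of the interlacing.

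The main obstacle is upgrading these weak inequalities to the strict ones in the statement, and here the irreducibility of $\mS_N^{(N,\e)}$ is decisive. For every fixed real $\beta$ the matrix occurring in the proof of Theorem~\ref{thm:allrealroots}, namely $\mD_N^{-1}\beta+\mD_N^{-1/2}\mS_N^{(N,\e)}\mD_N^{-1/2}$, is again irreducible real symmetric tridiagonal and hence has a simple spectrum; consequently $\cp{N,\e}{N}(x,\beta)$ has $N$ \emph{distinct} real roots in $x$ for every real $\beta$. I would use this to show that $a^{(N)}_j$ and $a^{(N)}_{j+1}$ never share a root: if $a^{(N)}_j(y_0)=a^{(N)}_{j+1}(y_0)=0$, then for $R(x):=\cp{N,\e}{N}(x,y_0)$ one has $R^{(j)}(0)=j!\,a^{(N)}_j(y_0)=0$ and $R^{(j+1)}(0)=(j+1)!\,a^{(N)}_{j+1}(y_0)=0$, so $x=0$ is a root of $R^{(j)}$ of multiplicity at least two; but $R$ has $N$ distinct real roots, whence by Rolle's theorem $R^{(j)}$ has $N-j$ distinct, hence simple, real roots, a contradiction.

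Combining the two inputs finishes the proof: the weak chain $\xi^{(j)}_1\le\xi^{(j+1)}_1\le\cdots\le\xi^{(j)}_{N-j}$ together with the absence of common roots of $a^{(N)}_j$ and $a^{(N)}_{j+1}$ forces every inequality to be strict, giving exactly $\xi^{(j)}_i<\xi^{(j+1)}_i<\xi^{(j)}_{i+1}$; the resulting strict chain also yields $\xi^{(j)}_i<\xi^{(j)}_{i+1}$, so each $a^{(N)}_j$ has simple roots and the labelling in the statement is unambiguous. I expect the passage from weak to strict interlacing to be the genuinely delicate point, and it is precisely here — through the strict positivity $c^{(\e)}_{N-i}>0$ that guarantees irreducibility and hence simplicity of all the relevant spectra — that the hypothesis $\e>-1/2$ is essential.
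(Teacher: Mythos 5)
Your proof is correct, and it reaches the conclusion by a genuinely different route from the paper, even though both arguments grow out of the same determinant expression. The paper's proof is a two-line citation: by Corollary \ref{cor:detsym} the polynomial $\cp{N,\e}{N}(x,y)$ lies in Fisk's class $\Ptwo$ (which is precisely the real-stability property you verify by hand from the positive definiteness of $(\IM y)\mI_N+(\IM x)\mD_N$), and since $a^{(N)}_0(y)=\cp{N,\e}{N}(0,y)=\prod_{i=1}^N\bigl(y-i(i+2\e)\bigr)$ has \emph{distinct} roots for $\e>-1/2$, Lemma 9.63 of \cite{F2008} delivers at once that all coefficients have distinct roots and that consecutive ones interlace. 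Thus the paper's seed of strictness is the simplicity of the roots of the single polynomial $a^{(N)}_0$ in the $y$-direction, propagated through all coefficients by Fisk's lemma; your seed is the simplicity of the spectrum of the irreducible Jacobi matrix $\mD_N^{-1}\beta+\mD_N^{-1/2}\mS_N^{(N,\e)}\mD_N^{-1/2}$ in the $x$-direction for \emph{every} real $\beta$, converted via Rolle's theorem into the statement that $a^{(N)}_j$ and $a^{(N)}_{j+1}$ share no root, which then forces every weak inequality in the interlacing chain to be strict. Your version is self-contained apart from the one standard input that the coefficient sequence of a real stable bivariate polynomial interlaces weakly (a fact of Hermite--Biehler/Borcea--Br\"and\'en type that is essentially what Fisk's lemma packages, so a reference there would be appropriate), and it has the merit of isolating exactly where $\e>-1/2$ is used, namely the strict positivity $c^{(\e)}_{N-i}>0$ that makes $\mS_N^{(N,\e)}$ real and irreducible; the paper's version buys brevity at the price of an external black box. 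Your degree and orientation bookkeeping ($\deg_y a^{(N)}_j=N-j$ with positive leading coefficient, read off from $\det(\mI_Ny+\mD_Nx)=\prod_{i=1}^N(y+ix)$) is also correct and matches the paper's remark preceding the lemma.
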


The constraint polynomials \(\cp{N,\e}{N}(x,y) \), with \(\e > -\frac{1}{2}\), belong to a special class of polynomials in two
variables, the class \(\Ptwo\) (see \cite{F2008}). The class \(\Ptwo\) is a generalization of polynomials of one variable with
all real roots. A polynomial \(p(x,y)\) of degree \(n\) belongs to the class \(\Ptwo\) if it satisfies the following conditions:

\begin{itemize}
\item For any \( \alpha \in \R \), the polynomials  \(p(\alpha,y)\) and \(p(x,\alpha)\) have
  all real roots.
\item Monomials of degree \(n\) in \(p(x,y)\) all have positive coefficients.
\end{itemize}
Equivalently, a polynomial \(p(x,y)\) is in the class  \( \Ptwo \) if it has a determinant expression
\[
  p(x,y) = \det\left( \mI_n y + \mD_n x + \mS_n \right),
\]
with \(\mD_n\) a diagonal matrix with positive entries and \(\mS_n\) a real symmetric matrix.

Recall the following property of polynomials of the class \(\Ptwo\).
\begin{lem}[Lemma 9.63 of \cite{F2008}]
  Let \(f(x,y) \in \Ptwo \) and set
\begin{equation*}
    f(x,y) = f_0 (x) + f_1(x) y + \cdots + f_n(x) y^n.
\end{equation*}
If \(f(x,0) \) has all distinct roots, then all \(f_i\) have distinct roots, and the roots of \(f_i \) and \(f_{i+1}\) interlace.
\end{lem}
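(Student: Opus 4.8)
The plan is to prove the interlacing of the $y$-coefficients directly from the determinant characterization of $\Ptwo$ recalled just above the statement. Write $f(x,y)=\det(\mI_n y+\mD_n x+\mS_n)$ with $\mD_n$ positive diagonal and $\mS_n$ real symmetric, and set $\mM(x):=\mD_n x+\mS_n$, which is real symmetric for every real $x$. Since $\det(\mI_n y+\mM(x))=\sum_{i=0}^n\sigma_{n-i}(\mM(x))\,y^i$, where $\sigma_k$ denotes the sum of the $k\times k$ principal minors, the coefficients are
\[ f_i(x)=\frac1{i!}\,\partial_y^{\,i} f(x,y)\big|_{y=0}=\sigma_{n-i}(\mM(x)). \]
In particular $\deg_x f_i=n-i$ with positive leading coefficient $\sum_{|T|=n-i}\prod_{j\in T}(\mD_n)_{jj}$, so consecutive coefficients differ in degree by exactly one, which is the correct configuration for interlacing with a degree drop.

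My strategy is to reduce the whole statement to a single closure property together with standard transfer principles. Granting that $\partial_y$ maps $\Ptwo$ into $\Ptwo\cup\{0\}$, each iterate $\partial_y^{\,i}f$ lies in $\Ptwo$, so its specialization at $y=0$, namely $i!\,f_i$, is real-rooted in $x$; this yields real-rootedness of every $f_i$ at once. For the interlacing of $f_i$ and $f_{i+1}$, I would apply the two-variable Rolle phenomenon to $g:=\partial_y^{\,i}f$: for each fixed $x_0$ the derivative $\partial_y g(x_0,\cdot)$ interlaces $g(x_0,\cdot)$ in $y$, and, both polynomials being in $\Ptwo$, every real combination $\alpha g+\beta\,\partial_y g$ stays in $\Ptwo\cup\{0\}$ and hence real-rooted in $x$ after setting $y=0$. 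By Obreschkoff's theorem the resulting real-rootedness of every $\alpha\,i!\,f_i+\beta\,(i+1)!\,f_{i+1}$ is precisely the statement that $f_i$ and $f_{i+1}$ interlace.

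The hard part will be the closure property $\partial_y(\Ptwo)\subseteq\Ptwo\cup\{0\}$. The $\Ptwo$ condition demands real-rootedness after substituting a real value for either variable. Substituting $x=\alpha$ is harmless, as $(\partial_y f)(\alpha,y)=\tfrac{d}{dy}f(\alpha,y)$ is the derivative of a real-rooted one-variable polynomial; the genuine difficulty is the substitution $y=\beta$, where $(\partial_y f)(x,\beta)=\tr\!\big(\operatorname{adj}(\beta\mI_n+\mM(x))\big)=\sigma_{n-1}(\beta\mI_n+\mM(x))$ is a sum of principal minors of a single matrix and is not the derivative of anything manifestly real-rooted. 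I would settle this with the determinant representation via the stability-theoretic principle that $\partial_y$ preserves real stability (the real, two-variable specialization of the Borcea--Br\"and\'en / Lieb--Sokal results): $\det(\mI_n y+\mD_n x+\mS_n)$ with $\mD_n\succeq 0$ is real stable, its $\partial_y$-derivative is again real stable, and restricting a real-stable polynomial to a real value of one variable keeps it real-rooted (or zero) in the other. This is exactly the step where the positivity of $\mD_n$, i.e. the true $\Ptwo$ hypothesis, is indispensable.

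Finally I would extract distinctness and strictness by ascending induction, this being where the hypothesis that $f_0=f(x,0)$ has $n$ distinct roots enters. The interlacings produced above are a priori only weak. Suppose $f_i$ has $n-i$ distinct real roots; these determine $n-i-1$ bounded open gaps. Weak interlacing places the $n-i-1$ real roots of $f_{i+1}$ in the closures of these gaps, and since the counts match exactly, each gap receives precisely one root and no root of $f_{i+1}$ can coincide with a root of $f_i$. Hence $f_{i+1}$ has distinct roots and the interlacing is strict. Iterating from the base case $i=0$ gives distinct roots for every $f_i$ and strict interlacing of each consecutive pair, which is the assertion of the lemma.
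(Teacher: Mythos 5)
The paper offers no proof of this lemma at all --- it is imported verbatim from Fisk's book \cite{F2008} --- so your attempt can only be judged on its own terms, and its first three steps are in good shape: writing $f=\det(\mI_n y+\mD_n x+\mS_n)$ gives $f_i(x)=\sigma_{n-i}(\mD_n x+\mS_n)$ of degree $n-i$ with positive leading coefficient; such determinants with $\mD_n$ positive are real stable; $\partial_y$, and more generally $\alpha+\beta\partial_y$, preserves real stability (its symbol factors as $(y+w)^{m-1}\bigl(\alpha(y+w)+\beta m\bigr)$, which is stable), real specialization preserves real-rootedness, and Hermite--Kakeya--Obreschkoff then yields real-rootedness of every $f_i$ and \emph{weak} interlacing of consecutive pairs.

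The genuine gap is your final paragraph: counting does not upgrade weak interlacing to distinctness and strictness. Weak interlacing means $\lambda_j\le\mu_j\le\lambda_{j+1}$ for the roots $\lambda_1<\cdots<\lambda_{n-i}$ of $f_i$ and $\mu_1\le\cdots\le\mu_{n-i-1}$ of $f_{i+1}$; the matching counts are perfectly consistent with $\mu_j=\lambda_j$, and even with a double root $\mu_j=\mu_{j+1}=\lambda_{j+1}$ of $f_{i+1}$ sitting at a simple root of $f_i$. Worse, everything you have actually \emph{established} about the pair --- $f_i$ with distinct roots, degree dropping by one, every combination $\alpha f_i+\beta f_{i+1}$ real-rooted --- is satisfied by $f_i=x(x-1)(x+1)$ and $f_{i+1}=x(x-1)$, since $\alpha x(x-1)(x+1)+\beta x(x-1)=x(x-1)\bigl(\alpha(x+1)+\beta\bigr)$ is always real-rooted, yet these share two roots; so no amount of HKO-plus-counting can finish the proof, and more of the $\Ptwo$ structure must be used. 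A repair that stays inside your framework: at each stage $g_i:=\partial_y^{\,i}f$ lies in $\Ptwo$ (you proved this), hence by the determinantal characterization $g_i=\det(\mI_{n-i}\,y+x\mD'+\mS')$ up to a positive constant, with $i!\,f_i(x)=\det(x\mD'+\mS')=\det(\mD')\det(xI+\mA')$ where $\mA'=\mD'^{-1/2}\mS'\mD'^{-1/2}$, and $(i+1)!\,f_{i+1}(x)=\tr\operatorname{adj}(x\mD'+\mS')$. If $r$ is a (by induction, simple) root of $f_i$, then $-r$ is a simple eigenvalue of $\mA'$, so $\ker(r\mD'+\mS')$ is one-dimensional, spanned by some $v\neq0$, and $\operatorname{adj}(rI+\mA')=\bigl(\prod_{k\ne j}(\nu_k-\nu_j)\bigr)vv^{T}$ with $\nu_k$ the eigenvalues of $\mA'$ and $\nu_j=-r$; hence $(i+1)!\,f_{i+1}(r)=\det(\mD')\bigl(\prod_{k\ne j}(\nu_k-\nu_j)\bigr)\,v^{T}\mD'^{-1}v\neq0$. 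Thus $f_i$ and $f_{i+1}$ never share a root, the weak interlacing is strict, and the roots of $f_{i+1}$, confined to pairwise disjoint open gaps of $f_i$, are distinct --- which is exactly the inductive step your counting argument was meant to supply.
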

Note that the lemma above tacitly implies that the roots of the polynomials \(f_i\) are real. With these preparations, we
prove Lemma \ref{lem:rootcoeff}.

\begin{proof}[Proof of Lemma \ref{lem:rootcoeff}]
  By Corollary \ref{cor:detsym}, \(\cp{N,\e}{N}(x,y) \in \Ptwo\).
  Since \[\cp{N,\e}{N}(0,y) = \prod_{i=i}^{N}(y-i(i+2\e)), \] for \(\e > -1/2 \), the roots
  are different and the lemma applies, establishing the result. 
\end{proof}

\subsection{Number of positive roots of constraint polynomials} \label{sec:posRoots}

In this section we give an estimation on the number of positive roots of constraint polynomials. In particular, this result proves
the existence of exceptional eigenvalues corresponding to Juddian solutions in the spectrum of the AQRM.
We note that although there is a description of the statement of Theorem \ref{LB} for open intervals in
\cite{LB2016JPA}, the proof provided by the authors only gives a lower bound on the number of positive roots.

\begin{thm}\label{LB}
  Let \( \e > -\frac12 \). For each \(k \, (0\leq k <N)\), there are exactly $N-k$ positive roots (in the variable $x$) of the
  constraint polynomial \(\cp{N,\e}{N}(x,y)\) for $y$ in the range
  \begin{equation*}
     k(k+2\e) \leq y < (k+1)(k+1 +2\e).
  \end{equation*}
  Furthermore, when \(y \geq N(N + 2\e)\), the polynomial \(\cp{N,\e}{N}(x,y)\) has no positive roots with respect to \(x\).
\end{thm}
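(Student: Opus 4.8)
The plan is to reduce the count of positive $x$-roots to an inertia computation for a symmetric matrix, and then read off the answer from the spectrum of the ``constant-term'' matrix $\mS_N^{(N,\e)}$. By Corollary \ref{cor:detsym} we have, for $\e > -\tfrac12$, the symmetric determinant expression
\[
  \cp{N,\e}{N}(x,y) = \det\bigl( \mI_N y + \mD_N x + \mS_N^{(N,\e)} \bigr),
\]
with $\mD_N = \diag(1,2,\dots,N)$ positive definite and $\mS_N^{(N,\e)}$ real symmetric. Factoring out $\mD_N$ and conjugating by $\mD_N^{1/2}$ rewrites this, for fixed $y$, as
\[
  \cp{N,\e}{N}(x,y) = N!\,\det\bigl( x\mI_N + \mathbf{W}(y) \bigr),
  \qquad
  \mathbf{W}(y) := \mD_N^{-1/2}\bigl( y\mI_N + \mS_N^{(N,\e)} \bigr)\mD_N^{-1/2},
\]
so that the roots in $x$ are exactly the negatives of the (real) eigenvalues of the symmetric matrix $\mathbf{W}(y)$. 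Consequently the number of positive $x$-roots equals the number of strictly negative eigenvalues of $\mathbf{W}(y)$.

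First I would apply Sylvester's law of inertia: since $\mathbf{W}(y)$ is the congruence of $y\mI_N + \mS_N^{(N,\e)}$ by the invertible symmetric matrix $\mD_N^{-1/2}$, the two matrices have the same inertia, and in particular the same number of negative eigenvalues. Thus it suffices to count the negative eigenvalues of $y\mI_N + \mS_N^{(N,\e)}$, that is, $\#\{\,\mu \in \Spec(\mS_N^{(N,\e)}) : y + \mu < 0\,\}$.

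Next I would determine $\Spec(\mS_N^{(N,\e)})$. Setting $x = 0$ in the determinant expression gives $\det(y\mI_N + \mS_N^{(N,\e)}) = \cp{N,\e}{N}(0,y) = \prod_{i=1}^{N}\bigl(y - i(i+2\e)\bigr)$, whence $\Spec(\mS_N^{(N,\e)}) = \{\,-i(i+2\e) : 1 \le i \le N\,\}$. Therefore the number of positive $x$-roots equals $\#\{\, i : 1 \le i \le N,\ i(i+2\e) > y \,\}$. For $\e > -\tfrac12$ the map $i \mapsto i(i+2\e)$ is strictly increasing on integers $i \ge 1$ (its forward difference $2i+1+2\e$ is positive), so these thresholds are strictly ordered as $1(1+2\e) < 2(2+2\e) < \cdots < N(N+2\e)$. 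When $k(k+2\e) \le y < (k+1)(k+1+2\e)$ the indices with $i(i+2\e) > y$ are precisely $i = k+1,\dots,N$, giving exactly $N-k$ positive roots; and when $y \ge N(N+2\e)$ no index qualifies, so there are none.

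The argument is short once the reformulation is in place, so there is no single heavy computation; the main point to get right is the passage through Sylvester's law of inertia, which is what converts the $y$-dependent root count into the fixed, explicit spectrum of $\mS_N^{(N,\e)}$ while simultaneously handling multiplicities automatically. A secondary point requiring care is the half-open nature of the interval: at the left endpoint $y = k(k+2\e)$ the matrix $y\mI_N + \mS_N^{(N,\e)}$ acquires a zero eigenvalue, corresponding to a root at $x = 0$ rather than a positive root, which is exactly why the strict inequality $i(i+2\e) > y$ is the correct counting condition and the interval is closed on the left and open on the right.
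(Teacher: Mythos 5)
Your proof is correct, and it takes a genuinely different --- and considerably shorter --- route than the paper. The paper establishes the count two-sidedly: the lower bound (Lemma \ref{lem:LBJPA}) comes from a Sturm-sequence sign-variation argument on the whole family $\{\cp{N,\e}{k}\}_{k}$ following Ku\'s, and the upper bound (Lemma \ref{lem:Upperbound}) comes from Descartes' rule of signs applied to the coefficients $a_i^{(N)}(y)$ of $\cp{N,\e}{N}(x,y)$ viewed as a polynomial in $x$, which in turn requires the interlacing of the roots of consecutive coefficients (Lemma \ref{lem:rootcoeff}, proved via the class $\Ptwo$). You bypass both ingredients by reading the count directly off the symmetric determinant expression of Corollary \ref{cor:detsym}: writing $\cp{N,\e}{N}(x,y)=N!\det(x\mI_N+\mathbf{W}(y))$ turns positive $x$-roots into negative eigenvalues of $\mathbf{W}(y)$, Sylvester's law of inertia transfers these to $y\mI_N+\mS_N^{(N,\e)}$, and the spectrum $\Spec(\mS_N^{(N,\e)})=\{-i(i+2\e):1\le i\le N\}$ is known exactly from $\cp{N,\e}{N}(0,y)=\prod_{i=1}^N(y-i(i+2\e))$; the strict monotonicity of $i\mapsto i(i+2\e)$ for $\e>-\tfrac12$ then gives the count, and your treatment of the closed left endpoint (a zero eigenvalue corresponding to the root $x=0$) is exactly right. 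What the paper's two-sided argument buys additionally is that the $N-k$ positive roots are \emph{distinct} (at least $N-k$ distinct roots from the sign-variation argument, at most $N-k$ with multiplicity from Descartes), whereas your inertia computation counts with multiplicity; to recover distinctness in your framework it suffices to add that $\mathbf{W}(y)$ is an unreduced symmetric tridiagonal matrix --- its off-diagonal entries equal $\sqrt{c^{(\e)}_{N-i}}>0$ for $1\le i\le N-1$ when $\e>-\tfrac12$ --- and hence has simple eigenvalues. Note finally that both proofs ultimately rest on the real-symmetry of $\mS_N^{(N,\e)}$ for $\e>-\tfrac12$ (Corollary \ref{cor:detsym}), which is also what places $\cp{N,\e}{N}$ in $\Ptwo$ for the paper's interlacing lemma.
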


We illustrate numerically the proposition for the case \(N=6 \) and \(\e=0.4\) in Figure \ref{fig:lbgraph}. For fixed \(\Delta>0\)
satisfying \(k(k+2\e) \leq \Delta^2 < (k+1)(k+1 +2\e)\) (\(k \in \{1,2,\ldots,N\}\)), the number of points \((g,\Delta) \) with \(g>0\) in the curve
\(\cp{N,\e}{N}((2g)^2,\Delta^2)=0 \) is exactly \(N - k\). Likewise, as it is clear in the figure, there are no points \((g,\Delta) \) in
the curve with \(g>0\) and \(\Delta^2 \geq N(N+2\e)\).

\begin{figure}[htb]
  \centering
  \includegraphics[height=5cm]{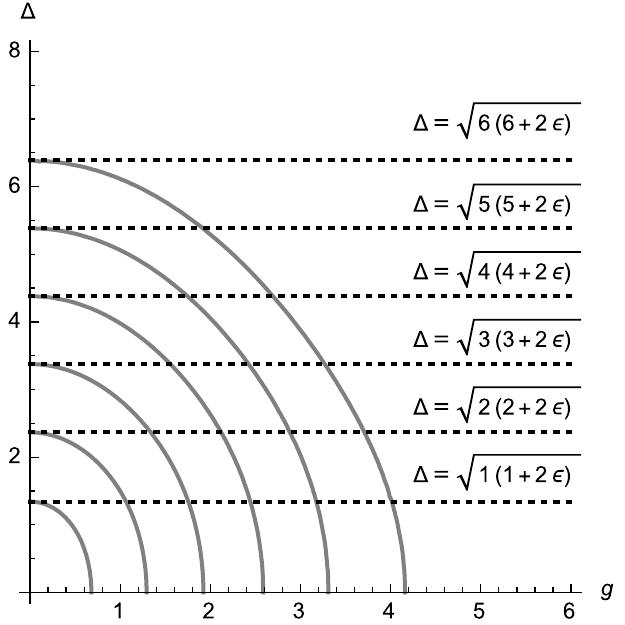}
  \caption{Curve \( \cp{6,\e}{6}((2g)^2,\Delta^2)=0 \) with \(\e = 0.4 \) (for \(g,\Delta >0\))}
  \label{fig:lbgraph}
\end{figure}

First, we establish a lower bound on the number of positive roots for the constraint polynomials. The following Lemma extends Li and
Batchelor's result (\cite{LB2016JPA}, Theorem), to the case of semi-closed intervals.

\begin{lem}
  \label{lem:LBJPA}
  Let \( \e > -\frac12 \). For each \(k \, (0\leq k <N)\), there are at least $N-k$ positive roots
  (in the variable $x$) of the constraint polynomial \(\cp{N,\e}{N}(x,y)\) for $y$ in the range
  \begin{equation*}
     k(k+2\e) \leq y < (k+1)(k+1 +2\e).
  \end{equation*}
\end{lem}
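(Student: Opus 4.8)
The plan is to convert the statement about positive roots in $x$ into a statement about negative eigenvalues of a one-parameter family of real symmetric tridiagonal (Jacobi) matrices, and then to exploit eigenvalue monotonicity. Starting from the symmetric determinant expression of Corollary \ref{cor:detsym} in the form already used in the proof of Theorem \ref{thm:allrealroots}, for a fixed value $y=\beta\in\R$ I would write
\[
  \cp{N,\e}{N}(x,\beta) = N!\det\!\left(\mI_N x + \mM(\beta)\right),\qquad \mM(\beta):=\beta\,\mD_N^{-1}+\mD_N^{-1/2}\mS_N^{(N,\e)}\mD_N^{-1/2},
\]
so that the roots of $\cp{N,\e}{N}(x,\beta)$ in $x$ are exactly the negatives of the eigenvalues of $\mM(\beta)$. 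Consequently a positive root $x>0$ corresponds precisely to a negative eigenvalue of $\mM(\beta)$, and the lemma reduces to: count the negative eigenvalues of $\mM(\beta)$ for $\beta\in[k(k+2\e),(k+1)(k+1+2\e))$.

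Next I would record three structural facts. First, $\mM(\beta)$ is an irreducible Jacobi matrix: its off-diagonal entries are $\sqrt{c^{(\e)}_{N-i}}=\sqrt{(N-i)(N-i+2\e)}$ for $1\le i\le N-1$, which are nonzero precisely because $\e>-1/2$ (the binding case being $i=N-1$, giving $1+2\e>0$); hence all eigenvalues of $\mM(\beta)$ are simple. Second, for $\beta_1<\beta_2$ one has $\mM(\beta_2)-\mM(\beta_1)=(\beta_2-\beta_1)\mD_N^{-1}\succ0$, so by Weyl's monotonicity theorem every eigenvalue of $\mM(\beta)$ is strictly increasing in $\beta$, and in particular the number of negative eigenvalues is non-increasing. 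Third, since $\det\mM(\beta)=\tfrac1{N!}\cp{N,\e}{N}(0,\beta)=\tfrac1{N!}\prod_{i=1}^N(\beta-i(i+2\e))$, the matrix $\mM(\beta)$ is singular exactly at the $N$ distinct values $\beta=i(i+2\e)$, and by simplicity and monotonicity exactly one eigenvalue passes through $0$ (from negative to positive) as $\beta$ increases through each such value.

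The base case anchors the count at $\beta=0$: by Theorem \ref{thm:Laguerre} we have $\cp{N,\e}{N}(x,0)=(-1)^N (N!)^2 L_N^{(2\e)}(x)$, and the generalized Laguerre polynomial $L_N^{(2\e)}$ with $2\e>-1$ has all $N$ roots real and positive, so all $N$ eigenvalues of $\mM(0)$ are negative. Combining this with the monotonicity and the crossing structure, as $\beta$ increases from $0$ each of the crossings $\beta=1(1+2\e),\dots,k(k+2\e)$ converts exactly one negative eigenvalue into a positive one; hence for $\beta\in[k(k+2\e),(k+1)(k+1+2\e))$ at least $N-k$ eigenvalues of $\mM(\beta)$ remain negative, giving at least $N-k$ positive roots in $x$ and proving the lemma (indeed with equality). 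I expect the main obstacle to be the bookkeeping at the left endpoint $\beta=k(k+2\e)$, where one eigenvalue is exactly $0$ and must be correctly excluded from the positive count, together with the verification that the hypothesis $\e>-1/2$ is exactly what forces $\mM(\beta)$ to be an irreducible Jacobi matrix, so that its eigenvalues are simple and the zero-crossings occur one at a time; once these are in place, the Weyl monotonicity and the Laguerre base case are routine.
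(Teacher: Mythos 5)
Your argument is correct, but it follows a genuinely different route from the paper's. The paper proves Lemma \ref{lem:LBJPA} by a Sturm-chain sign-variation count in the spirit of Ku\'s: it normalizes $S_k^{(N,\e)}(x)=\cp{N,\e}{k}(x,y)/k!$, uses the three-term recurrence to show that sign changes in the chain $S_N^{(N,\e)},\dots,S_0^{(N,\e)}$ are created or destroyed only at zeros of the extreme terms, and compares the $N-k$ variations at $x=0$ (read off from $\sgn S_l^{(N,\e)}(0)$) with the zero variations as $x\to\infty$. You instead symmetrize exactly as in the proof of Theorem \ref{thm:allrealroots}, writing $\cp{N,\e}{N}(x,y)=N!\det(\mI_Nx+y\,\mD_N^{-1}+\mD_N^{-1/2}\mS_N^{(N,\e)}\mD_N^{-1/2})$, and run an eigenvalue-monotonicity argument on this Jacobi pencil in $y$: simplicity of the spectrum for every $y$ (this is precisely where $\e>-\tfrac12$ enters, through the non-vanishing of the off-diagonal entries $\sqrt{(N-i)(N-i+2\e)}$), strict increase of each eigenvalue in $y$ by Weyl, zero crossings located by $\cp{N,\e}{N}(0,y)=\prod_{i=1}^N\bigl(y-i(i+2\e)\bigr)$ (whose roots are distinct and positive for $\e>-\tfrac12$), and the Laguerre specialization of Theorem \ref{thm:Laguerre} as the anchor at $y=0$; your endpoint bookkeeping at $y=k(k+2\e)$, where one simple eigenvalue equals $0$ and must be excluded from the count, is exactly right and plays the role of the paper's auxiliary polynomials $\tilde S_{k+i}(x)=S_{k+i}(x)/x$. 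The two approaches buy different things. Yours is structurally the same mechanism the paper deploys later for the positivity of $A_N^{\ell}(x,y)$ (Lemma \ref{lem:eigenprop} and Theorem \ref{thm:pos}), and it proves strictly more: you obtain the exact count $N-k$ in one pass, so Theorem \ref{LB} follows without the separate upper bound of Lemma \ref{lem:Upperbound} (Descartes' rule combined with the interlacing Lemma \ref{lem:rootcoeff}). The paper's recurrence-based proof yields only the lower bound here, but it is more robust: it never needs the real-symmetric form, and therefore transfers essentially verbatim to negative $\e$ in Lemma \ref{lem:LBJPAneg}, where $c^{(\e)}_{N-i}$ can be negative, the entries $\sqrt{c^{(\e)}_{N-i}}$ become imaginary, and your real Jacobi matrix no longer exists.
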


\begin{rem}
  The proof is a modification to the argument given in \cite{LB2016JPA} (Appendix B), which is based on the proof of  Ku\'s for the case of
  the (symmetric) quantum Rabi model (\cite{K1985JMP}, Section IV, Thm. 3).
\end{rem}

\begin{proof}
  Define the normalized polynomials \(S_k^{(N,\e)}(x,y)\) by
  \[
    S_k^{(N,\e)}(x,y) = \frac{\cp{N,\e}{k}(x,y)}{k!}.
  \]
  Fix \(y\) and consider the polynomials \(S_k^{(N,\e)}(x,y)\) as polynomials in the variable \(x\) and write \(S_k^{(N,\e)}(x)\) for
  simplicity. Set \(\alpha_i = (i(i + 2\e) - y)/i\) and \(\beta_i = N-i+1\), then the recurrence relation becomes
  \begin{align}
    \label{eq:recnormLBK}
    S_0^{(N,\e)}(x) &= 1, \quad S_1^{(N,\e)}(x) = x - \alpha_1 \nonumber \\
    S_k^{(N,\e)}(x) &= (x - \alpha_k ) S_{k-1}^{(N,\e)}(x) - \beta_k x S_{k-2}^{(N,\e)}(x).
  \end{align}
  Let \(k \, (0\leq k < N)\) be fixed. If \(k(k+2\e) < y < (k+1)(k+1 +2\e)\), then it is clear that \(\alpha_i < 0\) for \(i < k \), \(\alpha_i > 0 \) for
  \(i >k \) and \( \beta_i >0  \) for \(0 \leq i < N\). Moreover, when \( y =  k(k+2\e)\) we have \(\alpha_k = 0\) and, from \eqref{eq:recnormLBK},
  we see that \(x = 0\) is a root of all polynomials \(S_{k+i}(x) \) for \(i=1,\ldots,N-k\).

  For \(i=0,1,\ldots,N-k\), set
  \[
    \tilde{S}_{k+i}(x) =
    \begin{cases}
      S_{k+i}(x) &\mbox{if } y \neq k(k+2\e)  \\
      (1/x)S_{k+i}(x) & \mbox{if } y = k(k+2\e)
    \end{cases}.
  \]
  With this modification, the proof follows as in \cite{K1985JMP}. First, notice that
  \begin{align*}
    \sgn(S_l^{(N,\e)}(0)) &= \sgn((-1)^l \alpha_1 \alpha_2 \ldots, \alpha_l) = (-1)^{2l} = 1
  \end{align*}
  for \(l < k\). Similarly, \(\sgn(S_k^{(N,\e)}(0)) = 1\) if \(y \not= k(k+2\e) \) and \(\sgn(S_k^{(N,\e)}(0)) = 0 \) if \(y = k(k + 2\e )\).
  On the other hand, for \(i=1,\ldots,N-k\), we have
  \begin{align*}
    \sgn(\tilde{S}_{k+i}^{(N,\e)}(0)) =
    \begin{cases}
      \sgn((-1)^{k+i} \alpha_1 \ldots \alpha_{k-1} \alpha_k \alpha_{k+1} \ldots \alpha_{k+i} ) &  \text{ if } y \not= k(k+2\e) \\
      \sgn((-1)^{k+i-1} \alpha_1 \ldots \alpha_{k-1} \alpha_{k+1} \ldots \alpha_{k+i} ) & \text{ if } y = k(k+2\e)
    \end{cases},
  \end{align*}
  and we directly verify that in both cases the expression is equal to \( (-1)^{i} \).
  
  In addition, from the recurrence relation \eqref{eq:recnormLBK} we easily see the following
  \begin{itemize}
  \item if \(S_{i}^{(N,\e)}(a) = 0\) for \(a > 0 \), then \(S_{i+1}^{(N,\e)}(a)\)
    and \(S_{i-1}^{(N,\e)}(a)\) have opposite signs,
  \item \(S_{i}^{(N,\e)}(x)\) and \(S_{i-1}^{(N,\e)}(x)\) cannot
    have the same positive root.
  \end{itemize}
  These remarks are easily seen to hold for the auxiliary polynomials \(\tilde{S}_{k+i}(x)\) as well. Next, denote by \(V(x)\) the number
  of change of signs  of the sequence
  \[
    \tilde{S}_N^{(N,\e)}(x), \tilde{S}_{N-1}^{(N,\e)}(x), \ldots, \tilde{S}_{k+1}^{(N,\e)}(x), S_{k}^{(N,\e)}(x),
    S_{k-1}^{(N,\e)}(x), \ldots, S_0^{(N,\e)}(x).
  \]
  By the remarks above, variations of \(V(x)\) by \(\pm 1\) occur only at zeros of \(\tilde{S}_N^{(N,\e)}(x)\) or \(\tilde{S}_0^{(N,\e)}(x) = 1\).
  At \(x=0\), the first terms of the sequence are \( (-1)^{N-k-i}\) for \(i=0,\ldots,N-k-1\), then \(0\) if \( y = k(k+2\e)\) and all the remaining
  terms are \( 1\), hence \(V(0)= N-k \). On the other hand, it is clear that as \(x\) tends to infinity \(\sgn(S^{(N,\e)}_i(x))= 1\) and
  \(\sgn(\tilde{S}^{(N,\e)}_{k+i}(x))= 1\). This proves that there are at least \(N-k\) positive roots of the polynomial \(\tilde{S}_N^{(N,\e)}(x)\) and
  the same holds for \(\cp{N,\e}{N}(x)\). 
\end{proof}

To complete the proof we give an upper bound to the number of positive roots using Descartes' rule of
signs (see e.g. \cite{K2008}, Theorem 7.5). This result states that the number of positive roots
of a polynomial does not exceed the number of the sign changes in its coefficients.

\begin{lem} \label{lem:upperBN}
  Let \( \e > -\frac12 \). When \(y \geq N(N + 2\e)\), the polynomial \(\cp{N,\e}{N}(x,y)\) has no positive
  roots with respect to \(x\).
\end{lem}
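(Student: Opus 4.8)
The plan is to apply Descartes' rule of signs to $\cp{N,\e}{N}(x,y) = \sum_{i=0}^{N} a_i^{(N)}(y) x^i$, viewed as a polynomial in $x$ with $y \geq N(N+2\e)$ fixed, and to show that the coefficient sequence $a_0^{(N)}(y), a_1^{(N)}(y), \ldots, a_N^{(N)}(y)$ has no sign change, so that there can be no positive root in $x$.

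First I would record the signs governing the top-degree behavior. Since $\cp{N,\e}{N}(x,y) \in \Ptwo$ by Corollary \ref{cor:detsym}, every monomial of total degree $N$ has a positive coefficient; in particular, since $a_i^{(N)}(y)$ has degree $N-i$ in $y$, its leading coefficient (the coefficient of the degree-$N$ monomial $x^i y^{N-i}$) is positive. Moreover, under the hypothesis $\e > -1/2$, Lemma \ref{lem:rootcoeff} guarantees that all roots of each $a_i^{(N)}(y)$ are real, which is exactly what makes a purely sign-based argument possible.

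Next I would locate the largest root of each coefficient. For $i=0$ we have $a_0^{(N)}(y) = \cp{N,\e}{N}(0,y) = \prod_{k=1}^{N}(y - k(k+2\e))$, and since $k \mapsto k(k+2\e)$ is strictly increasing for $\e > -1/2$, its largest root is exactly $N(N+2\e)$. The interlacing inequalities $\xi_i^{(j)} < \xi_i^{(j+1)} < \xi_{i+1}^{(j)}$ of Lemma \ref{lem:rootcoeff} force the largest root of $a_{j+1}^{(N)}$ to lie strictly below that of $a_j^{(N)}$; by induction, every $a_i^{(N)}$ with $i \geq 1$ has all of its roots strictly smaller than $N(N+2\e)$. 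Combining this with the positivity of the leading coefficients yields $a_i^{(N)}(y) > 0$ for all $i \geq 1$ and $a_0^{(N)}(y) \geq 0$ whenever $y \geq N(N+2\e)$, with $a_0^{(N)}(y)=0$ only in the boundary case $y = N(N+2\e)$.

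Finally, since every nonzero coefficient in the sequence is positive, there are no sign changes, and Descartes' rule gives no positive roots of $\cp{N,\e}{N}(x,y)$ with respect to $x$. (Equivalently, and perhaps more transparently, for $x > 0$ one has $\cp{N,\e}{N}(x,y) \geq N!\,x^N > 0$ directly, since the leading term is strictly positive and all remaining terms are nonnegative.) I do not anticipate a genuine obstacle here; the only step requiring care is verifying that the interlacing of Lemma \ref{lem:rootcoeff} propagates the bound $N(N+2\e)$ on the largest root from $a_0^{(N)}$ to every $a_i^{(N)}$, which is precisely where the reality of the roots guaranteed by $\e > -1/2$ is used.
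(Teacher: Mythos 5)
Your proof is correct and follows essentially the same route as the paper: identify $y=N(N+2\e)$ as the largest root of $a_0^{(N)}(y)=\cp{N,\e}{N}(0,y)$, propagate this bound to all coefficients via the interlacing of Lemma \ref{lem:rootcoeff} to conclude each $a_i^{(N)}(y)\geq 0$, and invoke Descartes' rule of signs. You merely spell out more details (positivity of the leading coefficients via the class $\Ptwo$ and the strict monotonicity of $k\mapsto k(k+2\e)$) than the paper's terser version, and the closing remark that $\cp{N,\e}{N}(x,y)\geq N!\,x^N>0$ for $x>0$ is a valid shortcut.
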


\begin{proof}
  First, using the notation of \S \ref{sec:interlacing}, we note that \(y = N(N + 2\e)\) is the largest
  root of \(a_0^{N}(y) = \cp{N,\e}{N}(0,y)\). Then, by the interlacing of the roots of \(a_i^{N}(y)\) (\(i=0,1,\ldots,N-1\)) of
  Lemma \ref{lem:rootcoeff}, all \(a_i^{N}(y)\) must be non-negative. Thus, there are no changes of signs
  in the coefficients of \(\cp{N,\e}{N}(x,y)\) (as a polynomial in \(x\)) and the result follows by
  Descartes' rule of signs. 
\end{proof}

\begin{lem}
  \label{lem:Upperbound}
  Let \( \e > -\frac12 \). For each $k$ $(0\leq k <N)$, there are at most $N-k$ positive roots
  (in the variable $x$) of the constraint polynomial \(\cp{N,\e}{N}(x,y)\)
  for $y$ in the range
  \begin{equation*}
     k(k+2\e) \leq y < (k+1)(k+1 +2\e).
  \end{equation*}
\end{lem}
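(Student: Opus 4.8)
The plan is to obtain the upper bound by Descartes' rule of signs, so that it matches the lower bound already established in Lemma \ref{lem:LBJPA} and the two together pin down the count exactly. Fix \(y^\ast\) in the range \(k(k+2\e)\le y^\ast<(k+1)(k+1+2\e)\) and write \(\cp{N,\e}{N}(x,y^\ast)=\sum_{i=0}^{N}a^{(N)}_i(y^\ast)x^i\). Descartes' rule (as used for Lemma \ref{lem:upperBN}) bounds the number of positive roots in \(x\) by the number of sign changes in the coefficient sequence \((a^{(N)}_0(y^\ast),a^{(N)}_1(y^\ast),\dots,a^{(N)}_N(y^\ast))\), so it suffices to show this sequence has at most \(N-k\) sign changes.

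The key is to read off all the signs from the interlacing of Lemma \ref{lem:rootcoeff}. First note that each \(a^{(N)}_i\) has positive leading coefficient: since \(\cp{N,\e}{N}(x,y)\in\Ptwo\) (established in the proof of Lemma \ref{lem:rootcoeff}), every top total-degree monomial \(x^{i}y^{N-i}\) has positive coefficient, and this coefficient is the leading coefficient of \(a^{(N)}_i\). Consequently, for \(y^\ast\) not a root of \(a^{(N)}_j\), we have \(\sgn a^{(N)}_j(y^\ast)=(-1)^{(N-j)-m_j}\), where \(m_j\) denotes the number of real roots of \(a^{(N)}_j\) lying below \(y^\ast\). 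The interlacing \(\xi^{(j)}_i<\xi^{(j+1)}_i<\xi^{(j)}_{i+1}\) of Lemma \ref{lem:rootcoeff} says the \(N-j-1\) roots of \(a^{(N)}_{j+1}\) strictly separate the \(N-j\) roots of \(a^{(N)}_j\), which forces \(m_j-m_{j+1}\in\{0,1\}\) for every \(j\). Moreover \(m_N=0\), and because \(a^{(N)}_0(y)=\cp{N,\e}{N}(0,y)=\prod_{i=1}^{N}(y-i(i+2\e))\) has its roots exactly at the interval endpoints, we get \(m_0=k\) when \(y^\ast\) lies in the open interval \((k(k+2\e),(k+1)(k+1+2\e))\).

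Now a one-line parity computation finishes the generic case: \(\sgn a^{(N)}_j(y^\ast)\,\sgn a^{(N)}_{j+1}(y^\ast)=(-1)^{1+m_j+m_{j+1}}\), so a sign change occurs between consecutive coefficients precisely when \(m_j\equiv m_{j+1}\pmod 2\), i.e.\ precisely at a \emph{flat} step (\(m_j=m_{j+1}\)) of the non-increasing integer sequence \(k=m_0\ge m_1\ge\cdots\ge m_N=0\) whose steps are \(0\) or \(1\). Such a sequence descends from \(k\) to \(0\) in \(N\) steps, hence has exactly \(k\) down-steps and exactly \(N-k\) flat steps. Therefore there are exactly \(N-k\) sign changes, and Descartes yields at most \(N-k\) positive roots, as required.

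The step that requires care — the main (and essentially only) obstacle — is the treatment of the non-generic values of \(y^\ast\): those equalling a root of some \(a^{(N)}_j\), where a coefficient vanishes and Descartes must be applied to the nonzero subsequence, and the closed left endpoint \(y^\ast=k(k+2\e)\), where \(a^{(N)}_0(y^\ast)=0\). I would dispose of the interior coincidences by invoking Theorem \ref{thm:allrealroots}: all roots in \(x\) are real, so a positive root can enter or leave only through \(x=0\), which happens only at \(y=i(i+2\e)\); hence the number of positive roots is constant on the open gap and equals the generic value \(N-k\). At the endpoint \(y^\ast=k(k+2\e)\) one factors out the single power of \(x\) coming from \(a^{(N)}_0(y^\ast)=0\) and repeats the flat-step count on \((a^{(N)}_1,\dots,a^{(N)}_N)\), where now \(m_1=k-1\), again giving at most \(N-k\) positive roots. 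Once the interlacing of Lemma \ref{lem:rootcoeff} is in hand the combinatorial core (flat steps \(=N-k\)) is immediate, and only this boundary bookkeeping demands attention.
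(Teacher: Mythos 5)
Your proof is correct and follows essentially the same route as the paper's: Descartes' rule of signs combined with the interlacing of Lemma \ref{lem:rootcoeff} to count sign changes in the coefficient sequence \((a^{(N)}_i(y))\). Your parity bookkeeping via the root-counting functions \(m_j\) is a cleaner formalization of the paper's verbal tracking of the sign pattern (which in fact shifts position within each open interval, a point your argument handles transparently), and your continuity-of-roots treatment of the non-generic values of \(y\) tidies up boundary bookkeeping that the paper passes over more quickly.
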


\begin{proof}
  First, using the notation of \S \ref{sec:interlacing}, note as in Lemma \ref{lem:upperBN} that
  when \(y \geq N(N + 2\e)\), all the coefficients \(a_i^{(N)}(y)\) of the polynomial \(\cp{N,\e}{N}(x,y)\) are
  non-negative.

  By Lemma \ref{lem:rootcoeff}, for
  \[(N-1)(N-1+2\e) < y < N(N + 2\e),\]
  the sign sequence \((\sgn a_N^{(N)}(y), \sgn a_{N-1}^{(N)}(y), \ldots, \sgn a_0^{(N)}(y))\) is given by
  \[
    +,+,\cdots,+,-,-,\cdots,-,-,
  \]
  that is, it consists of a subsequence \(+,+,\ldots,+\) of positive signs followed by a subsequence \(-,-,\ldots,- \) of negative
  signs. Thus,  by Descartes' rule of signs we have at most \(1 = N - (N-1)\) positive roots for \(\cp{N,\e}{N}(x,y)\).
  When \(y = (N-1)(N-1+2\e)\), we have \(a_0^{(N)}(y)=0 \) and the sequence is the same except for a \(0\) at the end,
  so the result holds without change. Continuing this process, we see that for
  \((N-2)(N-2+2\e) < y < (N-1)(N-1 + 2\e)\), the sign sequence given by
  \[
    +,+,\cdots,+,-,-,\cdots,-,+,+,\ldots,+
  \]
  from where it holds that the polynomial hast at most \(2 = N-(N-2)\) roots (with respect to \(x\)).
  We continue this process until we reach \(0 < y < 1(1 + 2\e)\), where we have
  \[
    +,-,+,-,\ldots,(-1)^{N-1},(-1)^{N}
  \]
  giving \(N = N- 0\) roots (with respect to \(x\)) by Descartes' rule of signs. Therefore, to complete the proof
  it is enough to show that the number of sign changes in the sequence
  \((\sgn a_N^{(N)}(y),\sgn a_{N-1}^{(N)}(y),\ldots,\sgn a_0^{(N)}(y))\) does not vary for \(y\) satisfying
  \((k-1)(k-1+2\e) < y < k(k + 2\e)\), and that there is exactly an additional sign change when \(y\) crosses
  \((k-1)(k-1+2\e)\). To see this, note that due to the interlacing of roots given in Lemma \ref{lem:rootcoeff}, the next
  sign change in a subsequence \(+,+,\cdots,+\) (or \(-,-,\cdots,-\)) of contiguous coefficients with same sign must happen at right
  end of the subsequence. When the subsequence \(+,+,\cdots,+\) (or \(-,-,\cdots,-\)) is at the rightmost end of the complete sign sequence
  \((\sgn a_N^{(N)}(y),\sgn a_{N-1}^{(N)}(y),\ldots,\sgn a_0^{(N)}(y))\) there is an additional sign change in the complete sequence and
  the sign change occurs at roots of \(a_0(y)\), that is, when \( y = k(k+2\e)\) for \(k \in \{1,2,\ldots,N-1\} \). In any other case there
  is no additional sign change. This completes the proof. 
\end{proof}

The combination of Lemmas \ref{lem:LBJPA} and \ref{lem:Upperbound} immediately gives Theorem \ref{LB}.

\begin{rem}
  It would be interesting to obtain a result where we switch the roles of $g$ and $\Delta$ in Theorem \ref{LB}
  from both mathematical (e.g. orthogonal polynomials of two variables) and physics (experimental and/or applications) points of view.
\end{rem}

\subsection{Negative \texorpdfstring{$\e$}{epsilon} case}

In this subsection we give some remarks on the estimation of positive roots for \(\e < 0\). First, we present the generalization
of Lemma \ref{lem:LBJPA}. Here, \(\floor{x}\) denotes the integer part of \(x\), that is, the unique integer \(\floor{x}\) such
that \( \floor{x} \leq x < \floor{x} + 1 \), and \(\fract{x} = x - \floor{x}\) is the fractional part of \(x\).

\begin{lem}
  \label{lem:LBJPAneg}
  Let \( \e < 0\) and set \(m = - \floor{2\epsilon} \). For each \(k\) \((0\leq k <N)\), there are at least $N-k$ positive roots
  (in the variable $x$) of the constraint polynomial \(\cp{N+m,\e}{N+m}(x,y)\) for $y$ in the range
  \begin{equation*}
     (m+k)(m+k+2\e) \leq y < (m+k+1)(m+k+1 +2\e).
   \end{equation*}
   Furthermore, let \(\alpha_1 \le \alpha_2 \leq \cdots \leq \alpha_m \) be the elements of the multiset \(\{ j(j+2 \e) :  1 \leq j \leq m \}\).
   Then, for \(k\) \((1 \leq k < m)\) such that \(\alpha_k \not= \alpha_{k+1}\), and \(y\) in the range
   \begin{equation*}
     \alpha_k \leq y < \alpha_{k+1},
   \end{equation*}
   the constraint polynomial \(\cp{N+m,\e}{N+m}(x,y)\) has at least
   \(N+m - k\) positive roots. If \( y < \alpha_1\), then the constraint
   polynomial \(\cp{N+m,\e}{N+m}(x,y)\) has exactly \(N+m\) roots.
\end{lem}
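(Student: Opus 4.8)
The plan is to run the Sturm-type argument of Ku\'s already used for Lemma~\ref{lem:LBJPA}, now applied to $\cp{N+m,\e}{N+m}(x,y)$, the only genuinely new work being the sign analysis forced by the non-monotonicity of $\phi(i):=i(i+2\e)$ when $\e<0$. Concretely, fix $y$, set $M=N+m$, and consider the normalized polynomials $S_k(x)=\cp{M,\e}{k}(x,y)/k!$, which obey $S_k=\bigl(x-\tfrac{\phi(k)-y}{k}\bigr)S_{k-1}-\beta_k x S_{k-2}$ with $\beta_i=M-i+1$. For $1\le i\le M$ one has $\beta_i\ge 1>0$, and this positivity is the only feature of the $\beta_i$ entering the proof of Lemma~\ref{lem:LBJPA}; hence the two sign observations, the variation count $V(x)$, the auxiliary polynomials $\tilde S$, the fact $V(\infty)=0$, and the fact that $V$ changes only at zeros of the top polynomial all transfer unchanged. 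Since $S_k(0)=-\tfrac{\phi(k)-y}{k}S_{k-1}(0)$, a sign change occurs between $S_{k-1}(0)$ and $S_k(0)$ exactly when $\phi(k)>y$, so the whole statement reduces to computing $V(0)=\#\{\,i\in\{1,\dots,M\}:\phi(i)>y\,\}$.

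I would next record that $\phi$ is a parabola with vertex at $i=-\e$, and since $\e<0$ we have $0<-\e<-2\e\le m$, so the vertex lies in $(0,m]$ and for $y\ge 0$ the smaller solution of $\phi(i)=y$ is non-positive. \emph{(Part 1.)} In the range $\phi(m+k)\le y<\phi(m+k+1)$ one has $y\ge 0$ because $m+k+2\e=k+\fract{2\e}\ge 0$; for such $y$ the function $\phi$ meets the level $y$ exactly once on $i\ge 1$, so the factors $\phi(i)-y$ follow the same monotone sign pattern as in the case $\e>-1/2$, giving $\phi(i)>y\iff i\ge m+k+1$ and hence $V(0)=M-(m+k)=N-k$; the argument of Lemma~\ref{lem:LBJPA} then applies verbatim. \emph{(Parts 2 and 3.)} Here $y$ lies below $\max_{1\le j\le m}\phi(j)=\phi(m)$. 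Since $\phi(m+1)-\phi(m)=m+1+\fract{2\e}>0$ and $\phi$ increases past its vertex, every index $i>m$ satisfies $\phi(i)>y$, contributing $M-m=N$ terms to $V(0)$; the remaining contribution comes from $1\le i\le m$ and is governed by the multiset $\{\phi(j):1\le j\le m\}$ with sorted elements $\alpha_1\le\dots\le\alpha_m$. For $\alpha_k\le y<\alpha_{k+1}$ exactly $m-k$ of these exceed $y$, so $V(0)=N+m-k$; for $y<\alpha_1$ all $m$ of them exceed $y$, so $V(0)=N+m$.

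Reading off the conclusions, Parts~1 and~2 yield at least $V(0)=N-k$ and $N+m-k$ positive roots respectively, while Part~3 yields at least $N+m$; since $\deg_x\cp{N+m,\e}{N+m}=N+m$ this last bound is attained, so all $N+m$ roots are positive. The main obstacle, and the step demanding real care, is the bookkeeping at a closed left endpoint $y=\alpha_k$ (or $y=\alpha_1$): there $\phi(j_0)=y$ for one or more $j_0\le m$, so the coefficient $\phi(j_0)-y$ vanishes and $x=0$ becomes a spurious root of $S_{j_0},S_{j_0+1},\dots,S_M$ for the least such $j_0$. As in Ku\'s argument one must pass to the auxiliary polynomials obtained by dividing out the corresponding power of $x$ and re-verify that the sign sequence still produces the claimed $V(0)$. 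Relative to Lemma~\ref{lem:LBJPA}, the new feature is that this vanishing index $j_0$ may be interior rather than at the junction of the sign blocks, and may occur with multiplicity in $\{\phi(j):1\le j\le m\}$; the hypothesis $\alpha_k\ne\alpha_{k+1}$ ensures the half-open interval is non-degenerate and isolates the correct endpoint, and confirming that the alternation of $\sgn(\tilde S(0))$ survives this non-monotone pattern is the one computation genuinely beyond the $\e>-1/2$ case.
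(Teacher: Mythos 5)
Your proposal is correct and follows essentially the same route as the paper: the paper's own proof of Lemma~\ref{lem:LBJPAneg} is a one-sentence reduction to the Ku\'s-type sign-variation argument of Lemma~\ref{lem:LBJPA} ``by replacing $k$ with $m+k$'' together with the observation that $\alpha_i<0$ for $i=1,\dots,m-1$ when $y\ge 0$, which is precisely your computation of $V(0)=\#\{i:\phi(i)>y\}$ via the parabola $\phi(i)=i(i+2\e)$, carried out in more detail. The one step you describe without fully executing --- the division-by-$x$ bookkeeping when $y$ sits at a closed endpoint $\alpha_k$ --- is handled by the same auxiliary polynomials $\tilde S$ as in Lemma~\ref{lem:LBJPA} and is likewise left implicit in the paper.
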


\begin{proof}
  The proof is done in the same manner as in Lemma \ref{lem:LBJPA} by
  replacing \(k\) with \(m+k\), and by noticing that for \(i=1,2,\ldots, m-1\),
  it holds that  \(\alpha_i <0 \) for any \(y \geq 0\). 
\end{proof}

\begin{rem}
  The numbers \((m+k)(m+k+2\e) \), for \(k\) \((0\leq k <N)\), in the proposition are also given
  by \((k - \floor{2\e})(k + \fract{2\e})\).
\end{rem}

The case of negative half-integer \(\e\) of Theorem \ref{LB} follows directly from Theorem \ref{thm:pos}.

\begin{cor}
  \label{cor:negepshalfinteger}
  Let \(N \in \Z_{\geq 0} \) and \(\ell \in \Z_{>0}\) satisfying \( N - \ell \geq 0\).
  For each \(k (0\leq k <N)\), there are exactly $N-k$ positive roots
  (in the variable $x$) of the constraint polynomial \(\cp{N+\ell,-\ell/2}{N+\ell}(x,y)\)
  for $y$ in the range
  \begin{equation*}
     k (\ell+k) \leq y < (k+1)(\ell+k+1).
   \end{equation*}
   Furthermore, when \( y \geq N (N+\ell) \), the polynomial
  \(\cp{N+\ell,-\ell/2}{N+\ell}(x,y)\) has no positive roots with respect to \(x\).
\end{cor}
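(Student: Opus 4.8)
The plan is to reduce the statement entirely to Theorem \ref{LB} applied with the positive parameter \(\e = \ell/2\), transporting the positive-root count across the divisibility relation of Theorem \ref{thm:div1}. First I would invoke the factorization
\[
\cp{N+\ell,-\ell/2}{N+\ell}(x,y) = A_N^\ell(x,y)\,\cp{N,\ell/2}{N}(x,y),
\]
valid for all \(N,\ell \in \Z_{\geq 0}\). For any fixed value of \(y\), this identity realizes the positive \(x\)-roots of the left-hand side as the union (with multiplicities) of the positive \(x\)-roots of the constraint polynomial \(\cp{N,\ell/2}{N}(x,y)\) and those of the divisor \(A_N^\ell(x,y)\). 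Thus the whole problem splits into a root count for \(\cp{N,\ell/2}{N}\) and a verification that the divisor contributes nothing.

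The key step is then to show that \(A_N^\ell(x,y)\) has no positive \(x\)-root throughout the ranges of \(y\) appearing in the statement. For \(y>0\) this is precisely Theorem \ref{thm:pos}, which gives \(A_N^\ell(x,y)>0\) whenever \(x,y>0\). The only boundary value requiring separate handling is \(y=0\), which occurs as the left endpoint of the first interval (the case \(k=0\)); here Lemma \ref{lem:detM} gives \(A_N^\ell(x,0) = \frac{(N+\ell)!}{N!}x^\ell\), which is strictly positive for \(x>0\). Hence for every \(y\geq 0\) lying in one of the stated intervals, the divisor is nonvanishing for \(x>0\), and so the positive \(x\)-roots of \(\cp{N+\ell,-\ell/2}{N+\ell}(x,y)\), counted with multiplicity, coincide exactly with those of \(\cp{N,\ell/2}{N}(x,y)\).

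Finally I would apply Theorem \ref{LB} to \(\cp{N,\e}{N}(x,y)\) with \(\e = \ell/2 > -\tfrac12\). Setting \(2\e = \ell\) converts its interval \(k(k+2\e)\le y < (k+1)(k+1+2\e)\) into \(k(\ell+k)\le y < (k+1)(\ell+k+1)\) and its threshold \(N(N+2\e)\) into \(N(N+\ell)\), while the corresponding positive-root count is \(N-k\) on each interval and \(0\) once \(y \ge N(N+\ell)\). Combined with the root-transfer established via the positivity of the divisor, this yields the asserted counts for \(\cp{N+\ell,-\ell/2}{N+\ell}(x,y)\) under the stated hypotheses, completing the argument. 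The only point requiring genuine care — the main (if modest) obstacle — is the bookkeeping at the closed left endpoints of the intervals, in particular confirming that \(A_N^\ell\) introduces no spurious positive roots at \(y=0\); everything else is a direct substitution into Theorem \ref{LB}.
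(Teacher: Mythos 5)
Your argument is correct and is essentially the paper's own proof: factor via Theorem \ref{thm:div1}, discard the divisor using the positivity of $A_N^\ell(x,y)$ for $x,y>0$, and read off the count from Theorem \ref{LB} with $\e=\ell/2$. Your explicit treatment of the endpoint $y=0$ via Lemma \ref{lem:detM} is a small extra care that the paper's one-line proof leaves implicit, but it does not change the route.
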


\begin{rem}
  Recall that for the case \( N - \ell < 0\) the constraint polynomials \(\cp{N+\ell,-\ell/2}{N+\ell}(x,y)\)
  have no positive roots (cf. Proposition~\ref{prop:pos2}).
\end{rem}

\begin{proof}
  Since \(\cp{N+\ell,-\ell/2}{N+\ell}(x,y) = A_N^{\ell}(x,y) \cp{N,\ell/2}{N}(x,y)\), the result
  follows immediately from Theorem \ref{LB} and the fact that \(A_N^{\ell}(x,y)\) has
  no positive roots for \(x,y>0\). 
\end{proof}

In the case of non-half integral \(\e< 0\), there is no analog of the polynomial
\(A_N^{\ell}(x,y)\). Nevertheless, we expect that the following conjecture holds.

\begin{conject}
  Suppose \(\e < 0 \) and set \(m = \max(0,- \floor{2\epsilon}) \).
  For each \(k \, (0\leq k <N)\), there are exactly $N-k$ positive roots
  (in the variable $x$) of the constraint polynomial \(\cp{N+m,\e}{N+m}(x,y)\)
  for $y$ in the range
  \begin{equation*}
     (m+k)(m+k+2\e) \leq y < (m+k+1)(m+k+1 +2\e).
   \end{equation*}
   Furthermore, when \( y \geq (N+m)(N+m +2\e) \), the polynomial
  \(\cp{N+m,\e}{N+m}(x,y)\) has no positive roots with respect to \(x\).
\end{conject}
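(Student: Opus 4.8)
The plan is to derive the statement by matching the lower bound already contained in Lemma \ref{lem:LBJPAneg} with a companion upper bound produced, exactly as in the regime \(\e>-\tfrac12\), by Descartes' rule of signs applied to \(\cp{N+m,\e}{N+m}(x,y)\) regarded as a polynomial in \(x\). Writing \(\cp{N+m,\e}{N+m}(x,y)=\sum_{i=0}^{N+m}a_i^{(N+m)}(y)x^i\), the argument of Lemma \ref{lem:Upperbound} reduces everything to the following claim: for \(y\) in the range \((m+k)(m+k+2\e)\le y<(m+k+1)(m+k+1+2\e)\), the sign sequence \((\sgn a_{N+m}^{(N+m)}(y),\dots,\sgn a_0^{(N+m)}(y))\) has exactly \(N-k\) sign changes. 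Since Lemma \ref{lem:LBJPAneg} already guarantees at least \(N-k\) positive roots while Descartes' rule bounds the number of positive roots above by the number of sign changes, with an even defect, a count of exactly \(N-k\) sign changes forces exactly \(N-k\) positive roots; the vanishing assertion for \(y\ge(N+m)(N+m+2\e)\) then follows as in Lemma \ref{lem:upperBN}.

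The sign-change count would in turn follow verbatim from the proof of Lemma \ref{lem:Upperbound} once the coefficient polynomials \(a_i^{(N+m)}(y)\) are known to have real, interlacing roots. The relevant arithmetic is favorable: all thresholds \((m+k)(m+k+2\e)\) with \(0\le k<N\) are nonnegative, because \(m+2\e=\fract{2\e}\ge 0\) gives \(m+k\ge -2\e\); these are precisely the nonnegative roots of \(a_0^{(N+m)}(y)=\prod_{j=1}^{N+m}\bigl(y-j(j+2\e)\bigr)\), and as \(y\) crosses each of them a single new sign change should enter at the right end of the sequence. Thus the whole scheme rests on the analog of Lemma \ref{lem:rootcoeff} for \(\e<-\tfrac12\), namely the interlacing of the \(a_i^{(N+m)}(y)\).

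Here lies the main obstacle. For \(\e>-\tfrac12\) this interlacing came from membership of the constraint polynomial in the class \(\Ptwo\) (Corollary \ref{cor:detsym}), via the symmetric tridiagonal determinant expression whose off-diagonal entries involve \(\sqrt{c_{N-i}^{(\e)}}\) with \(c_j^{(\e)}=j(j+2\e)\). For \(\e<-\tfrac12\) not a half-integer these entries cease to be real: one has \(c_j^{(\e)}<0\) exactly for \(0<j<-2\e\), so in \(\cp{N+m,\e}{N+m}(x,y)\) the products \(b_ic_i=i(i+1)c_{N+m-i}^{(\e)}\) are negative for the \(m-1\) indices \(i\in\{N+1,\dots,N+m-1\}\). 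The tridiagonal matrix is then no longer symmetrizable over \(\R\), the \(\Ptwo\) machinery is unavailable, and neither real-rootedness in \(x\) (cf. Theorem \ref{thm:allrealroots}, which requires \(\e>-\tfrac12\)) nor the interlacing of the \(a_i^{(N+m)}(y)\) is guaranteed a priori; indeed Example \ref{ex:nonrealroots} already exhibits complex roots in a non-\(\Ptwo\) situation.

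To circumvent this I would run a deformation argument in \(\e\). On each interval \(2\e\in[-m,-m+1)\) the integer \(m\) is constant, and its left endpoint \(\e=-m/2\) is the half-integer case \(\ell=m\), where the exact count is already known from Corollary \ref{cor:negepshalfinteger} (and where the ranges \((m+k)(m+k+2\e)\le y<(m+k+1)(m+k+1+2\e)\) specialize precisely to \(k(m+k)\le y<(k+1)(m+k+1)\)). Keeping \(y\) in the interior of a fixed range relative to the continuously moving thresholds \(j(j+2\e)\), the number of positive roots in \(x\) can change only when two such roots collide and leave the real axis, and the crux is to exclude these collisions. I expect this to be the hard part: the point to exploit is that the offending negative entries are confined to the fixed block of indices \(i\in\{N+1,\dots,N+m-1\}\), which corresponds to the small values \(j=N+m-i<m\) and hence only to the \emph{negative} thresholds \(j(j+2\e)<0\) lying outside the range of interest. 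Making rigorous the heuristic that this block cannot create complex positive roots as \(\e\) moves off the half-integer value—either by a perturbative expansion around the block-diagonal half-integer point, or by a direct recurrence-based sign analysis of the \(a_i^{(N+m)}(y)\) restricted to \(y\ge m(m+2\e)\)—is the decisive and, I anticipate, most delicate step.
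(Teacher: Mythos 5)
First, note that the paper does not prove this statement: it is presented explicitly as a conjecture, immediately after the remark that for non-half-integral $\e<0$ there is no analogue of the polynomial $A_N^{\ell}(x,y)$, so there is no proof in the paper to compare yours against, and your proposal does not close the gap either (as you yourself acknowledge). Your diagnosis of the obstruction is accurate and is essentially the reason the authors left the statement open: the lower bound of at least $N-k$ positive roots is already Lemma \ref{lem:LBJPAneg}, so the whole content of the conjecture is the upper bound, and the paper's upper-bound mechanism (Lemma \ref{lem:Upperbound}) rests on real-rootedness and interlacing of the coefficient polynomials $a^{(N+m)}_i(y)$ (Lemma \ref{lem:rootcoeff}), which comes from membership of the constraint polynomial in $\Ptwo$ via the real-symmetric tridiagonal form of Corollary \ref{cor:detsym}. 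You correctly locate where this fails for $\e<-\tfrac12$: the off-diagonal products $i(i+1)c^{(\e)}_{N+m-i}$ are negative exactly for $i\in\{N+1,\dots,N+m-1\}$, the continuant is no longer symmetrizable over $\R$, and neither Theorem \ref{thm:allrealroots} nor the $\Ptwo$ interlacing applies.

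The genuine gap is in your fallback. A deformation in $\e$ from the half-integer endpoint $\e=-m/2$, where Corollary \ref{cor:negepshalfinteger} supplies the exact count, requires proving that the number of positive roots in $x$ is locally constant along the homotopy, and that is precisely the step you do not carry out. Two points make it harder than the proposal suggests. First, the lower bound of Lemma \ref{lem:LBJPAneg} cannot substitute for this: it only prevents the count from falling below $N-k$, whereas you must also exclude a complex-conjugate pair of roots landing on $(0,\infty)$ and raising the count by $2$, a direction the proposal never addresses. Second, the heuristic that the ``bad block'' of indices corresponds only to the negative thresholds $j(j+2\e)<0$ lying outside the $y$-range of interest is not an argument: the roots of the full continuant depend globally on all the entries, not on that block in isolation, so confinement of the sign defect to those indices does not by itself control where roots of $\cp{N+m,\e}{N+m}(x,y)$ can collide. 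Until either the no-collision/no-landing statement is proved, or real-rootedness and interlacing of the $a^{(N+m)}_i(y)$ are established for $y\ge m(m+2\e)$ by some substitute for the $\Ptwo$ machinery, the statement remains exactly what the paper calls it: a conjecture.
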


\section{Further discussion on the spectrum of AQRM}\label{sec:excepteigen}

In this section, we give additional results related to the spectrum of the AQRM. In particular, we define the
constraint function for non-Juddian exceptional eigenvalues, the constraint \(T\)-function \(T_\e^{(N)}(g,\Delta)\).
In addition, by studying the $G$-function and its poles, we define a new $G$-function $\mathcal{G}_\e(x;g,\Delta)$ that
captures the complete spectrum of AQRM.

\subsection{Non-Juddian exceptional solutions} \label{sec:NonJuddian}\label{Non-Juddian}

In this subsection, we study the constraint relation for non-Juddian exceptional eigenvalues.
Recall from \S \ref{sec:Gfunct} that the zeros of the \(G\)-function \(G_{\e}(x;g,\Delta)\)
corresponds to points of the regular spectrum \(\lambda = x - g^2 \).
Similarly, zeros of the constraint polynomial \(\cp{N,\e}{N}(x,y)\)
correspond to exceptional eigenvalues \(\lambda = N + \e  - g^2\) with Juddian solutions.

For non-Juddian exceptional eigenvalues, we define a constraint \(T\)-function \(T_\e^{(N)}(g,\Delta)\)
that vanishes for parameters \(g\) and \(\Delta\) for which \(\HRabi{\e}\) has the exceptional
eigenvalue \(\lambda = N + \e  - g^2\) with non-Juddian solution (see \cite{B2013AP} for the case
of the quantum Rabi model).

In order to define the function \(T_\e^{(N)}(g,\Delta)\), we first describe the local Frobenius solutions of system
of differential equations \eqref{eq:system1} and \eqref{eq:system2p} at the regular singular points $y=0,1$
(cf. \S \ref{sec:LargestExponent}).

Define the functions \(\phi_{1,\pm}(y;\e)\) as follows:
\begin{align}
  \phi_{1,+}(y;\e)&=\frac{(N+1)}{\Delta} y^N - \Delta \sum_{n=N+1}^{\infty} \frac{\bar{K}^-_n(N+\e;g,\Delta,\e)}{n-N} y^n, \label{eq:solphi1+} \\
  \phi_{1,-}(y;\e)&=\sum_{n=N+1}^{\infty} \bar{K}^-_n(N+\e;g,\Delta,\e) y^n, \label{eq:solphi1-}
\end{align}
with initial conditions \(\bar{K}^-_n(N+\e;g,\Delta,\e) = 0 \, (n \leq N)\), \(\bar{K}^-_{N+1}(N+\e;g,\Delta,\e) = 1\) and
\begin{align*}
  &(n+1) \bar{K}^-_{n+1}(N+\e;g,\Delta,\e) \\
  &\, =  \left(n - N + (2g)^2 - 2\e + \frac{\Delta^2}{N-n}   \right) \bar{K}^-_{n}(N+\e;g,\Delta,\e) - (2g)^2 \bar{K}^-_{n-1}(N+\e;g,\Delta,\e),
\end{align*}
for \(n \geq N+1\). Then,  ${}^t(\phi_{1,+}(y;\e),\phi_{1,-}(y;\e))$ is the local Frobenius solution corresponding to the
largest exponent of the system \eqref{eq:system1} at $y=0$.

Next, consider the solutions at $y=1$. For the case \(N + 2 \e \not\in \Z_{\geq 0} \) (i.e. \(\e \not\in \frac12 \Z\)
or \( \e = -\ell/2 \, (\ell \in \Z_{\geq 0})\) and \(N - \ell < 0 \) ) we define
\begin{align}
  \phi_{2,+}(\bar{y};-\e)&= \Delta\sum_{n = 0}^{\infty} \frac{\bar{K}_n^+(N+\e;g,\Delta,\e) }{N+2\e-n}\bar{y}^n, \\
  \phi_{2,-}(\bar{y};-\e)& = \sum_{n= 0}^{\infty} \bar{K}_n^+(N+\e;g,\Delta,\e) \bar{y}^n,
\end{align}
with initial conditions \(\bar{K}^+_n(N+\e;g,\Delta,\e) = 0\,(n<0)\), \(\bar{K}^+_0(N+\e;g,\Delta,\e) = 1\), while for the case  \(N + 2 \e \in \Z_{\geq 0} \) (i.e.
\(\e = \ell/2 \, (\ell \in \Z_{\geq 0})\) or \(\e = -\ell/2 \, (\ell \in \Z_{\geq 0})\) and \(N - \ell \geq 0 \)) we define
\begin{align}
  \phi_{2,+}(\bar{y};-\ell/2)&= \frac{(N+\ell+1)}{\Delta} \bar{y}^{N+\ell}  - \Delta\sum_{n=N+\ell + 1}^{\infty}
                        \frac{\bar{K}_n^+(N+\ell/2;g,\Delta,\ell/2)}{n-N-\ell} \bar{y}^n, \\
  \phi_{2,-}(\bar{y};-\ell/2)& = \sum_{n=N+\ell + 1}^{\infty} \bar{K}^+_n(N+\ell/2;g,\Delta,\ell/2) \bar{y}^n,
\end{align}
with initial conditions \(\bar{K}^+_{n}(N+\ell/2;g,\Delta,\ell/2) = 0 \, (n \leq N + \ell) \), \(\bar{K}^+_{N+\ell+1}(N+\ell/2;g,\Delta,\ell/2) = 1\)
and in both cases the coefficients satisfy
\begin{align*}
  \left(n - N + (2g)^2 + \frac{\Delta^2}{N+2\e-n}   \right)& \bar{K}^+_{n}(N+\e;g,\Delta,\e) - (2g)^2 \bar{K}^+_{n-1}(N+\e;g,\Delta,\e) \\
   &= (n+1) \bar{K}^+_{n+1}(N+\e;g,\Delta,\e) 
\end{align*}
Then ${}^t(\phi_{2,+}(\bar{y};-\e),\phi_{2,-}(\bar{y};-\e))$ is the local Frobenius solution of the system \eqref{eq:system2p}
at $\bar{y}=0$, where $\bar{y}=1-y$.

Note also that the radius of convergence of each series appearing above equals $1$. Moreover, the solutions can be expressed in
terms of the confluent Heun functions (see e.g. \cite{MPS2013,Le2016,ZXGBGL2014}).

A similar discussion to \cite{B2013AP} (see also \cite{B2011PRL-OnlineSupplement}) leads to the following set of equations to assure
the existence of the non-Juddian exceptional solutions. Actually, the eigenvalue equation for $\HRabi{\e}$, that is  \eqref{eq:system-1},
is equivalent via embedding to the system of differential equations given by
\begin{equation}\label{eq:embedding}
  \frac{d}{dz} \Psi(z)= A(z)\Psi(z),
\end{equation}
where
\begin{equation}
  A(z)=
  \begin{bmatrix}
    \frac{\lambda-\e-gz}{z+g} & 0 &   0 & \frac{-\Delta}{z+g} \\
    0 & \frac{\lambda+\e-gz}{z+g}  & \frac{-\Delta}{z+g} &0 \\
    0 &  \frac{-\Delta}{z-g} & \frac{\lambda-\e+gz}{z-g}  &0  \\
    \frac{-\Delta}{z-g}   & 0 & 0 & \frac{\lambda+\e+gz}{z-g}
  \end{bmatrix},
\end{equation}
for the vector valued function
\begin{align*}
  \Psi(z): ={}^t\Big(&e^{-gz}\phi_{1,+}\big(\frac{g+z}{2g};\e\big), e^{gz} \phi_{1,-}\big(\frac{g-z}{2g};\e\big),e^{gz} \phi_{1,+}\big(\frac{g-z}{2g};\e\big), e^{-gz} \phi_{1,-}\big(\frac{g+z}{2g};\e\big)\Big).
\end{align*}
  
It is not difficult to see that the function
\begin{align*}
  \Phi(z):={}^t\Big(&e^{gz} \phi_{2,-}\big(\frac{g-z}{2g};-\e\big), e^{-gz} \phi_{2,+}\big(\frac{g+z}{2g};-\e\big),e^{-gz}\phi_{2,-}\big(\frac{g+z}{2g};-\e\big), e^{gz} \phi_{2,+}\big(\frac{g-z}{2g};-\e\big)\Big)
\end{align*}
also satisfies \eqref{eq:embedding}. Hence, in order for a non-Juddian exceptional solution to exist it is necessary and sufficient that
for some $z_0\, (-g<z_0<g)$ (an ordinary point of the system), there exists a non-zero constant $c=c_N(g,\Delta, \e)$ and such that
\begin{equation}\label{eq:4equations}
  \left\{
    \begin{aligned}
      e^{-gz_0}\phi_{1,+}\big(\frac{g+z_0}{2g};\e\big) &= c\, e^{gz_0}\phi_{2,-}\big(\frac{g-z_0}{2g};-\e\big),\\
      e^{gz_0} \phi_{1,-}\big(\frac{g-z_0}{2g};\e\big) &= c\, e^{-gz_0}\phi_{2,+}\big(\frac{g+z_0}{2g};-\e\big),\\
      e^{gz_0} \phi_{1,+}\big(\frac{g-z_0}{2g};\e\big) &= c\, e^{-gz_0}\phi_{2,-}\big(\frac{g+z_0}{2g};-\e\big),\\
      e^{-gz_0} \phi_{1,-}\big(\frac{g+z_0}{2g};\e\big) &= c\, e^{gz_0}\phi_{2,+}\big(\frac{g-z_0}{2g};-\e\big).
    \end{aligned} \right.
\end{equation}
For $z_0=0$, it is obvious that the first and third, and the second and forth equations are equivalent respectively. Namely, the four
equations reduce to the following two equations when $y=\bar{y}=\frac12$.
\begin{equation}\label{eq:constraint}
  \left\{
  \begin{aligned}
    \phi_{1,-}(y;\e) & = c\, \phi_{2,+}(\bar{y};-\e) =  c\, \phi_{2,+}(1-y;-\e),\\
    \phi_{1,+}(y;\e) & = c\, \phi_{2,-}(\bar{y};-\e)=c\, \phi_{2,-}(1-y;-\e).
  \end{aligned} \right.
\end{equation}
for some non-zero constant \(c\) (as can be seen by applying the substitutions \(y \to \bar{y}=1-y\) and \(\e \to -\e \)
to the system \eqref{eq:systemN}). Therefore, by setting \(y = 1/2 \) (\(z = 0 \) in the original variable, an ordinary point of the system)
and eliminating the constant \(c\) in these linear relations gives the following  constraint \( T\)-function
\begin{equation} \label{eq:exceptGfuncDef}
  T_\e^{(N)}(g,\Delta) = \left(\bar{R}^{(N,+)} \cdot \bar{R}^{(N,-)} - R^{(N,+)} \cdot R^{(N,-)} \right)(g,\Delta;\e),
\end{equation}
where \(\cdot\) denotes the usual multiplication of functions and 
\begin{align}
  \label{eq:gfuncR_Nminus}
  \bar{R}^{(N,-)}(g,\Delta;\e) &= \phi_{1,+}\Bigl(\frac12;\e\Bigr),   &\bar{R}^{(N,+)}(g,\Delta;\e) &=\phi_{2,+}\Bigl(\frac12;-\e\Bigr),  \\
  R^{(N,-)}(g,\Delta;\e) &=  \phi_{1,-}\Bigl(\frac12;\e\Bigr),  &R^{(N,+)}(g,\Delta;\e)  &= \phi_{2,-}\Bigl(\frac12;-\e\Bigr).
\end{align}

Conversely, if there exists such $c=c_N(g,\Delta,\e)(\not=0)$, $\lambda=N+\e -g^2$ is a non-Juddian exceptional eigenvalue and the corresponding
functions \((\phi_{j,+}, \phi_{j,-}, \, j=1,2)\) satisfy \eqref{eq:constraint} and \eqref{eq:4equations}
(cf. \cite{Ince1956}).

\begin{rem}
  When $\e=0$ we observe that
  \[
    T_0^{(N)}(g,\Delta)= \left(\bar{R}^{(N,+)} -R^{(N,+)} \right) \cdot \left(\bar{R}^{(N,+)} +R^{(N,+)}\right)(g,\Delta,0)
  \]
  since $R^{(N,+)}(g,\Delta,0) = R^{(N,-)}(g,\Delta,0)$ and $\bar{R}^{(N,+)}(g,\Delta,0) = \bar{R}^{(N,-)}(g,\Delta,0))$.
\end{rem}

\begin{rem}
  By Corollary \ref{DegenerateStructure}, for any fixed $\Delta>0$, there are no common zeros between the constraint polynomial
  $P_N^{(N, \e)}((2g)^2, \Delta^2)$ and the $T$-function $T_{\e}^{(N)}(g,\Delta)$.
\end{rem}

In the same manner, we can define a  $T$-function \( \tilde{T}_\e^{(N)}(g,\Delta)\) that vanishes for values \(g,\Delta\) corresponding to
the non-Juddian exceptional eigenvalue \(\lambda = N - \e -g^2\). Clearly, we have $\tilde{T}_0^{(N)}(g,\Delta) = T_0^{(N)}(g,\Delta)$, and in general
it is straightforward to verify that the identity
\begin{equation} \label{eq:GNepsAndGNmeps}
  \tilde{T}_{\e}^{(N)}(g,\Delta) = T_{-\e}^{(N)}(g,\Delta)
\end{equation}
holds (up to a constant) as in the case of  $\tilde{P}_N^{(N,\e)}((2g)^2, \Delta^2)$ (see \cite{W2016JPA} and also \cite{LB2015JPA}).

We consider the particular case of $\e=\ell/2\, (\ell\in \Z_{\geq 0})$. Then, from \eqref{eq:constraint} we have
$\phi_{1,-}(y;\ell/2) =  c \phi_{2,+}(1-y;-\ell/2)$ and $\phi_{1,+}(y;\ell/2) =c \phi_{2,-}(1-y;-\ell/2)$. This shows that the non-Juddian
exceptional solution corresponding to $\lambda=(N+\ell)-\ell/2-g^2=N +\ell/2-g^2$ whose existence is guaranteed by the constraint
equation \(T_{\ell/2}^{(N)}(g,\Delta) = 0\) (resp. \(\tilde{T}_{\ell/2}^{(N+\ell)}(g,\Delta) = 0\)) are identical up to a scalar multiple.
Since the non-Juddian exceptional solution is non-degenerate, the compatibility of this fact, that is, that $T_{\ell/2}^{(N)}(g,\Delta)$
and $\tilde{T}_{\ell/2}^{(N+\ell)}(g,\Delta)$ have the same zero with respect to $g$ for a fixed $\Delta$, is confirmed by the lemma below.

\begin{lem}\label{lem:const-T-rel}
  For $\ell,N \in \Z_{\geq 0}$ we have
  \begin{align}\label{eq:const-T-rel}
    \tilde{T}_{\ell/2}^{(N+\ell)}(g,\Delta) =T_{\ell/2}^{(N)}(g,\Delta).
  \end{align}
\end{lem}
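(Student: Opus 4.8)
The plan is to reduce the identity to one purely about the constraint $T$-function $T_\e^{(N)}$ and then to match the Frobenius data defining the two sides. By the symmetry relation \eqref{eq:GNepsAndGNmeps}, taken at $\e=\ell/2$ and index $N+\ell$, we have $\tilde{T}_{\ell/2}^{(N+\ell)}(g,\Delta)=T_{-\ell/2}^{(N+\ell)}(g,\Delta)$; the eigenvalue attached to the right-hand side is $(N+\ell)+(-\ell/2)-g^2=N+\ell/2-g^2$, exactly that of $T_{\ell/2}^{(N)}$. Thus it suffices to prove $T_{-\ell/2}^{(N+\ell)}(g,\Delta)=T_{\ell/2}^{(N)}(g,\Delta)$.

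The core step is to identify the four Frobenius data entering the two $T$-functions through \eqref{eq:gfuncR_Nminus}; write $\phi_{j,\pm}(\,\cdot\,;\e)\big|_{M}$ for the local solution built with index $M$. I would compare the three-term recurrences and initial conditions of the coefficients $\bar{K}^{\pm}_n$ recorded in \S\ref{sec:NonJuddian}. Substituting $M=N+\ell$ and $\e=-\ell/2$ into the $\bar{K}^-$-recurrence for $\phi_{1,\pm}(\,\cdot\,;-\ell/2)$, the shift $n-(N+\ell)$ and the term $-2\e=+\ell$ combine, leaving
\[
(n+1)\bar{K}^-_{n+1}=\Bigl(n-N+(2g)^2+\tfrac{\Delta^2}{(N+\ell)-n}\Bigr)\bar{K}^-_{n}-(2g)^2\bar{K}^-_{n-1};
\]
this is term-for-term the $\bar{K}^+$-recurrence defining $\phi_{2,\pm}(\,\cdot\,;-\ell/2)\big|_{N}$ (for which $-\e=-\ell/2$, i.e. $\e=\ell/2$ and $N+2\e=N+\ell$). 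Since $N+2\e=N+\ell\in\Z_{\geq 0}$ the latter is of the ``second form'', whose series starts at $n=N+\ell+1$ with leading coefficient $1$, matching the normalization of the largest-exponent solution $\phi_{1,-}(\,\cdot\,;-\ell/2)\big|_{N+\ell}$. Hence $\bar{K}^-_n=\bar{K}^+_n$ for all $n$, and reading \eqref{eq:solphi1+}--\eqref{eq:solphi1-} against the second-form definition yields $\phi_{1,\pm}(t;-\ell/2)\big|_{N+\ell}=\phi_{2,\pm}(t;-\ell/2)\big|_{N}$ as power series in $t$. An entirely parallel computation shows the recurrence for $\phi_{2,\pm}(\,\cdot\,;+\ell/2)\big|_{N+\ell}$ (again a ``second form'', since $(N+\ell)+2(-\ell/2)=N\in\Z_{\geq 0}$) reduces to $(n+1)K_{n+1}=(n-N-\ell+(2g)^2+\tfrac{\Delta^2}{N-n})K_n-(2g)^2K_{n-1}$, which is exactly the $\bar{K}^-$-recurrence of $\phi_{1,\pm}(\,\cdot\,;\ell/2)\big|_{N}$; both start at $n=N+1$ with coefficient $1$, whence $\phi_{2,\pm}(t;+\ell/2)\big|_{N+\ell}=\phi_{1,\pm}(t;\ell/2)\big|_{N}$.

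Finally I would substitute these four identities, evaluated at the fixed ordinary point $t=\tfrac12$, into the definition \eqref{eq:exceptGfuncDef}. Expanding $T_{-\ell/2}^{(N+\ell)}=\phi_{2,+}(\tfrac12;\ell/2)\phi_{1,+}(\tfrac12;-\ell/2)-\phi_{2,-}(\tfrac12;\ell/2)\phi_{1,-}(\tfrac12;-\ell/2)$ (all at index $N+\ell$) and replacing each factor by its index-$N$ counterpart turns it into $\phi_{1,+}(\tfrac12;\ell/2)\phi_{2,+}(\tfrac12;-\ell/2)-\phi_{1,-}(\tfrac12;\ell/2)\phi_{2,-}(\tfrac12;-\ell/2)$, which is $T_{\ell/2}^{(N)}$ after reordering the products. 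Together with the first paragraph this gives \eqref{eq:const-T-rel}.

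The main obstacle I anticipate is the bookkeeping in the second paragraph: one must track how the combined substitution $\e\mapsto-\e$, $N\mapsto N+\ell$ acts on the recurrence coefficients and verify that both $\phi_2$-solutions fall in the $N+2\e\in\Z_{\geq 0}$ branch, so that the starting indices and leading normalizations align and the coefficients coincide on the nose. Once this matching is secured the identity of the $T$-functions is immediate; the only remaining care is to confirm that the normalizations built into \eqref{eq:gfuncR_Nminus} and \eqref{eq:GNepsAndGNmeps} are compatible, so that one obtains genuine equality rather than equality up to a nonzero constant.
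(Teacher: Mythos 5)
Your proposal is correct and follows essentially the same route as the paper's proof: reduce via $\tilde{T}_{\e}^{(N)}=T_{-\e}^{(N)}$, establish the coefficient identity $\bar{K}_n^{\pm}(\,\cdot\,;g,\Delta,-\ell/2)=\bar{K}_n^{\mp}(\,\cdot\,;g,\Delta,\ell/2)$ (hence $\bar{R}^{(N+\ell,\pm)}(g,\Delta,-\ell/2)=\bar{R}^{(N,\mp)}(g,\Delta,\ell/2)$ and likewise for $R$), and substitute into \eqref{eq:exceptGfuncDef}. The only difference is that you spell out the recurrence and initial-condition matching that the paper dismisses with ``from the definitions,'' which is a reasonable amount of extra care.
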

\begin{proof}
  From the definitions, we have \(\bar{K}_n^{\pm}(N-\ell/2 ;g,\Delta,-\ell/2) =  \bar{K}_n^{\mp}(N+\ell/2;g,\Delta,\ell/2)\), therefore
  \begin{align*}
    \bar{R}^{(N+\ell,\pm)}(g,\Delta,-\ell/2) &= \bar{R}^{(N,\mp)}(g,\Delta,\ell/2), \\
    R^{(N+\ell,\pm)}(g,\Delta,-\ell/2) &= R^{(N,\mp)}(g,\Delta,\ell/2).
  \end{align*}
  Hence, it follows that
  \begin{align*}
    \tilde{T}_{\ell/2}^{(N+\ell)}(g,\Delta) &= T_{-\ell/2}^{(N+\ell)}(g,\Delta) \\
    &= \left(\bar{R}^{(N+\ell,+)} \cdot \bar{R}^{(N+\ell,-)}- R^{(N+\ell,+)} \cdot R^{(N+\ell,-)} \right) (g,\Delta;-\ell/2) \\
    &= \left(\bar{R}^{(N,-)} \cdot \bar{R}^{(N,+)} - R^{(N,-)} \cdot R^{(N,+)}\right)(g,\Delta;\ell/2) \\
    &= T_{\ell/2}^{(N)}(g,\Delta). 
  \end{align*} 
\end{proof}

By the discussion above, the condition \(T_\e^{(N)}(g,\Delta) = 0\) (resp. \(\tilde{T}_\e^{(N)}(g,\Delta) = 0\)) can be indeed be regarded as
the constraint equation for the exceptional eigenvalues \(\lambda = N + \e - g^2 \) (resp. \(\lambda = N - \e - g^2\)) with non-Juddian
exceptional solutions.

We illustrate numerically the constraint relations \(\cp{N,\e}{N}((2g)^2,\Delta^2) = 0 \) (for Juddian eigenvalues) and  \(T_\e^{(N)}(g,\Delta)= 0\)
(for non-Juddian exceptional eigenvalues) in Figure \ref{fig:constraintgraph3} showing the curves in the \((g,\Delta)\)-plane defined by these
relations for \(\e = 0.45\) and \(N = 3\). Concretely, Figures \ref{fig:constraintgraph3}(a) and \ref{fig:constraintgraph3}(b) depict the
graph of the curve \(G_\e(x,g,\Delta)=0 \) for the values \(x = 3.2 \) and \(x = 3.4 \), while Figure \ref{fig:constraintgraph3}(c) shows the
graph of the curve \(T_\e^{(3)}(g,\Delta)= 0\) in continuous line and \(\cp{3,\e}{3}((2g)^2,\Delta^2) = 0 \) in dashed line. Notice that as \(x \to 3.45\)
adjacent closed curves near the origin in the graph of \(G_\e(x,g,\Delta)=0 \) approach each other. Some of these curves join to form the closed
curves (ovals) of \(\cp{N,\e}{N}((2g)^2,\Delta^2) = 0 \), corresponding to Juddian eigenvalues, while others form curves in the graph of
\(T_\e^{(N)}(g,\Delta)= 0\), corresponding to non-Juddian exceptional eigenvalues. Also observe that we have ovals (corresponding to non-Juddian
solutions) near the origin of the graph in Figure \ref{fig:constraintgraph3}(c), some of them very close to dashed ovals (corresponding to
Juddian eigenvalues).

\begin{figure}[htb]
  ~
  \begin{subfigure}[b]{0.3\textwidth}
    \centering
    \includegraphics[height=4.5cm]{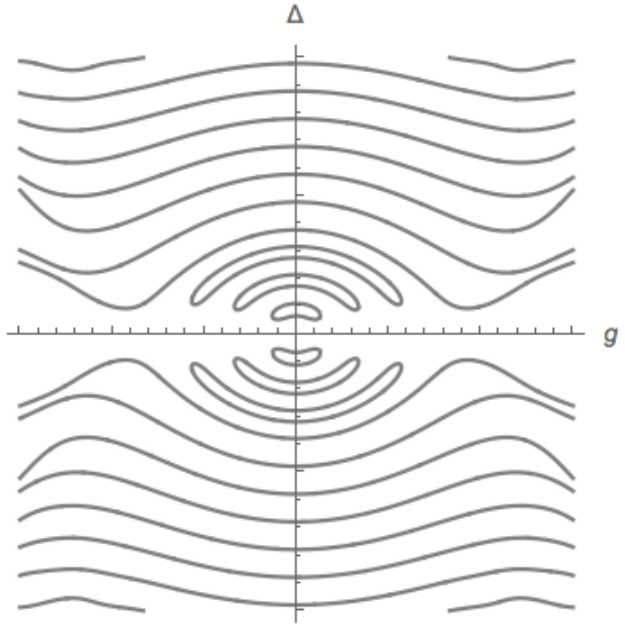}
    \caption{\(\e = 0.45,\, x= 3.2\)}
  \end{subfigure}
  ~
  \begin{subfigure}[b]{0.3\textwidth}
    \centering
    \includegraphics[height=4.5cm]{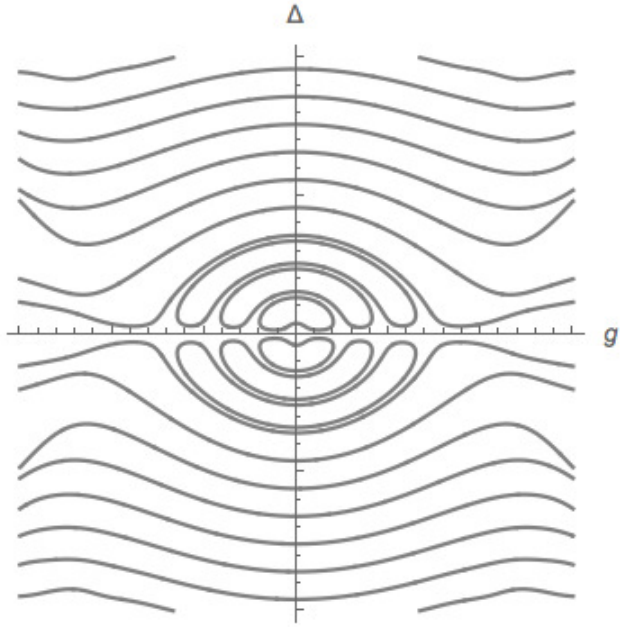}
    \caption{\(\e = 0.45,\, x = 3.4 \)}
  \end{subfigure}
  ~
  \begin{subfigure}[b]{0.32\textwidth}
    \centering
    \includegraphics[height=4.5cm]{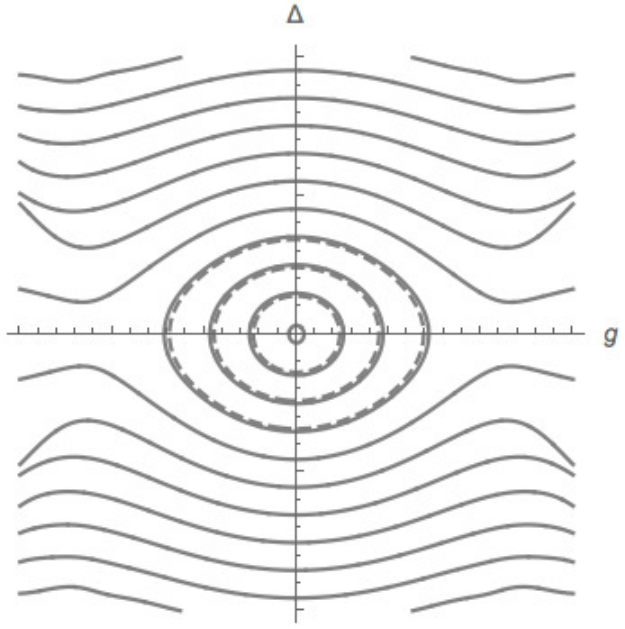}
    \caption{\(\e = 0.45, N = 3, x =N + \e\)}
  \end{subfigure}
  \caption{Curves of constraint relations for fixed regular eigenvalues ((a),(b)) and exceptional eigenvalues ((c)) for \( \e = 0.45\) for \(-3 \le g \le 3 \) and \(-10 \leq \Delta \le 10. \)}
  \label{fig:constraintgraph3}
\end{figure}

On the other hand, the case \(\e \in \frac12\Z_{\geq 0}\) is illustrated in Figure \ref{fig:constraintgraph2}.
As in the case above, Figures \ref{fig:constraintgraph2}(a) and \ref{fig:constraintgraph2}(b) depict the curves given by the
relation \(G_\e(x,g,\Delta)=0 \) for the values \(x = 3.2 \) and \(x = 3.4 \), while Figure \ref{fig:constraintgraph2}(c) shows the graph
of the curve \(T_\e^{(3)}(g,\Delta)= 0\) (\(N = 3\)) in continuous line and \(\cp{3,\e}{3}((2g)^2,\Delta^2) = 0 \) in dashed line.
Different from the case \(\e \not\in \frac12 \Z_{\geq 0}\) above, there are no continuous ovals (non-Juddian) near the origin in
Figure \ref{fig:constraintgraph2}(c). It is worth mentioning that Figure \ref{fig:constraintgraph3}(c) and Figure \ref{fig:constraintgraph2}(c)
support the conceptual graphs of Figure \ref{fig:excepteigen2}(a) and Figure \ref{fig:excepteigen2}(b) presented in the Introduction. Actually,
we can observe there are both dashed (Juddian) and continuous (non-Juddian) ovals when $\e=0.45$ in Figure \ref{fig:constraintgraph3}(c),
while the continuous ovals disappear when $\e=\frac12(\in \frac12\Z)$ in Figure \ref{fig:constraintgraph2}(c) (see Corollary \ref{NoNonJudd}
and its subsequent Remark \ref{no-non-Juddian}).

\begin{figure}[htb]
  ~
  \begin{subfigure}[b]{0.3\textwidth}
    \centering
    \includegraphics[height=4.5cm]{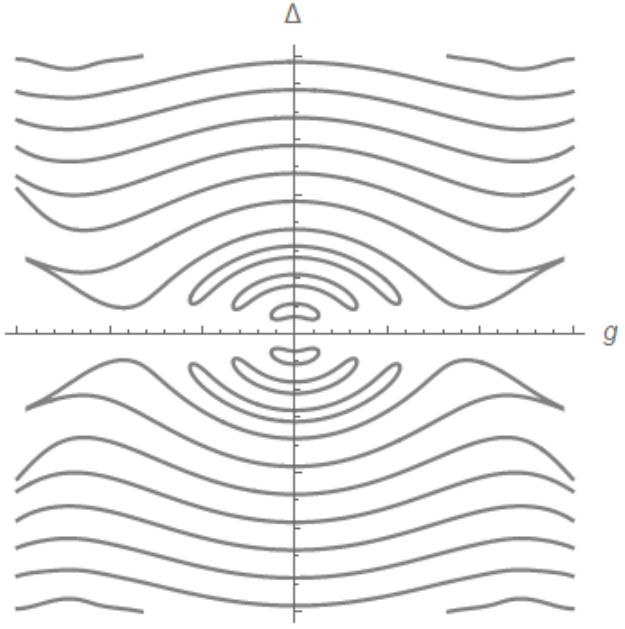}
    \caption{\(\e = \frac12, x= 3.2\)}
  \end{subfigure}
  ~
  \begin{subfigure}[b]{0.3\textwidth}
    \centering
    \includegraphics[height=4.5cm]{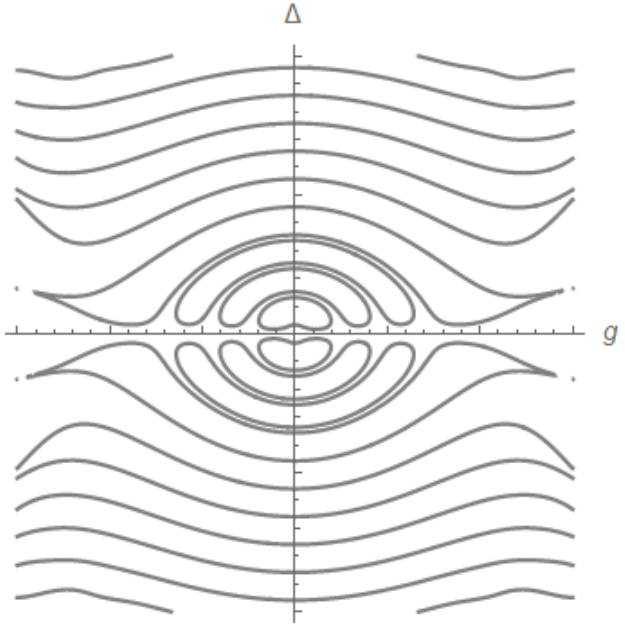}
    \caption{\(\e = \frac12,\, x = 3.4 \)}
  \end{subfigure}
  ~
  \begin{subfigure}[b]{0.32\textwidth}
    \centering
    \includegraphics[height=4.5cm]{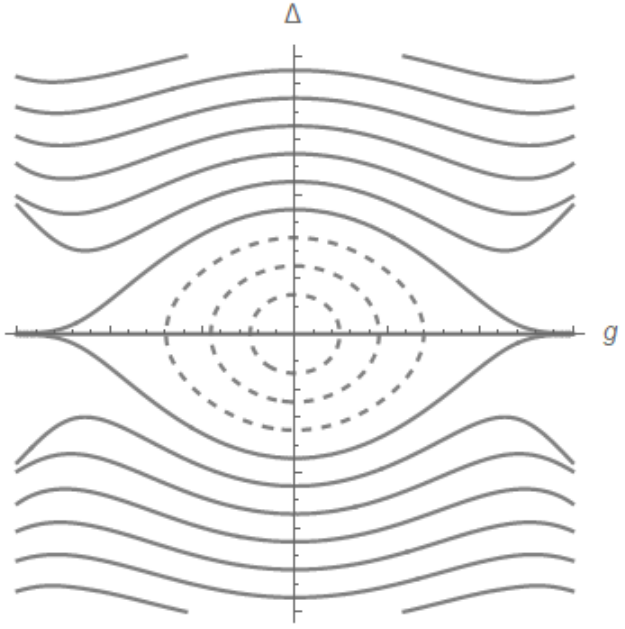}
    \caption{\(\e = \frac12, N = 3, x = N + \e\)}
  \end{subfigure}
  \caption{Curves of constraint relations for fixed regular eigenvalues ((a),(b)) and exceptional eigenvalues ((c)) for \( \e = \frac12\) for \(-3 \le g \le 3 \) and \(-10 \leq \Delta \le 10. \)}
  \label{fig:constraintgraph2}
\end{figure}

We next generalize Lemma \ref{G-function}, by including the exceptional eigenvalues. As a result, we see that the spectrum
of \(\HRabi{\e} \) does not depend on the sign of \(\e\).

\begin{prop}\label{prop:symmetry_spec}
  The spectrum of the Hamiltonian $\HRabi{\e}$ of AQRM depends only on $|\e|$. In other words, the spectrum of
  Hamiltonian $\HRabi{-\e}$ coincides with that of  $\HRabi{\e}$.
\end{prop}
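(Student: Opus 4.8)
The plan is to show directly that $\HRabi{\e}$ and $\HRabi{-\e}$ are unitarily equivalent, so that their spectra coincide as multisets; I will also indicate how the same conclusion follows by bookkeeping from the classification of the spectrum developed above, which is the route consistent with the $G$-function and constraint-function machinery of this section.

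For the direct argument, set $W = \sigma_z\, e^{i \pi a^\dag a}$, the product of the spin flip $\sigma_z$ and the bosonic parity $e^{i\pi a^\dag a}$. These two operators act on different tensor factors, hence commute, and $W$ is unitary with $W^{-1} = \sigma_z\, e^{-i\pi a^\dag a}$. Since $\sigma_z \sigma_x \sigma_z = -\sigma_x$ and $e^{i\pi a^\dag a}(a^\dag + a)e^{-i\pi a^\dag a} = -(a^\dag + a)$, while $a^\dag a$ and $\sigma_z$ are fixed by $W$, one checks term by term in \eqref{eq:aH} that the two sign changes cancel in the coupling $g\sigma_x(a^\dag+a)$ and survive in the fluctuation $\e\sigma_x$, giving $W \HRabi{\e} W^{-1} = \HRabi{-\e}$. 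The spectra therefore agree, including multiplicities, which establishes that the spectrum depends only on $|\e|$.

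Alternatively, I would verify the statement within the framework of this paper by partitioning the spectrum into the regular, Juddian exceptional and non-Juddian exceptional parts and checking that each class is invariant under $\e \mapsto -\e$. The regular part is immediate from Lemma \ref{G-function}, which gives $G_\e(x;g,\Delta) = G_{-\e}(x;g,\Delta)$ and hence an identical regular spectrum $\lambda = x - g^2$. For the Juddian part, the value $\lambda = N + \e - g^2$ occurs for $\HRabi{\e}$ exactly when $\cp{N,\e}{N}((2g)^2,\Delta^2)=0$; since $\tilde{P}_N^{(N,\e)} = \cp{N,-\e}{N}$, this is precisely the condition for $\lambda = N - (-\e) - g^2$ to be a Juddian eigenvalue of $\HRabi{-\e}$, and symmetrically for the branch $\lambda = N - \e - g^2$. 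For the non-Juddian part, the relation $\tilde{T}_{\e}^{(N)} = T_{-\e}^{(N)}$ of \eqref{eq:GNepsAndGNmeps} plays the same role, exchanging the two branches $\lambda = N \pm \e - g^2$ under $\e \mapsto -\e$.

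The only point requiring care in the second approach is that the three classes are exhaustive and mutually disjoint: every eigenvalue is either regular or exceptional, an exceptional eigenvalue on a given branch cannot be simultaneously Juddian and non-Juddian (this is where Corollary \ref{DegenerateStructure} is used), and the two branches must be matched correctly across the sign change. The unitary equivalence above sidesteps this bookkeeping entirely and matches multiplicities automatically, so I expect no genuine obstacle; the result is in essence a symmetry observation.
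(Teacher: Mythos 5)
Your proposal is correct, and your primary argument is genuinely different from the paper's. The paper proves the proposition by running through the classification of the spectrum: the regular part via Lemma \ref{G-function} ($G_\e = G_{-\e}$), the Juddian part via the observation that $\cp{N,\pm\e}{N}$ is also a constraint polynomial for $\HRabi{-\e}$ on the opposite branch, and the non-Juddian part via $\tilde{T}^{(N)}_{-\e} = T^{(N)}_{\e}$ — i.e., exactly your second route, including the branch exchange $N+\e \leftrightarrow N-(-\e)$. Your first route, conjugation by the unitary $W = \sigma_z e^{i\pi a^\dag a}$, is sound: the sign flips $\sigma_z\sigma_x\sigma_z = -\sigma_x$ and $e^{i\pi a^\dag a}(a^\dag+a)e^{-i\pi a^\dag a} = -(a^\dag+a)$ cancel in the coupling term and survive only in $\e\sigma_x$, giving $W\HRabi{\e}W^{-1} = \HRabi{-\e}$. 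This is a stronger statement (unitary equivalence, not merely equality of spectra), it handles multiplicities automatically, and it does not depend on the exhaustiveness and disjointness of the regular/Juddian/non-Juddian trichotomy — a point you rightly flag as the only delicate step in the classification route, and which the paper leaves implicit. What the paper's approach buys in exchange is finer information: it records how each class of eigenvalue and each constraint relation individually transforms under $\e\mapsto-\e$, which feeds directly into the subsequent discussion of the involution $(y,\e)\mapsto(y+2\e,-\e)$ and the hidden symmetry at half-integer $\e$; note also that Remark \ref{rem:symmetry_spec} already gestures at an alternative proof by comparing the two differential systems, which is close in spirit to your unitary argument.
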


\begin{proof}
  For the regular spectrum, since $G_{-\e}(x,g, \Delta)=G_\e(x,g, \Delta)$ by Lemma \ref{G-function} the result follows immediately.
  Moreover, since the constraint polynomials $P_N^{(N,\pm\e)}(x,y)$ of $\HRabi{\e}$ are also constraint polynomials $P_N^{(N, \mp(-\e))}(x,y)$ of
  $\HRabi{-\e}$, the result holds for Juddian eigenvalues as well.
  Finally, if $g$ is a positive zero of $T_\e^{(N)}(g, \Delta)$, that is, $\lambda=N+\e-g^2$ is a non-Juddian exceptional eigenvalue of $\HRabi{\e}$,
  then, $\lambda = N+\e-g^2 = N - (-\e) - g^2$ is also a non-Juddian exceptional eigenvalue of
  $\HRabi{-\e}$ since $g$ is actually a zero of $\tilde{T}^{(N)}_{-\e}(g,\Delta)(= T_\e^{(N)}(g, \Delta))$. Hence the assertion follows. 
\end{proof}

\,

\begin{rem}
The graphs in Figures \ref{fig:constraintgraph3} and \ref{fig:constraintgraph2} might explain the conical structure discussed in \cite{BLZ2015JPALetter}. It is therefore important to understand why ovals corresponding to the non-Juddian exceptional eigenvalues appear in Figure \ref{fig:constraintgraph3}(c).
\end{rem}
\,

\begin{rem}\label{rem:symmetry_spec}
  Proposition \ref{prop:symmetry_spec}  follows also from the comparison of the two systems of differential equations \eqref{eq:system1}
  and \eqref{eq:system2p}.
  It should be also noted that the proposition does not imply that  $\bar{\lambda} = N\mp\e -g^2$ is an eigenvalue of \(\HRabi{-\e} \) even if $\lambda = N\pm\e -g^2$ is  an eigenvalue of $\HRabi{\e}$.
\end{rem}

\begin{rem}
  For a fixed \(\Delta >0 \), define the involution \(\sigma : \R^2  \to \R^2 \)
  \begin{equation*}
    \sigma : (y, \e) \to (y+2\e,-\e).
  \end{equation*}
  Associating a tuple \((y,\e) \in \R^2\) to each eigenvalue \( \lambda = y + \e -g^2  \) of $\HRabi{\e}$, we easily see that
  the eigenvalues of \(\HRabi{\e} \) are invariant under \(\sigma\). For instance, if the eigenvalue \(\lambda\) is a regular, we have
  \(\lambda = x - g^2 \) for some \(x \in \R \, (x \not= N \pm  \e) \), thus \(y = x - \e \) and the image  \(\sigma(x-\e,\e) =  (x+\e,-\e)\)
  corresponds to same eigenvalue \( \lambda = (x+\e)-\e -g^2\) under this interpretation. The case of exceptional eigenvalues follows in a
  similar manner. It is an interesting problem to relate the involution \(\sigma\) with the identities \eqref{eq:conj},\eqref{eq:gfunctrel} and
  \eqref{eq:const-T-rel} when $\e$ is a half integer. Actually, it is widely believed among physicists (cf. \cite{AW_personal2017})
  that there must be a symmetry if there exist energy level crossings (i.e. spectral degeneration) like we have Juddian eigenvalues for
  a half-integral $\e$. See also e.g. \cite{GD2013JPA} for further discussion on the symmetry for the \(\e=0\) case.
\end{rem}

\subsection{Residues of the \texorpdfstring{$G$}{G}-function and spectral determinants of \texorpdfstring{$\HRabi{\e}$}{H}} \label{sec:SpectralDet}

In this subsection, we return to the discussion of the structure of poles of the $G$-function
started in \S \ref{sec:Gfunct}. This enables us to deepen the understanding of non-Juddian exceptional eigenvalues.
Also, we establish the relation between the $G$-function and the spectral determinant (i.e. the zeta regularized product  of the spectrum \cite{V1987CMP,QHS1993TAMS}) of the Hamiltonian $\HRabi{\e}$. This can be regarded as a mathematical refinement of the discussion partially made in \cite{LB2016JPA}.

Formally, to study the behavior of the $G$-function $G_\e(x;g,\Delta)$ at a point \(x = N \pm \e \, (N \in \Z_{\ge 0})\) we consider a sufficiently small
punctured disc centered at a fixed point \(x = N \pm \e \) and compute the residue of $G_\e(x;g,\Delta)$ as a function of the parameters \(g\) and
\(\Delta\). According to the value of the residue for the parameters \(g\) and  \(\Delta\) we classify the singularity as a removable singularity or a
pole. In the case of a removable singularity we consider the $G$-function $G_\e(x;g,\Delta)$ as a function defined at \(x = N \pm \e \) for the
particular parameters \(g\) and \(\Delta\). It is clear from the definition that the only singularities of $G_\e(x;g,\Delta)$ (as a function of \(x\))
appear at the points \(x = N \pm \e  \, (N \in \Z_{\geq 0})\) and that all singularities are either removable singularities or poles.
To simplify the notation, we say that a function has a pole of order \(\leq N\) when it has a removable singularity or a pole of order at most \(N\).

We consider the case \(\e \not\in \frac12 \Z \) and \(\e \in \frac12 \Z \) by separate. For the case of \(\e \not\in \frac12 \Z \), by the defining
recurrence formula \eqref{eq:RecursionKn}, we observe that the rational functions $K^{\mp}_n(x),$ for $ n\geq N+1$, have poles of order \(\leq 1\) at
$x=N\pm \e$. Hence, $G_\e(x;g,\Delta)$ has a pole of order \(\leq 1\) at $x=N\pm \e$. The residue of the $G$-function at a point \(x = N \pm \e \) is
given in the following result.

\begin{prop}\label{prop:polenhi1}
  Let \(\e \not\in \frac12 \Z \). Then any pole of the $G$-function $G_\e(x;g,\Delta)$ is simple.
  If \(N \in \Z_{\geq 0}\), the residue of $G_\e(x;g,\Delta)$ at the points \(x = N \pm \e \) is given by
  \begin{align*}
    \Res_{x = N \pm \e} G_\e(x;g,\Delta) &= C(N) \Delta^2 \cp{N,\pm \e}{N}((2g)^2,\Delta^2) T^{(N)}_{\pm \e} (g,\Delta),
  \end{align*}
  where \(C(N) = \frac{1}{N! (N+1)!} \).
\end{prop}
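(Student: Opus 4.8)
The plan is to prove the formula for the pole at $x = N+\e$ in detail and then deduce the case $x = N-\e$ from the symmetry $K^{\pm}_n(x,g,\Delta,-\e)=K^{\mp}_n(x,g,\Delta,\e)$ of \eqref{eq:signKpm} together with $G_\e = G_{-\e}$ (Lemma \ref{G-function}), which identifies the residue at $x=N-\e$ with the ``$+$'' computation carried out for the parameter $-\e$. For the pole order at $x=N+\e$, I would first note that since $\e\not\in\frac12\Z$, the rational functions $K^+_n(x)$ only acquire poles at $x=n-\e$, which never equals $N+\e$; hence $R^+(x)$ and $\bar{R}^+(x)$ are holomorphic there. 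By contrast, $f^-_N$ of \eqref{eq:fn} has a simple pole at $x=N+\e$, so by the recurrence \eqref{eq:RecursionKn} each $K^-_n(x)$ with $n\ge N+1$ has a simple pole, whence $R^-(x)$ and $\bar{R}^-(x)$ have simple poles. Writing $G_\e=\Delta^2\bar{R}^+\bar{R}^- - R^+R^-$ as (holomorphic)$\times$(simple pole) then shows the pole of $G_\e$ is simple.

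The core of the argument is a residue computation. Setting $\kappa_n:=\Res_{x=N+\e}K^-_n(x)$, I would extract residues from \eqref{eq:RecursionKn} to obtain $\kappa_n=0$ for $n\le N$, the initial value $\kappa_{N+1}=\frac{\Delta^2}{2g(N+1)}K^-_N(N+\e)$ (coming from the residue of $f^-_N$), and the three-term recurrence $n\kappa_n=f^-_{n-1}(N+\e)\kappa_{n-1}-\kappa_{n-2}$ for $n\ge N+2$. The key step is to verify, by matching recurrences, that $(2g)^n\kappa_n$ satisfies exactly the relation and initial conditions defining $\frac{\Delta^2(2g)^N}{N+1}K^-_N(N+\e)\,\bar{K}^-_n(N+\e;g,\Delta,\e)$ from \S\ref{sec:NonJuddian}; the rescaling factor $(2g)^{-n}$ is precisely what reconciles the $1/(2g)$ in $f^-$ with the $(2g)^2$ in the $\bar{K}^-_n$ recurrence. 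This gives $\kappa_n=\frac{\Delta^2(2g)^N}{N+1}K^-_N(N+\e)(2g)^{-n}\bar{K}^-_n$ for $n\ge N+1$.

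Summing these residues should identify them with the Frobenius solutions evaluated at $y=\tfrac12$. Thus $\Res_{x=N+\e}R^-=\sum_{n\ge N+1}\kappa_n g^n$ collapses to $\frac{\Delta^2(2g)^N}{N+1}K^-_N(N+\e)\,\phi_{1,-}(\tfrac12;\e)$, a multiple of $R^{(N,-)}$; for $\Res_{x=N+\e}\bar{R}^-$ the direct $n=N$ contribution $K^-_N(N+\e)g^N$ cancels against part of the tail over $n\ge N+1$, leaving $\frac{\Delta(2g)^N}{N+1}K^-_N(N+\e)\,\phi_{1,+}(\tfrac12;\e)$, a multiple of $\bar{R}^{(N,-)}$. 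In parallel, the same recurrence-matching technique gives $K^+_n(N+\e)=(2g)^{-n}\bar{K}^+_n$, so the holomorphic values become $R^+(N+\e)=\phi_{2,-}(\tfrac12;-\e)=R^{(N,+)}$ and $\Delta\bar{R}^+(N+\e)=\phi_{2,+}(\tfrac12;-\e)=\bar{R}^{(N,+)}$.

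Assembling $\Res_{x=N+\e}G_\e=\Delta^2\bar{R}^+(N+\e)\Res\bar{R}^--R^+(N+\e)\Res R^-$, the common prefactor $\frac{\Delta^2(2g)^N}{N+1}K^-_N(N+\e)$ factors out, and the remaining bracket becomes exactly $\bar{R}^{(N,+)}\bar{R}^{(N,-)}-R^{(N,+)}R^{(N,-)}=T^{(N)}_\e(g,\Delta)$. Finally, substituting $(N!)^2(2g)^N K^-_N(N+\e)=\cp{N,\e}{N}((2g)^2,\Delta^2)$ from Lemma \ref{lem:GcoeffcPoly} converts the prefactor into $\frac{1}{N!(N+1)!}\Delta^2\cp{N,\e}{N}((2g)^2,\Delta^2)$, yielding the claim with $C(N)=\frac{1}{N!(N+1)!}$. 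I expect the main obstacle to be the bookkeeping in the residue of $\bar{R}^-$, namely confirming the cancellation of the $n=N$ term against the series tail, together with the careful matching of the rescaled residue recurrence to the $\bar{K}^{\pm}_n$ recurrences defining the non-Juddian Frobenius solutions.
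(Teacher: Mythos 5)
Your proposal is correct and follows essentially the same route as the paper's proof: the simple-pole argument via the location of the poles of $f_n^{\pm}$, the residue recurrence for $K_n^-$ seeded by $\Res f_N^-=\frac{\Delta^2}{2g}$, the rescaling identification $(2g)^{n}\Res_{x=N+\e}K_n^-(x)=\frac{\Delta^2(2g)^N}{N+1}K_N^-(N+\e)\bar{K}_n^-$ matching the non-Juddian Frobenius coefficients, the evaluation at $y=\tfrac12$ producing $R^{(N,\pm)}$ and $\bar{R}^{(N,\pm)}$, and the final conversion via Lemma \ref{lem:GcoeffcPoly}. The only cosmetic difference is that you deduce the $x=N-\e$ case from the symmetry \eqref{eq:signKpm}, where the paper simply declares it analogous.
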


\begin{proof}
  We give the proof for the case \(x= N + \e\), for the case \(x = N-\e \) is completely analogous.
  From the definition of \(f_n^-(x,g,\Delta,\e)\), it is clear that
  \(  \Res_{x = N +\e} f_n^-(x,g,\Delta,\e) = \frac{\Delta^2}{2g} \delta_N (n) \), where \(\delta_x(y) \) is the Kronecker delta function.
  Likewise,  for \( 0 \leq n \leq N\) it is clear that \( \Res_{x = N +\e} K_n^-(x;g,\Delta,\e) = 0 \), and for \(n = N+1\) we have
  \begin{align*}
    \Res_{x= N+\e} K_{N+1}^-(x) &=  \lim_{x \to N+\e} (x - N -\e) \frac{1}{N+1} \left( f_N^{-}(x)K_N^-(x) - K^{-}_{N-1}(x) \right) \\
                                & = \frac{1}{N+1} K_N^{-}(N+\e) \Res_{x\to N+\e}f_N^-(x)  = \frac{\Delta^2}{2g(N+1)} K_N^{-}(N+\e).
  \end{align*}
  Setting $a_0=0$, $a_1 = 1$, and
  \[
     a_k = \frac{1}{N+k}\left( f_{N+k-1}^-(N+\e) a_{k-1} - a_{k-2} \right),
   \]
   for \(k \geq 2\), it is easy to see that \(\Res_{x= N+ \e} K_{N+k}^-(x) = \frac{\Delta^2}{2 g (N+1)} K_N^-(N+\e) a_k\).
   Furthermore, by the same method of the proof of Proposition \ref{Prop:reccurEquiv}, we observe that
   \[
     (2g)^{k-1} a_k  = \bar{K}_{N+k}^{-}(N+\e;g,\Delta,\e),
   \]
   for \(k \geq 1\), where \(\bar{K}_{N+k}^{-}(N+\e;g,\Delta,\e)\) are the coefficients of \(\phi_{1,-}(y;\epsilon)\) in \eqref{eq:solphi1-}.
   Then, from the definition, \(R^{(N,-)}(g,\Delta;\epsilon)\) is given by
   \begin{align*}
     \phi_{1,-}\Bigl(\frac12;\e\Bigr) &= \sum_{n = N+1}^{\infty} \bar{K}_n^{-}(N+\e;g,\Delta,\e) \Bigl(\frac12\Bigr)^n \\
     &= \Bigl(\frac12\Bigr)^{N+1} \sum_{n = N+1}^{\infty} \bar{K}_n^{-}(N+\e;g,\Delta,\e) \Bigl(\frac12\Bigr)^{n-N-1} \\
     &= \Bigl(\frac12\Bigr)^{N+1} \sum_{n = N+1}^{\infty} a_{n-N} g^{n-N-1},
   \end{align*}
   and, similarly, \(\bar{R}^{(N,-)}(g,\Delta;\e) \) is given by
   \begin{align*}
     \phi_{1,+}\Bigl(\frac12;\e\Bigr) &= \frac{(N+1)}{\Delta}\Bigl(\frac12\Bigr)^N - \Delta \sum_{n = N+1}^{\infty} \frac{\bar{K}_n^{-}(N+\e;g,\Delta,\e)}{n-N} \Bigl(\frac12\Bigr)^n \\
     &= \Bigl(\frac12\Bigr)^{N+1} \Bigl( \frac{2 (N+1)}{\Delta} - \Delta \sum_{n = N+1}^{\infty} \frac{\bar{K}_n^{-}(N+\e;g,\Delta,\e)}{n-N} \Bigl(\frac12\Bigr)^{n-N-1} \Bigr) \\
     &= \Bigl(\frac12\Bigr)^{N+1} \Bigl( \frac{2 (N+1)}{\Delta} - \Delta \sum_{n = N+1}^{\infty} \frac{a_{n-N}}{n-N} g^{n-N-1} \Bigr).
   \end{align*}
   Moreover, we recall that for \(\e \notin \frac12 \Z \), both functions \(R^+(x)\) and \(\bar{R}^{+}(x)\) are analytic at \(x = N +\e \) and
   \[
     R^+(N+\e) = R^{(N,+)}(g,\Delta;\e), \qquad \Delta \bar{R}^{+}(N+\e)  = \bar{R}^{(N,+)}(g,\Delta;\e).
   \]
   With these preparations, we compute \(\Res_{x=N+\e} R^+(x) R^-(x) \) as 
   \begin{align*}
     R^{+}(N+\e) &\Res_{x=N+\e} \sum_{n=0}^{\infty} K_N^-(x)  g^n \\
      &=  \frac{\Delta^2}{2 g (N+1)}K_N^-(N+\e) R^{+}(N+\e)  \sum_{n=N+1}^{\infty} a_{n-N}  g^n \\
                                 &= \frac{(2g)^{N} \Delta^2}{(N+1)}   K_N^-(N+\e) R^{(N,+)}(g,\Delta;\e) R^{(N,-)}(g,\Delta;\e).
   \end{align*}
   and, since \(\Res_{x=N+\e} \frac{K_N^{-}(x)}{x - N - \e} =  K_N^-(N+\e)\) holds trivially, we also obtain
   \begin{align*}
     &\Res_{x=N+\e} \Delta^2 \bar{R}^+(x) \bar{R}^-(x) = \Delta^2 \bar{R}^{+}(N+\e) \Res_{x=N+\e} \sum_{n=0}^{\infty} \frac{K_N^-(x)}{x-n-\e}  g^n \\
     &\,= \Delta^2 \bar{R}^{+}(N+\e) \Bigl( K_N^{-}(N+\e) g^N - \frac{\Delta^2}{2 g (N+1)}K_N^-(N+\e) \sum_{n=N+1}^{\infty} \frac{a_{n-N}}{n-N}  g^n  \Bigr) \\
     &\,= \frac{g^N \Delta^2}{2(N+1)} K_N^{-}(N+\e) (\Delta \bar{R}^{+}(N+\e))  \Bigl( \frac{2 (N+1)}{\Delta} -  \Delta \sum_{n=N+1}^{\infty} \frac{a_{n-N}}{n-N}  g^{n-N-1}  \Bigr),\\
     &\,= \frac{(2g)^N \Delta^2}{(N+1)} K_N^-(N+\e) \bar{R}^{(N,+)}(g,\Delta;\e) \bar{R}^{(N,-)}(g,\Delta,\e).
   \end{align*}
   Finally, using Lemma \ref{lem:GcoeffcPoly} we have
   \begin{align*}
     \Res_{x=N+\e} G_\e(x;g,\Delta)
     &= \frac{(2g)^N \Delta^2}{(N+1)} K_N^-(N+\e) T^{(N)}_\e(g,\Delta) \\
     &= \frac{\Delta^2}{ N! (N+1)!} \cp{N,\e}{N}((2g)^2,\Delta^2) T^{(N)}_\e(g,\Delta),
   \end{align*}
   which is the desired result. 
\end{proof}

\begin{rem}
We make a remark on the relation \((2g)^{k-1} a_k  = \bar{K}_{N+k}^{-}(N+\e;g,\Delta,\epsilon) \) appearing in the proof of the
proposition above. The coefficients \(K_n^{-}(x;g,\Delta,\e) \) (and thus the numbers \(a_k\)) in the definition of the
$G$-function \(G_\e(x;g,\Delta) \) arise from the solution of the system of differential equations \eqref{eq:system-1} by using the change of
variable \(y = g + z \) (instead of \(y = \frac{g+z}{2 g} \)). The use of the change of variable \(y = \frac{g+z}{2 g} \) results on the
system \eqref{eq:system1} compatible with the representation theoretical description of Proposition~\ref{prop:RedEigenProblem}, and
therefore we use the solutions arising from this system for the definition of the $T$-function \(T^{(N)}_\e(g,\Delta)\) (see \S~\ref{sec:LargestExponent}).
We also note that it is possible to equivalently redefine the $G$-function \(G_\e(x;g,\Delta) \) using the solutions of the system \eqref{eq:system1}
(i.e. with the change of variable \(y = \frac{g+z}{2 g} \)), however, we use the definition given in \S\ref{sec:Gfunct} since it is standard
in the literature, including e.g., \cite{B2011PRL,LB2015JPA,Le2016}.
\end{rem}

The proposition above completely characterizes the poles of the $G$-function for the case \(\e \not\in \frac12 \Z \)
in terms of the exceptional spectrum of \(\HRabi{\e}\). In particular, it shows that the function \(G_\e(x;g,\Delta) \) is finite at
points \(x = N \pm \e \, (N \in \Z_{\geq 0})\) corresponding to non-Juddian exceptional eigenvalues \(\lambda = N \pm \e - g^2\) (i.e. the parameters
\(g\) and \(\Delta\) are positive zeros of \(T_\e^{(N)}(g,\Delta)\)). This situation is illustrated in Figure \ref{fig:gfunct_graph3}(a) for the
parameters \(\e = 0.3, g \approx 0.8695, \Delta = 1/2\), showing the finite value of \(G_{0.3}(x;g,1/2) \) at \(x = 1.3 \). Here, \(g \approx 0.8695\) is a
root (computed numerically) of \(T_{0.3}^{(1)}(g,1/2)\). By Corollary \ref{DegenerateStructure}, this value of \(g\) must be different to the
value \(g' \approx 0.5809 \) in the Juddian case, shown in Figure \ref{fig:gfunct_graph}(a), which also has a finite value of \(G_{0.3}(x;g',1/2)\)
at \(x = 1.3 \).

The following corollary justifies the claim that the exceptional eigenvalues \(\lambda = N \pm \e - g^2\) vanish (or ``kill'') the poles
of the $G$-function.

\begin{cor}
  Suppose \(\e \not\in \frac12 \Z \), \(N \in \Z_{\geq 0} \) and \(\Delta>0 \). Then, \(\HRabi{\e}\) has the exceptional
  eigenvalue \( \lambda = N \pm \e - g^2\) if and only if the $G$-function \(G_{\e}(x,g,\Delta) \) does not have a pole at \(x = N \pm \e\). \QEDhere
\end{cor}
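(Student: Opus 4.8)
The plan is to read the corollary off directly from the residue formula in Proposition \ref{prop:polenhi1}, combined with the characterizations of Juddian and non-Juddian exceptional eigenvalues already established. The first observation I would make is that, since \(\e \not\in \frac12 \Z\), Proposition \ref{prop:polenhi1} guarantees that every singularity of \(G_\e(x;g,\Delta)\) at \(x = N \pm \e\) is at worst a simple pole. Consequently the function fails to have a genuine pole at \(x = N \pm \e\) precisely when its residue there vanishes; this reduces the entire statement to deciding when
\[
  \Res_{x = N \pm \e} G_\e(x;g,\Delta) = C(N)\, \Delta^2\, \cp{N,\pm \e}{N}((2g)^2,\Delta^2)\, T^{(N)}_{\pm \e}(g,\Delta)
\]
equals zero.

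Next I would use that \(C(N) = \frac{1}{N!\,(N+1)!} \neq 0\) and that \(\Delta > 0\), hence \(\Delta^2 > 0\), so the residue vanishes if and only if at least one of the two factors \(\cp{N,\pm\e}{N}((2g)^2,\Delta^2)\) and \(T^{(N)}_{\pm\e}(g,\Delta)\) is zero. At this point I invoke the two characterizations proved earlier. By Proposition \ref{Prop:reccurEquiv}, the vanishing of the constraint polynomial \(\cp{N,\pm\e}{N}((2g)^2,\Delta^2)\) is equivalent to \(\lambda = N \pm \e - g^2\) being an exceptional eigenvalue with Juddian solution, whereas by the construction of the constraint \(T\)-function in \S\ref{sec:NonJuddian} the vanishing of \(T^{(N)}_{\pm\e}(g,\Delta)\) is equivalent to \(\lambda = N \pm \e - g^2\) being an exceptional eigenvalue with non-Juddian exceptional solution.

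Finally I would combine these with the dichotomy that every exceptional eigenvalue of \(\HRabi{\e}\) of the form \(\lambda = N \pm \e - g^2\) is either Juddian or non-Juddian exceptional (and, by the remark following Lemma \ref{lem:const-T-rel}, never both for a fixed \(\Delta>0\)). Thus \(\HRabi{\e}\) possesses the exceptional eigenvalue \(\lambda = N \pm \e - g^2\) if and only if one of the two constraint relations holds, which is exactly the condition that the residue above vanishes, i.e. that \(G_\e(x;g,\Delta)\) has no pole at \(x = N \pm \e\). I do not anticipate a serious obstacle, since the content is already packaged in the residue formula; the only point requiring care is confirming that the Juddian/non-Juddian classification is exhaustive for exceptional eigenvalues of this form, which has been settled by the confluent Heun analysis of the smallest and largest exponents in \S\ref{sec:smallexp} and \S\ref{sec:LargestExponent}.
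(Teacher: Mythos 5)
Your argument is exactly the one the paper intends: the corollary is stated as an immediate consequence of Proposition \ref{prop:polenhi1}, and your proof simply makes explicit the chain (pole at most simple) $\Rightarrow$ (no pole iff residue vanishes) $\Rightarrow$ (residue vanishes iff $\cp{N,\pm\e}{N}((2g)^2,\Delta^2)=0$ or $T^{(N)}_{\pm\e}(g,\Delta)=0$) $\Rightarrow$ (iff $\lambda=N\pm\e-g^2$ is a Juddian or non-Juddian exceptional eigenvalue), which is precisely the discussion preceding the corollary in the text. The proposal is correct and requires no changes.
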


\begin{figure}[htb]
  ~
  \begin{subfigure}[b]{0.45\textwidth}
    \centering
    \includegraphics[height=3.25cm]{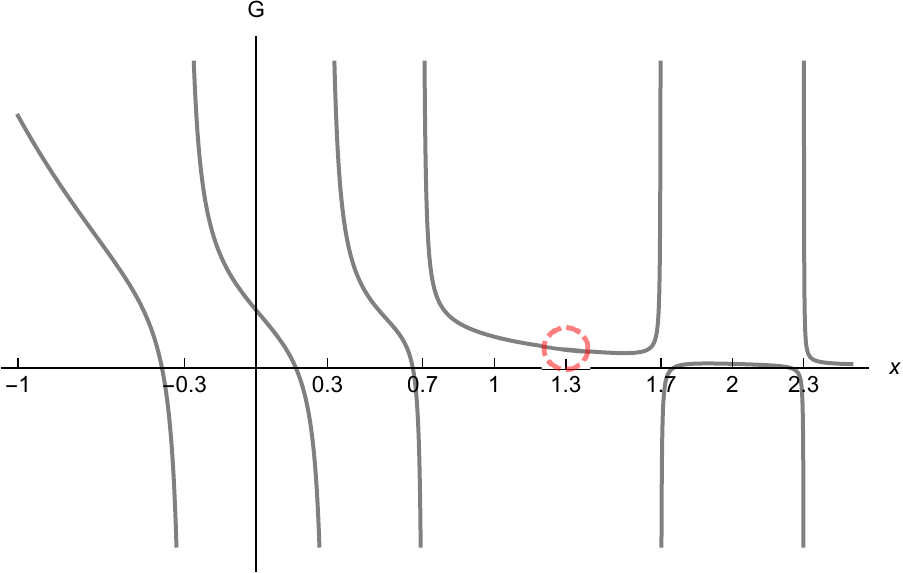}
    \caption{\(\e = 0.3,g \approx 0.8695,\Delta=0.5\)}
  \end{subfigure}
  ~
  \begin{subfigure}[b]{0.45\textwidth}
    \centering
    \includegraphics[height=3.25cm]{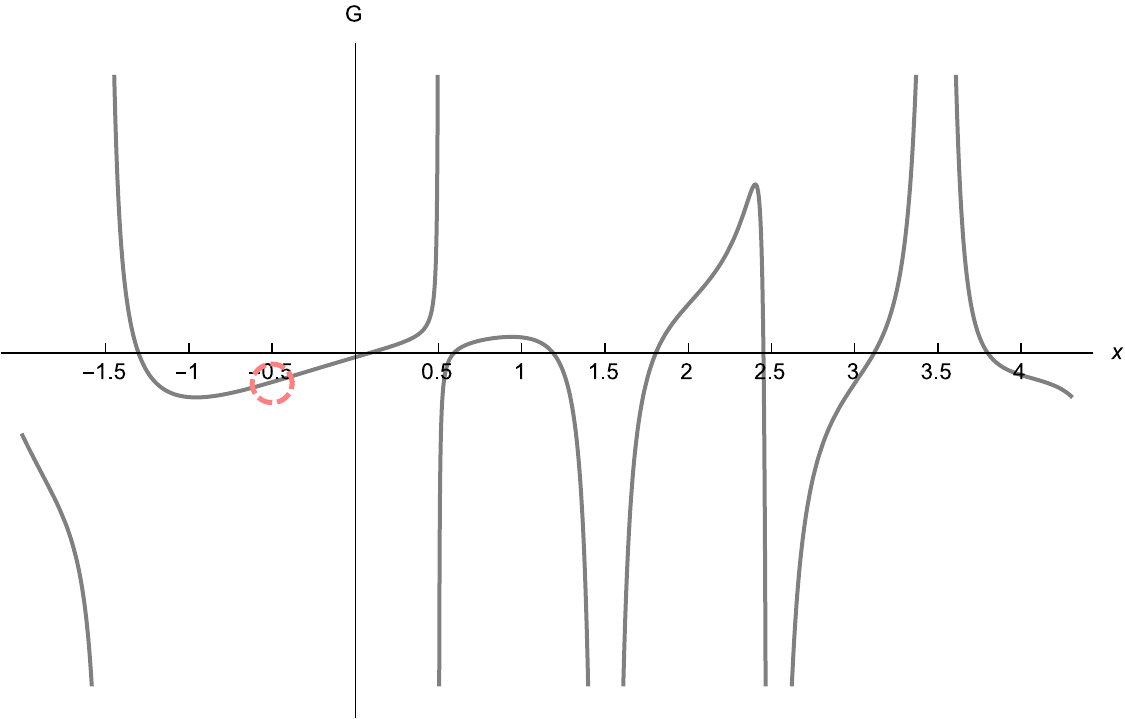}
    \caption{\(\e = 1.5, g \approx 1.6318, \Delta=3\)}
  \end{subfigure}
  \\
  \centering
  \begin{subfigure}[b]{0.45\textwidth}
    \centering
    \includegraphics[height=3.25cm]{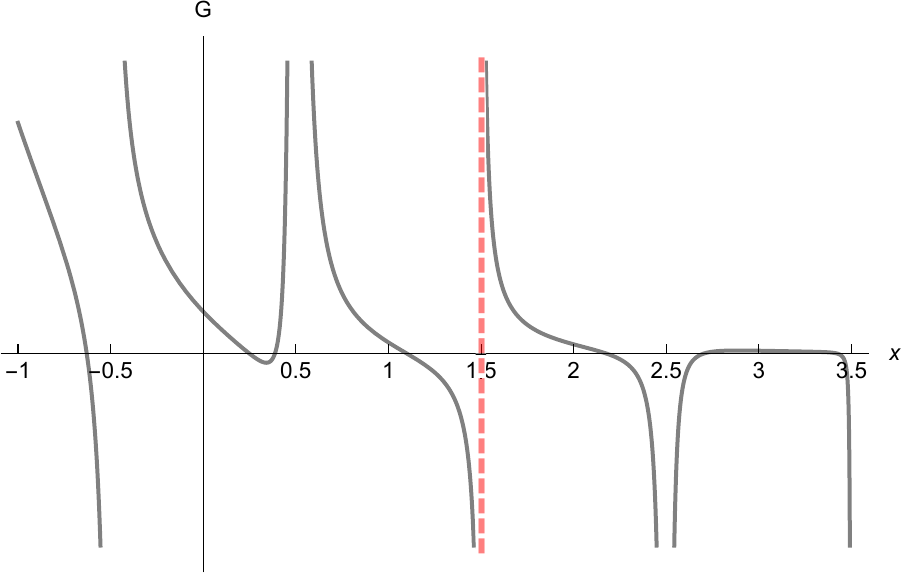}
    \caption{\(\e = 0.5, g \approx 1.3903, \Delta=1\)}
  \end{subfigure}
  \caption{Plot of \(G_\e(x;g,\Delta)\) for parameters \(g \) and \(\Delta\) corresponding to a non-Juddian eigenvalue \(\lambda = 1 \pm \e - g^2\) (i.e. $T^{(1)}_{\pm \e}(g, \Delta)=0$). A finite value of \(G_\e(x;g,\Delta)\) at \(x = 1 \pm \e \) is indicated by a dashed circle, while a simple pole at \(x = 1 + \e \)  is indicated with a dashed vertical line.}
  \label{fig:gfunct_graph3}
\end{figure}

  Next, we consider the case \(\e = \ell/2 \, (\ell \in \Z_{\geq 0})\). On the one hand, the functions \( R^{+}(x;g,\Delta,\e) \) and
  \( R^{-}(x;g,\Delta,\e) \) (resp. \( \bar{R}^{+}(x;g,\Delta,\e) \) and \( \bar{R}^{-}(x;g,\Delta,\e) \)) have poles of order \(\leq 1\) at points
  \(x = N + \ell/2 \, ( N \in \Z_{\geq 0})\). On the other hand, at points \( x = N - \ell/2 \) with \( 0 \leq N \leq \ell-1 \) only the
  functions \( \bar{R}^{+}(x;g,\Delta,\e) \) and \(\bar{R}^{+}(x;g,\Delta,\e) \) have poles of order \(\leq 1\). Consequently, the $G$-function \(G_{\ell/2}(x;g,\Delta) \)
  has poles of order \(\leq 1\) at the points \( x = N - \ell/2 \, (0 \leq N \leq \ell-1)\) and poles of order \(\leq 2\) at points \(x = N + \ell/2 \, (N \in \Z_{\geq 0})\).
  Note that all possible poles of the $G$-function are accounted since \(N = \ell + i \, (i \in \Z_{\geq 0}) \) yields
  \(x = N - \ell/2 = \ell + i - \ell/2  = i + \ell/2 \, (i \in Z_{\geq 0} )\). The residue at the poles of order \(\leq 1\) are given in the following
  proposition. The proof is identical to Proposition \ref{prop:polenhi1} and is therefore omitted.

\begin{prop}\label{prop:polehi1simple}
  Suppose \(\ell > 1 \) and let \(1 \leq N < \ell\). Then any pole of the $G$-function $G_{\ell/2}(x;g,\Delta)$ at a point \(x = N - \ell/2\) is simple.
  The residues of $G_{\ell/2}(x;g,\Delta)$ at the point \(x = N - \ell/2\) is given by
  \[
    \Res_{x = N - \ell/2} G_{\ell/2}(x;g,\Delta) = C(N) \Delta^2 \cp{N,-\ell/2}{N}((2g)^2,\Delta^2) \tilde{T}^{(N)}_{\ell/2} (g,\Delta)
  \]
  with \(C(N) = \frac{1}{N! (N+1)!} \).
\end{prop}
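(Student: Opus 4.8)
The plan is to follow verbatim the template of the proof of Proposition \ref{prop:polenhi1}, interchanging throughout the roles of the ``plus'' and ``minus'' quantities and specializing $\e=\ell/2$. The first step is to pin down which of the four series $R^\pm,\bar R^\pm$ actually develops a singularity at $x=N-\ell/2$. Inspecting \eqref{eq:fn}, the function $f_n^+(x,g,\Delta,\ell/2)$ acquires a simple pole exactly when $x-n+\ell/2=0$, i.e. at $x=n-\ell/2$; at $x=N-\ell/2$ this resonance occurs for $n=N$. By contrast $f_n^-$ is singular only at $x=n+\ell/2$, which at $x=N-\ell/2$ would force $n=N-\ell<0$; since we assume $N<\ell$ this never happens, so every $f_n^-(x)$, hence every $K_n^-(x)$ and the series $R^-,\bar R^-$, is analytic at $x=N-\ell/2$. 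Then, via the recurrence \eqref{eq:RecursionKn}, the coefficients $K_n^+(x)$ for $n\geq N+1$ inherit simple poles from $f_N^+$, so that $G_{\ell/2}(x;g,\Delta)=\Delta^2\bar R^+\bar R^- - R^+R^-$ has at worst a simple pole there, which already establishes the first assertion.

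The residue computation is then word-for-word that of Proposition \ref{prop:polenhi1} with every superscript $-$ replaced by $+$ and with $x=N+\e$ replaced by $x=N-\ell/2$. Concretely, I would record $\Res_{x=N-\ell/2}f_N^+=\frac{\Delta^2}{2g}$, note that $K_n^+(N-\ell/2)$ is finite for $0\leq n\leq N$, and introduce auxiliary numbers $a_0=0$, $a_1=1$, $a_k=\frac{1}{N+k}\bigl(f_{N+k-1}^+(N-\ell/2)\,a_{k-1}-a_{k-2}\bigr)$, so that $\Res_{x=N-\ell/2}K_{N+k}^+(x)=\frac{\Delta^2}{2g(N+1)}K_N^+(N-\ell/2)\,a_k$. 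Exactly as before, $(2g)^{k-1}a_k$ is to be identified with the Frobenius coefficients $\bar K^+_{N+k}$ of the local solution entering the definition of $\tilde T_{\ell/2}^{(N)}=T_{-\ell/2}^{(N)}$, and summing the two resulting geometric-type series reproduces $R^{(N,+)}$ and $\bar R^{(N,+)}$ (taken with parameter $-\ell/2$) up to the explicit powers of $\tfrac12$ and $g$ that appeared in Proposition \ref{prop:polenhi1}.

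For the constraint-polynomial factor I would invoke the symmetry \eqref{eq:signKpm}, which gives $K_N^+(N-\ell/2;g,\Delta,\ell/2)=K_N^-(N-\ell/2;g,\Delta,-\ell/2)$, and then apply the first identity of Lemma \ref{lem:GcoeffcPoly} with $\e$ replaced by $-\ell/2$ to obtain $(N!)^2(2g)^N K_N^+(N-\ell/2)=\cp{N,-\ell/2}{N}((2g)^2,\Delta^2)$. Collecting the contributions from the residues of $\Delta^2\bar R^+\bar R^-$ and of $R^+R^-$ yields $\Res_{x=N-\ell/2}G_{\ell/2}=\frac{(2g)^N\Delta^2}{N+1}K_N^+(N-\ell/2)\,\tilde T_{\ell/2}^{(N)}(g,\Delta)=C(N)\Delta^2\cp{N,-\ell/2}{N}((2g)^2,\Delta^2)\,\tilde T_{\ell/2}^{(N)}(g,\Delta)$ with $C(N)=\frac{1}{N!(N+1)!}$, using $\frac{1}{(N!)^2(N+1)}=\frac{1}{N!(N+1)!}$, which is the claimed formula.

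I expect the only genuine subtlety, and the place where the ``identical'' proof must actually be checked rather than merely transcribed, to be the bookkeeping around the $\pm$ swap: one must confirm that the auxiliary coefficients $a_k$ coincide with the Frobenius data defining $\tilde T_{\ell/2}^{(N)}$ (the solutions attached to $-\e$) and not those defining $T_{\ell/2}^{(N)}$, and that the hypothesis $N<\ell$ is used precisely to guarantee both that the minus series remain regular and that $x=N-\ell/2$ is a genuine, distinct simple pole rather than one of the double poles at $x=i+\ell/2$ treated separately in Proposition \ref{prop:polehi}.
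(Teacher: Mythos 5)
Your proposal is correct and is essentially the paper's own argument: the paper simply declares this proof ``identical to Proposition \ref{prop:polenhi1}'' and omits it, and your $\pm$-swap is exactly the intended transcription. The bookkeeping you flag as the only subtlety checks out — the hypothesis $N<\ell$ keeps every $f_n^-$, $K_n^-$, $R^-$, $\bar R^-$ regular at $x=N-\ell/2$ so the pole is at most simple, and the combination of \eqref{eq:signKpm} with Lemma \ref{lem:GcoeffcPoly} at $\e=-\ell/2$ correctly produces $\cp{N,-\ell/2}{N}$ together with the coefficients defining $\tilde T^{(N)}_{\ell/2}=T^{(N)}_{-\ell/2}$ rather than $T^{(N)}_{\ell/2}$.
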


Similar to the non half-integer case, the residues of \(G_{\ell/2}(x;g,\Delta)\) at the points \(x = N - \ell/2  \) with \(1 \leq N < \ell\)  depend on the
constraint polynomial \(\cp{N,-\ell/2}{N}((2g)^2,\Delta^2)\) and $T$-function \(\tilde{T}^{(N)}_{\ell/2} (g,\Delta)\) for \(1 \leq N < \ell\). However, by
Proposition \ref{prop:pos2} \(\cp{N,-\ell/2}{N}((2g)^2,\Delta^2)\) is positive for \(g,\Delta>0\), in other words, the pole vanishes (i.e. it is a removable
singularity) if and only if \(\tilde{T}^{(N)}_{\ell/2} (g,\Delta)= 0\), as illustrated in Figure~\ref{fig:gfunct_graph3}(b).

In the following proposition we consider the remaining poles of \(G_{\ell/2}(x;g,\Delta)\).

\begin{prop} \label{prop:polehi}
  Suppose \(\e = \ell/2\, (\ell \in \Z_{\geq 0})\) and let \(N \in \Z_{\geq 0}\). Let
  \[
    G_{\ell/2}(x;g,\Delta) = \frac{A}{(x - N -\ell/2)^2} + \frac{B}{x - N -\ell/2} + H_{\ell/2}(x;\Delta,g)
  \]
  for a function \( H_{\ell/2}(x;\Delta,g) \) analytic at \(x = N + \ell/2 \).
  We have
  \[
    A = C(N)C(N+\ell) \Delta^4 \cp{N,\ell/2}{N}((2g)^2,\Delta^2) \cp{N+\ell,-\ell/2}{N+\ell}((2g)^2,\Delta^2) T^{(N)}_{\ell/2}(g,\Delta)^2,
  \]
  where \(C(N)\) is defined as in Proposition \ref{prop:polenhi1}, and

\begin{align*}
   B & =  \Res_{x = N + \ell/2}  G_{\ell/2}(x;g,\Delta) =  C(N) C(N+\ell) \Delta^2 \cp{N,\ell/2}{N}((2g)^2,\Delta^2)\\
  \times & \Big[ \frac{1}{C(N+\ell)}\Big( \bar{R}^{(N,-)}(g,\Delta,\frac{\ell}{2}) (\Delta \bar{Q}^+(N+\frac{\ell}{2};g,\Delta)) \\
  & \qquad \qquad \qquad \qquad - R^{(N,-)}(g,\Delta,\frac{\ell}{2})  Q^+(N+\frac{\ell}{2};g,\Delta) \Big) \\
     & +   A_N^{\ell}((2g)^2,\Delta^2) \frac{1}{C(N)} \Big( \bar{R}^{(N,+)}(g,\Delta,\frac{\ell}{2}) (\Delta\bar{Q}^-(N+\frac{\ell}{2};g,\Delta)) \\
     &\qquad \qquad \qquad \qquad \qquad \qquad -  R^{(N,+)}(g,\Delta,\frac{\ell}{2})  Q^-(N+\frac{\ell}{2};g,\Delta) \Big) \Big],
  \end{align*}
  where \(Q^{+}(x;g,\Delta) \) is defined by \(Q^{+}(x;g,\Delta) = R^{+}(x;g,\Delta) - \frac{\Res_{x=N+\ell/2} R^{+}(x;g,\Delta)}{x- N - \ell/2}\). The functions
  \(\bar{Q}^{+}(x;g,\Delta)\), \(Q^{-}(x;g,\Delta)\) and \(\bar{Q}^{-}(x;g,\Delta)  \) are defined similarly.
\end{prop}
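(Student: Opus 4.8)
The plan is to expand each of the four series $R^{\pm}(x;g,\Delta,\ell/2)$ and $\bar{R}^{\pm}(x;g,\Delta,\ell/2)$ in a Laurent series about $x_0:=N+\ell/2$ up to and including the constant term, and then to read off $A$ and $B$ directly from the bilinear form $G_{\ell/2}=\Delta^2\bar{R}^+\bar{R}^- - R^+R^-$. The feature that separates this from Proposition~\ref{prop:polenhi1} is that, writing $x_0=N+\ell/2=(N+\ell)-\ell/2$, the function $f_N^-$ has a simple pole at $x_0$ (forcing simple poles in $R^-$ and $\bar{R}^-$) while simultaneously $f^+_{N+\ell}$ has a simple pole at $x_0$ (forcing simple poles in $R^+$ and $\bar{R}^+$). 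Consequently each product $\bar{R}^+\bar{R}^-$ and $R^+R^-$ acquires a genuine double pole, $A$ is the coefficient of $(x-x_0)^{-2}$, and $B=\Res_{x=x_0}G_{\ell/2}$.

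First I would record the four residues. For the minus functions I would repeat the computation in the proof of Proposition~\ref{prop:polenhi1} verbatim: introduce $a_0=0$, $a_1=1$ and the recurrence driven by $f^-_{N+k-1}(N+\ell/2)$, use the identification $(2g)^{k-1}a_k=\bar{K}^{-}_{N+k}(N+\ell/2;g,\Delta,\ell/2)$ together with Lemma~\ref{lem:GcoeffcPoly}, to obtain
\[
\Res_{x=x_0}R^-=C(N)\Delta^2\cp{N,\ell/2}{N}((2g)^2,\Delta^2)\,R^{(N,-)},\qquad
\Res_{x=x_0}\bar{R}^-=C(N)\Delta\,\cp{N,\ell/2}{N}((2g)^2,\Delta^2)\,\bar{R}^{(N,-)}.
\]
For the plus functions I would invoke the symmetry \eqref{eq:signKpm}, $K^{+}_n(x;g,\Delta,\ell/2)=K^{-}_n(x;g,\Delta,-\ell/2)$, which converts the residue of $R^+$ at $x_0$ into that of a minus-type series with parameter $-\ell/2$ and index $N+\ell$; applying the same computation and the identifications $R^{(N+\ell,-)}(g,\Delta;-\ell/2)=R^{(N,+)}$ and $\bar{R}^{(N+\ell,-)}(g,\Delta;-\ell/2)=\bar{R}^{(N,+)}$ from the proof of Lemma~\ref{lem:const-T-rel} then yields the same formulas with $C(N)$, $\cp{N,\ell/2}{N}$, $R^{(N,-)}$, $\bar{R}^{(N,-)}$ replaced by $C(N+\ell)$, $\cp{N+\ell,-\ell/2}{N+\ell}$, $R^{(N,+)}$, $\bar{R}^{(N,+)}$.

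With all four residues available, $A$ is immediate: the coefficient of $(x-x_0)^{-2}$ in $\Delta^2\bar{R}^+\bar{R}^- - R^+R^-$ is $\Delta^2(\Res\bar{R}^+)(\Res\bar{R}^-)-(\Res R^+)(\Res R^-)$, and substituting the residues factors out $C(N)C(N+\ell)\Delta^4\cp{N,\ell/2}{N}\cp{N+\ell,-\ell/2}{N+\ell}$ times the combination $\bar{R}^{(N,+)}\bar{R}^{(N,-)}-R^{(N,+)}R^{(N,-)}$, which is exactly the $T$-function $T^{(N)}_{\ell/2}(g,\Delta)$ by \eqref{eq:exceptGfuncDef}; this gives the stated expression for $A$.

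The computation of $B$ is the main obstacle, because the coefficient of $(x-x_0)^{-1}$ in each product pairs the residue of one factor with the constant term of the other. Writing $Q^{\pm}(x_0)$ and $\bar{Q}^{\pm}(x_0)$ for the constant terms of the Laurent expansions (the regular parts defined in the statement), one gets
\[
B=\Delta^2\big((\Res\bar{R}^+)\bar{Q}^-(x_0)+\bar{Q}^+(x_0)(\Res\bar{R}^-)\big)-\big((\Res R^+)Q^-(x_0)+Q^+(x_0)(\Res R^-)\big),
\]
all residues being taken at $x=x_0$. I would then substitute the four residue formulas, extract the common factor $\cp{N,\ell/2}{N}$ by rewriting $\cp{N+\ell,-\ell/2}{N+\ell}=A_N^{\ell}\cp{N,\ell/2}{N}$ via the divisibility of Theorem~\ref{thm:Main}, and regroup the two $C(N)$-terms and the two $C(N+\ell)$-terms to reach the asserted form. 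The delicate points are tracking the asymmetric powers of $\Delta$ between the $R$- and $\bar{R}$-residues and checking that the cross terms assemble into the two bracketed bilinear blocks $\bar{R}^{(N,-)}(\Delta\bar{Q}^+)-R^{(N,-)}Q^+$ and $\bar{R}^{(N,+)}(\Delta\bar{Q}^-)-R^{(N,+)}Q^-$; once this is confirmed the remaining algebra is routine.
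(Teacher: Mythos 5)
Your proposal is correct and follows essentially the same route as the paper's own proof: the paper likewise obtains $A$ as the product of the residues of the simple-pole factors (computed exactly as in Proposition \ref{prop:polenhi1}, with the plus-type residues reduced to minus-type ones at index $N+\ell$ and parameter $-\ell/2$ via \eqref{eq:signKpm} and Lemma \ref{lem:const-T-rel}), and obtains $B$ from the identity $\Res_{x=a}\bigl(\tfrac{R_1}{x-a}+A_1(x)\bigr)\bigl(\tfrac{R_2}{x-a}+A_2(x)\bigr)=R_1A_2(a)+R_2A_1(a)$, which is precisely the pairing of each residue with the regular part of the other factor that you describe, followed by the regrouping through $\cp{N+\ell,-\ell/2}{N+\ell}=A_N^{\ell}\cdot\cp{N,\ell/2}{N}$. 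One caveat: carried out as you outline, the computation yields
\[
A=C(N)C(N+\ell)\,\Delta^4\,\cp{N,\ell/2}{N}((2g)^2,\Delta^2)\,\cp{N+\ell,-\ell/2}{N+\ell}((2g)^2,\Delta^2)\,T^{(N)}_{\ell/2}(g,\Delta)
\]
with the $T$-function to the \emph{first} power (each term of $A$ is bilinear in the residues, hence only quadratic in the $R^{(N,\pm)},\bar{R}^{(N,\pm)}$, which is the degree of $T^{(N)}_{\ell/2}$ itself), so you should not claim without comment that this "gives the stated expression," whose exponent $2$ on $T^{(N)}_{\ell/2}(g,\Delta)$ appears to be a misprint; the discrepancy is harmless for the vanishing statements the paper later draws from $A$, but your write-up should either flag it or correct it.
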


\begin{proof}
  To compute the term \(A\) we notice that since each of the factors \(R^+(x)\) and \(R^-(x) \) (resp. \(\bar{R}^+(x)\) and \(\bar{R}^-(x) \))
  can have a pole of order exactly one (simple pole) we have
  \[
     \lim_{x \to N+\ell/2} (x- N - \ell/2)^2 R^+(x) R^{-}(x)  = \Res_{x= N+\ell/2} R^+(x) \Res_{x= N+\ell/2} R^-(x),
  \]
  and then the proof follows as in Proposition \ref{prop:polenhi1}. The second claim follows from the basic identity
  \[
    \Res_{x=a} \left(\frac{R_1}{x-a} + A_1(x)\right) \left(\frac{R_2}{x-a} + A_2(x)\right) = R_1 A_2(a) + R_2 A_1(a),
  \]
  valid for functions $A_1(x)$, $A_2(x)$ analytic at \(x=a\) and \(R_1,R_2 \in \C \). 
\end{proof}

By comparing the recurrence relations of $R^{\pm}(x;g,\Delta)$ and $\bar{R}^{\pm}(x;g,\Delta)$ with the residues (cf. the proof of Proposition \ref{prop:polenhi1}) the functions \(Q^{-}(x;g,\Delta) \), \(Q^{+}(x;g,\Delta) \), \(\bar{Q}^{-}(x;g,\Delta) \) and \(\bar{Q}^{+}(x;g,\Delta) \) can also
be expressed by recurrence relations. Namely, if we set \(s_{-1}(x;g,\Delta)= 0\), \(s_{0}(x;g,\Delta)=1 \),
\[
  s_{N+1}(x;g,\Delta) = \frac{1}{N+1}\left(\left(2g + \frac{N-x-\ell/2}{2g} \right) s_{N}(x;g,\Delta) - s_{N-1}(x;g,\Delta) \right),
\]
and
\[
  s_k(x;g,\Delta) = \frac{1}{k}\left( f_{k-1}^-(x,g,\Delta,\e) s_{k-1}(x;g,\Delta) - s_{k-2} (x;g,\Delta) \right),
\]
for positive integer \(k \neq N+1 \), we have
\begin{equation}
  \label{eq:qfunctdef-}
  Q^-(x;g,\Delta) = \sum_{n=0}^{\infty} s_n(x;g,\Delta) g^n, \qquad
  \bar{Q}^-(x;g,\Delta) = \sum_{\substack{n\ge 0 \\ n\ne N}} \frac{s_n(x;g,\Delta)}{x-n-\ell/2} g^n.
\end{equation}
Similarly, setting \(r_{-1}(x;g,\Delta)= 0\), \(r_{0}(x;g,\Delta)=1 \), and
\begin{align*}
  r_{N+\ell+1}(x;g,\Delta) = \frac{1}{N+\ell+1} \left(\left(2g + \frac{N-x+\ell/2}{2g} \right) r_{N+\ell}(x;g,\Delta) - r_{N+\ell-1}(x;g,\Delta) \right), 
\end{align*}
and
\[
  r_k(x;g,\Delta) = \frac{1}{k}\left( f_{k-1}^+(x,g,\Delta,\e) r_{k-1}(x;g,\Delta) - r_{k-2} (x;g,\Delta) \right),
\]
for positive integer \(k \ne N+\ell+ 1\), we have
\begin{equation}
  \label{eq:qfunctdef+}
  Q^+(x;g,\Delta) = \sum_{n=0}^{\infty} r_n(x;g,\Delta) g^n, \qquad
  \bar{Q}^+(x;g,\Delta) = \sum_{\substack{n\ge 0 \\ n\ne N+\ell}} \frac{r_n(x;g,\Delta)}{x-n+\ell/2} g^n.
\end{equation}
Note that by Theorem \ref{thm:Main}, when \(\lambda = N + \ell/2 - g^2 \) is a Juddian eigenvalue, the coefficients of the
poles of \(G_{\ell/2}(x;g,\Delta)\) vanish and the function \(G_{\ell/2}(x;g,\Delta)\) has a finite value at \(x = N+\ell/2 \). However, it is possible
to find numerically examples of parameters such that \(G_{\ell/2}(x;g,\Delta)\) has a pole at \(x = N+ \ell/2 \) yet \( \lambda= N + \ell/2 - g^2 \) is a
non-Juddian exceptional eigenvalue. One such example is shown in Figure \ref{fig:gfunct_graph3}(c) for the parameters
\(\e = 1/2, g \approx 1.3903, \Delta = 1 \). In this case, there is a pole of \(G_{1/2}(x;g,1)\) even though the parameters \(g\) and \(\Delta\) correspond
(numerically) to a zero of \(T_{1/2}^{(1)}(g,1)\)  at \(x = 1.5 \). We remark that the pole \(x = 1.5 \) must be simple.
Indeed, in the notation of Proposition \ref{prop:polehi}, since \(T_{1/2}^{(1)}(g,1)=0\) the second order term \(A\) vanishes while
the residue term \(B\) is non-vanishing. This is also apparent in the graph of \(G_{1/2}(x;g,1)\) in Figure \ref{fig:gfunct_graph3}(c), since the
lateral limits at \(x=1.5\) have different signs the pole must be simple and the term \(B\) must be non-zero in a neighborhood
of \(x=1.5\).

The situation for the poles of the \( G_{\ell/2}(x;g,\Delta)\) is summarized in the following result.

\begin{cor}

  Suppose \( \ell \in \Z_{\geq 0}\) and  \(\Delta>0 \). The $G$-function \( G_{\ell/2}(x;g,\Delta)\) has \(\ell\) poles of order \(\leq 1\)
  at \(x = N - \ell/2 \) for \(0 \leq N < \ell\) and poles of order \(\leq 2\) at \(x = N + \ell/2 \) for \(N \in \Z_{\geq 0}\). Moreover, for
  \(N \in \Z_{\geq 0} \), we have:
  \begin{itemize}
  \item If \(\lambda = N \pm \ell/2 -g^2\) is a Juddian eigenvalue of \(\HRabi{\ell/2}\), then \(x = N \pm \ell/2 \) is not a pole of \(G_{\ell/2}(x;g,\Delta) \).
  \item For \(0 \leq N < \ell \), the function \(G_{\ell/2}(x;g,\Delta) \) does not have a pole at \(x = N - \ell/2 \)  if and only if \(\lambda = N - \ell/2 - g^2 \) is a
    non-Juddian exceptional eigenvalue of \(\HRabi{\ell/2} \).
  \item If \(G_{\ell/2}(x;g,\Delta) \) has a simple pole at \(x = N + \ell/2 \), then \(\lambda = N + \ell/2 - g^2 \) is a non-Juddian exceptional eigenvalue
    of \(\HRabi{\ell/2} \).
  \item If \(G_{\ell/2}(x;g,\Delta) \) has a double pole at \(x = N \pm \ell/2 \), then there is no exceptional eigenvalue \(\lambda = N \pm \ell/2 - g^2\)
    of \(\HRabi{\ell/2} \).
  \end{itemize}
\end{cor}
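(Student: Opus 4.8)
The plan is to read off all four assertions from the explicit residue and leading-coefficient formulas of Propositions \ref{prop:polehi1simple} and \ref{prop:polehi}, combined with the positivity statements of Theorem \ref{thm:Main} and Proposition \ref{prop:pos2}. The counting part---that $G_{\ell/2}(x;g,\Delta)$ has $\ell$ poles of order $\le 1$ at $x = N-\ell/2$ ($0\le N<\ell$) and poles of order $\le 2$ at $x = N+\ell/2$ ($N\in\Z_{\ge0}$)---was already derived from the recurrence \eqref{eq:RecursionKn} in the paragraph preceding Proposition \ref{prop:polehi1simple}, so it only remains to decide, at each such point, the exact order in terms of the exceptional spectrum.

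First I would treat the simple poles at the points $x = N - \ell/2$ with $0 \le N < \ell$. By Proposition \ref{prop:polehi1simple} (and, for $N = 0$, the same computation, noting $\cp{0,-\ell/2}{0} = 1$), the residue there equals a nonzero constant times $\Delta^2 \cp{N,-\ell/2}{N}((2g)^2,\Delta^2)\, \tilde{T}^{(N)}_{\ell/2}(g,\Delta)$. Since the range $1 \le N < \ell$ falls under Proposition \ref{prop:pos2}, the factor $\cp{N,-\ell/2}{N}((2g)^2,\Delta^2)$ is strictly positive for $g,\Delta>0$, so the residue vanishes exactly when $\tilde{T}^{(N)}_{\ell/2}(g,\Delta)=0$, i.e. exactly when $\lambda = N-\ell/2-g^2$ is a non-Juddian exceptional eigenvalue. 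As the pole is of order $\le 1$, a vanishing residue makes the singularity removable; this is the second bullet. Moreover, by Corollary \ref{cor:noNegJudd} no eigenvalue $\lambda = N-\ell/2-g^2$ with $N<\ell$ is Juddian, so the first bullet holds vacuously for the minus sign.

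For the remaining points $x = N+\ell/2$ ($N\in\Z_{\ge0}$), which also account for the minus-sign points with $N\ge\ell$, I would use the double-pole coefficient $A$ and the residue $B$ from Proposition \ref{prop:polehi}. The key manipulation is to substitute the divisibility $\cp{N+\ell,-\ell/2}{N+\ell} = A_N^{\ell}\,\cp{N,\ell/2}{N}$ of Theorem \ref{thm:Main} into $A$, turning it into a nonzero constant times $\Delta^4\,\bigl(\cp{N,\ell/2}{N}((2g)^2,\Delta^2)\bigr)^2 A_N^{\ell}((2g)^2,\Delta^2)\, T^{(N)}_{\ell/2}(g,\Delta)^2$, where $A_N^{\ell}((2g)^2,\Delta^2)>0$ for $g,\Delta>0$ by Theorem \ref{thm:Main}; likewise $B$ carries the overall factor $\cp{N,\ell/2}{N}((2g)^2,\Delta^2)$. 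The three plus-sign bullets then follow by elementary case analysis: if $\lambda=N+\ell/2-g^2$ is Juddian then $\cp{N,\ell/2}{N}=0$, so both $A$ and $B$ vanish and there is no pole (first bullet); if there is a simple pole then $A=0$ and $B\ne0$, where $B\ne0$ forces $\cp{N,\ell/2}{N}\ne0$, whence $A=0$ forces $T^{(N)}_{\ell/2}=0$, i.e. $\lambda$ is non-Juddian exceptional (third bullet); and if there is a double pole then $A\ne0$, forcing both $\cp{N,\ell/2}{N}\ne0$ and $T^{(N)}_{\ell/2}\ne0$, so $\lambda$ is neither Juddian nor non-Juddian and hence not exceptional (fourth bullet).

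The main obstacle is this plus-sign analysis, specifically obtaining the factored form of $A$: one must invoke Theorem \ref{thm:Main} to expose the perfect-square dependence on $\cp{N,\ell/2}{N}$ and the strictly positive divisor $A_N^{\ell}$, since only then does $A=0$ split cleanly into the Juddian condition $\cp{N,\ell/2}{N}=0$ versus the non-Juddian condition $T^{(N)}_{\ell/2}=0$. The positivity of $A_N^{\ell}$ and of $\cp{N,-\ell/2}{N}$ is precisely what rules out the pole vanishing for a spurious algebraic reason. Finally, consistency across the two regimes---that a Juddian point cannot simultaneously be non-Juddian exceptional---is guaranteed by Corollary \ref{DegenerateStructure}, which asserts that $\cp{N,\ell/2}{N}$ and $T^{(N)}_{\ell/2}$ share no common zeros, so the cases treated above are genuinely mutually exclusive.
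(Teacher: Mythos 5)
Your proposal is correct and follows essentially the same route as the paper, which proves this corollary implicitly by assembling the residue formulas of Propositions \ref{prop:polehi1simple} and \ref{prop:polehi} with the positivity of $\cp{N,-\ell/2}{N}$ (Proposition \ref{prop:pos2}) and of $A_N^\ell$ (Theorem \ref{thm:Main}), exactly as you do. Your explicit factorization of the double-pole coefficient $A$ into $(\cp{N,\ell/2}{N})^2 A_N^\ell (T^{(N)}_{\ell/2})^2$ and the appeal to Corollary \ref{DegenerateStructure} for mutual exclusivity just make the paper's case analysis fully explicit.
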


\begin{rem}
  In the case of the QRM (i.e. \(\e=0 \)), all the singularities of the $G$-function \(G_{0}(x;g,\Delta) \) are of the type described in Proposition \ref{prop:polehi}
  (i.e. poles of order \(\leq 2\)).
\end{rem}

Note that is possible that non-Juddian exceptional eigenvalues corresponding to finite values of $G_{\ell/2}(x;g,\Delta)$ at points \(x = N \pm \e \)
are present in the spectrum. If such eigenvalues were to exist then the structure of the poles of the $G$-function alone would not be sufficient
to completely discriminate the structure of the exceptional spectrum.

To get a better understanding of the vanishing of the residues \(\Res_{x = N + \ell/2}  G_{\ell/2}(x;g,\Delta)\), we define the function
\begin{align*}
  B^N_{\ell}(g,\Delta) &:= \frac{1}{C(N)}\Big(\bar{R}^{(N,+)}(g,\Delta,\frac{\ell}{2}) (\Delta\bar{Q}^-(N+\frac{\ell}{2};g,\Delta)) \\
               & \qquad \qquad \qquad -   R^{(N,+)}(g,\Delta,\frac{\ell}{2})  Q^-(N+\frac{\ell}{2};g,\Delta)  \Big).
\end{align*}

This function may be thought of a ``regularized'' $T$-function. In the case \(T^{(N)}_{\ell/2}(g,\Delta)=0\) (thus
\(\cp{N,\ell/2}{N}((2g)^2,\Delta^2) \not= 0 \)), by Proposition \ref{prop:polehi} and the proof of Lemma \ref{lem:const-T-rel},
the vanishing of the residue \(\Res_{x = N + \ell/2}  G_{\ell/2}(x;g,\Delta)\) is equivalent to the equation
\begin{equation}\label{eq:divisibilityB}
  B^{N+\ell}_{-\ell}(g,\Delta) + A_N^{\ell}((2g)^2,\Delta^2) B^N_{\ell}(g,\Delta) = 0,
\end{equation}
resembling the divisibility problem of constraint polynomials. Actually, this equation distinguishes the cases where $x=N+\ell/2$
is a pole or not and the polynomial $A_N^{\ell}((2g)^2,\Delta^2)$ again may play a particular role for its determination.
Also, in terms of the constraint polynomials, we notice that $B=0$, i.e. \eqref{eq:divisibilityB}, is equivalent to
\begin{align*}
 & \det
  \begin{bmatrix}  P_{N}^{(N, \ell/2)}((2g)^2,\Delta^2) & -B_\ell^N(g,\Delta)  \\
  P_{N+\ell}^{(N+\ell, -\ell/2)}((2g)^2,\Delta^2) & B_{-\ell}^{N+\ell}(g,\Delta)
  \end{bmatrix}
  \\
  &\qquad \qquad= P_{N}^{(N, \ell/2)}((2g)^2,\Delta^2)
  \det \begin{bmatrix}
  1 & -B_\ell^N(g,\Delta)  \\
  A_N^\ell((2g)^2,\Delta^2) & B_{-\ell}^{N+\ell}(g,\Delta)
  \end{bmatrix}
  =0.
\end{align*}

\begin{prob} \label{pr:non-Juddian}
  With the notation of Proposition \ref{prop:polehi},
  \begin{itemize}
  \item  Are there non-Juddian exceptional eigenvalues \(\lambda = N \pm \ell/2 -g^2\) corresponding to finite values of
    $G$-function $G_{\ell/2}(x;g,\Delta)$ at the point \(x = N \pm \ell/2 \)? If the answer is affirmative, what are the
    properties of these non-Juddian exceptional eigenvalues?
  \item More concretely, can we characterize the vanishing of $B$ (equivalently \eqref{eq:divisibilityB})
    in terms of the function $T^{(N)}_{\ell/2}(g,\Delta)$?
    It would be quite interesting if the vanishing of $B$ can be formulated
    as a sort of duality of the equation
    $P_{N+\ell}^{(N+\ell, -\ell/2)}= A_N^\ell\cdot P_N^{(N,\ell/2)}$ in Theorem \ref{thm:Main}.
    We actually notice that $B=0$ is equivalent to the fact that the vector
    ${}^{t\!}\begin{bmatrix}B_{-\ell}^{N+\ell} & B_\ell^N\end{bmatrix}$
    is perpendicular to the vector
    ${}^{t\!}\begin{bmatrix}P_{N}^{(N, \ell/2)} & P_{N+\ell}^{(N+\ell, -\ell/2)}\end{bmatrix}$.
  \end{itemize}
\end{prob}

As a first step for the understanding of this problem, we present the graphs in the \((g,\Delta)\)-plane of the curves defined by the residue
vanishing condition \eqref{eq:divisibilityB} and the constraint conditions for exceptional eigenvalues in Figure \ref{fig:gfunctBvanish}.
In the graphs, we show the curve described by \( T^{(N)}_{\ell/2}(g,\Delta) = 0\) in continuous gray lines, the curve given by
\( \cp{N,\ell/2}{N}((2g)^2,\Delta^2)= 0 \) in dashed gray lines and the residue vanishing condition \eqref{eq:divisibilityB} in black lines. Figure
\ref{fig:gfunctBvanish}(a) shows the case $N=1$ and \(\ell=2 \) while Figure \ref{fig:gfunctBvanish}(b) depicts the case $N=3$ and $\ell = 1$.
Notice that in both cases there appears to be intersections in the vanishing condition \eqref{eq:divisibilityB} and the constraint relation
\( T^{(N)}_{\ell/2}(g,\Delta) = 0\), in other words, there are non-Juddian eigenvalues which kill the corresponding (double) poles of the $G$-function
$G_{\ell/2}(x;g,\Delta)$. While further investigation including numerical experiments is needed, the observations made on the numerical graphs
shown in Figure \ref{fig:gfunctBvanish} provide actually an evidence for the affirmative answer of the problem above. In addition, from Figure \ref{fig:gfunctBvanish},
we notice there are apparently no intersections between the curves of the Juddian constraint conditions and the curves of the vanishing condition
\eqref{eq:divisibilityB}, which may be related to the perpendicularity described in the problem above. Actually, further numerical experimentations
we have done so far support that this observation can be true in general.

\begin{figure}[htb]
  ~
  \begin{subfigure}[b]{0.45\textwidth}
    \centering
    \includegraphics[height=5cm]{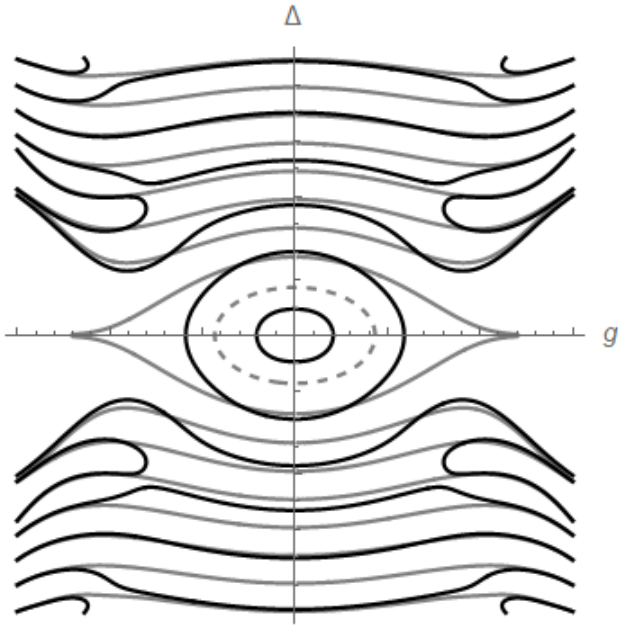}
    \caption{\(N=1,\, \ell = 2\)}
  \end{subfigure}
  ~
  \begin{subfigure}[b]{0.45\textwidth}
    \centering
    \includegraphics[height=5cm]{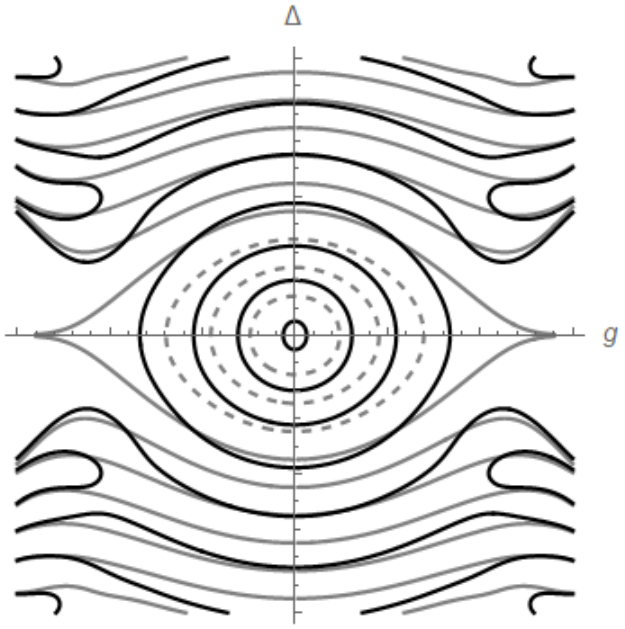}
    \caption{\(N = 3,\, \ell = 1\)}
  \end{subfigure}
  \caption{Plot of constraint relations $T^{(N)}_{\ell/2}(g,\Delta) = 0$ (gray), $\cp{N,\ell/2}{N}((2g)^2,\Delta^2)= 0$ (dashed gray) and
  residue vanishing condition \eqref{eq:divisibilityB} (black).}
  \label{fig:gfunctBvanish}
\end{figure}

\begin{rem}
  We make a small remark about Problem \ref{pr:non-Juddian}.
  It is obvious that
  \begin{equation}
    B^N_{\ell}(g,\Delta) = \frac{1}{C(N)} \det
    \begin{bmatrix}
      \bar{R}^{(N,+)}(g,\Delta,\ell/2) & Q^-(N+\ell/2;g,\Delta) \\
      R^{(N,+)}(g,\Delta, \ell/2) & \Delta \bar{Q}^-(N+\ell/2;g,\Delta)
    \end{bmatrix}.
  \end{equation}
  It follows also that
  \begin{align*}
    B^{N+\ell}_{-\ell}(g,\Delta) &= \frac{1}{C(N+\ell)}\det
                        \begin{bmatrix}
                          \bar{R}^{(N+\ell,+)}(g,\Delta,-\ell/2) & Q^+(N+\ell/2;g,\Delta) \\
                          R^{(N+\ell,+)}(g,\Delta, -\ell/2) & \Delta \bar{Q}^+(N+\ell/2;g,\Delta)
                        \end{bmatrix}\\
                      &= \frac{1}{C(N+\ell)}\det
                        \begin{bmatrix}
                          \bar{R}^{(N,-)}(g,\Delta,\ell/2) & Q^+(N+\ell/2;g,\Delta) \\
                          R^{(N,-)}(g,\Delta, \ell/2) & \Delta \bar{Q}^+(N+\ell/2;g,\Delta)
                        \end{bmatrix}.
  \end{align*}
  Since
  \begin{equation}
    T^{(N)}_{\ell/2}(g,\Delta) = \det
    \begin{bmatrix}
      \bar{R}^{(N,+)}(g,\Delta,\ell/2) & R^{(N,-)}(g,\Delta, \ell/2)  \\
      R^{(N,+)}(g,\Delta, \ell/2) & \bar{R}^{(N,-)}(g,\Delta,\ell/2)
    \end{bmatrix},
  \end{equation}
  if $T^{(N)}_{\ell/2}(g,\Delta)=0$, there exists a constant $c_N:=c_N(g,\Delta,\ell/2)(\not=0)$ (cf. the
  discussion in \S~\ref{sec:NonJuddian} for the definition of $T^{(N)}_{\e}(g,\Delta)$) such that
  \begin{equation}
    \begin{bmatrix}
      R^{(N,-)}(g,\Delta, \ell/2)  \\
      \bar{R}^{(N,-)}(g,\Delta,\ell/2)
    \end{bmatrix}
    =   c_N\,
    \begin{bmatrix}
      \bar{R}^{(N,+)}(g,\Delta,\ell/2)\\
      R^{(N,+)}(g,\Delta, \ell/2)
    \end{bmatrix}.
  \end{equation}
  It follows that
  \begin{equation}
    B^{N+\ell}_{-\ell}(g,\Delta) = -\frac{c_N}{C(N+\ell)}\det
    \begin{bmatrix}
      \bar{R}^{(N,+)}(g,\Delta,\ell/2) &  \Delta \bar{Q}^+(N+\ell/2;g,\Delta) \\
      R^{(N,+)}(g,\Delta, \ell/2) & Q^+(N+\ell/2;g,\Delta)
    \end{bmatrix}.
  \end{equation}
  Hence, if there exists a constant $\beta_N=\beta_N(g,\Delta, \ell/2)$ such that
  \begin{align*}
     \begin{bmatrix}
      \bar{R}^{(N,+)}(g,\Delta,\ell/2)\\
      R^{(N,+)}(g,\Delta, \ell/2)
    \end{bmatrix}
    = &\beta_N
        \Big\{
        \frac{c_N}{C(N+\ell)}
        \begin{bmatrix}
          \Delta \bar{Q}^+(N+\ell/2;g,\Delta) \\
          Q^+(N+\ell/2;g,\Delta)
        \end{bmatrix} \\
    &+ \frac{A_N^{\ell}((2g)^2,\Delta^2)}{C(N)}
    \begin{bmatrix}
      Q^-(N+\ell/2;g,\Delta) \\
      \Delta \bar{Q}^-(N+\ell/2;g,\Delta)
    \end{bmatrix}
    \Big\} ,
  \end{align*}
  then the equation \eqref{eq:divisibilityB} holds.
\end{rem}

\;

In the paper of Li and Batchelor \cite{LB2016JPA}(p. 4), the authors define a new $G$-function $\mathcal{G}_\e(x;g,\Delta)$
for numerical computation of the spectrum of the AQRM. The new definition uses a divergent product to make the function
$\mathcal{G}_\e(x;g,\Delta)$ vanish for all eigenvalues of AQRM, including the exceptional ones (i.e. at \(x = N \pm \e\)).
We note that, however, it is not well-defined theoretically due to the use of the divergent product. Nevertheless, according to the
following theorem, the numerical observation in \cite{LB2016JPA} by taking a certain truncation of the divergent product does seem to work
properly. To obtain a correct understanding, we use the gamma function $\Gamma(x)$ to alternatively define the new $G$-function $\mathcal{G}_\e(x;g,\Delta)$ as
\begin{equation}\label{eq:neqG}
  \mathcal{G}_\e(x;g,\Delta) := G_\e(x;g,\Delta)\Gamma(\e-x)^{-1}\Gamma(-\e-x)^{-1}.
\end{equation}

As a consequence of our discussion above on the poles of the $G$-function, we can establish the claim made in \cite{LB2016JPA}.

\begin{thm} \label{thm:LBcomp}
  For fixed \(g,\Delta>0\), \(x\) is a zero of \(\mathcal{G}_\e(x;g,\Delta)\) if and only if \(\lambda = x - g^2 \) is an eigenvalue of  \(\HRabi{\e}\).
\end{thm}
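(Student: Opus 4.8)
The plan is to compare, at each point $x$, the order of vanishing of the entire factor $\Gamma(\e-x)^{-1}\Gamma(-\e-x)^{-1}$ with the order of the pole of $G_\e(x;g,\Delta)$ determined in \S\ref{sec:SpectralDet}. The basic input is that $1/\Gamma$ is entire and vanishes, to first order, exactly at the non-positive integers; consequently $\Gamma(\e-x)^{-1}$ has a simple zero precisely at $x=N+\e$ $(N\in\Z_{\geq0})$ and $\Gamma(-\e-x)^{-1}$ has a simple zero precisely at $x=N-\e$ $(N\in\Z_{\geq0})$, and nowhere else. At a regular point $x\neq N\pm\e$ both reciprocal gamma values are finite and nonzero while $G_\e$ is holomorphic, so $\mathcal{G}_\e(x;g,\Delta)=0$ if and only if $G_\e(x;g,\Delta)=0$, which by the characterization recalled in \S\ref{sec:Gfunct} happens exactly when $\lambda=x-g^2$ is a (regular) eigenvalue. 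Thus the whole statement reduces to the exceptional points $x=N\pm\e$.

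First I would dispose of the case $\e\notin\frac12\Z$. Here the exceptional points are distinct, and at each $x=N\pm\e$ exactly one of the two reciprocal gamma factors has a simple zero while the other is finite and nonzero; by Proposition \ref{prop:polenhi1}, $G_\e$ has a pole of order at most one there. A short local expansion then gives $\mathcal{G}_\e(N\pm\e)=(\text{nonzero constant})\cdot\Res_{x=N\pm\e}G_\e$, so $\mathcal{G}_\e$ vanishes if and only if the residue does. Since Proposition \ref{prop:polenhi1} expresses this residue as a nonzero multiple of $\cp{N,\pm\e}{N}((2g)^2,\Delta^2)\,T_{\pm\e}^{(N)}(g,\Delta)$, it vanishes exactly when $\lambda=N\pm\e-g^2$ is Juddian or non-Juddian exceptional, i.e. exactly when it is an eigenvalue.

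The delicate case is $\e=\ell/2$. At $x=N+\ell/2$ one has $\e-x=-N$ and $-\e-x=-(N+\ell)$, so both reciprocal gamma factors vanish simply and the product has a double zero, matching the pole of order at most two of $G_{\ell/2}$ from Proposition \ref{prop:polehi}. Expanding near $x=N+\ell/2$ I would obtain $\mathcal{G}_{\ell/2}(N+\ell/2)=(-1)^{\ell}N!\,(N+\ell)!\,A$, where $A$ is the leading (double-pole) coefficient of Proposition \ref{prop:polehi}. By the divisibility $\cp{N+\ell,-\ell/2}{N+\ell}=A_N^{\ell}\,\cp{N,\ell/2}{N}$ together with the positivity $A_N^{\ell}((2g)^2,\Delta^2)>0$ of Theorem \ref{thm:Main}, $A$ is a nonzero multiple of $\cp{N,\ell/2}{N}((2g)^2,\Delta^2)^2\,T_{\ell/2}^{(N)}(g,\Delta)^2$; hence $\mathcal{G}_{\ell/2}(N+\ell/2)=0$ if and only if $\cp{N,\ell/2}{N}=0$ or $T_{\ell/2}^{(N)}=0$, that is, exactly when $\lambda=N+\ell/2-g^2$ is an eigenvalue. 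The remaining exceptional points $x=N-\ell/2$ with $0\leq N<\ell$ are handled as in the non-half-integer case: only $\Gamma(-\ell/2-x)^{-1}$ has a simple zero while $\Gamma(\ell/2-x)^{-1}=\Gamma(\ell-N)^{-1}$ is finite nonzero, matching the simple pole of Proposition \ref{prop:polehi1simple}; because $\cp{N,-\ell/2}{N}((2g)^2,\Delta^2)>0$ by Proposition \ref{prop:pos2} and no Juddian eigenvalue of the form $N-\ell/2-g^2$ exists by Corollary \ref{cor:noNegJudd}, the residue vanishes iff $\tilde{T}_{\ell/2}^{(N)}(g,\Delta)=0$, i.e. iff $\lambda=N-\ell/2-g^2$ is an eigenvalue.

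The hard part will be the point $x=N+\ell/2$: one must be sure that the pole of $G_{\ell/2}$ there never exceeds order two and that the surviving constant $A$ detects eigenvalues without false positives or omissions. Both are guaranteed by Proposition \ref{prop:polehi} and Theorem \ref{thm:Main}, the latter being precisely what forces the two a priori distinct vanishing conditions coming from $\cp{N,\ell/2}{N}$ and $\cp{N+\ell,-\ell/2}{N+\ell}$ to coincide. Notably, the double zero of the gamma factor absorbs simultaneously all three possibilities for $G_{\ell/2}$ at $x=N+\ell/2$ (double pole, simple pole, or removable singularity), so the eigenvalue criterion is insensitive to whether the eigenstate is Juddian or non-Juddian; this is exactly what sidesteps the ambiguity highlighted before Problem \ref{pr:non-Juddian}, where a non-Juddian exceptional eigenvalue could correspond to a finite value of $G_{\ell/2}$.
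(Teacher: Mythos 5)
Your proof is correct and follows essentially the same route as the paper's: reduce to the exceptional points, match the order of vanishing of the reciprocal gamma factors against the pole orders of $G_\e$ established in Propositions \ref{prop:polenhi1}, \ref{prop:polehi1simple} and \ref{prop:polehi}, and read off the eigenvalue condition from the residue (simple-pole case) or the leading Laurent coefficient $A$ (double-pole case). The only point where you go beyond the paper's wording is in making explicit that Theorem \ref{thm:Main} (divisibility together with positivity of $A_N^{\ell}$) is what converts the vanishing of $A$ into the condition $\cp{N,\ell/2}{N}((2g)^2,\Delta^2)\,T_{\ell/2}^{(N)}(g,\Delta)=0$; the paper leaves this step implicit.
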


\begin{proof}
  The statement  for regular eigenvalues is clear since the factor $\Gamma(\e-x)^{-1}\Gamma(-\e-x)^{-1}$ does not contribute any further zeros in this case.
  Next, suppose \(\e \not\in \frac12 \Z\). Then, the point \(x = N+\e \) is a simple zero of
  \(\Gamma{(\e-x)}^{-1} \), therefore \(\mathcal{G}_\e(N+\e;g,\Delta) = C \Res_{x = N+\e} G_\e (x;g,\Delta) \) for
  a nonzero constant \(C \in \C \) and the result follows from Proposition \ref{prop:polenhi1}.
    In the case of \( \e = \ell/2 \, (\ell \in \Z_{\geq 0})\), the result for \(x = N - \ell/2\) with \(0 \leq N < \ell\) follows by
    Proposition~\ref{prop:polehi1simple} in the same way as the case \(\e \not\in \frac12 \Z \). Similarly,
    notice that the double zero of \(\Gamma(\ell/2-x)^{-1}\Gamma(-\ell/2-x)^{-1}\) at \(x = N +\ell/2 \, (N \in \Z_{\geq 0}) \) makes
    \(\mathcal{G}_\e(x;g,\Delta)\) equal (up to a nonzero constant) to the coefficient $A$ of \((x-N-\ell/2)^{-2}\) in the Laurent expansion
    of \(G_\e(x;g,\Delta)\) at \(x= N+ \ell/2 \) given in the Proposition \ref{prop:polehi}. Hence the theorem follows. 
\end{proof}

We now recall the so-called spectral determinant of the Hamiltonian $\HRabi{\e}$ of the AQRM.
Let $\lambda_0<\lambda_1\leq \lambda_2\leq \lambda_3\leq \lambda_4 \leq\ldots$ be the set of all eigenvalues of the Hamiltonian $\HRabi{\e}$ of the AQRM. Here note that the first
eigenvalue $\lambda_0$ is always simple (see the proof of Corollary \ref{DegenerateStructure}). Then the Hurwitz-type spectral zeta function of the
AQRM is defined by
\begin{equation}
  \zeta_{\HRabi{\e}}(s,\tau)= \sum_{i=0}^\infty (\tau- \lambda_i)^{-s}, \quad \RE(s)>1.
\end{equation}
Here we fix the log-branch by $-\pi\leq \arg(\tau- \lambda_i)<\pi$. We then define the zeta regularized product (cf. \cite{QHS1993TAMS}) over the spectrum of
the AQRM as
\begin{equation}\label{eq:ZRProduct}
  \regprod_{i=0}^\infty (\tau-\lambda_i):= \exp\big(-\frac{d}{ds}\zeta_{\HRabi{\e}}(s,\tau)\big|_{s=0}\big).
\end{equation}
We can prove that $\zeta_{\HRabi{\e}}(s,\tau)$ is holomorphic at $s=0$ by the same way as in the case of the QRM \cite{Sugi2016}. Actually, the meromorphy
of $\zeta_{\HRabi{\e}}(s,\tau)$ in the whole plane $\C$ follows in a similar way to the case $\HRabi{0}$. As the notation may indicate that this regularized
product is an entire function possessing its zeros exactly at the eigenvalues of $\HRabi{\e}$. Now define the spectral determinant of the
AQRM $\det \HRabi{\e}$ as
\begin{equation}
  \det (\tau-\HRabi{\e}):= \regprod_{i=0}^\infty (\tau-\lambda_i).
\end{equation}

The following result follows immediately from Theorem \ref{thm:LBcomp}.

\begin{cor}\label{cor:SDE}
  There exists an entire non-vanishing function $c_{\e}(\tau ; g,\Delta)$ such that
  \begin{equation}
    \det (\tau-g^2-\HRabi{\e}) =  c_{\epsilon}(\tau;g,\Delta) \mathcal{G}_{\epsilon}(\tau;g,\Delta).
    \QEDhere
  \end{equation}
\end{cor}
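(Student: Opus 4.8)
The plan is to show that the two sides of the claimed identity are entire functions of $\tau$ with identical zero sets and identical multiplicities; once this is established, their quotient is automatically entire and nowhere vanishing, which is precisely the assertion. First I would record the two structural inputs. On the one hand, the zeta regularized product $\det(\tau-g^2-\HRabi{\e})=\regprod_{i=0}^\infty(\tau-g^2-\lambda_i)$ is, as a function of $\tau$, entire with zero set $\{x_i:=\lambda_i+g^2\}$, the order of vanishing at $x_i$ being the multiplicity of $\lambda_i$ in the spectrum; this rests on the holomorphy of $\zeta_{\HRabi{\e}}(s,\tau)$ at $s=0$ and its meromorphic continuation, obtained exactly as for the QRM in \cite{Sugi2016}. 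On the other hand, $\mathcal{G}_\e(x;g,\Delta)=G_\e(x;g,\Delta)\Gamma(\e-x)^{-1}\Gamma(-\e-x)^{-1}$ is entire, since the zeros of the two reciprocal gamma factors at $x=N\pm\e$ cancel the (at worst double) poles of $G_\e$ classified in \S\ref{sec:SpectralDet}.

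With both functions entire, the heart of the argument is the coincidence of zeros with multiplicities, which I would organize along the spectral classification. For a regular eigenvalue $x_i\ne N\pm\e$ the gamma factors are finite and non-zero, so $\mathcal{G}_\e$ vanishes to the same order as $G_\e$; since such eigenvalues are non-degenerate and the corresponding zeros of the $G$-function are simple, both sides have simple zeros there. For exceptional eigenvalues I would invoke Theorem \ref{thm:LBcomp} together with the residue computations of Propositions \ref{prop:polenhi1}, \ref{prop:polehi1simple} and \ref{prop:polehi}: a non-degenerate exceptional eigenvalue produces a simple zero of $\mathcal{G}_\e$ (a simple pole, or the first-order term of $G_\e$, being converted into a simple zero by the gamma factors), whereas a degenerate Juddian eigenvalue, which by Theorem \ref{thm:AQRMSpec} occurs only for $\e=\ell/2$ and has multiplicity exactly two, forces both the leading coefficient $A$ and the residue $B$ of $G_\e$ at $x=N+\ell/2$ to vanish while the analytic part $H_{\ell/2}(x;\Delta,g)$ stays non-zero, so that the double zero of the gamma factors yields precisely a double zero of $\mathcal{G}_\e$. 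In each case the order of vanishing of $\mathcal{G}_\e$ equals the eigenvalue multiplicity, hence the order of vanishing of the spectral determinant.

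Finally I would set $c_\e(\tau;g,\Delta):=\det(\tau-g^2-\HRabi{\e})/\mathcal{G}_\e(\tau;g,\Delta)$. The matching of zeros with multiplicities makes every singularity of this quotient removable and leaves it without zeros, so $c_\e$ is entire and non-vanishing, as required.

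The hard part will be the exact multiplicity bookkeeping at the half-integer exceptional points. Showing that $\mathcal{G}_\e$ has a genuine double zero, and not a zero of higher order, at a degenerate Juddian eigenvalue requires knowing that the regular part $H_{\ell/2}(x;\Delta,g)$ of $G_{\ell/2}$ in Proposition \ref{prop:polehi} does not itself vanish at $x=N+\ell/2$, and more generally that the zeros of the $G$-function at regular points are simple. These simplicity statements, while consistent with the analysis of \S\ref{sec:SpectralDet} and with the numerical evidence, are exactly where the argument must be made airtight before the quotient can be concluded to be non-vanishing.
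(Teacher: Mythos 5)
Your proposal is essentially the paper's own argument: the paper gives no separate proof of Corollary \ref{cor:SDE}, stating only that it follows immediately from Theorem \ref{thm:LBcomp}, i.e.\ from the coincidence of the zero set of $\mathcal{G}_\e(\tau;g,\Delta)$ with the zero set of the zeta regularized product. The multiplicity bookkeeping you flag at the end --- matching the exact order of vanishing of $\mathcal{G}_\e$ at a degenerate Juddian point $x=N+\ell/2$ with the multiplicity-two zero of $\det(\tau-g^2-\HRabi{\e})$, and the simplicity of the zeros of $G_\e$ at regular eigenvalues --- is not carried out in the paper either, so your closing caveat correctly identifies the point at which the published argument is no more airtight than your own.
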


\begin{rem} \label{rem:gammafactor}
  The product of the two gamma functions in  \eqref{eq:neqG} may be interpreted as a sort of  ``gamma factor'' in a sense of the zeta function
  theory (see e.g. \cite{Gon1955PJA}) but it is hard to expect any functional equation satisfied by this spectral determinant. On the other hand,
  the study of the special values $\zeta_{\HRabi{\e}}(n,0)\, (n=2, 3,\ldots)$ of the spectral zeta function as in \cite{KW2007} is awaited in the future
  research (see \cite{RW2019,Sugi2016} for the special values at negative integers). This study may build a bridge between the arithmetics (particularly,
  modular forms and Ap\'ery-like numbers, e.g. \cite{Z2009}) and the spectrum of the AQRM. Actually, the value $\frac{d}{ds}\zeta_{\HRabi{\e}}(0,0)$ (see
  \eqref{eq:ZRProduct})
  may be thought to giving an analogue of the Euler constant (the constant term of the Riemann zeta function $\zeta(s)$ at $s=1$) and hence, considered also
  as the ``special value"
  (constant term) at $s=1$ which describes the Weyl law \cite{Sugi2016} (notice that since $s=1$ is a pole of the zeta functions,
  $\frac{d}{ds}\zeta(0)$ can be considered as the special value of $\zeta(s)$ at $s=1$ by the functional equation). This result is important because,
  among other reasons, the Weyl law  is relevant to the conjecture on the distribution of eigenvalues for $\HRabi{0}$ by Braak \cite{B2011PRL}.
   The conjecture by Braak on the distribution of eigenvalues of the QRM (i.e. $\e=0$ case) can be summarized \cite{BW_personal2017} as
   follows: The number of eigenvalues $\HRabi{0}$ in each interval $[n, n+1)$ $(n\in \Z)$ is restricted to $0, 1$ or $2$ for a given parity
   (see Remark \ref{rem:Parity}), and moreover, two intervals $[n,n+1)$ containing two (resp. no) eigenvalues (i.e. two (resp. no) roots of
   $G_{\pm}(x)$) cannot be contiguous.
\end{rem}
\,

\begin{rem}
  We may define also the following two  functions by zeta regularized products over Juddian and non-Juddian exceptional eigenvalues for
  given $g$ and $\Delta$:
  \begin{align}
    \Gamma_{\e, \text{Judd}} (\tau)^{-1} :&= \regprod_{N \in \Z_{\geq0}: \; P_N^{(N, \e)}((2g)^,\Delta^2)=0} (\tau-(N+\e)),\\
    \Gamma_{\e, \text{n-Judd}} (\tau)^{-1}:&= \regprod_{N \in \Z_{\geq0}: \; T_{\e}^N(g,\Delta)=0}(\tau-(N+\e)).
  \end{align}
  Note that if the number of Juddian (resp. non-Juddian exceptional) eigenvalues is finite, then the regularized product $\Gamma_{\e, \rm{Judd}} (\tau)^{-1} $
  (resp. $\Gamma_{\e, \rm{n-Judd}} (\tau)^{-1}$) is actually a standard product. In fact, as widely believed among physicists, it can be strongly
  conjectured that the product
  \[
    \prod_{a=\pm\e} \, \prod_{b=\text{Judd, \,n-Judd}} \Gamma_{a, b} (\tau)^{-1}
  \]
  is a polynomial, that is, almost all the eigenvalues of $\HRabi{\e}$ are regular for fixed parameters \(g,\Delta>0\). If it is true, it is also
  quite interesting to determine also the degree of this polynomial, more precisely the degrees of $\Gamma_{\e, \,\rm{Judd}}(y)^{-1}$ and
  $\Gamma_{\e,\,\rm{n-Judd}}(y)^{-1}$ in terms of the parameters $g$ and $\Delta$, and to study any symmetry appearing in them (see also Remark
  \ref{rem:symmetry_spec}). In order to study these functions and possible symmetry, it is necessary to analyze $\det (\tau-\HRabi{\e})$ more
  deeply than in \cite{Sugi2016}. Actually, it seems difficult to obtain a sufficient knowledge from the current method (the study of coefficients
  of the asymptotic expansion of the trace of the heat kernel of $\HRabi{\e}$) of meromorphic extension of $\zeta_{\HRabi{0}}(s,\tau)$ investigated
  in \cite{Sugi2016} (see also \cite{IW2005a,P2010S,Robert1978}). It might also be useful to consider the spectrum from the viewpoints of
  dynamics as in \cite{P2014Milan}.

  We conclude this remark by discussing the finiteness of the product in  \(\Gamma_{\e, \rm{Judd}} (\tau)^{-1}\) for the degenerate cases \(g =0 \) and
  \(\Delta = 0 \). First, suppose that \(g = 0\). Directly from the definition it is clear that \(\cp{N,\e}{N}(0,i(i+2\e)) = 0\) for any
  \(i \in \Z_{>0}\) and \(N \geq i\), thus for \(\Delta = \sqrt{i(i+2\e)}\) the gamma factor \(\Gamma_{\e, \rm{Judd}} (\tau)^{-1}\) is an infinite product, essentially
  given by $\Gamma(\e-\tau)^{-1}\prod_{N=0}^{i-1}(\tau-N-\e)^{-1}$.

  On the other hand, suppose that \(\Delta = 0 \) and \(\e \geq  0\). In this case, by Theorem \ref{thm:Laguerre} the constraint polynomials
  \(\cp{k,\e}{k}(x,0) \, (k \in \Z_{\geq 0})\) are given by \( (-1)^k (k!)^2 L^{(2\e)}_k (x)\), where  \(L^{(2\e)}_k (x)\) are the generalized Laguerre
  polynomials. It is well-known that orthogonal polynomials interlace zeros strictly (see e.g. \cite{AAR1999,C1978}) and therefore
  \(\Gamma_{{\e}, \rm{Judd}} (\tau)^{-1}\) is a polynomial of degree \(1\) or \(0\) depending on whether \( L^{(2\e)}_k (g^2) = 0\) for some
  \(k \in \Z_{\geq 0} \) or not. Furthermore, when \(\e = \ell/2 \, (\ell \in \Z_{\geq 0})\), we have shown in Proposition \ref{cor:noNegJudd} that
  for \(0 \leq k < \ell\) the constraint polynomials \(\cp{k,-\ell/2}{k}(x,0)\) have no positive roots. Moreover, by the divisibility
  of Theorem~\ref{thm:Main}, we have \(\cp{N+\ell,-\ell/2}{N+\ell}(x,0) = A_N^{\ell}(x,0)\cp{N,\ell/2}{N}(x,0)\) for \( N \in \Z_{\geq 0} \) . It follows that
  \(  \Gamma_{{\ell/2}, \rm{Judd}} (\tau)^{-1} \, \Gamma_{{-\ell/2}, \rm{Judd}} (\tau)^{-1}\) is a polynomial of degree \( 2\) or \(0\), depending on whether
  \( L^{(\ell)}_k (g^2)=0\) for some \(k \in \Z_{\geq 0} \) or not.
\end{rem}

\section{Representation theoretical picture of the spectrum} \label{sec:repr-theor-pict}

In this section, we illustrate how the spectrum of the AQRM can be captured by irreducible representations of the Lie algebra $\mathfrak{sl}_2$. In fact, we find that Juddian eigenstates are identified with vectors in the finite dimensional irreducible representation \cite{W2016JPA}, non-Juddian exceptional eigenstates with vectors in irreducible lowest weight representations, and regular eigenstates with vectors in (non-unitary) irreducible principal series representations.

\subsection{A brief review for \texorpdfstring{$\mathfrak{sl}_2$}{sl2}-representations}

In this section we introduce the necessary background from representation theory
of the Lie algebra \(\mathfrak{sl}_2(\R)\) and/or \(\mathfrak{sl}_2(\C)\). The reader is directed to \cite{W2016JPA} for an extended discussion and \cite{L,HT1992} for the
general theory of \(\mathfrak{sl}_2\)-representations.

The standard generators $H, E$ and $F$ of $\mathfrak{sl}_2(\R)$ are given by
\begin{align*}
  H= \begin{bmatrix}
    1 & 0  \\
    0 &  -1
  \end{bmatrix},\quad
  E= \begin{bmatrix}
    0 & 1  \\
    0 &  0
  \end{bmatrix},\quad
  F= \begin{bmatrix}
    0 & 0  \\
    1 &  0
  \end{bmatrix}.
\end{align*}
These generators satisfy the commutation relations
\[
  [H,\, E] = 2E,\quad [H,\, F] = -2F,\quad  [E,\, F] = H.
\]
For \(a \in \C\) define the algebraic action \(\varpi_a\) of $\mathfrak{sl}_2$ on
the vector spaces $\rV_{1}:= y^{-\frac14} \C[y, y^{-1}]$ and $\rV_{2}:=y^{\frac14} \C[y, y^{-1}]$  given by
\begin{gather*}
  \varpi_a(H) :=2y\partial_y+\frac12,\quad
  \varpi_a(E) :=y^2\partial_y+\frac12(a+\frac12)y,\\
  \varpi_a(F) := -\partial_y+\frac12(a-\frac12)y^{-1}
\end{gather*}
with \(\partial_y := \frac{d}{d y} \). It is not difficult to verify that these operators indeed act on the
space $\rV_j (j=1,2)$, and define infinite dimensional representations of $\mathfrak{sl}_2$. 
We call $\rV_1$ the spherical (resp. $\rV_2$ the non-spherical) representation.
Write  $\varpi_{j,a} :=\varpi_a|_{\rV_{j}}$ and put $e_{1,n}:=y^{n-\frac14}$ and $ e_{2,n}:=y^{n+\frac14}$.
Then we have
\begin{equation*}
  \left\{
  \begin{aligned}
    \varpi_{1,a}(H)e_{1,n} &= 2ne_{1,n},\\
    \varpi_{1,a}(E)e_{1,n} &= \big(n+\frac{a}{2}\big)e_{1,n+1},\\
    \varpi_{1,a}(F)e_{1,n} &= \big(-n+\frac{a}{2}\big)e_{1,n-1},
  \end{aligned}
  \right.
  \,\,
  \left\{
  \begin{aligned}
    \varpi_{2,a}(H)e_{2,n} &= (2n+1)e_{2,n},\\
    \varpi_{2,a}(E)e_{2,n} &= \big(n+\frac{a+1}{2}\big)e_{2,n+1},\\
    \varpi_{2,a}(F)e_{2,n} &= \big(-n+\frac{a-1}{2}\big)e_{2,n-1}.
  \end{aligned}
  \right.
\end{equation*}
Note that $(\varpi_{1,a},\rV_{1})$ (resp. $(\varpi_{2,a},\rV_{2})$) is irreducible
when $a\not\in 2\Z$ (resp. $a\not\in 2\Z-1$) and that there is an equivalence between
$\varpi_{j,a}$ and $\varpi_{j,2-a}$ under the same condition. We call such irreducible representation a principal series.
Next, for a non-negative integer $m$, define subspaces $\rD^{\pm}_{2m},\rF_{2m-1}$ of
$\rV_{1,2m}(=\rV_{1})$, and  $\rD^{\pm}_{2m+1},\rF_{2m}$ of
$\rV_{2,2m+1}(=\rV_{2})$ respectively by
\begin{equation*}
  \rD^{\pm}_{2m}:=\bigoplus_{n\geq  m}\C\cdot e_{1,\pm n},\quad
  \rF_{2m-1}:=\bigoplus_{-m+1\leq n\leq m-1}\C\cdot e_{1,n},
\end{equation*}
\begin{equation*}
  \rD^{-}_{2m+1}:=\bigoplus_{n\geq m+1}\C\cdot e_{2,-n},\,\,
  \rD^{+}_{2m+1}:=\bigoplus_{n\geq m}\C\cdot e_{2,n},\,\,
  \rF_{2m}:=\bigoplus_{-m\leq n\leq m-1}\C\cdot e_{2,n}.
\end{equation*}
The spaces $\rD^{\pm}_{2m}$ (resp. $\rD^{\pm}_{2m+1}$) are invariant under the action $\varpi_{1,2m}(X)$,  (resp. $\varpi_{1,2m+1}(X)$), $(X\in \mathfrak{sl}_2)$, and define irreducible representations (having the lowest
and highest weight vector respectively) known to be equivalent to (holomorphic and anti-holomorphic) discrete series for $m>0$ of
$\mathfrak{sl}_2(\R)$. The irreducible representation $\rD^{\pm}_{1}$ are the (infinitesimal version of) limit of discrete series of
$\mathfrak{sl}_2(\R)$ (see e.g. \cite{HT1992,L}). Moreover, the finite dimensional space $\rF_{m}$ ($\dim_{\C} \rF_{m}=m$),  is invariant
and defines irreducible representation of $\mathfrak{sl}_2$ for $a=2-2m$ when $j=1$ and $a=1-2m$ when $j=2$, respectively.

The following result describes the irreducible decompositions of \((\varpi_a, \, \rV_{j,a})\), \((a=m\equiv j-1 \mod 2)\) for $m\in \Z_{\geq 0}$ and $j=1,2$.

\begin{lem} \label{lem:reducible}
Let $m\in \Z_{\geq 0}$.
\begin{enumerate}[$(1)$]
\item The subspaces $\rD^{\pm}_{2m}$ are irreducible submodules of $\rV_{1,2m}$ under the action
  $\varpi_{1, 2m}$ and $\rF_{2m-1}$ is an irreducible submodule of $\rV_{1,2-2m}$ under
  $\varpi_{1, 2-2m}$. In the former case, the finite dimensional irreducible representation $\rF_{2m-1}$
  can be obtained as the subquotient as $\rV_{1,2m}/\rD^{-}_{2m}\oplus \rD^{+}_{2m}\cong   \rF_{2m-1} $.
  In the latter case, the discrete series $\rD^{\pm}_{2m}$ can be realized as the irreducible components
  of the subquotient representation as $\rV_{1,2-2m}/\rF_{2m-1} \cong   \rD^{-}_{2m}\oplus \rD^{+}_{2m}$.
\item  The subspaces $\rD^{\pm}_{2m+1}$ are irreducible submodule of $\rV_{2,2m+1}$ under the action
  $\varpi_{2, 2m+1}$ and $\rF_{2m}$ is an irreducible submodule of $\rV_{2,1-2m}$ under $\varpi_{2, 1-2m}$.
  In the former case, the finite dimensional irreducible representation $\rF_{2m}$ can be obtained as the
  subquotient as $\rV_{2,2m+1}/\rD^{-}_{2m+1}\oplus \rD^{+}_{2m+1}\cong   \rF_{2m}$. In the latter case, the discrete
  series $\rD^{\pm}_{2m+1}$ can be realized as  the irreducible components of the subquotient representation
  as $\rV_{2,1-2m}/\rF_{2m} \cong   \rD^{-}_{2m+1}\oplus \rD^{+}_{2m+1}$.
\item The space $\rV_{2,1}$ is decomposed as the irreducible sum:
  $\rV_{2,1}=\rD^{-}_{1}\oplus \rD^{+}_{1}$.
\end{enumerate}
Moreover, the spaces of irreducible submodules
$\rD^{\pm}_{m} (\subset \rV_{j,m})$, $\rF_{m} (\subset \rV_{j,1-m})$ and
the direct sum $\rD^{+}_{m}\oplus \rD^{-}_{m} (\subset\rV_{j,m})$ above are the only
non-trivial invariant subspaces of $\rV_{j,m}$ for $j=1$ (resp. $j=2$) when $m$ is even (resp. odd) under
the action of $\mathcal{U}(\mathfrak{sl}_2)$, the universal enveloping algebra of $\mathfrak{sl}_2$. \QEDhere
\end{lem}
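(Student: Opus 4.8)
The plan is to reduce the entire statement to a combinatorial analysis of the one–dimensional weight spaces. Since in each module $\rV_{j,a}$ the operator $\varpi_{j,a}(H)$ acts diagonally on the basis $\{e_{j,n}\}_{n\in\Z}$ with pairwise distinct eigenvalues ($2n$ for $j=1$ and $2n+1$ for $j=2$), every $\mathcal{U}(\mathfrak{sl}_2)$-submodule $W$ is automatically $H$-stable, hence is the span of the $e_{j,n}$ it contains; that is, $W=\bigoplus_{n\in S}\C e_{j,n}$ for some index set $S\subseteq\Z$. Invariance of $W$ under $E$ and $F$ is then equivalent to two closure rules: if $n\in S$ and the raising coefficient of $\varpi_{j,a}(E)e_{j,n}$ is nonzero then $n+1\in S$, and if $n\in S$ and the lowering coefficient of $\varpi_{j,a}(F)e_{j,n}$ is nonzero then $n-1\in S$. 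Thus everything is governed by the vanishing loci of these structure constants, which I would read off directly from the displayed formulas for $\varpi_{j,a}$.

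For the submodule and irreducibility assertions of (1)–(3), I would first record the two critical indices in each case: in $\rV_{1,2m}$ the coefficient $n+m$ of $E$ vanishes only at $n=-m$ and the coefficient $m-n$ of $F$ only at $n=m$; in $\rV_{1,2-2m}$ they vanish at $n=m-1$ and $n=-m+1$; and the analogous computation in $\rV_{2,2m+1}$ and $\rV_{2,1-2m}$ produces the critical indices singling out $\rD^{\pm}_{2m+1}$ and $\rF_{2m}$. A blocked transition at these indices shows at once that each $\rD^{\pm}$ and each $\rF$ is closed under the closure rules, hence invariant. Irreducibility follows from cyclicity: inside any of these modules all transition coefficients along the relevant chain are nonzero, so from any nonzero weight vector one reaches the distinguished lowest/highest weight vector (for $\rD^{\pm}$) or spans the whole finite chain (for $\rF$) by repeated application of $E$ and $F$, whence any nonzero submodule is the entire module. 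For the subquotient isomorphisms I would compute the induced action on the quotient in the natural basis of coset representatives and match the resulting structure constants with those of the standard action: the images of $e_{1,n}$ with $-m<n<m$ in $\rV_{1,2m}/(\rD^-_{2m}\oplus\rD^+_{2m})$ carry exactly the action of the $(2m-1)$-dimensional irreducible $\rF_{2m-1}$, and dually $\rV_{1,2-2m}/\rF_{2m-1}$ splits, by the blocked transitions at $n=\pm m$, into a lowest weight piece $\cong\rD^+_{2m}$ and a highest weight piece $\cong\rD^-_{2m}$. Part (3) is the degenerate instance $m=0$ of (2), where $\rF_0=0$ and the two critical indices $n=-1,\,n=0$ are distinct, so $\rV_{2,1}=\rD^-_1\oplus\rD^+_1$.

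To prove the final uniqueness statement I would classify all admissible index sets $S$ for $\rV_{j,m}$ with $a=m$ of the stated parity (the discrete series case, $m>0$). The two critical indices split $\Z$ into three blocks: a lower ray $A$, a middle interval $B$, and an upper ray $C$. Reading off the closure rules shows that $A$ and $C$ are each absorbing (strongly connected, with no transition leaving them), while from $B$ there are transitions into both $A$ and $C$; consequently any nonempty $S$ meeting $B$ must contain all of $A\cup B\cup C=\Z$, so every admissible $S$ is one of $\varnothing$, $A$, $C$, $A\cup C$, or $\Z$. This yields exactly $\rD^-_m$, $\rD^+_m$, and $\rD^+_m\oplus\rD^-_m$ as the nontrivial invariant subspaces of $\rV_{j,m}$, with $\rF_m$ appearing instead as the irreducible submodule of $\rV_{j,1-m}$ recorded in (1)–(2). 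The computations are elementary throughout; the main obstacle is the bookkeeping — fixing conventions so that the vanishing loci of the $E$- and $F$-coefficients are located correctly in every one of the four parameter families, and carefully treating the small/degenerate values ($m=0$ and $m=1$), where the middle block $B$ can be empty or the rays $\rD^+$ and $\rD^-$ can overlap, so that no invariant subspace is double counted or overlooked.
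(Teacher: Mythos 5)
The paper gives no proof of this lemma at all --- it is stated as a known fact (note the \verb|\QEDhere| inside the statement) with pointers to \cite{W2016JPA,L,HT1992} --- so there is nothing internal to compare against; but your weight-space argument is precisely the standard proof and it is correct. The one point you assert without justification, that every $\mathcal{U}(\mathfrak{sl}_2)$-submodule is spanned by the basis vectors it contains, does hold here and deserves a line: every element of $\rV_j$ is a \emph{finite} linear combination of the $e_{j,n}$, and since $\varpi_{j,a}(H)$ has pairwise distinct eigenvalues on these, applying powers of $H$ and inverting a Vandermonde matrix isolates each component inside the submodule. Your critical indices check out against the displayed structure constants (e.g.\ in $\rV_{1,2m}$ the $E$-coefficient $n+m$ dies at $n=-m$ and the $F$-coefficient $m-n$ at $n=m$; in $\rV_{1,2-2m}$ they die at $n=m-1$ and $n=-m+1$, which is exactly what walls off $\rF_{2m-1}$), and the absorbing-ray analysis for the uniqueness clause is right: for $a=m>0$ the two rays are absorbing and the middle block leaks into both, giving exactly $\varnothing$, $\rD^{-}_m$, $\rD^{+}_m$, $\rD^{+}_m\oplus\rD^{-}_m$, $\rV_{j,m}$. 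One caveat you half-anticipate but should make explicit: part $(1)$ genuinely fails at $m=0$ as literally stated, since then $E e_{1,0}=F e_{1,0}=0$, so $\C e_{1,0}$ is a trivial submodule of both $\rD^{\pm}_0$ (which moreover overlap in $e_{1,0}$); the irreducibility and direct-sum claims of $(1)$ require $m>0$, consistent with the paper's own surrounding remark that $\rD^{\pm}_{2m}$ are discrete series ``for $m>0$''. With that restriction recorded, your proof is complete.
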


\begin{rem}
  By Lemma \ref{lem:reducible}, when \(m \in \Z_{> 0}\) the space \( \rF_{2m+1} \) is realized as an invariant subspace of \( \rV (= \rV_{1,2-2m})\)
  while the space \(\rD^-_{2 m}\oplus \rD^+_{2 m}\) is just obtained as a subquotient of \( \rF_{2m+1} \), that is, \(\rD^\pm_{2 m}\) do not constitute
  invariant spaces of $\rV_{1,2-2m}$. In other words, \(\rV_{1,2-2 m} \) cannot have the direct sum decomposition
  \(\rV_{1,2-2 m} = \rD^-_{2 m} \oplus \rF_{2m+1}  \oplus \rD^+_{2 m}\). It would be interesting to give an explanation in the framework of \(\mathfrak{sl}_2 \)
  representations for the absence of non-Juddian exceptional solutions when a Juddian solution exists with the same eigenvalue
  for $\e\in \frac12 \Z_{\geq0}$ (Corollary \ref{NoNonJudd}).
\end{rem}
  
Next, by using the a particular element \(\mathbb{K} \) of the universal enveloping algebra \(\mathcal{U}(\mathfrak{sl}_2)\) of \(\mathfrak{sl}_2\) we capture, via the representation \(\varpi_a\), the confluent Heun operators \(\mathcal{H}_1^{\e}(\lambda) \) and \(\mathcal{H}_2^{\e}(\lambda) \), corresponding to the eigenvalue problem of AQRM in the Bargmann space (see \S \ref{sec:confpict} for the derivation and the explicit form of the operators). Let  $(\alpha, \beta, \gamma, C) \in \R^4$.  Define a second order element ${\mathbb{K}}=\mathbb{K}(\alpha, \beta, \gamma; C) \in \mathcal{U}(\mathfrak{sl}_{2})$ and a constant $\lambda_a=\lambda_a(\alpha, \beta, \gamma)$ depending on the representation $\varpi_a$ as follows:
\begin{align*}
  \mathbb{K}(\alpha, \beta, \gamma; C):= & \left[\frac{1}{2}H-E+\alpha \right]\left(F+\beta\right)
                                           + \gamma\left[H-\frac{1}{2}\right]+C,\\
  \lambda_a(\alpha, \beta, \gamma):= & \beta\left(\frac{1}{2}a +\alpha\right)+\gamma\left(a-\frac{1}{2}\right).
\end{align*}
Noticing $y^{-\frac12(a-\frac12)}\,y\partial_y \,y^{\frac12(a-\frac12)}= y\partial_y+ \frac12(a-\frac12)$, we
obtain the following lemma.
\begin{lem}[\cite{W2016JPA,WY2014JPA}]\label{lem:K-element}
  We have the following expression.
  \begin{align*}
    &\frac{y^{-\frac12(a-\frac12)}\varpi_a(\mathbb{K}(\alpha, \beta, \gamma; C))y^{\frac12(a-\frac12)}}{y(y-1)}\\
    = &\frac{d^2}{dy^2} +\Big\{-\beta + \frac{\frac12a+\alpha}{y} + \frac{\frac12a+2\gamma-\alpha}{y-1} \Big\}\frac{d}{d y}
        +  \frac{-a\beta y+\lambda_a(\alpha, \beta, \gamma)+C}{y(y-1)}. \QEDhere
  \end{align*}
\end{lem}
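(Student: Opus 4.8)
The plan is to prove the identity by a direct computation, treating both sides as differential operators in $y$ and matching coefficients. First I would render the two first-order factors of $\mathbb{K}$ as operators by substituting the explicit forms of $\varpi_a(H),\varpi_a(E),\varpi_a(F)$. A short calculation gives
\[
  \varpi_a\Bigl(\tfrac12 H - E + \alpha\Bigr) = -y(y-1)\partial_y - \tfrac12\bigl(a+\tfrac12\bigr)y + \alpha + \tfrac14,
  \qquad
  \varpi_a(F + \beta) = -\partial_y + \tfrac12\bigl(a - \tfrac12\bigr)y^{-1} + \beta,
\]
while the remaining summands map to $\gamma\varpi_a(H-\tfrac12)+C = 2\gamma y\partial_y + C$. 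Since $\varpi_a$ is an algebra homomorphism on $\mathcal{U}(\mathfrak{sl}_2)$, the product $[\tfrac12 H - E + \alpha](F+\beta)$ becomes the \emph{composition} of the two operators above, so the computation reduces to composing two first-order operators and adding the two remaining terms.

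Writing $P := \varpi_a(\tfrac12 H - E + \alpha) = -y(y-1)\partial_y + p(y)$ and $Q := \varpi_a(F+\beta) = -\partial_y + q(y)$, with $p(y) = -\tfrac12(a+\tfrac12)y + \alpha + \tfrac14$ and $q(y) = \tfrac12(a-\tfrac12)y^{-1} + \beta$, I would compose them, taking care that $\partial_y$ does not commute with multiplication by $q$; the result is
\[
  PQ = y(y-1)\partial_y^2 - \bigl(y(y-1)q + p\bigr)\partial_y + \bigl(pq - y(y-1)q'\bigr).
\]
Adding $2\gamma y\partial_y + C$ then expresses $\varpi_a(\mathbb{K})$ as an explicit second-order operator whose leading term is $y(y-1)\partial_y^2$, which after the final division by $y(y-1)$ correctly normalizes to $\partial_y^2$.

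The last steps are the conjugation and the division. Setting $\nu := \tfrac12(a-\tfrac12)$, I would use the identity noted before the lemma, $y^{-\nu}\partial_y\, y^{\nu} = \partial_y + \nu y^{-1}$, together with its second-order analogue $y^{-\nu}\partial_y^2\, y^{\nu} = \partial_y^2 + 2\nu y^{-1}\partial_y + \nu(\nu-1)y^{-2}$ (multiplication operators being unaffected by the conjugation), and then divide by $y(y-1)$. For the first-order coefficient this collects, before dividing, to $-\beta y(y-1) + (a+2\gamma)y - (\tfrac{a}{2}+\alpha)$; a partial-fraction decomposition after dividing by $y(y-1)$ produces precisely $-\beta + \tfrac{a/2+\alpha}{y} + \tfrac{a/2+2\gamma-\alpha}{y-1}$, matching the right-hand side. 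I expect the main obstacle to be purely bookkeeping in the zeroth-order term: one must track the accumulated lower-order contributions from $pq - y(y-1)q'$, from the $\nu(\nu-1)y^{-2}$ and $\nu y^{-1}$ shifts under conjugation, and from $C$, and verify that after multiplication by $y(y-1)^{-1}$ they collapse to $\tfrac{-a\beta y + \lambda_a(\alpha,\beta,\gamma) + C}{y(y-1)}$; this is exactly where the definition of $\lambda_a = \beta(\tfrac12 a + \alpha) + \gamma(a - \tfrac12)$ enters, and the required cancellations are the only delicate part of the argument.
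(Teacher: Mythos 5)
Your proposal is correct and is essentially the paper's (implied) argument: the paper offers no written proof beyond the hint $y^{-\frac12(a-\frac12)}\,y\partial_y\,y^{\frac12(a-\frac12)}=y\partial_y+\frac12(a-\frac12)$ and a citation, and the intended verification is exactly your direct computation — compose $\varpi_a(\tfrac12H-E+\alpha)$ with $\varpi_a(F+\beta)$, add $2\gamma y\partial_y+C$, conjugate, and divide by $y(y-1)$. The zeroth-order bookkeeping you flag does close up as you expect: the $y^{-1}$ contributions from $\nu(\nu-1)y^{-2}$, from $A(y)\nu y^{-1}$, from $pq$ and from $-y(y-1)q'$ cancel identically, and the remaining constant collapses to $\lambda_a(\alpha,\beta,\gamma)+C$.
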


Now, by choosing suitable parameters $(\alpha, \beta, \gamma; C)$ we define
from ${\mathbb{K}}=\mathbb{K}(\alpha, \beta, \gamma; C)$ two second order elements $\mathcal{K}$ and
$\tilde{\mathcal{K}}$ $\in \mathcal{U}(\mathfrak{sl}_2)$ that capture the Hamiltonian \(\HRabi{\e} \)
of the AQRM. In the following proposition, \(\mathcal{H}_1^{\e}(\lambda)\) (resp. \( \mathcal{H}_2^{\e}(\lambda)\)) is the second order differential operator (confluent Heun ODE) of \eqref{eq:H1eps} (resp. \eqref{eq:H2eps}) corresponding to the solution \( \phi_{1,+}\) (resp. \( \phi_{2,+}\) ) in the system \eqref{eq:system1} (resp. \eqref{eq:system2p}).

\begin{prop} \label{prop:RedEigenProblem}
  Let $\lambda$ be an eigenvalue of $\HRabi{\e}$.
  Set $a=-(\lambda+g^2-\e)$, $a'=a-2\e+1$ and \(\mu = (\lambda + g^2)^2 -4g^2 (\lambda + g^2) - \Delta^2 \).
  \begin{enumerate}[$(1)$]
  \item Define
\begin{align*}
  {\mathcal{K}} &:= \mathbb{K}\Big(1+\frac a2, 4g^2, \frac{a'}2\,;\, \mu+4\e g^2 -\e^2\Big) \in \mathcal{U}(\mathfrak{sl}_2), \\
  \Lambda_a &:= \lambda_a\Big(1 + \frac a2, 4g^2, \frac{a'}2\Big).
\end{align*}
Then
\begin{equation}
  y(y-1)\mathcal{H}_1^{\e}(\lambda)= y^{-\frac12(a-\frac12)}(\varpi_a(\mathcal{K})-\Lambda_a)y^{\frac12(a-\frac12)}.
\end{equation}
\item 
Define
\begin{align*}
  \tilde{\mathcal{K}} &:= \mathbb{K}\Big(-1+\frac{a'}2, 4g^2, \frac a2\,;\, \mu-4\e g^2 -\e^2\Big) \in \mathcal{U}(\mathfrak{sl}_2), \\
  \tilde{\Lambda}_{a'} &:= \lambda_{a'}\Big(-1+\frac{a'}2, 4g^2, \frac a2\Big).
\end{align*}
Then
\begin{equation}
  y(y-1)\mathcal{H}_2^{\e}(\lambda)= y^{-\frac12(a'-\frac12)}(\varpi_{a'}(\tilde{\mathcal{K}})-\tilde{\Lambda}_{a'})y^{\frac12(a'-\frac12)}. \QEDhere
\end{equation}
\end{enumerate}
\end{prop}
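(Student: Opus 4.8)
The plan is to prove both parts by a direct comparison of second-order differential operators, taking Lemma \ref{lem:K-element} as the essential tool. For part (1) I would apply Lemma \ref{lem:K-element} to the element $\mathcal{K} = \mathbb{K}(1 + \frac a2, 4g^2, \frac{a'}2; \mu + 4\e g^2 - \e^2)$, which rewrites the conjugated operator $y^{-\frac12(a-\frac12)}\varpi_a(\mathcal{K})y^{\frac12(a-\frac12)}/(y(y-1))$ as a second-order operator with rational coefficients in $y$. I would then substitute $\alpha = 1 + \frac a2$, $\beta = 4g^2$, $\gamma = \frac{a'}2 = \frac{a - 2\e + 1}2$ and match the coefficients term by term against the confluent Heun operator $\mathcal{H}_1^{\e}(\lambda)$ of \eqref{eq:H1eps}.

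Concretely, for the first-order part the substitution gives $-\beta = -4g^2$, then $\frac{\frac12 a + \alpha}{y} = \frac{a+1}{y}$ and $\frac{\frac12 a + 2\gamma - \alpha}{y-1} = \frac{a - 2\e}{y-1}$, which are exactly the three terms of \eqref{eq:H1eps}. For the zeroth-order part, $-a\beta y = -4g^2 a y$ reproduces the leading numerator term, and the remaining constant is $\lambda_a(\alpha,\beta,\gamma) + C$. The crucial observation is that $\Lambda_a$ is defined to be precisely $\lambda_a(1 + \frac a2, 4g^2, \frac{a'}2)$, so that forming $\varpi_a(\mathcal{K}) - \Lambda_a$ cancels the $\lambda_a$ contribution and leaves exactly $C = \mu + 4\e g^2 - \e^2$, matching the numerator of \eqref{eq:H1eps}. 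Multiplying through by $y(y-1)$ then yields the claimed identity.

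For part (2) I would run the identical argument, now with $\varpi_{a'}$ in place of $\varpi_a$ (hence Lemma \ref{lem:K-element} applied with $a$ replaced by $a'$), the variable $\bar{y} = 1 - y$, and parameters $\alpha = -1 + \frac{a'}2$, $\beta = 4g^2$, $\gamma = \frac a2$, $C = \mu - 4\e g^2 - \e^2$. Using $a' = a - 2\e + 1$, I would check that $\frac{\frac12 a' + \alpha}{\bar{y}} = \frac{a' - 1}{\bar{y}} = \frac{a - 2\e}{\bar{y}}$ and $\frac{\frac12 a' + 2\gamma - \alpha}{\bar{y}-1} = \frac{a+1}{\bar{y}-1}$, matching the first-order coefficients of $\mathcal{H}_2^{\e}(\lambda)$ in \eqref{eq:H2eps}, while $-a'\beta\bar{y} = -4g^2(a - 2\e + 1)\bar{y}$ matches the leading numerator term; subtracting $\tilde{\Lambda}_{a'} = \lambda_{a'}(-1 + \frac{a'}2, 4g^2, \frac a2)$ then removes the $\lambda_{a'}$ term and leaves $C = \mu - 4\e g^2 - \e^2$.

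Since the whole argument reduces to a term-by-term identification, there is no serious conceptual obstacle; the only real work is careful bookkeeping of the parameter substitutions, in particular tracking the shift $a \mapsto a' = a - 2\e + 1$ between the two parts together with the role of $\bar{y} = 1 - y$. The one step demanding genuine attention is verifying that $\Lambda_a$ (resp. $\tilde{\Lambda}_{a'}$) is exactly the spurious additive constant produced by Lemma \ref{lem:K-element} under the chosen parameters, which is precisely what forces the stated definitions of $\Lambda_a$ and $\tilde{\Lambda}_{a'}$; this is the computation I would double-check most carefully.
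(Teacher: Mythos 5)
Your proposal is correct and is exactly the argument the paper intends: the proposition is presented as an immediate consequence of Lemma \ref{lem:K-element}, obtained by substituting the stated parameters $(\alpha,\beta,\gamma;C)$ and matching coefficients against \eqref{eq:H1eps} and \eqref{eq:H2eps}, with the subtraction of $\Lambda_a$ (resp. $\tilde{\Lambda}_{a'}$) cancelling the $\lambda_a$-term exactly as you describe. Your coefficient checks (e.g. $\tfrac12 a+\alpha=a+1$, $\tfrac12 a+2\gamma-\alpha=a-2\e$ in part (1), and the analogous identities with $a'$ in part (2)) all verify correctly.
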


\subsection{Juddian solutions and finite dimensional representations}
\label{sec:judd-solut-finite}

  In this subsection,  we recall briefly from \cite{W2016JPA} the fact that Juddian solutions are obtained in the finite dimensional irreducible representation of $\mathfrak{sl}_2$. In other words, we explain that the constraint polynomials are derived from the continuants of matrices describing the coefficients of certain element in the finite dimensional representation \(\mathbf{F}_{m}\).
 
For instance, let us consider the (spherical) case $a=-(\lambda+g^2-\epsilon)=-2m\; (m\in \Z_{>0})$. Namely, for $\mathcal{K}$ and $\Lambda_a$ in Proposition \ref{prop:RedEigenProblem} we take 
\begin{equation*}
\begin{cases}
1 + \frac{a}2= 1 -  \frac{\lambda+g^2-\epsilon}2=1-m,\\
\frac{a'}2=\frac12- \frac{\lambda+g^2+\epsilon}2=\frac12-m-\epsilon.
\end{cases} 
\end{equation*}
The following fact can be directly checked (Lemma 5.1 in \cite{W2016JPA}). 
\smallskip

\noindent
{\bf Fact}:  Write  $v^+=v^+_{2m+1}:= \sum_{n=-m}^{m} a_n e_{1,n} (\in \mathbf{F}_{2m+1}$). 
Then the eigenvalue equation 
$$\varpi_{1,-2m}(\mathcal{K})v^+=\Lambda_{-2m}v^+$$ 
is equivalent to the following
$$
\beta_n^+ a_{n+1} +\alpha_n^+ a_n + \gamma_n^+ a_{n-1}=0,
$$
where
\begin{equation*}
\begin{cases}
\alpha_n^+= (m-n)^2-4g^2(m-n)-\Delta^2+2\epsilon(m-n),\\
\beta_n^+=(m+n+1)(m-n-1),\\
\gamma_n^+=4g^2(m-n+1).
\end{cases}
\end{equation*}

We define a matrix $M^{(2m,\epsilon)}_k=M_k^{(2m,\epsilon)}((2g)^2), \Delta)$ $(k=0,1,2,\cdots,2m)$ by 
\begin{align*}
M^{(2m,\epsilon)}_k=
\begin{bmatrix}
\alpha_{m}^+ & \gamma_{m}^+ &0&0& \cdots& \cdots & 0 &0 \\
\beta_{m-1}^+& \alpha_{m-1}^+ & \gamma_{m-1}^+ &0& \cdots&\cdots & \cdot & \cdot \\
0& \beta_{m-2}^+ & \alpha_{m-2}^+ & \gamma_{m-2}^+ &0& \cdots &\cdot & \cdot \\
\cdot& \cdots&\ddots& \ddots & \ddots& \cdots & 0& \cdot\\
\cdot& \cdots&\cdots& \ddots & \ddots& \gamma_{m-k+3}^+& 0& \cdot\\
\cdot& \cdots&\cdots& 0& \beta_{m-k+2}^+ & \alpha_{m-k+2}^+ & \gamma_{m-k+2}^+ & 0\\
0 & \cdots&\cdots& \cdots & 0& \beta_{m-k+1}^+ & \alpha_{m-k+1}^+ & \gamma_{m-k+1}^+ \\
0 & \cdots &\cdots& \cdots & 0 & 0 & \beta_{m-k}^+ & \alpha_{m-k}^+
\end{bmatrix}.
\end{align*}
The continuant $\{\det M^{(2m,\epsilon)}_k\}_{0\leq k\leq 2m}$ 
is seen to satisfy the recurrence relation
\begin{align}\label{continuant}
\det M^{(2m,\epsilon)}_k 
=  \alpha_{m-k}^+ \det M^{(2m,\epsilon)}_{k-1}-\gamma_{m-k+1}^+\beta_{m-k}^+\det M^{(2m,\epsilon)}_{k-2}
\end{align}
with initial values 
\begin{equation*}
\begin{cases}
\det M^{(2m,\epsilon)}_{0}=\alpha_m^+=-\Delta^2, \\
\det M^{(2m,\epsilon)}_{1}=\alpha_m^+\alpha_{m-1}^+=-\Delta^2(1-4g^2-\Delta^2+2\epsilon).
\end{cases}
\end{equation*}
Since 
\begin{equation*}
\begin{cases}
\alpha_{m-k}^+=k^2-4g^2k-\Delta^2+2\epsilon k,\\
\beta_{m-k}^+=(2m-k+1)(k-1) 
\gamma_{m-k+1}^+=4kg^2,
\end{cases}
\end{equation*}
we find that the following relation holds by comparing the recurrence equation of the continuant \eqref{continuant} having the above initial values with that of the constraint polynomials in Definition \ref{def:cp}.
$$
P^{(2m,\epsilon)}_k((2g)^2, \Delta^2))= (-1)^k \det M_{k}^{(2m,\epsilon)}((2g)^2), \Delta)/(-\Delta^2).
$$
It follows particularly that 
\begin{align*}
P^{(2m,\epsilon)}_{2m}((2g)^2, \Delta^2))=0 
& \Longleftrightarrow  
\det M_{2m}^{(2m,\epsilon)}((2g)^2)=0\\
& \Longleftrightarrow  
(\varpi_{1,-2m}(\mathcal{K})-\Lambda_{-2m})v^+_{2m+1}=0 
\, (\exists v^+_{2m+1} \not=0 \in \mathbf{F}_{2m+1}).
\end{align*}
Hence, defining \(\phi_{1,+}(y)\) by the equation \(y^{-m-\frac14}\phi_{1,+}(y)=:v^+_{2m+1}\), we obtain a Juddian solution as follows. 
 \[
    \phi_{1,+}(y) = \frac{4 g^2 K^{(N,\e)}_{2m-1}}{\Delta} y^{2m}   - \Delta \sum_{n=0}^{2m-1} \frac{K^{(N,\e)}_n}{n-2m} y^n.
  \]
This shows that the Juddian solutions of \S \ref{sec:smallexp} can be constructed by 
  the corresponding eigenvectors \(v^+_{2m+1}\) captured in the finite dimensional irreducible submodules
  \(\rF_{2m+1}\) of \(\varpi_{1,-2m}\). The remaining three cases, non-spherical $\varpi_{2,1-2m}(\mathcal{K})$-eigenproblem in $\mathbf{F}_{2m}$ $(\lambda=2m-1-g^2+\epsilon)$, non-spherical $\varpi_{2,1-2m}(\tilde{\mathcal{K}})$-eigenproblem in $\mathbf{F}_{2m}$ $(\lambda=2m-g^2-\epsilon)$, and spherical $\varpi_{1,-2m}(\tilde{\mathcal{K}})$-eigenproblem in $\mathbf{F}_{2m+1}$ $(\lambda=2m+1-g^2-\epsilon)$ are similarly obtained and we direct the reader to \cite{W2016JPA} for the full derivation.

\subsection{Non-Juddian solutions and the lowest weight representations of $\mathfrak{sl}_2$} 
\label{sec:exceptandrepn}

In this subsection, we describe the way that the exceptional solutions of the AQRM can be captured in
the representation theoretical picture of the AQRM (cf. \S \ref{sec:confpict}). In fact, we show that the non-Juddian
exceptional eigenstates are captured by a pair of irreducible lowest weight representations of $\mathfrak{sl}_2$.

We begin with the solution corresponding to the larger exponent from \S \ref{sec:LargestExponent}.
For \( N = 2 m \), define
\begin{align}\label{eq:defvecD}
  v^+_{\phi_1} &:= y^{\frac{1}{2}(a - \frac{1}{2})} \phi_{1,+}(y) = y^{-m - \frac{1}{4}} \phi_{1,+}(y) = \frac{2 m+1}{\Delta}\bar{K}^{(N,\e)}_{2 m+1} e_{1,m}  - \Delta \sum_{n=2 m+1}^\infty \frac{\bar{K}^{(N,\e)}_n}{n - 2 m}  e_{1,n-m},
\end{align}
where \(\phi_{1,+}\) is the solution \eqref{eq:largexpsol+}.

Next, we continue the discussion following Proposition 7.2 of \cite{W2016JPA}.
Namely, we claim that the vector \(v^+_{\phi_1}\) is a non-zero eigenvector corresponding to the eigenvalue problem
\begin{equation}
  \label{eq:Eigen$-2m$}
  \varpi_{1,-2m}(\mathcal{K})v^+_{\phi_1} = \Lambda_{-2m} v^+_{\phi_1}.
\end{equation}
To see this, it is enough to compute the recurrence relation satisfied by the solution of the eigenvalue problem (the computation
follows like in Section 5.1 of \cite{W2016JPA}). Concretely, let \( v = \sum_{n \in \Z} a_n e_{1,n} \) be a solution of
\(\left(\varpi_{1,-2m}(\mathcal{K}) - \Lambda_{-2m}\right) v=0\), then from the definition of the representation \(\varpi_{1,-2m}\) the
coefficients \(\{a_n\}_{n \in \Z} \) must satisfy
\begin{align*}
  (m+n+1) (m-n-1) a_{n+1} &+ \left( (m-n)^2 - 4g^2 (m-n) + 2\e (m-n) - \Delta^2 \right)a_n \\
  &\qquad \qquad \qquad \qquad \qquad + 4g^2 (m-n+1) a_{n-1} = 0.
\end{align*}
By shifting the index \(n\) by \(m\), and relabeling the equation becomes
\begin{align}
  \label{eq:recurr1}
  (2m+n+1)(n+1) a_{n+1} +\left( -n^2 - 4g^2 n + 2\e n + \Delta^2 \right) a_n - 4g^2 (n-1) a_{n-1} = 0.
\end{align}
Note that the coefficient of \( v^+_{\phi_1} \) corresponding to the basis vector \( e_{1,m+n} \) \((n \in \Z_{\geq 0}) \) is
\( -\Delta (\bar{K}^{(N,\e)}_{2m + n})/n \), therefore by plugging these coefficients into \eqref{eq:recurr1} we get
\begin{equation*}
  (2m+n+1) \bar{K}^{(N,\e)}_{2m+n+1} +\left( -n - 4g^2 + 2\e + \frac{\Delta^2}{n} \right)\bar{K}^{(N,\e)}_{2m+n}
   - 4g^2  \bar{K}^{(N,\e)}_{2m+n-1} = 0,
\end{equation*}
which is equivalent to recurrence \eqref{eq:recurrKn}, thus proving the claim.

Recall that there is an intertwining operator \(A_a\) between the representations \((\varpi_{1,a},\rV_{1,a})\), and \((\varpi_{1,2-a},\rV_{1,2-a}) \)
for \( a \not\in 2 \Z\) (see \cite{W2016JPA}). The isomorphism  \(A_a : \rV_{1,a} \to \rV_{1,2-a}\) $(\rV_{1,a}=\rV_{1,2-a}=\rV_1)$
is explicitly given with respect to the basis $\{e_{1,n}\}_{n\in \Z}$) by the diagonal matrix
\[
  A_a= \Diag(\cdots,c_{-n},\cdots,c_0,\cdots,c_n,\cdots),
\]
with \(c_0 \neq 0\) and
\[
  c_n =  \big(A_a\big)_n= c_0 \prod_{k=1}^{|n|} \frac{k-\frac{a}{2}}{k-1+\frac{a}{2}}.
\]

Recall from Lemma \ref{lem:reducible}, for \(a = -2m \) \((m \in \Z_{>0}) \), there is an isomorphism
\begin{equation}
  \label{eq:intertwine1}
  \rV_{1,-2 m}/\rF_{2m+1} \simeq \rD^-_{2(m+1)} \oplus \rD^+_{2(m+1)} \subset
   \rV_{1, 2(m+1)}.
\end{equation}
In fact, from the expression of the intertwiner \(A_a\) \( (a \not\in 2 \Z)\), we can construct the linear isomorphism
\(\tilde{A}_{-2m}\) of \eqref{eq:intertwine1} by defining
\[
  \tilde{A}_{-2m} := \frac{1}{4\pi}\lim_{a \to -2m} \sin(2\pi a)  A_a,
\]
multiplication being elementwise. Then, as we may take $c_0=1$, we have
\begin{equation}
  \big(\tilde{A}_{-2m}\big)_n=
  \begin{cases}
    (2m+1)\prod_{k=1, \, k\not=m+1}^{|n|} \frac{k+m}{k-m-1} &  \text{ if } |n|>m\\
    0  &  \text{ if } |n|\leq m.
\end{cases}
\end{equation}
Since $e_{1,m} \in \Ker\tilde{A}_{-2m}$, we have $\tilde{A}_{-2m} v^+_{\phi_1}  \in \rD^-_{2(m+1)} \oplus \rD^+_{2(m+1)}$.
Hence, it follows from the formula
\begin{align*}
 \tilde{A}_{-2m} v^+_{\phi_1}
 & =  - \Delta \sum_{n=2 m+1}^\infty \frac{\bar{K}^{(N,\e)}_n}{n - 2 m}   \tilde{A}_{-2m}e_{1,n-m}\\
 & = - \Delta    (2m+1)\sum_{n=m+1}^\infty \frac{\bar{K}^{(N,\e)}_{n+m}}{n -  m}  \prod_{k=1, \, k\not=m+1}^{n} \frac{k+m}{k-m-1}e_{1,n}
\end{align*}
that $\tilde{A}_{-2m} v^+_{\phi_1}  \in \rD^+_{2(m+1)}$.

By definition, if \(v \in \rV_{1,-2 m}\) is a solution of \((\varpi_{1,-2m}(\mathcal{K})- \Lambda_{-2m}) v = 0\), then \(\tilde{A}_{-2m}v \in \rV_{1,2(m+1)}\)
satisfies \((\varpi_{1,2(m+1)}(\mathcal{K})- \Lambda_{-2m}) \tilde{A}_{-2m}v = 0\).

The discussion above is summarized in the following theorem.

\begin{thm} \label{thm:eigenproblem}
  Let \(N \in \Z_{\geq 0}\), \(\Delta >0\) and \( T_\e^{(N)}(g,\Delta)\) the constraint $T$-function defined in \S \ref{sec:NonJuddian}.
  If \(g\) is a positive zero of \( T_\e^{(N)}(g,\Delta)\), we have a non-degenerate non-Juddian exceptional eigenvalue \( \lambda = N + \e - g^2 \).
  Furthermore:
  \begin{enumerate}[$(1)$]
  \item If \(N=2m\), let \(v^+_{\phi_1} \in \rV_{1,-2m}\) be as in \eqref{eq:defvecD}.
    Then $w:= \tilde{A}_{-2m} v^+_{\phi_1}$ is a solution to the eigenproblem
    \((\varpi_{1,2(m+1)}(\mathcal{K})- \Lambda_{-2m}) w = 0\)
    and $w \in \rD^+_{2(m+1)}$.
  \item If \(N=2m-1\), let \(v^+_{\phi_1} \in \rV_{2,1-2m}\) be as described above.
    Then $w:= \tilde{A}_{1-2m} v^+_{\phi_1}$ is a solution of the eigenproblem
    \((\varpi_{2,2 m+1}(\mathcal{K})- \Lambda_{1-2m}) w = 0\)
    and $w \in \rD^+_{2 m+1}$.
  \end{enumerate}
  Let \(N \in \Z_{\geq 0}\), \(\Delta >0\) and \( \tilde{T}_\e^{(N)}(g,\Delta)\) the constraint $T$-function defined in \S \ref{sec:NonJuddian}.
  If \(g\) is a zero of \( \tilde{T}_\e^{(N)}(g,\Delta)\), we have a non-degenerate non-Juddian exceptional eigenvalue \( \lambda = N - \e - g^2 \).
  Furthermore:
  \begin{enumerate}[$(1)$]
  \item If \(N=2m\), let \(v^+_{\phi_2} \in \rV_{2,1-2m}\) be as described above.
    Then $w:= \tilde{A}_{1-2m} v^+_{\phi_2}$ is a solution to the eigenproblem
    \( (\varpi_{2,2m+1}(\tilde{\mathcal{K}})- \tilde{\Lambda}_{1-2m}) w = 0 \)
    and $w  \in \rD^+_{2m + 1}$.
  \item If \(N=2m+1\), let \(v^+_{\phi_2} \in \rV_{1,-2 m}\) be as described above.
    Then $w:= \tilde{A}_{-2m} v^+_{\phi_2}$ is a solution of the eigenproblem
    \( (\varpi_{1,2(m+1)}(\tilde{\mathcal{K}})- \tilde{\Lambda}_{-2m}) w = 0 \)
    and \(w \in \rD^+_{2(m+1)}\).
  \end{enumerate}
 \end{thm}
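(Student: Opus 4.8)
The plan is to assemble the construction carried out in the paragraphs preceding the statement and then run the same mechanism through the three remaining cases. First I would settle the non-representation-theoretic part: by the very definition of the constraint $T$-function in \S\ref{sec:NonJuddian}, a positive zero $g$ of $T_\e^{(N)}(g,\Delta)$ is exactly the condition that the matching equations \eqref{eq:4equations} (equivalently \eqref{eq:constraint} at $y=\bar y=1/2$) admit a nonzero constant $c$, so that the local Frobenius solutions at $y=0$ and $y=1$ glue into a global solution of the system \eqref{eq:systemN}. Hence $\lambda = N+\e-g^2$ is an exceptional eigenvalue with non-Juddian solution, and its non-degeneracy is precisely the content of Corollary \ref{DegenerateStructure}.

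Next I would treat the spherical case $N=2m$ in detail, as the other cases are verbatim analogues. Using the vector $v^+_{\phi_1}\in\rV_{1,-2m}$ of \eqref{eq:defvecD}, the key point is that $v^+_{\phi_1}$ is a genuine nonzero eigenvector, i.e. $\varpi_{1,-2m}(\mathcal{K})v^+_{\phi_1}=\Lambda_{-2m}v^+_{\phi_1}$. This is verified by writing a candidate solution $v=\sum_n a_n e_{1,n}$, reading off from Proposition \ref{prop:RedEigenProblem} and the explicit action $\varpi_{1,-2m}$ the three-term recurrence \eqref{eq:recurr1}, and substituting the coefficients $a_{m+n}=-\Delta\,\bar{K}^{(N,\e)}_{2m+n}/n$; the recurrence then collapses to \eqref{eq:recurrKn}, which is the defining recurrence of the non-Juddian solution $\phi_{1,+}$. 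Thus $v^+_{\phi_1}$ solves the eigenvalue equation $\varpi_{1,-2m}(\mathcal{K})v^+_{\phi_1}=\Lambda_{-2m}v^+_{\phi_1}$.

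I would then apply the intertwiner. From Lemma \ref{lem:reducible} we have the isomorphism \eqref{eq:intertwine1}, realized by the regularized operator $\tilde{A}_{-2m}=\frac1{4\pi}\lim_{a\to -2m}\sin(2\pi a)A_a$. By the intertwining property of $A_a$, applying $\tilde{A}_{-2m}$ to a solution of $(\varpi_{1,-2m}(\mathcal{K})-\Lambda_{-2m})v=0$ produces a solution $w=\tilde{A}_{-2m}v^+_{\phi_1}$ of $(\varpi_{1,2(m+1)}(\mathcal{K})-\Lambda_{-2m})w=0$. Since $e_{1,m}\in\Ker\tilde{A}_{-2m}$ and, more generally, $(\tilde{A}_{-2m})_n=0$ for $|n|\le m$, the explicit expansion of $\tilde{A}_{-2m}v^+_{\phi_1}$ is supported on the basis vectors $e_{1,n}$ with $n\ge m+1$, placing $w$ in $\rD^+_{2(m+1)}$ and showing it is nonzero.

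Finally I would dispatch the remaining three cases. The odd case $N=2m-1$ runs identically in the non-spherical module $\rV_{2,1-2m}$ with $\varpi_{2,\cdot}$ and intertwiner $\tilde{A}_{1-2m}$, landing in $\rD^+_{2m+1}$; the two $\tilde{T}_\e^{(N)}$ cases (eigenvalue $\lambda=N-\e-g^2$) are obtained by the substitution $\e\to-\e$, $y\to 1-y$ that relates the systems \eqref{eq:system1} and \eqref{eq:system2p}, using $\tilde{\mathcal{K}}$ and $\phi_{2,+}$ in place of $\mathcal{K}$ and $\phi_{1,+}$ together with the identity \eqref{eq:GNepsAndGNmeps}. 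The main obstacle is the third step: one must confirm that the regularized limit $\tilde{A}_{-2m}$ both kills the finite-dimensional piece $\rF_{2m+1}$ (so that $w$ is well defined on the quotient) and selects the lowest-weight summand $\rD^+_{2(m+1)}$ rather than $\rD^-_{2(m+1)}$ or a mixture. This requires tracking the explicit product formula for $(\tilde{A}_{-2m})_n$ and the vanishing of the $e_{1,m}$-coefficient, which is what forces the support onto the positive indices and hence into $\rD^+_{2(m+1)}$.
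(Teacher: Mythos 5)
Your proposal reconstructs exactly the argument the paper uses: the existence and non-degeneracy of the non-Juddian eigenvalue from the $T$-function constraint and Corollary \ref{DegenerateStructure}, the verification that $v^+_{\phi_1}$ satisfies the $\varpi_{1,-2m}(\mathcal{K})$-eigenequation by collapsing the three-term recurrence to \eqref{eq:recurrKn}, the regularized intertwiner $\tilde{A}_{-2m}$ killing $\rF_{2m+1}$ and placing the image in $\rD^+_{2(m+1)}$, and the remaining three cases by the same mechanism. This matches the paper's proof (which is precisely the discussion preceding the theorem plus a deferral of the analogous cases), so no further comment is needed.
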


 \begin{proof}
   In the foregoing discussion we proved the case for \( \lambda = N + \e - g^2 \) and \(N= 2m \). The remaining cases are proved in a similar manner,
   so we leave the proof of those cases to the reader (for the computations, see Section 5 of \cite{W2016JPA}). 
 \end{proof}

\begin{rem}
  When $\e=\ell/2\,( \ell\in \Z_{\geq0})$, the relation $\tilde{T}_{\ell/2}^{(N+\ell)}(g,\Delta) =T_{\ell/2}^{(N)}(g,\Delta)$ in Lemma \ref{lem:const-T-rel} guarantees
  the compatibility of the former and latter assertions in Theorem \ref{thm:eigenproblem}. For instance, let us observe the case when $N=2m$ and
  $\ell=2\ell'\,(\ell'\in \Z)$, the remaining cases are verified in the same manner. If there exists a non-Juddian exceptional solution
  $(e^{-gz}\phi_{1,+}(\frac{g+z}{2g}; \ell/2), e^{-gz}\phi_{1,-}(\frac{g+z}{2g}; \ell/2))$ for $\lambda=N +\ell/2-g^2$  then the theorem asserts $T_{\ell/2}^{(N)}(g,\Delta)=0$
  must hold and $w:= \tilde{A}_{-2m} v^+_{\phi_1}\in \rD^+_{2(m+1)}$. Then, since $\tilde{T}_{\ell/2}^{(N+\ell)}(g,\Delta)=0$, the latter assertion of the theorem
  shows that there exists a non-Juddian exceptional solution $(e^{gz}\phi_{2,-}(\frac{g-z}{2g}; -\ell/2), e^{gz}\phi_{2,+}(\frac{g-z}{2g}; -\ell/2))$
  (corresponding to eigenvalue $\lambda=(N+\ell) -\ell/2 -g^2$) and $w':= \tilde{A}_{1-2(m+\ell')} v^+_{\phi_2}\in  \rD^+_{2(m+\ell')+ 1}$, we can verify that $w$
  corresponds to the first and $w'$ (up to a constant) to the second component of the solution
  $(e^{-gz}\phi_{1,+}(\frac{g+z}{2g}; \ell/2), e^{-gz}\phi_{1,-}(\frac{g+z}{2g}; \ell/2))$, and thus, there is no contradiction with the
  non-degeneracy of the non-Juddian exceptional solution even for the case $\e=\ell/2$.
\end{rem}

\begin{rem}
   The eigenvector corresponding to the non-Juddian exceptional solutions corresponding to $N=0$
   in the proof of Corollary \ref{DegenerateStructure} is captured in the limit of discrete series $\rD_1^+$.
\end{rem}

\begin{rem}
  For non-Juddian exceptional eigenvalues \(  \lambda = N \pm \e -g^2\), Theorem \ref{thm:eigenproblem} describes how  the eigenvectors
  \(w \in (\varpi_{j,a}, \rV_{j}\)) (\(j \in \{1,2\}\)) in the corresponding eigenvalue problem (depending on the sign in front of \(\e\)
  and the parity of \(N\)) can be understood in the representation theoretical scheme (i.e. the setting in Lemma \ref{lem:reducible} and
  Proposition \ref{prop:RedEigenProblem}).

  For the case of the Juddian solutions, we can capture any corresponding eigenvector in a  (finite dimensional) irreducible subspace of
  $\rV_{j}$ and such irreducible representation is uniquely determined by the eigenvector \cite{W2016JPA}. Unlike the case of Juddian solutions,
  any non-Juddian exceptional solution can not be captured in an invariant subspace  $\rV_{j}$ in an appropriate manner but it is required to
  consider  the subquotient of $\rV_{j}$.  Theorem \ref{thm:eigenproblem} shows that through the isomorphism obtained by the operator
  \(\tilde{A}_{-m}\) (the intertwiner between the subquotient $\rV_{j}/\rF_{m} \cong   \rD^{-}_{m+1}\oplus \rD^{+}_{m+1}$), the eigenvector corresponding
  to a non-Juddian exceptional solution determines uniquely an (infinite dimensional) irreducible representation (i.e. a piece corresponding
  to the lowest weight representation $\rD^{+}_{m}$) of the subquotient space of $\rV_{j}$ which contains the eigenvector.
  Since each of the following short exact sequences
  \begin{align*}\label{non-split_SES}
    & 0 \longrightarrow  \rD^{+}_{2m}\oplus \rD^{-}_{2m}  \longrightarrow \rV_{1,2m}  \longrightarrow \rF_{2m-1}  \longrightarrow 0,\\
    & 0 \longrightarrow  \rD^{+}_{2m+1}\oplus \rD^{-}_{2m+1}  \longrightarrow \rV_{2,2m+1}  \longrightarrow \rF_{2m}  \longrightarrow 0
  \end{align*}
  of $\mathfrak{sl}_2$-modules for $m>0$ are not split, we might give a representation theoretic explanation of Corollary \ref{DegenerateStructure} for $\e\in\frac12\Z_{\geq0}$ (i.e. Juddian and non-Juddian exceptional solutions  corresponding to the same eigenvalue cannot appear simultaneously in the spectrum).

\subsection{Regular eigenvalues and classification of representations’ type} \label{sec:regular}

In this subsection, we remark that a regular eigenvalue arising from a solution of the equation $G_\epsilon(x;g, \Delta)=0$ may be considered to be an element of the (so-called non-unitary) principal series $\mathbf{V}_{j,a}\, (j=1,2,\, a\not \in \Z)$. However, different from the cases of exceptional eigenvalues, the remark here is just a sort of formal observation. Moreover, in the last we summarize the relation between the spectrum of the AQRM and irreducible representations of $\mathfrak{sl}_2$. 

Let $\lambda$ be a regular eigenvalue, that is, an eigenvalue $\lambda= x-g^2$ with  $x \pm \epsilon \notin \mathbb{Z}$. 
Then, $x=\lambda+g^2$ is a solution of $G_\epsilon(x;g, \Delta)=0$. From Table \ref{tab:exp} in \S \ref{sec:confpict}, we see that the difference of the characteristic exponent of $\mathcal{H}_1^{\e}(\lambda)\phi_1=0$\, (resp. $\mathcal{H}_2^{\e}(\lambda)\phi_2=0)$ at each regular singular point $y=0, 1$ is respectively $\lambda+g^2-\epsilon, \lambda+g^2+\epsilon+1$ (resp. $\lambda+g^2+\epsilon+1, \lambda+g^2-\epsilon$) and is not an integer. 
We know that, for instance, the confluent Heun equation $\mathcal{H}_1^{\e}(\lambda)\phi_1=0$ 
has a local Frobenius solution around $y=0$, which is known \cite{Heun2008} as the confluent Heun function 
\[
  \phi_1(y)=HC(-4g^2,-(\lambda+g^2-\epsilon), -(\lambda+g^2-\epsilon+1), \cdot\,,\cdot\,; y)= \sum_{n=0}^\infty h_n y^n,
\]
where the coefficients $h_n$ are defined by a certain three-term recurrence relation determined by the constants (i.e. coefficients) in the operator $\mathcal{H}_1^{\e}(\lambda)$, with $h_0=1$ and $h_{-1}=0$. Another linearly independent solution is given in the form
\[
  \tilde\phi_1(y)=y^{\lambda+g^2-\epsilon}HC(-4g^2, \lambda+g^2-\epsilon, -(\lambda+g^2-\epsilon+1), \cdot\,,\cdot\,; y).
\]
We remark that it is equivalently given in \cite{Heun2008} by an asymptotic series in  $y^{n}$ for $n\in \Z$. If the value $\lambda+g^2-\epsilon$ is not a nonnegative integer, the solution $\tilde\phi_1(y)$ can not be analytic (i.e. a non-physical solution). In other words, a possible solution is $\phi_1(y)$. Hence the corresponding eigenvector $v_1(\not=0)$ is obtained from $\phi_1(y)$ as of the form 
$$
v_1=y^{\frac12(-(\lambda+g^2-\epsilon))-\frac12)}\sum_{n=0}^\infty h_n y^n = 
y^{-\frac12(\lambda+g^2-\epsilon)}\sum_{n=0}^\infty h_ne_{1,n}.
$$
Now, for $\alpha=-\frac12(\lambda+g^2-\epsilon)\not\in \mathbb{Z}$, formally we have 
$$
y^\alpha= \sum_{n\in \mathbb{Z}} a_n y^n \quad\text{with}\quad  
a_n:= \frac{\sin(\pi i (\alpha-n))}{\pi(\alpha-n)} \not\equiv 0.  
$$
Therefore, we observe the eigenvector $v_1$ can be obtained in the following form.  
$$
v_1= \sum_{m\in \mathbb{Z}} \Big\{\sum_{k+n=N} a_kh_n\Big\} e_{1,m} \in \mathbf{V}_{1,-(\lambda+g^2-\epsilon)}.
$$
We note that the principal series $(\varpi_{1,-(\lambda+g^2-\epsilon)}, \mathbf{V}_{1,-(\lambda+g^2-\epsilon)})$ is irreducible
when $\lambda+g^2-\epsilon\not\in 2\mathbb{Z}$. (In particular, if the $\phi_1$ above gives a solution then corresponding $v_1$ is spherical.) It follows that the eigenvector corresponding to the regular spectrum determines the irreducible principal series representation and vice-versa. 

\begin{rem}
The sphericality (or non-sphericality) of the eigenvector is depends on which of $\phi_1$ and $\phi_2$ gives a solution (is determined by $x=\lambda+g^2$).
 
\end{rem}


  \ctable[
  caption = {Correspondence of spectrum, irreducible representations and constraint relations},
  label   = tab:spect-constrel,
  width   = \hsize,
  pos     = ht,
  left
  ]{>{\raggedright}X c c >{\raggedright}X   l}{
    \tnote{Determination of $m$ in \(\rF_{m}\). Case $N+\e$: $m=N+1$. Case $N-\e$: $m=N$.}
    \tnote[b]{Determination of $m$ in \(\rD_{m}^+\) for the first component of the solution (see Theorem \ref{thm:eigenproblem}). Case $N+\e$: $m = N+2$. Case $N-\e: m= N+1$.}
    \tnote[c]{Case \(\e \in \frac12\Z\): Non-Juddian exceptional solutions are non-degenerate. Juddian solutions are always degenerate (e.g. corresponding to
      the space \(\rF_m \oplus \rF_{m+1}\) when \(\e=0\). See \cite{W2016JPA} for general $\e=\ell/2$).}
    \tnote[d]{Constants: \(N \in \Z_{\geq 0}, \quad x \not\in \Z_{\geq 0}\).}
  }{                                                         \FL
    \textbf{Type }   & \textbf{Eigenvalue}\tmark[d]  &   & \textbf{Rep. of \(\mathfrak{sl}_2\)} & \textbf{Constraint relation}  \NN
    \cmidrule(r){1-4}\cmidrule(l){5-5}
    Juddian\tmark[c]& \( N\pm\e-g^2\)  & \(\Leftrightarrow\)  & \(\rF_{m}\): finite dim. irred. rep.\tmark & \(P_N^{(N, \pm\e)}((2g)^2, \Delta^2)=0\) \NN
    Non-Juddian exceptional & \( N\pm\e-g^2\) & \(\Leftrightarrow\)  & \(\rD_{m}^+\): irred. lowest weight rep.\tmark[b] & \(T_{\pm \e}^{(N)}(g,\Delta) = 0\)  \NN
    \rule{0pt}{3ex}Regular & \( x\pm\e-g^2\)  & \(\Leftrightarrow\)  & \(\varpi_{j,a}\): irred. principal series &  \(G_\e(x;g,\Delta)=0\)\LL
  }

\end{rem}

\begin{rem}
  We may ask if there is a geometric (group theoretic) interpretation of the spectrum of the AQRM. In fact, since the Lie algebra of any
  covering group of $SL_2(\R)$ is $\mathfrak{sl}_2(\R)$ (and $\mathfrak{sl}_2(\R)_\C=\mathfrak{sl}_2(\C)$), relating to the spectral zeta function
  the following question comes up naturally (cf. Table \ref{tab:spect-constrel}): Is there any covering group $G:=G(g,\Delta)$ of $SL_2(\R)$ (or
  $G:=SL_2(\C)$) with a discrete subgroup $\Gamma:=\Gamma(g,\Delta)$ of $G$ such that, e.g. the regular spectrum of $\HRabi{\e}$ can be captured in
  $L^2(\Gamma\backslash G)$?  (see Problem  6.1 in \cite{W2016JPA} for a relevant question).
\end{rem}

    We summarize the relation of eigenvalue type, constraint relation and related irreducible representations in Table \ref{tab:spect-constrel}.

\section{Generating functions of the constraint polynomials}\label{sec:further}

In this section, we study the generating functions of constraint polynomials $P_N^{(N,\e)}((2g)^2,\Delta^2)$ with their defining
sequence $P_k^{(N,\e)}((2g)^2,\Delta^2)$ from the viewpoints of confluent Heun equations. As we have already observed (cf. Proposition
\ref{Prop:reccurEquiv}), $P_N^{(N,\e)}((2g)^2,\Delta^2)$ is essentially identified with the coefficient of the $\log$-term of the local Frobenius
solution of the smallest exponent $ \rho^{\pm}= 0$ for the confluent Heun equation corresponding to the eigenvalue problem of the AQRM. As a byproduct of
this discussion, we obtain an alternative proof of the divisibility part of Theorem \ref{thm:Main}. 

\subsection{Constraint polynomials and confluent Heun differential equations} \label{sec:div}

In this subsection we study certain confluent Heun differential equations satisfied
by the generating function of the polynomials \(\cp{N,\e}{k}(x,y)\) and obtain
combinatorial relations between the constraint polynomials.

For convenience, define the normalized polynomials \(\ncp{N,\e} k(x,y)\) by
\begin{equation*}
  \ncp{N,\e} k(x,y):=\frac{\cp{N,\e}k(x,y)}{k!(k+1)!}
\end{equation*}
and their generating function
\begin{equation*}
  \Gncp{N,\e}(x,y,t):=\sum_{k=0}^\infty \ncp{N,\e}k(x,y)t^k.
\end{equation*}

Clearly, the normalized polynomials \(\ncp{N,\e}{k}(x,y)\) satisfy the recurrence relation
\begin{equation}\label{eq:recurrence of ncp}
  \begin{split}
    \ncp{N,\e}0(x,y)&=1,\qquad
    \ncp{N,\e}1(x,y)=\frac{x+y-1-2\e}2, \\
    \ncp{N,\e}k(x,y)&=\frac{kx+y-k^2-2k\e}{k(k+1)}\ncp{N,\e}{k-1}(x,y)
    -\frac{(N-k+1)x}{k(k+1)}\ncp{N,\e}{k-2}(x,y),.
  \end{split}
\end{equation}
for \(k\geq 2 \). With these preparations, we study the differential equation satisfied by
the generating function \(\Gncp{N,\e}(x,y,t)\).

\begin{prop}\label{lem:diff eq of z}
  As a function in the variable $t$, the generating function $z=\Gncp{N,\e}(x,y,t)$ satisfies the differential equation
  \begin{equation} \label{eq:cheundiff}
    \biggl\{t(1+t)\frac{\partial^2}{\partial t^2}
    +(2-(x-3-2\e)t-xt^2)\frac{\partial}{\partial t}
    -(x+y-1-2\e-(N-1)xt)\biggr\}z=0.
  \end{equation}
\end{prop}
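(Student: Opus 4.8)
The plan is to convert the three-term recurrence \eqref{eq:recurrence of ncp} for the normalized polynomials $\ncp{N,\e}{k}(x,y)$ directly into an ordinary differential equation for their generating function $z=\Gncp{N,\e}(x,y,t)=\sum_{k\geq 0}\ncp{N,\e}{k}(x,y)\,t^k$, where throughout $z'=\partial z/\partial t$ and $z''=\partial^2 z/\partial t^2$. First I would rewrite the recurrence in cleared-denominator form
\[
  k(k+1)\ncp{N,\e}{k}=(kx+y-k^2-2k\e)\ncp{N,\e}{k-1}-(N-k+1)x\,\ncp{N,\e}{k-2},
\]
and observe that, with the convention $\ncp{N,\e}{-1}=\ncp{N,\e}{-2}=0$, this identity in fact holds for every $k\geq 0$: the cases $k=0$ and $k=1$ collapse to the initial conditions $\ncp{N,\e}{0}=1$ and $\ncp{N,\e}{1}=(x+y-1-2\e)/2$. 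This allows the recurrence to be summed over all $k\geq 0$ without tracking boundary terms separately.

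Next I would multiply by $t^k$, sum over $k\geq 0$, and translate each sum into an operator acting on $z$ via the elementary dictionary $\sum_k k\,\ncp{N,\e}{k}t^k=tz'$ and $\sum_k k^2\,\ncp{N,\e}{k}t^k=t^2z''+tz'$ (using $k^2=k(k-1)+k$), so that $\sum_k k(k+1)\ncp{N,\e}{k}t^k=t^2z''+2tz'$, together with the index-shift rules $\sum_k\ncp{N,\e}{k-1}t^k=tz$ and $\sum_k\ncp{N,\e}{k-2}t^k=t^2z$. For the two sums on the right I would reindex $j=k-1$ (respectively $j=k-2$), re-expand the coefficient $kx+y-k^2-2k\e$ after substituting $k=j+1$ (respectively $N-k+1=j+1-N$ after $k=j+2$), and apply the same dictionary. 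This yields for the $\ncp{N,\e}{k-1}$ term the expression
\[
  t\bigl\{-t^2z''+(x-3-2\e)tz'+(x+y-1-2\e)z\bigr\},
\]
and for the $\ncp{N,\e}{k-2}$ term the expression $xt^2\bigl\{tz'+(1-N)z\bigr\}$.

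Finally I would equate the left-hand side $t^2z''+2tz'$ with the sum of these two contributions, transpose everything to one side, factor out the common overall $t$, and collect the coefficients of $z''$, $z'$, and $z$. The $z''$ coefficient becomes $t(1+t)$, the $z'$ coefficient becomes $2-(x-3-2\e)t-xt^2$, and the $z$ coefficient becomes $-(x+y-1-2\e-(N-1)xt)$ after noting $-(N-1)xt=(1-N)xt$; this is precisely \eqref{eq:cheundiff}.

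The argument is entirely mechanical, so there is no conceptual obstacle; the only delicate point is the bookkeeping in the reindexing step. I expect the main source of error to be the correct re-expansion of $kx+y-k^2-2k\e$ under $k=j+1$ and the resolution of the $k^2$ contribution into the combination $t^2z''+tz'$, where a stray sign or an off-by-one shift would propagate. Accordingly I would verify the three collected coefficients against the target equation term by term before concluding.
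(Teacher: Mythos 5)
Your proposal is correct and follows essentially the same route as the paper: both convert the cleared-denominator three-term recurrence for $\ncp{N,\e}{k}$ into the ODE by summing against powers of $t$ and translating $k$-weighted sums into $z$, $tz'$, $t^2z''$ (the paper organizes the left-hand side as $\partial_t^2(tz)=tz''+2z'$ and splits off the $k=1$ term, whereas you multiply by $t^k$ and absorb the boundary cases into the convention $\ncp{N,\e}{-1}=\ncp{N,\e}{-2}=0$, then cancel an overall factor of $t$ at the end — a purely cosmetic difference). All of your intermediate expressions and the final collection of coefficients check out.
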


\begin{proof}
  We observe that
  \small
  \begin{align*}
    &\frac{\partial^2}{\partial t^2}(=\sum_{k=1}^\infty k(k+1)\ncp{N,\e}k(x,y)t^{k-1} \\
    &\, =2\ncp{N,\e}1(x,y)+\sum_{k=2}^\infty
      \biggl(
      (kx+y-k^2-2k\e)\ncp{N,\e}{k-1}(x,y)
      -(N-k+1)x\ncp{N,\e}{k-2}(x,y)
      \biggr)t^{k-1} \\
    &\, =2\ncp{N,\e}1(x,y)
      +y\sum_{k=2}^\infty
      \ncp{N,\e}{k-1}(x,y)t^{k-1}
      +(x+1-2\e)\sum_{k=2}^\infty k\ncp{N,\e}{k-1}(x,y)t^{k-1} \\
    &\, \quad{}-\sum_{k=2}^\infty k(k+1)\ncp{N,\e}{k-1}(x,y)t^{k-1}
      -(N+1)x\sum_{k=2}^\infty\ncp{N,\e}{k-2}(x,y)t^{k-1}
      +x\sum_{k=2}^\infty k\ncp{N,\e}{k-2}(x,y)t^{k-1} \\
    &\,=2\ncp{N,\e}1(x,y)+y(z-1)+(x+1-2\e)\frac{\partial}{\partial t}(tz-t)
      -\frac{\partial^2}{\partial t^2}(t^2z-t^2)-(N+1)xtz+x\frac{\partial}{\partial t}(t^2z) \\
    &\,=yz+(x+1-2\e)\Bigl(z+t\frac{\partial z}{\partial t}\Bigr)
      -\Bigl(2z+4t\frac{\partial z}{\partial t}+t^2\frac{\partial^2 z}{\partial t^2}\Bigr)
      -(N+1)xtz+x\Bigl(2tz+t^2\frac{\partial z}{\partial t}\Bigr).
  \end{align*}
  \normalsize
  Hence, it follows that
  \begin{align*}
    t\frac{\partial^2z}{\partial t^2}+2\frac{\partial z}{\partial t} =& yz+(x+1-2\e)\Bigl(z+t\frac{\partial z}{\partial t}\Bigr)
      -\Bigl(2z+4t\frac{\partial z}{\partial t}+t^2\frac{\partial^2 z}{\partial t^2}\Bigr) \\                                       
    &\, -(N+1)xtz+x\Bigl(2tz+t^2\frac{\partial z}{\partial t}\Bigr).
  \end{align*}
  Since $\frac{\partial^2}{\partial t^2}(tz)=t\frac{\partial^2z}{\partial t^2}+2\frac{\partial z}{\partial t}$,
  we have the desired conclusion immediately. 
\end{proof}

\begin{rem}
  Note that equation \eqref{eq:cheundiff} is a confluent Heun differential equation.
  By \eqref{eq:coeffGfunct}, we know that the constraint polynomials \(\cp{N,\e}{n}(x,y)\) are
  the constant multiples of the coefficients \(K_n^{\pm}(N \pm \e;g,\Delta,\e)\) of solutions of the
  confluent Heun picture of the spectral problem of AQRM.
  Therefore, it is not surprising that \(\Gncp{N,\e}(x,y,t)\) satisfies the confluent Heun differential
  equation \eqref{eq:cheundiff}.
\end{rem}

Our main application is the following combinatorial relation between the polynomials \(\ncp{N,\e}{k}(x,y)\).

\begin{thm}
  \label{thm:mainConj}
  For $N, \ell, k \in \Z_{\geq0}$, the following equation holds.
  \begin{equation}\label{eq:mainConj}
    \ncp{N+\ell,-\ell/2}k(x,y)=\sum_{i=0}^k \binom{\ell}{k-i} \ncp{N,\ell/2}i(x,y).
  \end{equation}
\end{thm}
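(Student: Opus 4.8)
The plan is to translate \eqref{eq:mainConj} into a single identity of generating functions and then exploit the confluent Heun equation of Proposition \ref{lem:diff eq of z}. First I would observe that, since $(1+t)^\ell=\sum_{j\ge 0}\binom{\ell}{j}t^j$, the right-hand side of \eqref{eq:mainConj} is precisely the coefficient of $t^k$ in the Cauchy product $(1+t)^\ell\,\Gncp{N,\ell/2}(x,y,t)$. Hence \eqref{eq:mainConj}, for all $k\ge 0$, is equivalent to the single generating-function identity
\[
  \Gncp{N+\ell,-\ell/2}(x,y,t) = (1+t)^\ell\,\Gncp{N,\ell/2}(x,y,t),
\]
and it suffices to prove this.

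Next I would introduce the second-order operator attached to \eqref{eq:cheundiff},
\[
  \mathcal{L}_{N,\e} := t(1+t)\frac{\partial^2}{\partial t^2} + \bigl(2-(x-3-2\e)t-xt^2\bigr)\frac{\partial}{\partial t} - \bigl(x+y-1-2\e-(N-1)xt\bigr),
\]
so that Proposition \ref{lem:diff eq of z} reads $\mathcal{L}_{N,\e}\Gncp{N,\e}=0$. The key step is the gauge (conjugation) identity
\[
  \mathcal{L}_{N+\ell,-\ell/2}\bigl[(1+t)^\ell z\bigr] = (1+t)^\ell\,\mathcal{L}_{N,\ell/2}[z],
\]
which I would verify by computing $(1+t)^{-\ell}\mathcal{L}_{N+\ell,-\ell/2}(1+t)^\ell$. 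Using $(1+t)^{-\ell}\partial_t(1+t)^\ell=\partial_t+\ell/(1+t)$, the second-order coefficient is unchanged; the first-order coefficient acquires $2t(1+t)\cdot\ell/(1+t)=2\ell t$ and collapses to the first-order coefficient of $\mathcal{L}_{N,\ell/2}$; and the zeroth-order coefficient reduces — after the crucial factorization $2+(2-x)t-xt^2=(1+t)(2-xt)$, which cancels the denominator $1+t$ — exactly to the constant term $-(x+y-1-\ell-(N-1)xt)$ of $\mathcal{L}_{N,\ell/2}$. I expect this zeroth-order bookkeeping to be the main obstacle, but it settles cleanly once the factor $(1+t)$ is extracted.

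With the gauge identity in hand, $w:=(1+t)^\ell\,\Gncp{N,\ell/2}(x,y,t)$ satisfies $\mathcal{L}_{N+\ell,-\ell/2}w=(1+t)^\ell\,\mathcal{L}_{N,\ell/2}\Gncp{N,\ell/2}=0$, so $w$ and $\Gncp{N+\ell,-\ell/2}$ are both formal power series solutions of the \emph{same} confluent Heun equation \eqref{eq:cheundiff} (with parameters $N+\ell,-\ell/2$). Finally I would invoke uniqueness. Writing a power series solution as $\sum_{k\ge 0}a_kt^k$ and reading off the coefficient of $t^k$, the term in $a_{k+1}$ always appears with factor $(k+1)(k+2)\ne 0$ (equivalently, the recurrence \eqref{eq:recurrence of ncp} has denominator $k(k+1)\ne 0$ for $k\ge 1$, and the $t^0$-relation already fixes $a_1$ in terms of $a_0$); hence every coefficient is determined by $a_0$ alone. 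Since $w(x,y,0)=\ncp{N,\ell/2}0(x,y)=1=\Gncp{N+\ell,-\ell/2}(x,y,0)$, the two series coincide. Comparing coefficients of $t^k$ then yields \eqref{eq:mainConj}.
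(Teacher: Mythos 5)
Your proposal is correct and follows essentially the same route as the paper: reduce \eqref{eq:mainConj} to the generating-function identity $\Gncp{N+\ell,-\ell/2}=(1+t)^\ell\Gncp{N,\ell/2}$ (the paper's Lemma \ref{lem:reduction to GF}), show via the $(1+t)^\alpha$ gauge transformation that both sides solve the same confluent Heun equation (Lemmas \ref{lem:diff eq of w} and \ref{lem:z and w}), and conclude by uniqueness of the power series solution. The only cosmetic difference is that you determine the series from $a_0$ alone using the nonvanishing factor $(k+1)(k+2)$, whereas the paper matches both the constant and linear coefficients via Example \ref{ex:k=0,1}; both are valid.
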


We illustrate Theorem \ref{thm:mainConj} with an example, which is used later in the proof of the theorem.

\begin{ex}\label{ex:k=0,1}
  When $k=0$, both sides of \eqref{eq:mainConj} are equal to $1$.
  When $k=1$, the left-hand side of \eqref{eq:mainConj} is $\ncp{N+\ell,-\ell/2}1(x,y)=\frac{x+y-1+\ell}2$,
  and the right-hand side of \eqref{eq:mainConj} is
\begin{equation*}
  \binom \ell 1\ncp{N,\ell/2}0(x,y)+\binom \ell 0\ncp{N,\ell/2}1(x,y) =\ell+\frac{x+y-1-\ell}2 =\frac{x+y-1+\ell}2.
\end{equation*}
Hence \eqref{eq:mainConj} is valid when $k=0,1$.
\end{ex}

Instead of proving Theorem \ref{thm:mainConj} directly, we use
the following equivalent formulation in terms of generating functions.

\begin{lem}\label{lem:reduction to GF}
  The equation \eqref{eq:mainConj} is equivalent to the equation
  \begin{equation}
    \Gncp{N+\ell,-\ell/2}(x,y,t)=(1+t)^\ell\Gncp{N,\ell/2}(x,y,t). \label{eq:reduction to GF}
  \end{equation}
\end{lem}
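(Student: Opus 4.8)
The plan is to prove the equivalence by extracting and comparing the coefficient of each power $t^k$ on the two sides of \eqref{eq:reduction to GF}, exploiting the elementary fact that two formal power series in $t$ are equal precisely when all of their coefficients coincide. First I would expand the factor $(1+t)^\ell$ by the binomial theorem, writing $(1+t)^\ell=\sum_{j=0}^\ell\binom\ell j t^j$, and then form the Cauchy product of this finite polynomial with the series $\Gncp{N,\ell/2}(x,y,t)=\sum_{i\ge0}\ncp{N,\ell/2}{i}(x,y)\,t^i$ that defines the generating function on the right-hand side.

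The key computation is then to identify the coefficient of $t^k$ in this product. Collecting the terms with $j+i=k$ and substituting $j=k-i$ gives $\sum_{i=0}^k\binom{\ell}{k-i}\ncp{N,\ell/2}{i}(x,y)$, where I would adopt the standard convention that $\binom{\ell}{k-i}=0$ whenever $k-i<0$ or $k-i>\ell$, so that the finite upper limit $\ell$ of the binomial expansion is automatically respected. On the other hand, the coefficient of $t^k$ on the left-hand side of \eqref{eq:reduction to GF} is, directly from the definition $\Gncp{N+\ell,-\ell/2}(x,y,t)=\sum_{k\ge0}\ncp{N+\ell,-\ell/2}{k}(x,y)\,t^k$, the normalized polynomial $\ncp{N+\ell,-\ell/2}{k}(x,y)$. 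Equating these two coefficients for every $k\ge0$ yields exactly \eqref{eq:mainConj}, and reading the implications in the reverse direction shows that \eqref{eq:mainConj} forces equality of the two generating series, i.e.\ \eqref{eq:reduction to GF}.

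This argument is a purely formal manipulation of power series and presents essentially no genuine obstacle; the only point that requires a little care is the bookkeeping of the binomial coefficients near the boundary (the range $i<k-\ell$), which is handled cleanly by the vanishing convention above. As a sanity check that the conventions are correctly aligned I would verify the low-order cases $k=0,1$, which are already recorded in Example \ref{ex:k=0,1}, but no further input is needed: the equivalence follows immediately once the Cauchy product is written out term by term.
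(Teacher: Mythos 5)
Your argument is correct and is essentially identical to the paper's own proof: both expand $(1+t)^\ell$ by the binomial theorem, form the Cauchy product with $\Gncp{N,\ell/2}(x,y,t)$, and compare coefficients of $t^k$, using the convention that $\binom{\ell}{j}=0$ for $j>\ell$. The low-order sanity check via Example \ref{ex:k=0,1} is harmless but not needed for this lemma (the paper invokes that example only later, to pin down the solution of the differential equation).
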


\begin{proof}
  We see that
  \begin{align*}
    (1+t)^{\ell}  \Gncp{N,\ell/2}(x,y,t) &=\sum_{j=0}^\infty\binom{\ell}jt^j\sum_{i=0}^\infty \ncp{N,\ell/2}i(x,y)t^i =\sum_{k=0}^\infty\sum_{i+j=k} \binom{\ell}j\ncp{N,\ell/2}i(x,y)t^k \\
      &=\sum_{k=0}^\infty \biggl\{\sum_{i=0}^k \binom{\ell}{k-i}\ncp{N,\ell/2}i(x,y)\biggr\}t^k,
  \end{align*}
  from which the lemma follows. 
\end{proof}

The proof of the equivalent statement is done using the differential equation \eqref{eq:cheundiff} satisfied
by the generating function \(\Gncp{N,\e}(x,y,t)\). We need the followings lemmas.

\begin{lem}\label{lem:diff eq of w}
  Let $p(t)$, $q(t)$ be polynomials in $t$ and $\alpha\in\R$.
  Suppose that a function $z=z(t)$ satisfies the differential equation
  \begin{equation*}
    t(1+t)\frac{\partial^2z}{\partial t^2}+p(t)\frac{\partial z}{\partial t}+q(t)z=0.
  \end{equation*}
  Then $w=(1+t)^\alpha z$ satisfies
\begin{equation*}
  t(1+t)\frac{\partial^2w}{\partial t^2}+(p(t)-2\alpha t)\frac{\partial w}{\partial t}+(q(t)-\alpha u(t))w=0,
\end{equation*}
where $u(t)=\frac{p(t)-(\alpha+1)t}{1+t}$.
\end{lem}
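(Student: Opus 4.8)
The plan is to prove the lemma by a straightforward gauge transformation of the dependent variable. Since $w=(1+t)^\alpha z$, I would invert this as $z=(1+t)^{-\alpha}w$ and substitute into the hypothesized equation $t(1+t)z''+p(t)z'+q(t)z=0$, keeping track of how the factor $(1+t)^{-\alpha}$ is redistributed among the three terms. This is the standard trick for second-order linear ODEs: multiplying the unknown by a power of $(1+t)$ shifts the local exponent at the singular point $t=-1$, and consequently alters the coefficients of $z'$ and $z$ while leaving the leading coefficient $t(1+t)$ unchanged.

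First I would record the derivatives
\[
z'=(1+t)^{-\alpha}w'-\alpha(1+t)^{-\alpha-1}w,
\]
\[
z''=(1+t)^{-\alpha}w''-2\alpha(1+t)^{-\alpha-1}w'+\alpha(\alpha+1)(1+t)^{-\alpha-2}w.
\]
Substituting these into the equation for $z$ and then multiplying the whole equation through by $(1+t)^{\alpha}$ clears all negative powers from the top two terms: the second-derivative contribution is exactly $t(1+t)w''$, and gathering the first-derivative contributions (one from $t(1+t)z''$ and one from $p(t)z'$) produces precisely $\bigl(p(t)-2\alpha t\bigr)w'$, matching the claimed equation.

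The only step requiring genuine attention is the coefficient of $w$, which is where the specific form of $u(t)$ is forced. Collecting the three contributions to this coefficient I would obtain
\[
q(t)+\alpha(\alpha+1)\frac{t}{1+t}-\alpha\frac{p(t)}{1+t}
=q(t)-\alpha\,\frac{p(t)-(\alpha+1)t}{1+t},
\]
and the fraction on the right is exactly $u(t)$, so the coefficient equals $q(t)-\alpha u(t)$, as required. There is no real obstacle beyond this bookkeeping; the main point to verify carefully is that the two terms proportional to $(1+t)^{-1}$ combine into $-\alpha u(t)$ with numerator precisely $p(t)-(\alpha+1)t$. I would close by noting that in the intended use — Lemma \ref{lem:reduction to GF}, where $\alpha=\ell$ and $p(t)$ is the coefficient appearing in the confluent Heun equation \eqref{eq:cheundiff} with $\e=\ell/2$ — the numerator $p(t)-(\ell+1)t$ vanishes at $t=-1$, so $u(t)$ is in fact a polynomial and the transformed equation remains of the same confluent Heun type.
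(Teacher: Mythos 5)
Your proposal is correct and is essentially the same direct substitution argument as the paper's proof; the only cosmetic difference is that you write $z=(1+t)^{-\alpha}w$ and multiply through by $(1+t)^{\alpha}$, whereas the paper expands $w'$ and $w''$ in terms of $z$ and invokes the original equation, but the bookkeeping and the identification of the numerator $p(t)-(\alpha+1)t$ in $u(t)$ are identical. Your closing observation that $u(t)=2-xt$ is a polynomial in the intended application also matches the paper's subsequent Lemma on $w=(1+t)^{2\e}z$.
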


\begin{proof}
Since
\begin{align*}
  \frac{\partial w}{\partial t} &= \alpha(1+t)^{\alpha-1}z+(1+t)^{\alpha}\frac{\partial z}{\partial t}, \\
  \frac{\partial^2 w}{\partial t^2} &= \alpha(\alpha-1)(1+t)^{\alpha-2}z+2\alpha(1+t)^{\alpha-1}\frac{\partial z}{\partial t}+(1+t)^{\alpha}\frac{\partial^2 z}{\partial t^2},
\end{align*}
we have
\begin{align*}
  t(1+t)\frac{\partial^2w}{\partial t^2}+p(t)\frac{\partial w}{\partial t}  &=(1+t)^\alpha t(1+t)\frac{\partial^2z}{\partial t^2} +2\alpha(1+t)^\alpha t\frac{\partial z}{\partial t} \\
  &\quad +\alpha(\alpha-1)(1+t)^{\alpha-1}tz
    +(1+t)^{\alpha}p(t)\frac{\partial z}{\partial t} +\alpha(1+t)^{\alpha-1}p(t)z \\
  &=-q(t)w +2\alpha t\Bigl(\frac{\partial w}{\partial t}-\alpha(1+t)^{\alpha-1}z\Bigr) +\alpha(\alpha-1)(1+t)^{\alpha-1}tz \\
  &\quad +\alpha(1+t)^{\alpha-1}p(t)z  \\
  &=-q(t)w +2\alpha t\frac{\partial w}{\partial t} +\alpha u(t)w,
\end{align*}
from which the conclusion follows. 
\end{proof}

\begin{lem}\label{lem:z and w}
  The function $w=(1+t)^{2\e}z$, where $z=\Gncp{N,\e}(x,y,t)$, satisfies
  \begin{equation*}
    \biggl\{t(1+t)\frac{\partial^2}{\partial t^2}
    +(2+(x-3+2\e)t-xt^2)\frac{\partial}{\partial t}
    -(x+y-1+2\e-(N+2\e-1)xt)\biggr\}
    w=0.
  \end{equation*}
\end{lem}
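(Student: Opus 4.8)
The plan is to obtain Lemma~\ref{lem:z and w} as a direct specialization of the transformation Lemma~\ref{lem:diff eq of w}, applied to the confluent Heun equation for $z=\Gncp{N,\e}(x,y,t)$ established in Proposition~\ref{lem:diff eq of z}, with the choice $\alpha=2\e$. First I would read off from Proposition~\ref{lem:diff eq of z} the coefficient functions
\[
  p(t)=2-(x-3-2\e)t-xt^2,\qquad q(t)=-(x+y-1-2\e)+(N-1)xt,
\]
so that $z$ solves $t(1+t)z''+p(t)z'+q(t)z=0$, which is precisely the form required as the hypothesis of Lemma~\ref{lem:diff eq of w}.

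The key step is to compute the auxiliary polynomial $u(t)=\bigl(p(t)-(\alpha+1)t\bigr)/(1+t)$ with $\alpha=2\e$. The one thing that genuinely needs verification here is that the numerator is actually divisible by $1+t$, so that $u(t)$ is a bona fide polynomial and Lemma~\ref{lem:diff eq of w} applies cleanly. A short computation gives $p(t)-(2\e+1)t=-xt^2+(2-x)t+2=(1+t)(2-xt)$, whence $u(t)=2-xt$. This exact factorization is really the crux of the argument; everything downstream is bookkeeping.

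With $u(t)$ in hand I would substitute into the conclusion of Lemma~\ref{lem:diff eq of w}. The new first-order coefficient is $p(t)-2\alpha t=p(t)-4\e t$, and the new zeroth-order coefficient is $q(t)-\alpha u(t)=q(t)-2\e(2-xt)$. Expanding the latter yields $-(x+y-1+2\e)+(N+2\e-1)xt=-(x+y-1+2\e-(N+2\e-1)xt)$, matching exactly the asserted zeroth-order term for $w=(1+t)^{2\e}z$. Collecting the first-order coefficient $p(t)-4\e t$ and simplifying its linear part (tracking the $\pm 2\e$ contributions carefully) then assembles the differential equation in the statement.

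I do not expect a genuine obstacle: once Lemma~\ref{lem:diff eq of w} is invoked the proof is purely computational. The only points demanding care will be the exactness of the division defining $u(t)$ and, thereafter, the careful tracking of the signs of the $\e$-dependent terms when reassembling the first-order coefficient, which is exactly the place where an arithmetic slip would be easiest to make.
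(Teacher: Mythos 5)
Your proposal is correct and follows the paper's own proof exactly: specialize Lemma~\ref{lem:diff eq of w} with $\alpha=2\e$ to the equation of Proposition~\ref{lem:diff eq of z}, using the factorization $p(t)-(2\e+1)t=(1+t)(2-xt)$ to get $u(t)=2-xt$, and then read off the new coefficients. One caveat at precisely the step you flag as delicate: the first-order coefficient is $p(t)-4\e t = 2-(x-3+2\e)t-xt^2$, with a \emph{minus} sign on the linear term, whereas the printed statement of Lemma~\ref{lem:z and w} shows $2+(x-3+2\e)t-xt^2$; the minus sign is the correct one --- it is what the paper's own proof derives, and it is forced by the identity $w=\Gncp{N+\ell,-\ell/2}(x,y,t)$ combined with Proposition~\ref{lem:diff eq of z} applied to $(N+\ell,-\ell/2)$ --- so the ``$+$'' in the statement (and in the displayed equation in the proof of Theorem~\ref{thm:mainConj}) is a typo, not something your computation should try to reproduce.
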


\begin{proof}
  Let
  \begin{equation*}
    p(t)=2-(x-3-2\e)t-xt^2, \quad
    q(t)=-(x+y-1-2\e)+(N-1)xt, \quad
    \alpha=2\e.
  \end{equation*}
  Then $z$ satisfies the equation
  \begin{equation*}
    t(1+t)\frac{\partial^2z}{\partial t^2}+p(t)\frac{\partial z}{\partial t}+q(t)z=0.
  \end{equation*}
  We see that
  \begin{equation*}
    u(t)=\frac{p(t)-(\alpha+1)t}{1+t}=\frac{2-(x-3-2\e)t-xt^2-(2\e+1)t}{t+1}=2-xt,
  \end{equation*}
  and
  \begin{equation*}
    p(t)-2\alpha t=2-(x-3+2\e)t-xt^2,\quad
    q(t)-\alpha u(t)=-(x+y-1+2\e)+(N+2\e-1)xt.
\end{equation*}
Thus, by Lemma \ref{lem:diff eq of w}, $w$ satisfies the equation
\begin{equation*}
  t(1+t)\frac{\partial^2w}{\partial t^2}+(2-(x-3+2\e)t-xt^2)\frac{\partial w}{\partial t}+(-(x+y-1+2\e)+(N+2\e-1)xt)w=0
\end{equation*}
as desired. 
\end{proof}

\begin{proof}[Proof of Theorem \ref{thm:mainConj}]
  By Proposition \ref{lem:diff eq of z} and Lemma \ref{lem:z and w}, the functions $\Gncp{N+\ell,-\ell/2}(x,y,t)$ and
  $(1+t)^\ell\Gncp{N,\ell/2}(x,y,t)$ satisfy the same second order linear differential equation
  \begin{equation*}
    \biggl\{t(1+t)\frac{\partial^2}{\partial t^2}
    +(2+(x-3+\ell)t-xt^2)\frac{\partial}{\partial t}
    -(x+y-1+\ell-(N+\ell-1)xt)\biggr\}
    z=0.
  \end{equation*}
  By Example \ref{ex:k=0,1}, the constant and linear terms of the power series expansion of these two
  functions at $t=0$ are equal, and hence they are equal. By Lemma \ref{lem:reduction to GF}, this implies \eqref{eq:mainConj}. 
\end{proof}

\subsection{Revisiting divisibility of constraint polynomials} \label{sec:revdiv}

Using the identity of Theorem \ref{thm:mainConj}, we give another proof of the divisibility property of
constraint polynomials stated in Theorem \ref{thm:div1}.

\begin{proof}[Proof of Theorem \ref{thm:div1}]
  From the defining recurrence relations \eqref{eq:recurrence of ncp} of the
  polynomials \(\ncp{N,\e}{k}(x,y)\) we notice that
  \begin{equation*}
    \ncp{N,\e}{N+1}(x,y)=\frac{(N+1)x+y-(N+1)^2-2(N+1)\e}{(N+1)(N+2)}\ncp{N,\e}N(x,y),
  \end{equation*}
  that is, $\ncp{N,\e}{N+1}(x,y)$ is divisible by $\ncp{N,\e}N(x,y)$.
  Hence, using the recurrence \eqref{eq:recurrence of ncp} again,
  we see by induction on $k$ that $\ncp{N,\e}k(x,y)$ is divisible by $\ncp{N,\e}N(x,y)$ if $k\ge N$.

  Next, putting $k=N+\ell$ and $\e=\frac{\ell}{2}$ in \eqref{eq:mainConj}, we have
  \begin{equation}\label{eq:relation between ncps}
    \ncp{N+\ell,-\ell/2}{N+\ell}(x,y)
    =\sum_{i=N}^{N+\ell}\binom{\ell}{N+\ell-i} \ncp{N,\ell/2}i(x,y)
    =\sum_{j=0}^{\ell}\binom{\ell}{j} \ncp{N,\ell/2}{N+j}(x,y).
  \end{equation}
  Since each of the terms in the right-hand side are divisible by $\ncp{N,\ell/2}N(x,y)$,
  this completes the proof. 
\end{proof}

\begin{rem}
Define the polynomials by
$\mathcal{Q}_k^{(N,\e)}(x,y) = \frac{P_{N+k}^{(N,\e)}(x,y)}{P_{N}^{(N,\e)}(x,y)}$.
Then we observe that \eqref{eq:relation between ncps} is equivalent to
\begin{equation}\label{eq:polyAnpolysum}
A_N^\ell(x,y) = \sum_{j=0}^\ell \binom{\ell}{j} \frac{(N+\ell)!(N+\ell+1)!}{(N+j)!(N+j+1)!} \mathcal{Q}_{j}^{(N,\ell/2)}(x,y).
\end{equation}
We notice that for $x<0$ the family of polynomials $\{\mathcal{Q}_k^{(N,\ell/2)}(x,y)\}_{k \geq 0}$ actually forms
an orthogonal system of polynomials (see \cite{C1978}) and
might have a representation theoretic interpretation.
It is therefore desirable to obtain a straightforward  proof of the positivity for the polynomial
\[
  A_N^{\ell}(x,y)= P_{N+\ell}^{(N+\ell,-\ell/2)}(x,y)/P_N^{(N,\ell/2)}(x,y)
\]
for $x, y>0$
in a reasonable framework
(e.g. \cite{FK1994} for the harmonic analysis on symmetric cones) of orthogonal polynomials in two variables.
\end{rem}

\section*{Acknowledgements}
This work was supported by JST CREST Grant Number JPMJCR14D6, Japan, and
by Grand-in-Aid for Scientific Research (C) JP16K05063 of JSPS, Japan.
CRB was supported during the duration of the research by the Japanese
Government (MONBUKAGAKUSHO: MEXT) scholarship.


\begin{flushleft}
\bigskip

Kazufumi Kimoto \par
Department of Mathematical Sciences, \par
University of the Ryukyus \par
1 Senbaru, Nishihara, Okinawa 903-0213 JAPAN \par
\texttt{kimoto@math.u-ryukyu.ac.jp} \par

\bigskip

Cid Reyes-Bustos \par
Institute of Mathematics for Industry,\par
Kyushu University \par
744 Motooka, Nishi-ku, Fukuoka 819-0395 JAPAN \par
\texttt{c-reyes@imi.kyushu-u.ac.jp} 

\medskip

Current address: \\
Department of Mathematical and Computing Science, School of Computing, \par
Tokyo Institute of Technology \par
2 Chome-12-1 Ookayama, Meguro, Tokyo 152-8552 JAPAN \par\par
\texttt{reyes@c.titech.ac.jp}

\bigskip

Masato Wakayama \par
Institute of Mathematics for Industry,\par
Kyushu University \par
744 Motooka, Nishi-ku, Fukuoka 819-0395 JAPAN \par
\texttt{wakayama@imi.kyushu-u.ac.jp}
\end{flushleft}

\end{document}